  \let\oldparagraph\paragraph
  \renewcommand{\paragraph}{
    \@ifstar
      \xxxParagraphStar
      \xxxParagraphNoStar
  }
  \newcommand{\xxxParagraphStar}[1]{\oldparagraph*{#1}\mbox{}}
  \newcommand{\xxxParagraphNoStar}[1]{\oldparagraph{#1}\mbox{}}
  \let\oldsubparagraph\subparagraph
  \renewcommand{\subparagraph}{
    \@ifstar
      \xxxSubParagraphStar
      \xxxSubParagraphNoStar
  }
  \newcommand{\xxxSubParagraphStar}[1]{\oldsubparagraph*{#1}\mbox{}}
  \newcommand{\xxxSubParagraphNoStar}[1]{\oldsubparagraph{#1}\mbox{}}
\patchcmd\longtable{\par}{\if@noskipsec\mbox{}\fi\par}{}{}
\def\maxwidth{\ifdim\Gin@nat@width>\linewidth\linewidth\else\Gin@nat@width\fi}
\def\maxheight{\ifdim\Gin@nat@height>\textheight\textheight\else\Gin@nat@height\fi}
\def\fps@figure{htbp}
  \renewcommand*\contentsname{Table of contents}
  \newcommand\contentsname{Table of contents}
  \renewcommand*\listfigurename{List of Figures}
  \newcommand\listfigurename{List of Figures}
  \renewcommand*\listtablename{List of Tables}
  \newcommand\listtablename{List of Tables}
  \renewcommand*\figurename{Figure}
  \newcommand\figurename{Figure}
  \renewcommand*\tablename{Table}
  \newcommand\tablename{Table}
\newcommand{\anon}{1}
\newcommand{\zerodisplayskips}{%
  \setlength{\abovedisplayskip}{0pt}%
  \setlength{\belowdisplayskip}{0pt}%
  \setlength{\abovedisplayshortskip}{0pt}%
  \setlength{\belowdisplayshortskip}{0pt}}
\appto{\normalsize}{\zerodisplayskips}
\appto{\small}{\zerodisplayskips}
\appto{\footnotesize}{\zerodisplayskips}
\newcommand{\bs}{\boldsymbol{s}}
\newcommand{\bw}{\boldsymbol{w}}
\newcommand{\btheta}{\boldsymbol{\theta}}
\newcommand{\given}{\, | \,}
\newcommand{\pr}{\text{Pr}}
\newcommand{\trans}{^\text{T}}
\newcommand{\iid}{\stackrel{\text{iid}}{\sim}}
\newcommand{\stkout}[1]{\ifmmode\text{\sout{\ensuremath{#1}}}\else\sout{#1}\fi}
\newcommand{\indep}{\perp \!\!\! \perp}
\newcommand\numberthis{\addtocounter{equation}{1}\tag{\theequation}}
\newtheorem{theorem}{Theorem}%  meant for continuous numbers
\newtheorem{proposition}[theorem]{Proposition}%
\newtheorem{example}{Example}%
\newtheorem{remark}{Remark}%
\newtheorem{lemma}[theorem]{Lemma}
\newtheorem{illustration}{Illustration}
\definecolor{amber(sae/ece)}{rgb}{1.0, 0.49, 0.0}
\newcommand{\revise}[1]{\textcolor{black}{#1}}
\renewcommand*\env@matrix[1][\arraystretch]{%
  \edef\arraystretch{#1}%
  \hskip -\arraycolsep
  \let\@ifnextchar\new@ifnextchar
  \array{*\c@MaxMatrixCols c}}
\begin{document}

\def\spacingset#1{\renewcommand{\baselinestretch}%
{#1}\small\normalsize} \spacingset{1}

%%%%%%%%%%%%%%%%%%%%%%%%%%%%%%%%%%%%%%%%%%%%%%%%%%%%%%%%%%%%%%%%%%%%%%%%%%%%%%

\if1\anon
{
  \title{\bf Spatial scale-aware tail dependence modeling for high-dimensional spatial extremes}
  \author{Muyang Shi\thanks{
    The authors gratefully acknowledge the support of NSF grants DMS-2308680, DMS-2001433, and DMS-2308679, which made this research possible. The authors acknowledge the use of ChatGPT version 5 for grammar proofreading and assistance with coding language syntax.}\hspace{.2cm}\\
    Department of Statistics, Colorado State University\\
    Likun Zhang \\
    Department fo Statistics, University of Missouri \\
    Mark D. Risser \\
    Climate and Ecosystem Sciences, Lawrence Berkeley National Lab \\
    and \\
    Benjamin A. Shaby \\
    Department of Statistics, Colorado State University}
  \maketitle
} \fi

\if0\anon
{
  \bigskip
  \bigskip
  \bigskip
  \begin{center}
    {\LARGE\bf Spatial scale-aware tail dependence modeling for high-dimensional spatial extremes}
\end{center}
  \medskip
} \fi

\bigskip
\begin{abstract}
Extreme events over large spatial domains may exhibit highly heterogeneous tail dependence characteristics, yet most existing spatial extremes models yield only one dependence class over the entire spatial domain. \revise{To accurately characterize dependence in extreme events}, we propose a mixture model that achieves flexible dependence properties and allows high-dimensional inference \revise{($\sim600$ spatial locations in our data example)} for extremes of spatial processes. We modify the popular random scale construction that multiplies a Gaussian random field by a single radial variable; we allow the radial variable to vary smoothly across space and add non-stationarity to the Gaussian process. As the level of extremeness increases, this single model exhibits both asymptotic independence at long ranges and either asymptotic dependence or independence at short ranges.  We make joint inference on the dependence model and a marginal model using a copula approach within a Bayesian hierarchical model.
Three different simulation scenarios show close to nominal frequentist coverage rates. Lastly, we apply the model to a dataset of extreme summertime precipitation over the central United States. We find that the joint tail of precipitation exhibits non-stationary dependence structure that cannot be captured by limiting extreme value models or current state-of-the-art sub-asymptotic models.
\end{abstract}

\noindent%
{\it Keywords:} Asymptotic dependence, Nonstationary, Scale mixture, Spatial extremes
\vfill

\newpage
\spacingset{1.8} % DON'T change the spacing!

\section{Introduction}\label{sec-intro}

Accurately characterizing the spatial extent and intensity of large-scale extreme precipitation events is a critical factor in infrastructure planning, risk mitigation, and adaptation \citep[][]{field2014summary}, particularly in our changing global climate \citep[][]{milly2008stationarity}.
In this paper, we propose a \revise{flexible} tail dependence model that modulates a radial variable to introduce varying levels of dependence strength both locally and across long spatial ranges, and incorporates a nonstationary covariance function to further account for spatial heterogeneity. Our sub-asymptotic framework summarizes the tail dependence structure in a parsimonious manner and allows for efficient inference from large spatial data sets. Therefore, we can appropriately model the varying scale and intensity of extreme precipitation events.

%=============================
% Point-of-paragraph: 
% Setup
%=============================
Let $\{Y(\bs): \bs\in \mathcal{S}\subseteq \mathbb{R}^2\}$ be the stochastic process of interest, and denote its realization at the $j$th location $\bs_j$ by $Y_j=Y(\bs_j)$, $j=1,\ldots, D$. In a spatial extremes context, a common approach is to model the marginal distributions of each $Y_j$ (denoted by $F_j$) separately from their dependence structure via the copula. 
% The copula, which is defined by the distribution function of $(X_1,\ldots, X_D)=(F_1(Y_1), \ldots, F_D(Y_D))$, fully describes the dependence structure of the process $\{Y(\bs)\}$. 
After transforming the data to the uniform scale via the probability integral transform, a carefully chosen spatial dependence model can be used to %accurately
characterize the tail dependence properties of the copula while accounting for spatial non-stationarity. 

%=============================
% Point-of-paragraph: 
% Summary of "traditional" spatial extremes methods
%=============================
Unfortunately, previous spatial extremes models have dependence structures that are too rigid for large domains and thus lead to underestimation or overestimation of the spatial extent and intensity of extreme events \citep{huser2016non}. Statistical modeling for spatial extremes traditionally uses either max-stable processes (when considering block maxima) or generalized Pareto processes (for exceedances of a high threshold) due to their asymptotic justifications \citep{davison_geostatistics_2013, huser_advances_2022}. The copulas of both processes exhibit stability properties, wherein their dependence structures are invariant to the operations of taking maxima over larger blocks or conditioning on a higher threshold. In the bi-variate case, this means that the upper dependence measure of the process between any two locations $\bs_i$ and $\bs_j$
\begin{equation}\label{eqn:chi_measure}
        \chi_{ij}(u) = P(F_j(Y_j)>u\given F_i(Y_i)>u)
\end{equation}
will always have a non-zero limit as the quantile level $u \rightarrow 1$, a property generally referred to as asymptotic dependence (AD). Max-stable and generalized Pareto models also have the property that $\chi_{ij}(u)$ becomes independent of $u$ for increasingly extreme events. However, for sub-asymptotic modeling, both of these properties are often violated empirically: estimates of \eqref{eqn:chi_measure} tend to decrease with $u$ for many observed environmental processes, including high-frequency wave height data \citep{huser2019modeling}, daily fire weather indices \citep{zhang2021hierarchical}, annual maximum temperature \citep{zhang2024}, and extreme seasonal precipitation (see Figure %\ref{fig:emp_chi_moving_windows}).
\ref{fig:application_chi_select}). Furthermore, weakening $\chi_{ij}(u)$ in the empirical estimates as $u\rightarrow 1$ may eventually result in asymptotic independence (AI), i.e., $\lim_{u\rightarrow 1}\chi_{ij} (u)=0$. Of course,  empirically weakening $\chi_{ij}(u)$ as $u$ approaches its observed upper bound does not necessarily correspond to asymptotic independence since estimates of $\chi_{ij}(u)$ for large $u$ have very large uncertainty and may not suggest convergence in the limit. 

\begin{figure}
    \centering
    \includegraphics[width=0.65\textwidth]{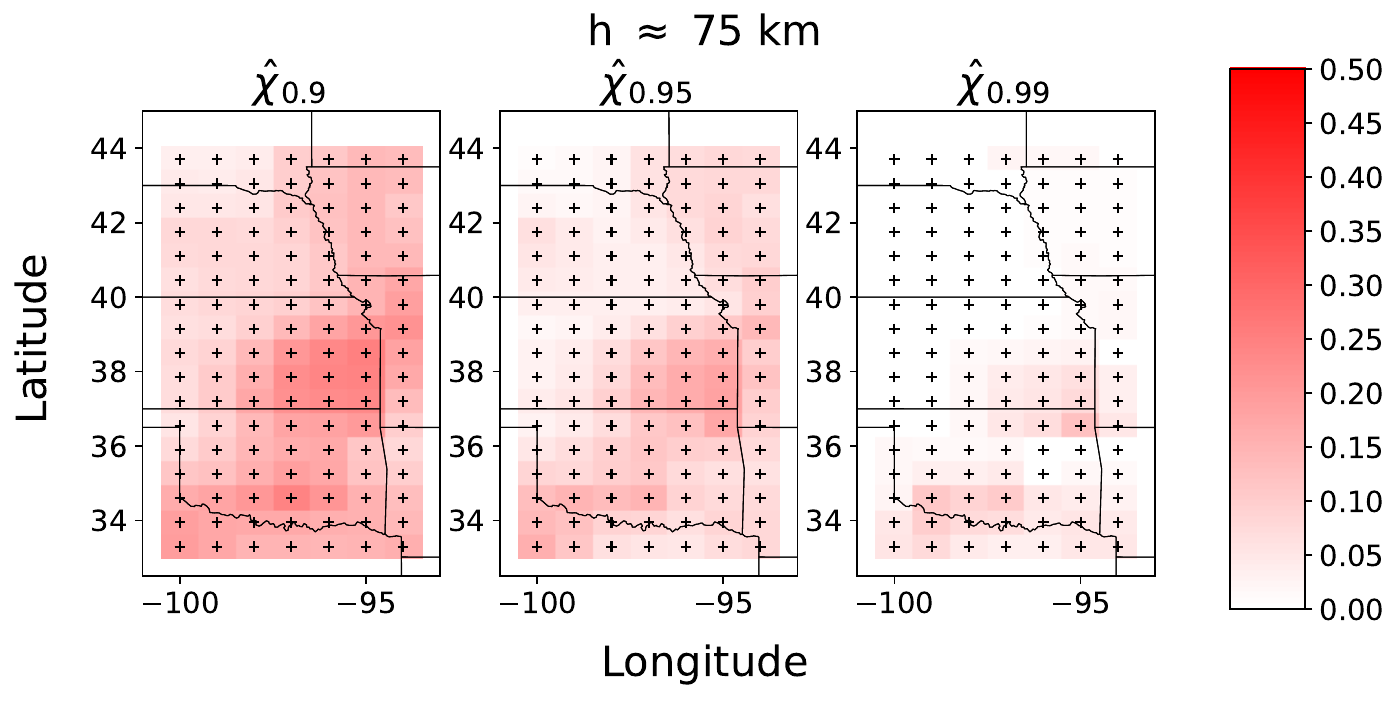}
    \caption{Empirical estimates of $\hat{\chi}_{h}(u)$ from 75 years of summertime maximum daily precipitation of central US within moving windows centered at 119 locations (marked as white `+'), using all pairs with separating distance $\approx 75$km, at three quantile levels $u=0.9, 0.95, 0.99$. The local tail dependence behavior appears to vary across the region. Further exploration is shown in Figure \ref{fig:application_chi} in Section \ref{sec:application}.}
    \label{fig:application_chi_select}
\end{figure}

%=============================
% Point-of-paragraph: 
% Summary of methods for which one can infer AI or AD from the data
%=============================
To avoid having to specify the dependence class before analyzing a specific data set, a series of flexible \revise{random scale mixture models} have been developed to encompass both AI and AD dependence classes \citep[e.g.,][]{huser2017bridging, huser2019modeling}. The general form of the spatial dependence model in these approaches is 
\begin{equation}\label{eqn:scalemix_model}
    X(\bs) = R\cdot g(Z(\bs)),\; R\given \theta_R \sim F_R,
\end{equation}
in which $R$ is a random scaling factor whose distribution function $F_R$ is parameterized by the vector $\theta_R$ containing a parameter $\phi$ which controls its tail heaviness, $g(\cdot)$ is a suitable link function, and $\{Z(\bs)\}$ is a stationary spatial process exhibiting asymptotic independence at any two locations. Intuitively, 
% $R$ is a completely dependent spatial process 
\revise{the scale factor $R$ can be thought of as a completely dependent spatial process in that it takes on a single value regardless of spatial location.  It}
amplifies the co-occurrences of extreme events in $\{Z(\bs)\}$, and the tail behavior of $X(\bs)$ is controlled by the relative rate of tail decay between $R$ and $\{g(Z(\bs))\}$ \citep{engelke2019extremal}. Consequently, varying a single tail index parameter $\phi$ allows  random scale mixture models to be either asymptotically independent or asymptotically dependent, with the transition occurring smoothly in the interior of the parameter space. In spite of this desirable property, inference for random scale mixture models is challenging. To mitigate computational burden while preserving tail dependence properties, \citet{zhang2021hierarchical} proposed the addition of a nugget effect which facilitates likelihood calculations for large data sets. Nevertheless, the random scale mixture models in \eqref{eqn:scalemix_model} may lack sufficient flexibility because, for a given value of $\phi$, they can only exhibit one dependence class, at all spatial lags, for the entire spatial domain. Coupled with the stationarity of the latent process $\{Z(\bs)\}$, this means that the dependence measure for any two locations is also stationary (i.e., $\chi_{ij}(u)\equiv \chi_{\lVert\bs_i - \bs_j\rVert}(u)$). Such assumptions of stationarity in both the dependence class of $X(\cdot)$ and the length scales of $Z(\cdot)$ are convenient but are rarely appropriate for heterogeneous real-world data sets.  The assumption that the tail dependence is of the same asymptotic class at all spatial lags seems unintuitive for large domains, where a process may have very strong tail dependence at short distances but weak tail dependence at long distances.  \revise{We therefore seek tail dependence models that are ``scale aware'' in that they are AI for observations separated by large distances and either AI or AD for observations separated by short distances.}

%=============================
% Point-of-paragraph: 
% Summary of the literature on nonstationary tail dependence
%=============================
To introduce non-stationarity in the tail dependence structure, \citet{castro2020local} let the dependence parameter $\phi$ be a spatial process $\phi(\bs)$ that alters the amplifying effect of $R$ over space. Their specification is in the form of random location mixture, which is equivalent to a random scale mixture through an exponential marginal transformation:
\begin{equation}\label{eqn:castro_model}
    X(\bs) = \phi^{-1}(\bs)R + Z(\bs), \; \bs\in\mathcal{S}.
\end{equation}
They furthermore specify $\{Z(\bs)\}$ to be a zero-mean Gaussian process with a nonstationary Mat\'{e}rn covariance function \citep{paciorek2006spatial}, resulting in a non-stationary dependence measure $\chi_{ij}$ for locations $\bs_i$ and $\bs_j$.  While this model yields much more flexible local dependence properties, $\chi_{ij}$ is always strictly positive unless $\max\{\phi_{i},\phi_j\}=\infty$; i.e., this model still exhibits asymptotic dependence everywhere, even for locations that are very far apart. 
% \begin{equation*}
%     \chi_{ij}=\lim_{u\rightarrow 1}\chi_{ij}(u)  = 2[1-\Psi\{(\phi_i^2+\phi_j^2-2\rho_{ij}\phi_i\phi_j)^{1/2}/2\}]
% \end{equation*}
% (here, $\Psi$ denotes the Gaussian distribution function). While this model yields much more flexible local dependence properties, $\chi_{ij}$ is always strictly positive unless $\max\{\phi_{i},\phi_j\}=\infty$; i.e., this model still exhibits asymptotic dependence everywhere, even for locations that are very far apart. 

To address this limitation, \citet{hazra2021realistic} generalizes the \citet{huser2017bridging} model by instead allowing $R$ to be a spatial process and $\phi$ a spatial constant:
\begin{equation}\label{eqn:huser_model}
    X(\bs) = R(\bs)Z(\bs), \; \bs\in\mathcal{S},
\end{equation}
where $Z(\bs)$ is a zero-mean isotropic Gaussian process, and $R(\bs)$ is a kernel-weighted sum of univariate random effects.  The kernels live in same spatial domain as the data and are compact, resulting in AI of the $X(\bs)$ process for locations sufficiently far apart (see our Theorem \ref{thm:dependence_properties} for our analogous result).  Their random effects distribution contains a tail parameter $\phi \geq 0$ such that $\phi=0$ gives AD for locations close enough to share a common kernel.  The model in \citet{hazra2021realistic} represents a major step forward in the flexibility of spatial extremes methods in that it combines local AD with long-range AI. However, we improve upon their approach in several ways.  First, we use a different random effects distribution, which allows the transition between local AI and AD to occur in the interior of its parameter space.  In contrast, whereas we are able to estimate the local tail dependence class, \citet{hazra2021realistic} only consider $\phi=0$ in their analysis, which fixes the dependence class to be AD at short ranges. For highly heterogeneous weather processes like extreme precipitation, it is often far from clear that local asymptotic dependence is always realistic, as seen in Figure \ref{fig:application_chi_select}.  Second, our random effects specification also permits a richer set of analytical results than is contained in \citet{hazra2021realistic}, including results on sub-asymptotic tail dependence in the local AI case.  Furthermore, we allow the tail parameter $\phi$ to vary in space, similarly to \citet{castro2020local}.  This can be important, as empirical estimates of the dependence class for our precipitation data suggests that it may exhibit both local AI and local AD in different parts of the domain; see again Figure \ref{fig:application_chi_select}.

Finally, the way \citet{hazra2021realistic} handle marginal fitting is computationally convenient but induces a few modeling disadvantages which we avoid.  Specifically, they perform location-specific standardization of the data before model fitting, then assume that the standardized observations follow a location-scale transformation of their dependence model $X(\bs)$.  Such a scheme does not allow for spatial prediction, since the site-specific standardization leaves no way to ``de-standardize'' any model-predicted $X(\bs)$ at un-observed locations. More worryingly, assuming the data follow a location-scale transformation of $X(\bs)$ forces the marginal tail weight of the observations to coincide with the marginal weight behavior of the dependence model $X(\bs)$.  This inextricably couples the marginal and dependence properties, a situation that we would like to avoid. In contrast, we perform full joint inference on the marginal and dependence models using a copula approach.  While this requires a good deal more implementation effort, it allows natural spatial prediction and prevents characteristics of the marginal tail weight from bleeding into the fit of the dependence parameters.

\citet{majumder2024modeling} also use a construction like \eqref{eqn:huser_model}, but $R(\bs)$ is specified as a max-stable process.  This yields short-range AD and long-range AI, but their model presents challenges.  First, analytical results are elusive, so tail dependence properties must be confirmed through simulation.  More importantly, likelihoods are unavailable, so \citet{majumder2024modeling} use a Vecchia-type approximation \citep{vecchia1988estimation} along with neural net emulators to perform approximate Bayesian inference.

\citet{wadsworth2022higher} use a completely different approach to obtain flexible models capable of short-range AD and long-range AI.  They specify their model conditionally on observing a large value at a single reference location.  Since there is usually no natural location to choose as the reference, \citet{wadsworth2022higher} form composite likelihoods by choosing several sites in turn, and summing the log likelihoods obtained from each.  This has the effect of fitting a model that requires conditioning on a single site, but uses the composite likelihood to do a kind of averaging across different choices of conditioning sites when fitting model parameters.  This approach is an important advancement that generates very flexible tail dependence characteristics.  However, because the resultant fits do not correspond to any well-defined joint probability model, interpretation is challenging.

\section{Model}\label{sec:model}
\subsection{Construction}\label{sec:model_construction}

Here, we combine the desirable features of \eqref{eqn:castro_model} and \eqref{eqn:huser_model} by allowing both $R(\bs)$ and $\phi(\bs)$ to vary spatially by using a mixture component representation for $R(\bs)$ and a spatially-varying tail index $\phi(\bs)$ in order to introduce more flexible local and long-range tail dependence behaviors:
\begin{equation}\label{eqn:model}
    X(\bs)=R(\bs)^{\phi(\bs)}g(Z(\bs)),
\end{equation}
where $\{Z(\bs)\}$ is a spatial process with hidden regular variation \citep{ledford1996statistics, heffernan2005hidden}.  The hidden regular variation assumption ensures
\begin{equation}\label{eqn:eta_measure}
        P(F^Z_i(Z_i)>u,F^Z_j(Z_j)>u)=L_Z\{(1-u)^{-1}\}(1-u)^{1/\eta^Z_{ij}},
\end{equation}
where $\eta^{Z}_{ij}\in (1/2,1)$ so that $(Z_i, Z_j)$ exhibits asymptotic independence, $F^Z_{j}$ is the univariate distribution function of $Z_j$, $j=1,\ldots, D$, and $L_Z$ is a slowly varying function, i.e., $L_Z(tz)$ $/$ $L_Z(z)$ $\rightarrow 1$ as $z \rightarrow \infty$ for all fixed $t > 0$. The coefficient $\eta^{Z}_{ij}$ was termed the \emph{coefficient of tail dependence} for $\{Z(\bs)\}$ between $\bs_i$ and $\bs_j$ by \citet{ledford1996statistics}, and it complements \eqref{eqn:chi_measure} in the case of asymptotic independence. 

In \eqref{eqn:model}, both the scaling factor and the tail index vary across space. This makes the model much more flexible but also adds considerable complexity to both the theoretical analysis and the computations.  The theoretical tools used to analyze models where either the scaling factor \citep{hazra2021realistic} or the tail index \citep{castro2020local}, but not both, vary spatially, will not suffice here.
However, if the random factors are weighted averages of Stable random variables \citep{nolan2020univariate}, the analysis is simplified considerably. Let
\begin{equation}\label{eqn:scaling_process}
    R(\bs)=\sum_{k=1}^K w_k(\bs, r_k) S_{k},\;S_{k}\stackrel{\text{indep}}{\sim} \text{Stable}\left(\alpha,1,\gamma_k,
    \delta\right),
\end{equation}
in which $w_k(\bs, r_k)$ is some compactly supported basis function over $\mathbb{R}^2$ centered at the $k$th knot with radius $r_k$, $k=1,\ldots, K$. Also, $\sum_{k=1}^K w_k(\bs;r_k) = 1$ for all $\bs\in\mathcal{S}$. Properties of the univariate and joint distribution of the constructed $\{X(\bs)\}$ and interpretation of its parameters are discussed in Section \ref{sec:univariate dependence model} and \ref{sec:joint dependence model}. The link function $g(\cdot)$ transforms the margins of $\{Z(\bs)\}$ to the margins of a type II Pareto distribution whose distribution function is 
\begin{equation}\label{eqn:type_II}
    F^{\text{Pareto}}(x)=\{1-(1+x-\delta)^{-1}\}\mathbbm{1}(x\geq \delta).
\end{equation}
We provide justifications for these distributional choices in Section \ref{sec:computation}.

\subsection{Univariate distribution of the dependence model} \label{sec:univariate dependence model}

The key property of the Stable distribution  which allows us to assess the tail behavior of the dependence model $X(s)$ in (\ref{eqn:model}) is that it is closed under convolution; % sums of $\alpha$-Stable variables (Stable variables with concentration parameter $\alpha$) are still $\alpha$-Stable. 
if $S_k$ $\stackrel{\text{indep}}{\sim}$ $\text{Stable}(\alpha,$ $\beta_k,$ $\gamma_k,$ $\delta_k)$ and constant $w_k\geq 0$ for $k=1,\ldots, K$, then
\begin{equation}
\begin{split}
    \sum_{k=1}^K w_{k} S_k 
     \sim &\text{Stable}(\alpha,\bar{\beta},\bar{\gamma},\bar{\delta})\\
\end{split}
\end{equation}
with $\bar{\gamma} = \{\sum_{k=1}^K(w_{k}\gamma_k)^\alpha\}^{1/\alpha}$, $\bar{\beta} =\sum_{k=1}^K\beta_k(w_{k}\gamma_k)^\alpha/\bar{\gamma}^{\alpha}$ and $\bar{\delta}=\sum_{k=1}^K w_{k}\delta_k$. To be able to examine the joint distribution of $(X_i, X_j)$ for model \eqref{eqn:model}, it is desirable for the mixture to have the same distributional support and rate of tail decay as each $S_k$. Thus in \eqref{eqn:scaling_process}, we fixed $\beta_k\equiv 1$, $\delta_k\equiv \delta$ while imposing the constraint $\sum_{k=1}^K w_k=1$. As a result, 
$\bar{\beta} =1$ and $\bar{\delta}=\delta$, which means the univariate support of the $\{R(\bs)\}$ is $[\delta,\infty)$ everywhere.  We also have tail-stationarity for $\{R(\bs)\}$ because $P(R(\bs)>x)\sim 2\bar{\gamma}^{\alpha}(\bs)C_{\alpha} x^{-\alpha}$ for all $\bs\in \mathcal{S}$ as $x\rightarrow\infty$, where
$\bar{\gamma}(\bs)=[\sum_{k=1}^K\{w_{k}(\bs,r_k)\gamma_k\}^\alpha]^{1/\alpha}$ and $C_{\alpha} = \Gamma(\alpha)\sin(\alpha\pi/2)/\pi$.

With the understanding of the distributions of $\{R(\bs)\}$, the tail behavior of the univariate distributions of the dependence model \eqref{eqn:model} can be assessed.
To avoid clutter, we denote $X_j := X(\bs_j)$, $R_j := R(\bs_j)$, $\phi_j := \phi(\bs_j)$, $\bar{\gamma}_j := \bar{\gamma}(\bs_j)$, and  $W_j := g(Z(\bs_{j}))$, $j = 1, \dots, D$.
Proposition \ref{prop:marginal_distr} below shows that the marginal tail behavior of the process $X(\bs_j)$ depends on the spatially varying tail parameter $\phi_j$; the (transformed) Gaussian part dominates when $\phi_j$ is small, while the heavy-tailed $R_j$ dominates when $\phi_j$ is large, and $\phi_j = \alpha$ marks the transition point between the two.

\singlespacing
\begin{proposition}\label{prop:marginal_distr}
The univariate distribution of the process \eqref{eqn:model} at a location $\bs_j$ satisfies
\begin{equation}\label{eqn:marginal}
    1-F_j(x)=\Pr \{R_j^{\phi_j}g(Z_j)>x\}\sim 
    \begin{dcases}
    \revise{\mathbb{E}}(R_j^{\phi_j})x^{-1},&\text{ if }0\leq\phi_j<\alpha,\\
    \frac{2C_\alpha\bar{\gamma}^\alpha_{j}}{1-\alpha/\phi_j} x^{-\frac{\alpha}{\phi_j}},&\text{ if }\phi_j>\alpha,\\
    2C_{\alpha}\bar{\gamma}^\alpha_{j} x^{-1}\log x,&\text{ if }\phi_j=\alpha,
    \end{dcases}
\end{equation}
as $x\rightarrow \infty$, where $C_\alpha=\Gamma(\alpha)\sin(\alpha\pi/2)/\pi$.
\end{proposition}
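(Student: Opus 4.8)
The plan is to view $X_j = R_j^{\phi_j}W_j$, with $W_j := g(Z_j)$, as a product of two independent positive regularly varying random variables and to read the tail of the product off the relative heaviness of the two factors. First I would record the marginal tails. By the tail-stationarity of $\{R(\bs)\}$ noted above, $\Pr(R_j>t)\sim 2C_\alpha\bar{\gamma}_j^\alpha t^{-\alpha}$, so $\Pr(R_j^{\phi_j}>x)=\Pr(R_j>x^{1/\phi_j})\sim 2C_\alpha\bar{\gamma}_j^\alpha x^{-\alpha/\phi_j}$ is regularly varying with tail index $\alpha/\phi_j$; from \eqref{eqn:type_II}, $\Pr(W_j>x)=(1+x-\delta)^{-1}\sim x^{-1}$ has tail index $1$. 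Since $R(\bs)$ and $Z(\bs)$ are independent in \eqref{eqn:model}, $R_j$ and $W_j$ are independent, and the whole problem reduces to the tail of a product of independent regularly varying factors with indices $\alpha/\phi_j$ and $1$, whose ordering is governed precisely by the sign of $\phi_j-\alpha$.

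The two non-critical cases then follow from Breiman's theorem. When $0\le\phi_j<\alpha$, the factor $R_j^{\phi_j}$ is the lighter one (tail index $\alpha/\phi_j>1$) and $W_j$ is heavier; treating $W_j$ as the regularly varying factor and $R_j^{\phi_j}$ as the multiplier gives $\Pr(X_j>x)\sim \mathbb{E}(R_j^{\phi_j})\Pr(W_j>x)\sim \mathbb{E}(R_j^{\phi_j})x^{-1}$, and the moment condition $\mathbb{E}(R_j^{\phi_j(1+\epsilon)})<\infty$ needed for Breiman's theorem holds for small $\epsilon>0$ because $R_j$ has stable index $\alpha$ and $\phi_j<\alpha$. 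When $\phi_j>\alpha$ the roles reverse: $R_j^{\phi_j}$ is the heavier factor (tail index $\alpha/\phi_j<1$), and Breiman's theorem gives $\Pr(X_j>x)\sim \mathbb{E}(W_j^{\alpha/\phi_j})\Pr(R_j^{\phi_j}>x)$, with $\mathbb{E}(W_j^{\alpha/\phi_j})<\infty$ since $\alpha/\phi_j<1$. Evaluating this fractional moment against the type-II Pareto density gives $\mathbb{E}(W_j^{\alpha/\phi_j})=(1-\alpha/\phi_j)^{-1}$, which reproduces the constant $2C_\alpha\bar{\gamma}_j^\alpha/(1-\alpha/\phi_j)$.

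The main obstacle is the critical case $\phi_j=\alpha$, where $R_j^{\alpha}$ and $W_j$ share the common tail index $1$, the relevant multiplier moment is infinite, and Breiman's theorem no longer applies. I would instead compute the product tail directly by conditioning, $\Pr(X_j>x)=\mathbb{E}\{\Pr(W_j>x/R_j^{\alpha}\mid R_j)\}$, and split the integral according to whether $R_j^{\alpha}$ is small or large relative to $x$. On $R_j^{\alpha}\lesssim x$ the approximation $\Pr(W_j>x/R_j^{\alpha})\sim R_j^{\alpha}/x$ reduces the contribution to $x^{-1}\mathbb{E}\{R_j^{\alpha}\mathbbm{1}(R_j^{\alpha}\lesssim x)\}$, and because $R_j^{\alpha}$ has tail index exactly $1$ this truncated mean diverges like $2C_\alpha\bar{\gamma}_j^\alpha\log x$, yielding the $x^{-1}\log x$ rate with the stated constant; the complementary region $R_j^{\alpha}\gtrsim x$ contributes only $O(x^{-1})$ and is negligible. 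Making this decomposition rigorous—controlling the transitional zone $R_j^{\alpha}\asymp x$ and confirming that the slowly varying corrections and the exact location shift $\delta$ do not perturb the leading constant—is the delicate step of the argument.
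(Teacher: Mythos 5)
Your proposal is correct, and for the two non-critical cases it coincides with the paper's proof: both treat $X_j=R_j^{\phi_j}W_j$ as a product of independent regularly varying factors and apply Breiman's theorem, with the lighter factor supplying the finite moment ($\mathbb{E}(R_j^{\phi_j})$ when $\phi_j<\alpha$, $\mathbb{E}(W_j^{\alpha/\phi_j})=(1-\alpha/\phi_j)^{-1}$ when $\phi_j>\alpha$). The genuine divergence is in the critical case $\phi_j=\alpha$. The paper passes to logarithms, observes that $\log R_j^{\alpha}$ and $\log W_j$ both lie in the exponential-tailed class $\mathrm{ET}_{1,0}$, and invokes Cline's convolution theorem (Theorem 4(v) of Cline, 1986) for sums of exponential-tailed variables with a common rate, which yields the $x^{-1}\log x$ tail after transforming back. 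You instead compute the product tail directly by conditioning on $R_j$ and splitting at $R_j^{\alpha}\asymp x$: the region $R_j^{\alpha}\lesssim x$ contributes $x^{-1}\mathbb{E}\{R_j^{\alpha}\mathbbm{1}(R_j^{\alpha}\lesssim x)\}$, whose truncated mean grows like $2C_\alpha\bar{\gamma}_j^{\alpha}\log x$ by Karamata's theorem, while the complementary region is $O(x^{-1})$. Both routes are sound. Your decomposition is more elementary and self-contained (it needs only Karamata, not the Cline machinery), and it pins down the constant $2C_\alpha\bar{\gamma}_j^{\alpha}$ transparently—whereas Cline's lemma, as stated, carries an unspecified slowly varying factor $l(\cdot)$ that must still be identified to extract the exact constant. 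The paper's route is shorter given the imported lemma and generalizes immediately to exponential-tailed convolutions with polynomial corrections ($\beta_i\neq 0$), which is reused elsewhere in their arguments. One shared caveat, not a gap relative to the paper: the constant $(1-\alpha/\phi_j)^{-1}$ in the $\phi_j>\alpha$ case is the fractional moment of a \emph{standard} Pareto; for the type-II Pareto with general location $\delta\neq 1$ the moment $\mathbb{E}(W_j^{\alpha/\phi_j})$ is still finite (so the rate is unchanged) but the constant differs—the paper's own computation $\int_1^\infty w^{\alpha/\phi_j-2}\,dw$ makes the same implicit normalization $\delta=1$.
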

% \BAS{Move this remark to the appendix?}
\doublespacing
% \begin{remark}
% By Theorem 3.8 and Lemma 3.12 of \citet{nolan2020univariate}, we can write out the exact form of the fractional lower order moment of $R_j\sim \text{Stable}(\alpha,1,\bar{\gamma}_{j},\delta)$ if $\delta=0$:
% \begin{equation*}
%     E(R_j^{\phi_j})=\bar{\gamma}_{j}^{\phi_j} \cos^{-\frac{\phi_j}{\alpha}}\left(\frac{\pi\alpha}{2}\right)\frac{\Gamma(1-\phi_j/\alpha)}{\Gamma(1-\phi_j)}.
% \end{equation*}
% \end{remark}

\subsection{Joint distribution of the dependence model}\label{sec:joint dependence model}

We start by considering the joint behavior of two scaling variables, $R_i$ and $R_j$ (defined as kernel-weighted linear combinations of iid Stable random variables in \eqref{eqn:scaling_process}) at two locations $\bs_i$ and $\bs_j$. For notational simplicity, we denote $w_{kj}=w_k(\bs_j, r_k)$, $k=1,\ldots,K$, and write $\bw_j=(w_{1j},\ldots, w_{Kj})$ and $\mathcal{C}_j=\{k:w_{kj}\neq 0,k=1,\ldots,K\}$,  $j=1,\ldots, D$.  We require that any location $\bs\in \mathcal{S}$ is covered by at least one basis function, thus $\mathcal{C}_j$ cannot be empty for any $j$.
The following proposition describes the dependence between two scaling variables $R_i$ and $R_j$:

\singlespacing
\begin{proposition}\label{prop:R_joint_nonzero}

\begin{enumerate}[(a)]
    \item\label{lem:R_joint} If $\mathcal{C}_i\cap\mathcal{C}_j=\emptyset$, $R_i$ and $R_j$ are independent and $\Pr(R_i>x, R_j>x)$ $\sim$ $4C_\alpha^2\left\{\bar{\gamma}^\alpha_{i}\right\}\left\{\bar{\gamma}^\alpha_{j}\right\}$ $x^{-2\alpha}$ as $x\rightarrow \infty$. If $\mathcal{C}_i\cap\mathcal{C}_j\neq\emptyset$, then
\begin{equation}\label{eqn:R_joint}
    \Pr(R_i>x,R_j>x)=2C_\alpha C_K(\bw_i,\bw_j,\boldsymbol{\gamma})x^{-\alpha},
\end{equation}
where $C_K(\bw_i,\bw_j,\boldsymbol{\gamma})=\sum_{k=1}^K w_{k,\wedge}^\alpha\gamma_k^\alpha$ with $w_{k,\wedge}=\min(w_{ki},w_{kj})$,  $k=1,\ldots,K$.

\item\label{lem:R_max_min} If $\mathcal{C}_i\cap\mathcal{C}_j\neq \emptyset$, we have for two positive constants $\phi_i<\phi_j$ and any $c_i,\;c_j>0$
\begin{equation}\label{eqn:R_max_min} 
        \Pr(\min(c_iR_i^{\phi_i},c_jR_j^{\phi_j})>x)\sim d_\wedge x^{-\alpha/\phi_i},\;
        \Pr(\max(c_iR_i^{\phi_i},c_jR_j^{\phi_j})>x)\sim d_\vee x^{-\alpha/\phi_j},
\end{equation}
where \revise{$d_\wedge$ and $d_\vee$ are defined to satisfy, respectively,}
\begin{equation*}
    \begin{split}
        2C_\alpha\sum_{k\in\mathcal{C}_i\cap\mathcal{C}_j}(w_{ki}\gamma_k)^\alpha c_i^{\alpha/\phi_i}&\leq d_\wedge \leq 2^{\alpha+1}C_\alpha\sum_{k\in\mathcal{C}_i}(w_{ki}\gamma_k)^\alpha c_i^{\alpha/\phi_i},\\
        2C_\alpha\sum_{k\in\mathcal{C}_i\cap\mathcal{C}_j}(w_{kj}\gamma_k)^\alpha c_j^{\alpha/\phi_j}&\leq d_\vee \leq 2^{\alpha+1}C_\alpha\sum_{k\in\mathcal{C}_j}(w_{kj}\gamma_k)^\alpha c_j^{\alpha/\phi_j}.
    \end{split}
\end{equation*}
\end{enumerate}
\end{proposition}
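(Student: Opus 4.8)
The common engine for both parts is the heavy right tail of the independent summands in \eqref{eqn:scaling_process}, namely $\Pr(S_k > s) \sim 2 C_\alpha \gamma_k^\alpha s^{-\alpha}$, combined with the single-big-jump principle for sums of regularly varying variables: at a high level a joint exceedance is produced, to leading order, by exactly one $S_k$ becoming of the order of the threshold while the other summands stay smaller. The plan is to reduce every probability to this principle by viewing $(R_i, R_j)$ as the image of the independent vector $\mathbf{S} = (S_1,\dots,S_K)$ under the linear map $\mathbf{s} \mapsto (\sum_k w_{ki} s_k, \sum_k w_{kj} s_k)$, and to track which coordinate $S_k$ supplies the big jump.

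For part (\ref{lem:R_joint}) the disjoint case is immediate: if $\mathcal{C}_i \cap \mathcal{C}_j = \emptyset$ then $R_i$ and $R_j$ are functions of disjoint blocks of the independent $S_k$, hence independent, and the $x^{-2\alpha}$ rate is the product of the marginal tails $\Pr(R_i > x) \sim 2 C_\alpha \bar\gamma_i^\alpha x^{-\alpha}$ and $\Pr(R_j > x) \sim 2 C_\alpha \bar\gamma_j^\alpha x^{-\alpha}$ from Section \ref{sec:univariate dependence model}. For the overlapping case I would invoke multivariate regular variation: since the $S_k$ are independent and bounded below, $\mathbf{S}$ is multivariate regularly varying with index $\alpha$ and a limit measure carried by the coordinate axes, the $k$-th axis having tail intensity $2 C_\alpha \gamma_k^\alpha (\cdot)^{-\alpha}$. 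Pushing this measure through the (continuous, positively homogeneous) linear map sends the $k$-th axis to the ray through $(w_{ki}, w_{kj})$, so the mass the pushforward assigns to $\{(a,b) : a > 1,\, b > 1\}$ is $2 C_\alpha \gamma_k^\alpha w_{k,\wedge}^\alpha$ (a point on that ray lies in the set iff its driving $S_k$ exceeds $1/w_{k,\wedge}$). Summing over $k$ — only $k \in \mathcal{C}_i \cap \mathcal{C}_j$ contribute since otherwise $w_{k,\wedge} = 0$ — gives $2 C_\alpha C_K(\bw_i, \bw_j, \boldsymbol\gamma)$ and hence \eqref{eqn:R_joint}; the only point needing care is that $\{a > 1,\, b > 1\}$ is a continuity set of the limit measure, which holds because its radial part is atomless. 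An elementary alternative is a single-jump lower bound with inclusion--exclusion plus a subexponential upper bound, but the regular-variation pushforward delivers the sharp constant most cleanly.

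For part (\ref{lem:R_max_min}) I would first move to the $R$-scale, writing $c_i R_i^{\phi_i} > x \iff R_i > x_i := (x/c_i)^{1/\phi_i}$ and $c_j R_j^{\phi_j} > x \iff R_j > x_j := (x/c_j)^{1/\phi_j}$, and record the crucial separation $x_i / x_j \to \infty$ coming from $\phi_i < \phi_j$. For the minimum, $\Pr(\min > x) = \Pr(R_i > x_i, R_j > x_j)$: the lower bound on $d_\wedge$ comes from a single shared jump, since for $k \in \mathcal{C}_i \cap \mathcal{C}_j$ the event $\{S_k > x_i/w_{ki}\}$ forces $R_i \ge w_{ki} S_k > x_i$ and $R_j \ge w_{kj} S_k \ge (w_{kj}/w_{ki}) x_i$, which exceeds $x_j$ for all large $x$ because $x_i \gg x_j$; inclusion--exclusion over these finitely many $k$ (pairwise terms being $O(x_i^{-2\alpha})$) yields the stated $2 C_\alpha \sum_{\mathcal{C}_i \cap \mathcal{C}_j}$ bound. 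The upper bound uses $\Pr(\min > x) \le \Pr(R_i > x_i)$ together with the union bound $\Pr(R_i > x_i) \le \sum_{k \in \mathcal{C}_i} \Pr(w_{ki} S_k > x_i/2) + O(x_i^{-2\alpha})$, where the remainder (exceedance with no summand above $x_i/2$) is lower order by subexponentiality; since $\Pr(w_{ki} S_k > x_i/2) \sim 2^{\alpha+1} C_\alpha (w_{ki}\gamma_k)^\alpha x_i^{-\alpha}$, this produces exactly the looser constant $2^{\alpha+1} C_\alpha$ summed over all of $\mathcal{C}_i$ (the factor $c_i^{\alpha/\phi_i}$ appearing once $x_i^{-\alpha}$ is written in $x$). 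The maximum is handled through $\Pr(\max > x) = \Pr(R_i > x_i) + \Pr(R_j > x_j) - \Pr(R_i > x_i, R_j > x_j)$: the heavier tail $\Pr(R_j > x_j) \asymp x^{-\alpha/\phi_j}$ dominates while the other two terms are $O(x^{-\alpha/\phi_i}) = o(x^{-\alpha/\phi_j})$, and bracketing $\Pr(R_j > x_j)$ from below by a single jump indexed over $\mathcal{C}_i \cap \mathcal{C}_j$ and from above by the same $x_j/2$ union bound over $\mathcal{C}_j$ gives the two bounds on $d_\vee$.

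The main obstacle, common to both parts, is the upper bounds. Because $\alpha < 1$ the $S_k$ have infinite mean, so the non-dominant summands of $R_i$ and $R_j$ are not a negligible bounded remainder but are themselves heavy-tailed; ruling out joint exceedances in which no single $S_k$ dominates is precisely the subexponential single-big-jump estimate $\Pr(\sum_k w_{ki} S_k > x_i,\, \max_k w_{ki} S_k \le x_i/2) = O(x_i^{-2\alpha})$ rather than a routine moment bound. In part (\ref{lem:R_max_min}) there is the additional subtlety that a large jump located in $\mathcal{C}_i \setminus \mathcal{C}_j$ drives $R_i$ but not $R_j$; showing such configurations are of smaller order is exactly what the tail-index gap $x_i \gg x_j$ buys, and it explains why the sharp lower bounds are naturally indexed by $\mathcal{C}_i \cap \mathcal{C}_j$ while the crude upper bounds range over the full sets $\mathcal{C}_i$ and $\mathcal{C}_j$.
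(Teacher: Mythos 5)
Your proof is correct. For part (a) it is essentially the paper's own argument: your multivariate-regular-variation pushforward, with limit measure concentrated on the rays through the columns $(w_{ki}\gamma_k, w_{kj}\gamma_k)$ and evaluated on the set $\{a>1,\,b>1\}$, is exactly what the paper's Lemma \ref{lem:cooley-thibaud} (Cooley--Thibaud) packages; the paper's ``componentwise min-operator as the norm'' is the same computation, since $\min(a,b)>1$ iff $a>1$ and $b>1$. For part (b), however, you take a genuinely different and more elementary route. The paper first handles $\mathcal{C}_i=\mathcal{C}_j$ by a containment argument ($\{c_iR_i^{\phi_i}>x\}\subseteq\{c_jR_j^{\phi_j}>x\}$ for large $x$, by comparing coefficients), then in the general case splits $R_i=R_{i,\cap}+R_{i,\setminus}$, sandwiches via the $c_r$-inequality $R_{i,\cap}^{\phi_i}\leq R_i^{\phi_i}\leq 2^{\phi_i}(R_{i,\cap}^{\phi_i}+R_{i,\setminus}^{\phi_i})$ (the source of the $2^{\alpha+1}$), and combines the resulting min and max pieces with the convolution result in Lemma \ref{lem:CE_convolution}. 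You instead run the single-big-jump principle directly: shared jumps $\{S_k>x_i/w_{ki}\}$, $k\in\mathcal{C}_i\cap\mathcal{C}_j$, with inclusion--exclusion for the lower bound (valid, using $S_k\geq\delta\geq0$ and $x_i/x_j\to\infty$), the majorization $\Pr(\min>x)\leq\Pr(R_i>x_i)$ for the upper bound, and domination by the heavier marginal tail for the maximum. Your route avoids the convolution lemma altogether; it gives the maximum case an explicit proof (the paper only says it is ``analogous'') with the exact constant $d_\vee=2C_\alpha\sum_{k\in\mathcal{C}_j}(w_{kj}\gamma_k)^\alpha c_j^{\alpha/\phi_j}$; and --- since $R_i$ is itself Stable, so $\Pr(R_i>x_i)$ is known exactly --- your min upper bound in fact holds with the sharper constant $2C_\alpha\sum_{k\in\mathcal{C}_i}(w_{ki}\gamma_k)^\alpha c_i^{\alpha/\phi_i}$, making the halved-threshold union bound (the source of your $2^{\alpha+1}$) unnecessary. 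What the paper's heavier machinery buys is reusability: the intermediate asymptotics for the $R_{i,\cap}$ and $R_{i,\setminus}$ pieces are recycled verbatim in the proof of Theorem \ref{thm:dependence_properties}. Two minor notes: your claim $\Pr(R_i>x_i,\max_k w_{ki}S_k\leq x_i/2)=O(x_i^{-2\alpha})$ deserves its one-line justification (if the sum exceeds $x_i$ with no summand above $x_i/2$, then at least two summands exceed $x_i/\{2(K-1)\}$, an event of probability $O(x_i^{-2\alpha})$ by independence); and, like the paper, in the case $\mathcal{C}_i\neq\mathcal{C}_j$ you produce only $\liminf$/$\limsup$ brackets rather than an exact limit, but that imprecision sits in the statement's own formulation of $d_\wedge$ and $d_\vee$, not in your argument.
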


\doublespacing
\begin{remark}
\upshape
If $\bs_i$ and $\bs_j$ are so distant that $\mathcal{C}_i\cap \mathcal{C}_j=\emptyset$, $R_i$ and $R_j$ are completely independent and hence $\chi_{ij}^R=0$. If there is at least one basis function that covers both $\bs_i$ and $\bs_j$, $\chi_{ij}^R=\lim_{x\rightarrow\infty} \Pr(R_j>x\given R_i>x) = C_K(\bw_i,\bw_j,\boldsymbol{\gamma})/\bar{\gamma}_i^\alpha>0$, which means $R_i$ and $R_j$  are asymptotically dependent.
\end{remark}

% To evaluate $\chi_{ij}(u)$ for model \eqref{eqn:model}, the main difficulty lies in the approximation of the joint probability
% \begin{equation*}
%     P(X_i>x,X_j>x)=P\{R_i^{\phi_i}g(Z_i)>x,R_j^{\phi_j}g(Z_j)>x\}.
% \end{equation*}

In the following theorem, we describe the tail dependence properties of our proposed model given in Section \ref{sec:model}.  It shows that the model can be simultaneously AD (or AI) at short spatial distances and AI at long spatial distances.  It makes precise how the interplay between the compact kernels and the spatially varying tail parameter $\phi(\bs)$ determines the class and strength of the spatial tail dependence.  Using the existing results on regular variation from \citet{breiman1965some}, \citet{cline1986convolution} and \citet{engelke2019extremal}, it presents bounds for the dependence coefficients $\chi_{ij}$ and $\eta_{ij}$, which depend on the spatial separation of the locations $\bs_i$ and $\bs_j$, the configuration of the compact kernels, and the value of the tail parameter $\phi(\bs)$ at $\bs_i$ and $\bs_j$.

\singlespacing
\begin{theorem}\label{thm:dependence_properties}

\noindent Under the definitions and notation established in the previous sections, for locations $\bs_i$ and $\bs_j$, we denote
\begin{equation*}
    v_{ki}=\frac{(w_{ki}\gamma_k)^\alpha}{\sum_{k'\in \mathcal{C}_i}(w_{k'i}\gamma_k)^\alpha},\quad 
    v_{kj}=\frac{(w_{kj}\gamma_k)^\alpha}{\sum_{k'\in \mathcal{C}_j}(w_{k'j}\gamma_k)^\alpha},
\end{equation*}
and $v_{k,\wedge}=\min(v_{ki},v_{kj})$, $v_{k,\vee}=\max(v_{ki},v_{kj})$. Also, let $W_j=g(Z_j)$, $j=1,\ldots, D$.
% \begin{enumerate}[(a)]
\begin{enumerate}[label=(\alph*)]
    \item\label{item:local_AD_AI}\label{thm:local} If $\mathcal{C}_i\cap \mathcal{C}_j\neq \emptyset$,
    the dependence class of the pair $(X_i, X_j)^{\trans}$ depends on the tail index parameters $\phi_i$ and $\phi_j$.
    
    \begin{enumerate}[label=(\roman*),ref=(\roman*)]
        \item\label{item:local_AD}\label{item:local_case1} If $\alpha<\phi_i<\phi_j$, the pair $(X_i,X_j)^T$ is asymptotically dependent with $\eta_{ij}=1$ and
        \begin{footnotesize}
        \begin{equation*}
        \begin{split}
         \chi_{ij}&=\mathbb{E}\left\{\min\left(\frac{W_i^{\alpha/\phi_i}}{\mathbb{E}(W_i^{\alpha/\phi_i})},\frac{W_j^{\alpha/\phi_j}}{\mathbb{E}(W_j^{\alpha/\phi_j})}\right)\right\}\sum_{k=1}^K v_{k,\wedge}.
        \end{split}
        \end{equation*}
    \end{footnotesize}
    \item\label{both_small}\label{item:local_case2} If $\phi_i<\phi_j<\alpha$, the pair $(X_i,X_j)^T$ is asymptotically independent with $\chi_{ij}=0$ and
    \begin{equation*}
        \begin{cases}
        \eta_{ij}=\eta_{ij}^W,&\text{ if }\eta_{ij}^W>\phi_j/\alpha,\\
        \eta_{ij}\in [\eta_{ij}^W,\phi_j/\alpha],&\text{ if }\phi_i/\alpha<\eta_{ij}^W<\phi_j/\alpha,\\
        \eta_{ij}\in [\phi_i/\alpha,\phi_j/\alpha],&\text{ if }\eta_{ij}^W<\phi_i/\alpha.\\
        \end{cases}
    \end{equation*}
    
    \item\label{item:local_case3} If $\phi_i<\alpha<\phi_j$, the pair $(X_i,X_j)^T$ is also asymptotically independent with $\chi_{ij}=0$ and
    \begin{equation*}
    \begin{dcases}
     1/\eta_{ij}\in[ (1-\phi_i/\alpha)/(2\eta_{ij}^W)+1,2-\phi_i/\alpha],&\text{if }\eta_{ij}^W\leq (\phi_i/\alpha+\phi_j/\alpha)/2,\\
     1/\eta_{ij}\in[ (1+\phi_j/\alpha)/(2\eta_{ij}^W),2-\phi_i/\alpha],&\text{if }\eta_{ij}^W> (\phi_i/\alpha+\phi_j/\alpha)/2.
    \end{dcases}
    \end{equation*}
    \end{enumerate}

    \item\label{thm:AI_case}\label{thm:long_range} If $\mathcal{C}_i\cap \mathcal{C}_j=\emptyset$, the pair $(X_i,X_j)^T$ is asymptotically independent with $\chi_{ij}=0$. When $\rho_{ij}:= \text{Cor}(Z_i, Z_j)=0$, $X_i$ and $X_j$ are completely independent. When $\rho_{ij}\neq 0$,
    % \begin{enumerate}[label=(\roman*),ref=\arabic{enumi}(\roman*)]
    \begin{enumerate}[label=(\roman*)]
        \item\label{thm:AI_case1}\label{item:long_range_case1} If $\alpha<\phi_i<\phi_j$, 
        \begin{equation*}
            \begin{cases}
            \eta_{ij}=1/2,&\text{ if }\phi_i/\alpha>2,\\
            \eta_{ij}\in [1/2,\alpha/\phi_i],&\text{ if }\phi_i/\alpha<2<\phi_j/(\alpha\eta_{ij}^W),\\
            \eta_{ij}\in [\alpha\eta_{ij}^W/\phi_j,\alpha/\phi_i],&\text{ if }\phi_j/(\alpha\eta_{ij}^W)<2.\\
            \end{cases}
        \end{equation*}
    
        \item\label{thm:AI_case2}\label{item:long_range_case2} If $\phi_i<\phi_j<\alpha$, 
        \begin{equation*}
        \begin{cases}
        \eta_{ij}=\eta_{ij}^W,&\text{ if }\eta_{ij}^W>\phi_j/\alpha,\\
        \eta_{ij}\in [\eta_{ij}^W,\phi_j/\alpha],&\text{ if }\eta_{ij}^W<\phi_j/\alpha.\\
        \end{cases}
         \end{equation*}
    
        \item\label{thm:AI_case3}\label{item:long_range_case3} If $\phi_i<\alpha<\phi_j$, 
        \begin{equation*}
        \begin{dcases}
         1/\eta_{ij}\in[1/(1+\rho_{ij})+1,2],&\text{if }2\eta_{ij}^W\leq \phi_j/\alpha,\\
         1/\eta_{ij}\in[(1+\phi_j/\alpha)/(1+\rho_{ij}),2],&\text{if }2\eta_{ij}^W> \phi_j/\alpha.
        \end{dcases}
        \end{equation*}
    \end{enumerate}
\end{enumerate}
\end{theorem}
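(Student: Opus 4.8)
The plan is to reduce every assertion of the theorem to the asymptotics of the bivariate survival $\Pr(X_i>a,X_j>b)$ as $a,b\to\infty$, and then to specialize the growth rates of $a$ and $b$ to those dictated by equal marginal quantiles. Since Proposition \ref{prop:marginal_distr} shows each margin is regularly varying, the $u$-quantile $x_j(u)$ of $X_j$ satisfies $1-u=\Pr\{X_j>x_j(u)\}$ with $x_j(u)\to\infty$, and
\begin{equation*}
\chi_{ij}=\lim_{u\to1}\frac{\Pr\{X_i>x_i(u),X_j>x_j(u)\}}{1-u},\qquad \Pr\{X_i>x_i(u),X_j>x_j(u)\}=L\{(1-u)^{-1}\}(1-u)^{1/\eta_{ij}}.
\end{equation*}
It therefore suffices to locate the polynomial rate (and, in the asymptotically dependent cases, the leading constant) of the joint survival. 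The two ingredients I will combine are (i) the independence $\{R(\bs)\}\indep\{Z(\bs)\}$ together with the joint tail of $(R_i,R_j)$ from Proposition \ref{prop:R_joint_nonzero}, within the scale-mixture framework of \citet{engelke2019extremal}, and (ii) the hidden regular variation of $(W_i,W_j)$, which by copula invariance of the monotone link $g$ has the same coefficient $\eta^W_{ij}=\eta^Z_{ij}$ and unit-index margins $\Pr(W_j>x)\sim x^{-1}$.

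For the asymptotically dependent regime \ref{item:local_AD} ($\alpha<\phi_i<\phi_j$, overlapping kernels), I will exploit the single-big-jump structure of the Stable sums: conditionally on one shared component $S_k$ with $k\in\mathcal{C}_i\cap\mathcal{C}_j$ being large, $R_i\approx w_{ki}S_k$ and $R_j\approx w_{kj}S_k$, so $X_i\approx(w_{ki}S_k)^{\phi_i}W_i$ and $X_j\approx(w_{kj}S_k)^{\phi_j}W_j$. Integrating over $S_k$ and over the independent pair $(W_i,W_j)$, the binding coordinate alternates between $i$ and $j$ according to the realized $(W_i,W_j)$, which is precisely what produces the expectation of the minimum; summing contributions over the shared components yields the spatial factor $\sum_k v_{k,\wedge}$. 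Breiman's lemma \citep{breiman1965some}, applied coordinatewise to absorb each $W$ factor, supplies the normalizing constants $\mathbb{E}(W_i^{\alpha/\phi_i})$ and $\mathbb{E}(W_j^{\alpha/\phi_j})$, and the fact that the joint survival decays at the same rate as each margin forces $\eta_{ij}=1$.

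For the asymptotically independent regimes I will bracket the joint survival between matching lower and upper bounds and read $1/\eta_{ij}$ off the exponent. Lower bounds come from restricting to a single favorable mechanism: either both scale variables being independently large (giving exponent $1/\eta=2$, i.e. $\eta=1/2$, via the independence in Proposition \ref{prop:R_joint_nonzero}\ref{lem:R_joint} when $\mathcal{C}_i\cap\mathcal{C}_j=\emptyset$), or the correlated Gaussian parts $(W_i,W_j)$ being jointly large while the scale variables are held at a chosen level. Upper bounds come from a union decomposition over which factor carries the exceedance in each coordinate, controlled by the convolution tail estimates of \citet{cline1986convolution}, the marginal rates of Proposition \ref{prop:marginal_distr}, and the hidden-regular-variation bound for $(W_i,W_j)$. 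Optimizing the split of the budget $x$ between the $R^{\phi}$ and $W$ factors in each coordinate then produces the stated thresholds ($\phi_i/\alpha\lessgtr2$, $\eta^W_{ij}\lessgtr\phi_j/\alpha$, and their long-range analogues) that decide whether the two bounds coincide.

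The main obstacle I anticipate is the mixed-index cases \ref{item:local_case3} and \ref{item:long_range_case3} ($\phi_i<\alpha<\phi_j$), where one margin is dominated by the Gaussian factor (index $1$) and the other by the Stable factor (index $\alpha/\phi_j<1$). There the two coordinates demand different normalizations, so the favorable region for $(W_i,W_j)$ is genuinely off-diagonal, of order $x/R_i^{\phi_i}$ in one coordinate and $x/R_j^{\phi_j}$ in the other, whereas hidden regular variation pins down $(W_i,W_j)$ only along the diagonal $a\asymp b$. Breiman's lemma then applies cleanly to only one coordinate, and the off-diagonal survival of $(W_i,W_j)$ can only be sandwiched rather than evaluated; this asymmetry is exactly why these cases deliver an interval for $\eta_{ij}$ instead of an exact value. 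I expect the bookkeeping of simultaneously tracking the shared-Stable dependence and the Gaussian dependence under two distinct power normalizations to be the most delicate part of the argument.
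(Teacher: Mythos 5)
Your overall skeleton matches the paper's: reduction to joint-survival asymptotics at equal marginal quantiles via Proposition \ref{prop:marginal_distr}, Breiman's lemma, hidden regular variation of $(W_i,W_j)$ with $\eta^W_{ij}=\eta^Z_{ij}$ by copula invariance, and bracketing arguments in the AI regimes. However, there is a genuine gap in your treatment of case \ref{thm:local}\ref{item:local_case1}. Your single-big-jump argument, carried out rigorously (via the angular measure of $(R_i,R_j)^\trans$ in Lemma \ref{lem:cooley-thibaud} plus dominated convergence over $(W_i,W_j)$), does \emph{not} produce the factorized constant you claim. Conditioning on a shared component $S_k$ being large forces $S_k>\max\{(t_i/W_i)^{1/\phi_i}/w_{ki},\,(t_j/W_j)^{1/\phi_j}/w_{kj}\}$, so after integrating the Pareto-like tail of $S_k$, summing over $k\in\mathcal{C}_i\cap\mathcal{C}_j$, and inserting the marginal quantiles $t_i=F_{X_i}^{-1}(1-q)$, one obtains
\begin{equation*}
\Pr\left\{X_i>F_{X_i}^{-1}(1-q),\,X_j>F_{X_j}^{-1}(1-q)\right\}\sim q\sum_{k\in\mathcal{C}_i\cap\mathcal{C}_j}\mathbb{E}\left\{\min\left(v_{ki}\,\frac{W_i^{\alpha/\phi_i}}{\mathbb{E}(W_i^{\alpha/\phi_i})},\;v_{kj}\,\frac{W_j^{\alpha/\phi_j}}{\mathbb{E}(W_j^{\alpha/\phi_j})}\right)\right\},
\end{equation*}
with the weights trapped \emph{inside} the minimum alongside the random $W$'s. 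The sum over $k$ and the expectation do not separate: $\mathbb{E}\{\min(v_{ki}\tilde W_i,v_{kj}\tilde W_j)\}\neq v_{k,\wedge}\,\mathbb{E}\{\min(\tilde W_i,\tilde W_j)\}$ in general (equality holds pointwise only as an inequality $\geq$, and the two agree only when $v_{ki}=v_{kj}$ for every shared $k$, e.g.\ when $\bw_i=\bw_j$). So the step in your plan where ``summing contributions over the shared components yields the spatial factor $\sum_k v_{k,\wedge}$'' is not a consequence of your computation; your route, completed honestly, ends at a constant that is in general strictly larger than the expression you set out to prove. The paper's proof is structured in the opposite order precisely to get the factorized form: it first applies the multiplicative analogue of the sandwich \eqref{eqn:sandwich} together with Lemma \ref{lem:breiman}\ref{easy_breiman} to pull $\mathbb{E}\{\min(\tilde W_i,\tilde W_j)\}$ out as a common factor, and only then evaluates $\Pr(\min(\theta_iR_i^\alpha,\theta_jR_j^\alpha)>q^{-1})$ via Proposition \ref{prop:R_joint_nonzero}\ref{lem:R_joint}, arguing that the min- and max-based brackets share the same leading order. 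To repair your argument you must either adopt that bracketing structure or reconcile the two constants; as written, the proposal proves a different $\chi_{ij}$ from the one stated.

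A secondary weakness is that your AI-case plan stays at the level of heuristics where the paper needs specific machinery. For cases \ref{thm:local}\ref{item:local_case2} and \ref{thm:long_range}\ref{item:long_range_case1}--\ref{item:long_range_case2}, the paper's device is not a ``budget-split optimization'' but a reduction to a \emph{single} radial variable: it sets $R_*=\min(R_i^{\phi_i}/\lambda_i,R_j^{\phi_j}/\lambda_j)$ and $R^*=\max(R_i^{\phi_i}/\lambda_i,R_j^{\phi_j}/\lambda_j)$, reads off their tail indices from Proposition \ref{prop:R_joint_nonzero}\ref{lem:R_max_min}, applies Lemma \ref{lem:RV_convolution} (the Engelke et al.\ common-factor result) to each mixture, and sandwiches $\eta_{ij}$ between the two answers, using Breiman once more to equate the three marginal normalizations on the log scale. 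For the mixed cases \ref{thm:local}\ref{item:local_case3} and \ref{thm:long_range}\ref{item:long_range_case3}, the paper's proof requires splitting the integral over $f_R$ into three regions, the series expansion of the stable density, Karamata's theorem, the Gaussian-copula bound $\Pr(W_i>a,W_j>b)\leq\exp\{-(\log a+\log b)/(1+\rho)\}$, and an extension from $\bw_i=\bw_j$ to general weights by monotone sandwiching with $\sum_k w_{k,\wedge}S_k$ and $\sum_k w_{k,\vee}S_k$. You correctly diagnose \emph{why} these cases yield intervals rather than exact values of $\eta_{ij}$, but your proposal supplies no substitute for these steps, so for the mixed cases it remains a diagnosis rather than a proof.
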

\doublespacing
\begin{remark}
\upshape
In Theorem \ref{thm:dependence_properties}\ref{item:local_AD_AI}, when $\mathcal{C}_i\cap \mathcal{C}_j\neq \emptyset$, both AD and AI are possible as the dependence strength is controlled by both $(\phi_i,\phi_j)\trans$ and the weights $(\bw_i\trans,\bw_j\trans)\trans$. Roughly speaking, larger $\min\{\phi_i,\phi_j\}$ or smaller $||\bw_i-\bw_j||$ induces stronger dependence. Additionally, when $\mathcal{C}_i\cap \mathcal{C}_j=\emptyset$, the dependence structure of $(Z_i,Z_j)\trans$ is recovered.
\end{remark}
\begin{remark}
\upshape 
Theorem \ref{thm:dependence_properties} suggests a natural interpretation of the spatially varying tail parameter $\phi(\bs)$.  For a given location $\bs_i$, if $\phi(\bs_i)$ is larger than $\alpha$, then $X(\bs_i)$ is asymptotically dependent with $X(\bs_j)$ at location $\bs_j$, provided that \revise{$\phi(\bs_j)$} is also larger than $\alpha$ \emph{and} $\bs_i$ shares a kernel with \revise{$\bs_j$}. However, if $\phi(\bs_i)$ is smaller than $\alpha$, then $X(\bs_i)$ can never be asymptotically dependent with any other location.  Therefore, the surface $\phi(\bs)$ represents the \emph{potential strength of tail dependence} of the process at all locations $\bs$ with any other location.
\end{remark}

\subsection{Examples}\label{sec:examples}

In this section, we provide a few concrete examples for the dependence model $\{X(\bs)\}$ defined in \eqref{eqn:model} and selections for the Stable variable parameters.

\begin{example}[\citet{huser2019modeling} process] \upshape
This is a scale mixture model defined in \eqref{eqn:scalemix_model} where $\{Z(\bs)\}$ is a stationary spatial process with hidden regular variation, the link function $g(\cdot)$ transforms the margins of $\{Z(\bs)\}$ to standard Pareto (corresponding to $\delta=1$ in \eqref{eqn:type_II}) and
\begin{equation*}
    R\given \phi^H \sim \text{Pareto}\left(\frac{1-\phi^H}{\phi^H}\right),\; \phi^H\in [0,1].
\end{equation*}
This model is tail equivalent to a special case of \eqref{eqn:model} obtained by fixing $K\equiv 1$, $r=\infty$ and $\phi(\bs)\equiv \phi$, and also making $\delta=1$, which means $R(\bs)\equiv S_1\sim \text{Stable}(\alpha, 1, \gamma_1, 1)$. If we denote $\alpha/\phi=(1-\phi^H)/\phi^H$, the tail behavior in \eqref{eqn:marginal} corresponds to that of Equation (9) in \citet{huser2019modeling}: the case when $\phi\in (0,\alpha)$ corresponds to $\phi^H\in [0,1/2)$ which induces asymptotic independence, while $\phi\in (\alpha,\infty)$ corresponds to $\phi^H\in (1/2,1)$ which induces asymptotic dependence. When $\phi=\alpha$, the tail decays with the same order as a Gamma random variable with rate $1$ and shape $2$, which is the same as the survival function of the case $\phi^H=1/2$.

Furthermore, since $K\equiv 1$ and $r=\infty$, the interval for $\eta_{ij}$ from Theorem \ref{thm:dependence_properties}\ref{item:local_AD_AI} reduces to a singleton and we have
\begin{equation*}
    \chi_{ij} = 
    \mathbb{E}\left[\min\left\{\frac{W_i^{\alpha/\phi}}{\mathbb{E}(W_i^{\alpha/\phi})},\frac{W_j^{\alpha/\phi}}{\mathbb{E}(W_j^{\alpha/\phi})}\right\}\right]\mathbbm{1}(\phi<\alpha),
\end{equation*}
which is exactly equation (10) in \citet{huser2019modeling} given $\alpha/\phi=(1-\phi^H)/\phi^H$. Similarly, we can show that $\eta_{ij}$ under the stationary version of \eqref{eqn:model} is also equivalent to that of \citet[][see their equation (11)]{huser2019modeling}.
\end{example}

\begin{example}[\citet{huser2019modeling} process with spatially-varying $R$]
\upshape
Similar to the generalization from the \citet{huser2017bridging} process to model \eqref{eqn:huser_model} by \citet{hazra2021realistic}, we can allow $K>1$ and $r<\infty$ while holding the parameters for the Stable variables constant at all knots. That is,
\begin{equation*}
    \phi(\bs)\equiv \phi, \; S_k \iid \text{Stable}(\alpha,1,\gamma,\delta), k=1,\ldots, K.
\end{equation*}
To achieve local asymptotic dependence (i.e., scenario Theorem \ref{thm:dependence_properties}\ref{item:local_AD_AI}\ref{item:local_AD}) with a constant $\phi$, the value of $\phi$ has to be greater than $\alpha$. Then,  as in \citet{hazra2021realistic}, local asymptotic independence is not possible under this construction because $\phi(\bs)\equiv \phi>\alpha$ and \ref{item:local_AD_AI}\ref{item:local_AD} always holds locally. However, this case achieves more flexible local dependence properties than the model specified in \eqref{eqn:huser_model} as the value of $\phi\in (\alpha, \infty)$ is not fixed (recall that $\phi$ has to be fixed at $0$ in \citet{hazra2021realistic} because the \citet{huser2017bridging} model can only achieve asymptotic dependence when $\phi=0$).
\end{example}

\begin{illustration}[Empirical evaluations of the bounds in Theorem \ref{thm:dependence_properties}]
\upshape
We simulate $N=300,$$000,$$000$ data sets on the domain $\mathcal{S}=[0,10]\times [0,10]$ using a $\{\phi(\bs):\;\bs\in\mathcal{S}\}$ surface that varies smoothly between 0.4 to 0.7; see Figure \ref{fig:phi_surface_demon}. For each simulation, we generate independent Stable variables with $\alpha = 0.5$ at 9 knots on a uniform grid and average them using compact Wendland basis functions centered at the knots \citep{Wendland1995}. 
We empirically estimate the $\chi_{ij}(u)$ and $\eta_{ij}(u)$ functions defined in \eqref{eqn:chi_measure} and \eqref{eqn:eta_measure} between pairs of the sample points on a grid of $u$ values, using the $N$ independent simulations of $(R_i^{\phi_i}W_i,R_j^{\phi_j}W_j)$. 

Corresponding to Theorem \ref{thm:dependence_properties}\ref{thm:local}\ref{item:local_case1}, Figure \ref{fig:emp_chi_example_a} shows the empirical estimated $\chi_{12}(u)$ and $\eta_{12}(u)$ for the $1$st and $2$nd sample points over a grid of $u$. These two points share a common Wendland  kernel and $\alpha = 0.5 < \phi_2 < \phi_1$, so they are asymptotically dependent. 
Corresponding to Theorem \ref{thm:dependence_properties}\ref{thm:local}\ref{item:local_case2}, Figure \ref{fig:emp_chi_example_b} shows the empirical estimated $\chi_{34}(u)$ and $\eta_{34}(u)$ for the $3$rd and $4$th sample points. These two points share  a common Wendland  kernel, but $\phi_3 < \phi_4 < \alpha$, so they are asymptotically independent. 
Figure \ref{fig:emp_chi_example_c} shows the empirical estimated $\chi_{45}(u)$ and $\eta_{45}(u)$ for the $4$th and $5$th sample points. This corresponds to Theorem~\ref{thm:dependence_properties}\ref{thm:local}\ref{item:local_case3}, as the two points also share  a common Wendland kernel, yet $\phi_4 < \alpha < \phi_5$, so they are still asymptotically independent. 
Finally, Figure \ref{fig:emp_chi_example_d} shows the empirical estimated $\chi_{15}(u)$ and $\eta_{15}(u)$ for the $1$st and $5$th sample points. Corresponding to Theorem \ref{thm:dependence_properties}\ref{thm:long_range}\ref{item:long_range_case1}, these two locations do not share any  common Wendland kernel, so even-though $\phi_5>\phi_1>\alpha$, we still have asymptotically independence.
\end{illustration}

\begin{figure}[ht]
    \centering
  \begin{minipage}[c]{0.45\textwidth}
  \centering
    \includegraphics[width=\textwidth]{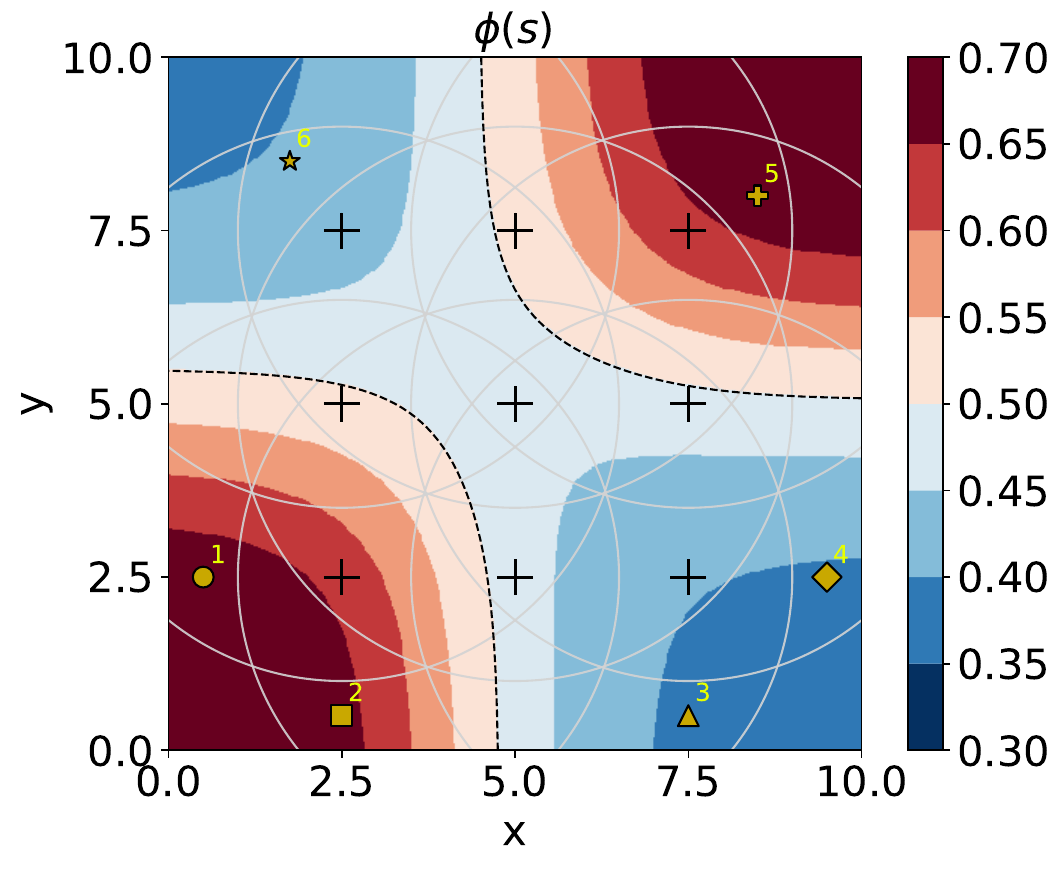}
  \end{minipage}
  % \hfill
  \hspace{5mm}
  \begin{minipage}[c]{0.4\textwidth}
    \caption{A $\phi(\bs)$ surface on $[0,10]^2$, in which the dashed line marks the transition between local AI and AD. The points with `+' are centers for the Stable variables and the compact Wendland basis functions. The points with other signs/marker-styles are randomly chosen sample points that we use to verify the dependence properties in Theorem \ref{thm:dependence_properties}.
    } \label{fig:phi_surface_demon}
  \end{minipage}
\end{figure}

\begin{figure}[h]
    \centering
    % First row
    \begin{minipage}{0.496\textwidth}
        \begin{subfigure}[t]{\textwidth}
            \centering
            \includegraphics[width=0.492\textwidth, clip=true, trim=20 30 0 0]{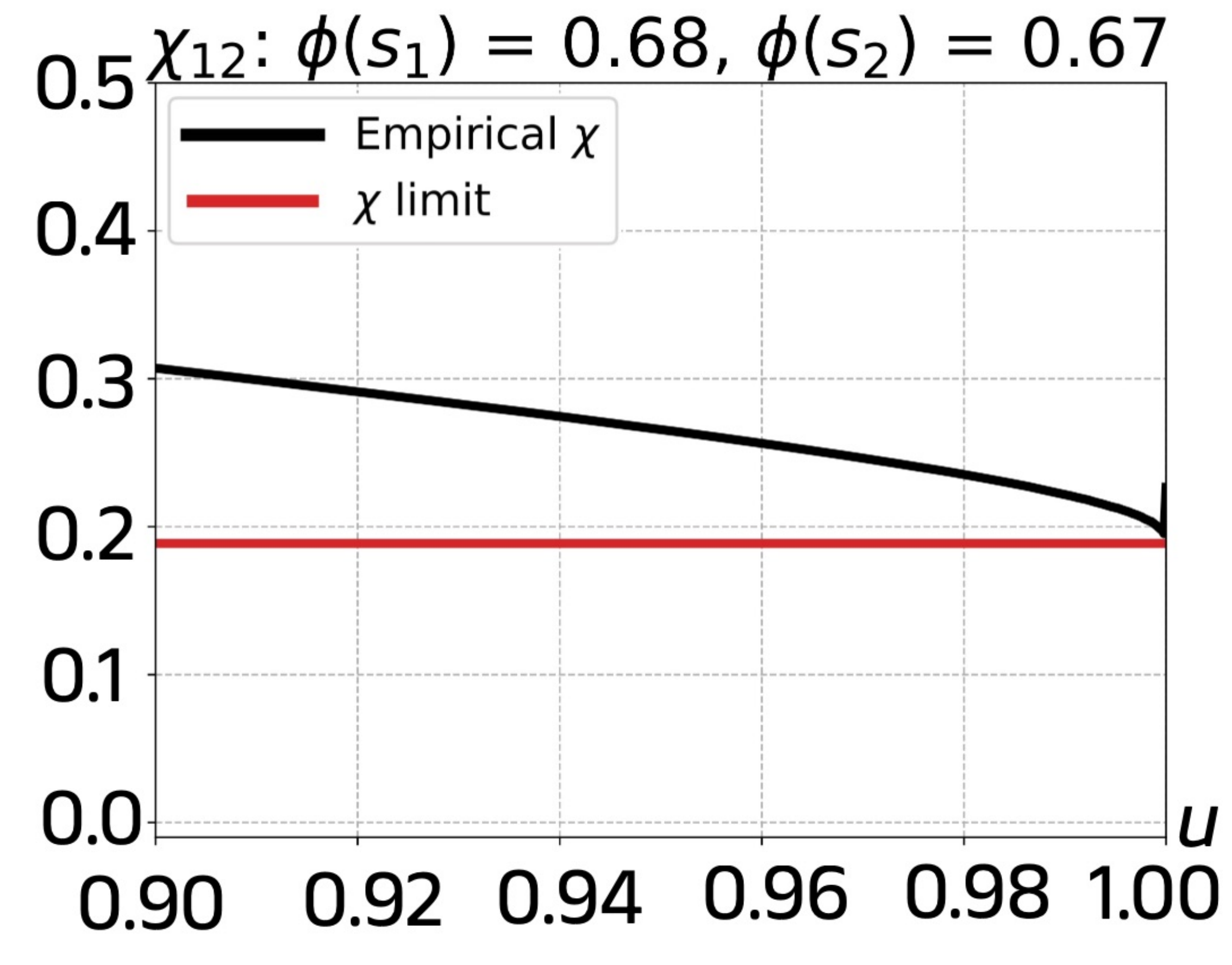}
            \includegraphics[width=0.492\textwidth, clip=true, trim=20 0 0 0]{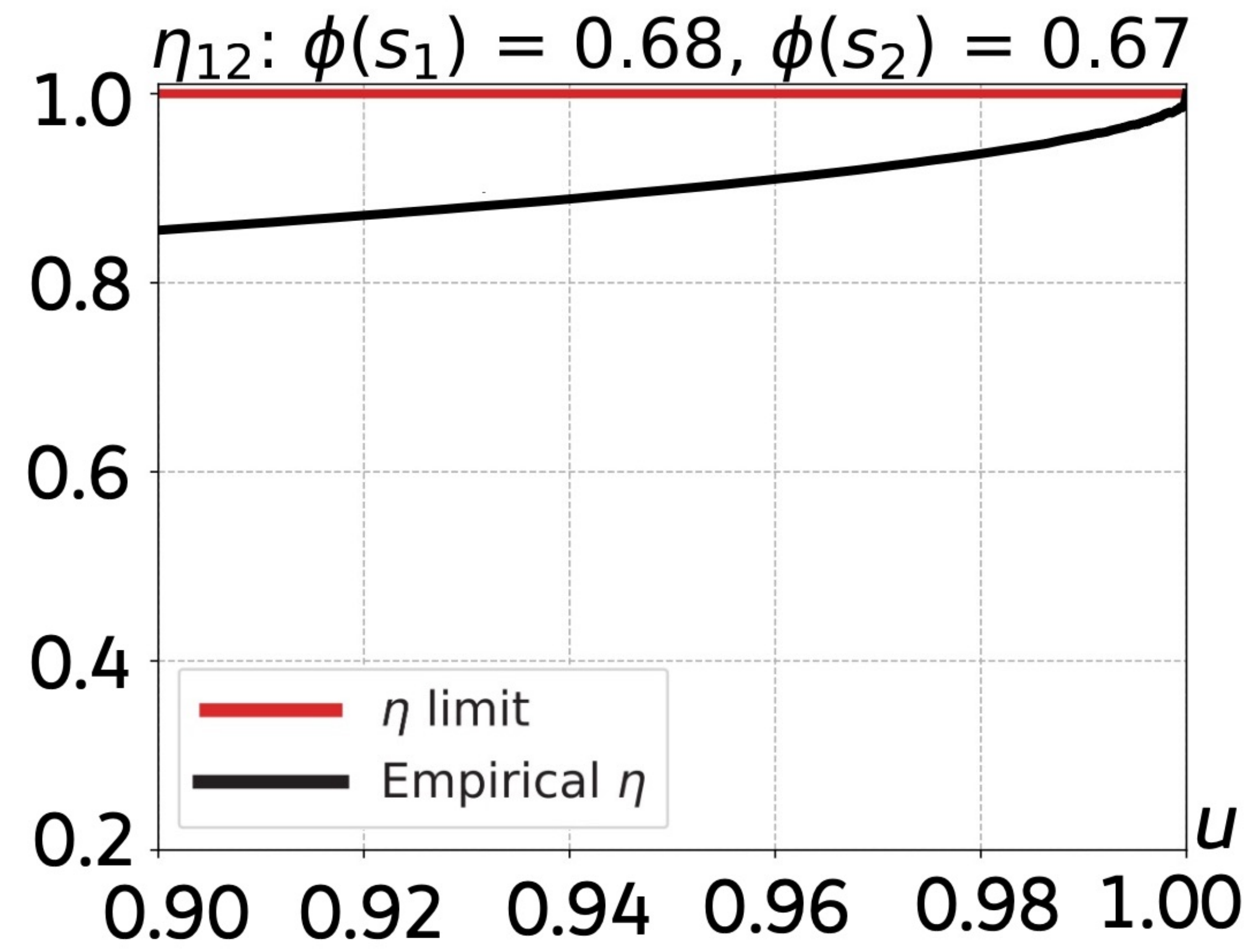}
            \caption{Thm \ref{thm:dependence_properties} \ref{thm:local}\ref{item:local_case1} $\chi_{12}$ and $\eta_{12}$}
            \label{fig:emp_chi_example_a}
        \end{subfigure}
    \end{minipage}
    \hfill
    \begin{minipage}{0.496\textwidth}
        \begin{subfigure}[t]{\textwidth}
            \centering
            \includegraphics[width=0.492\textwidth, clip=true, trim=20 30 0 0]{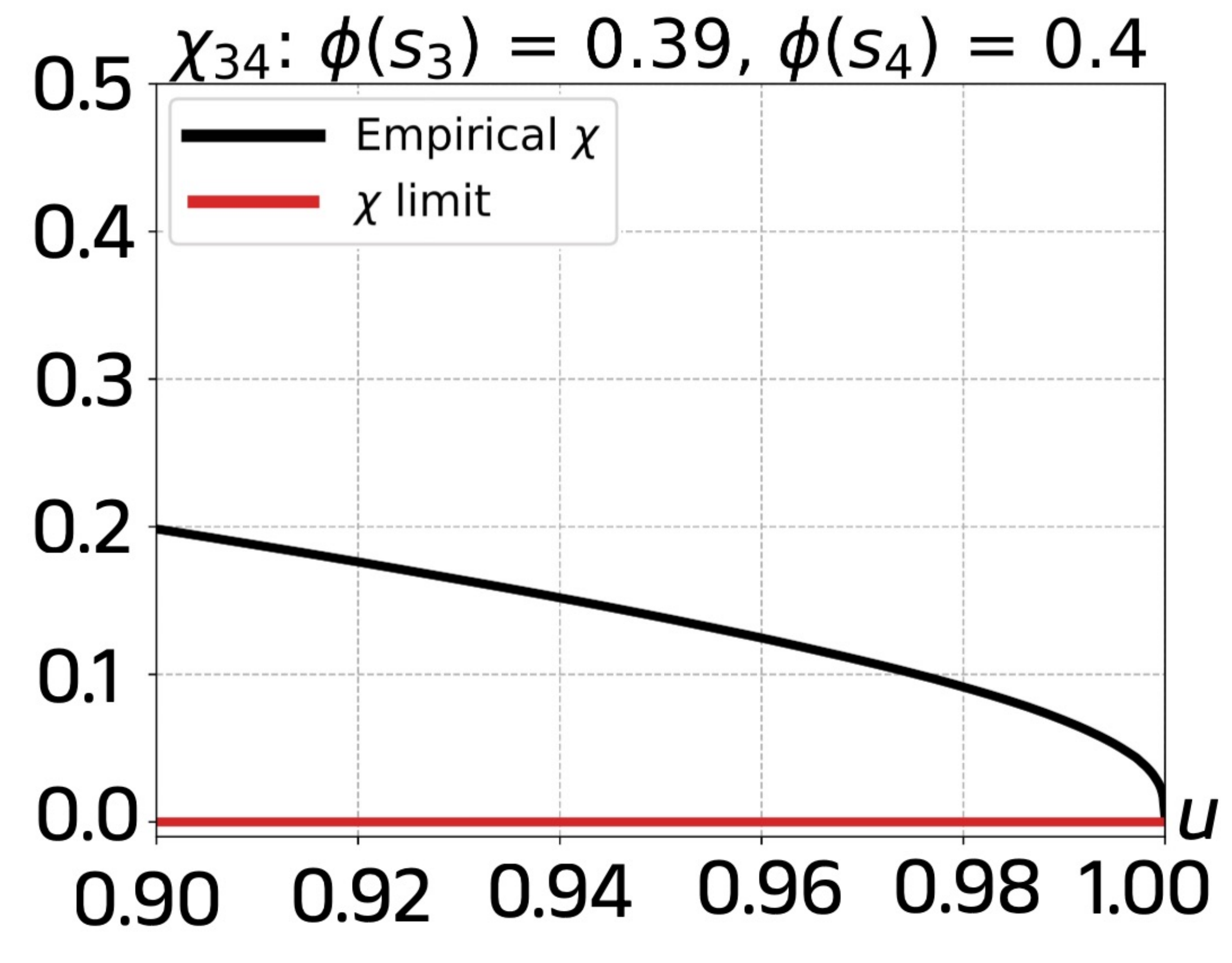}
            \includegraphics[width=0.492\textwidth, clip=true, trim=20 0 0 0]{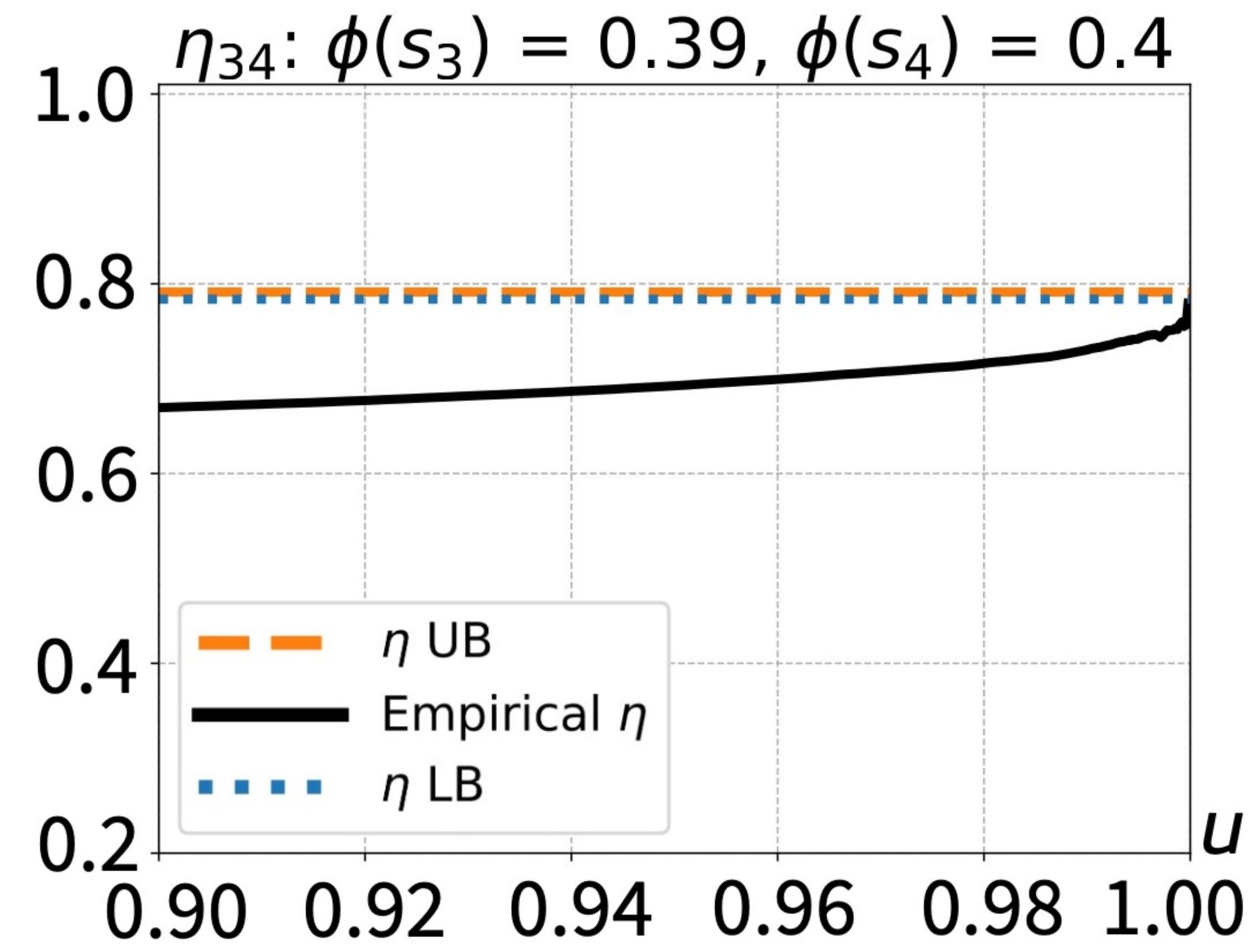}
            \caption{Thm \ref{thm:dependence_properties}\ref{thm:local}\ref{item:local_case2} $\chi_{34}$ and $\eta_{34}$}
            \label{fig:emp_chi_example_b}
        \end{subfigure}
    \end{minipage}

    % Second row
    \begin{minipage}{0.496\textwidth}
        \begin{subfigure}[t]{\textwidth}
            \centering
            \includegraphics[width=0.492\textwidth, clip=true, trim=20 30 0 0]{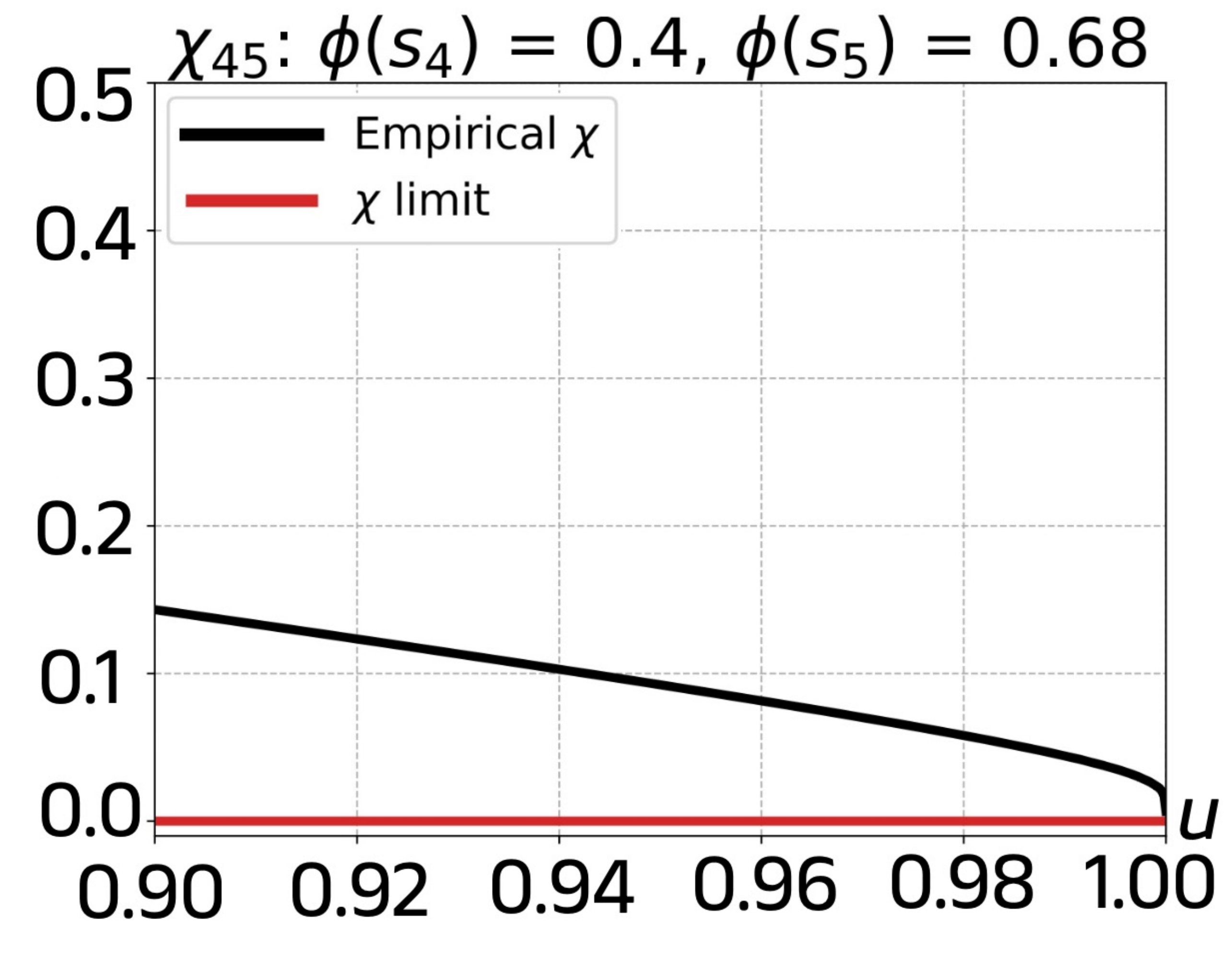}
            \includegraphics[width=0.492\textwidth, clip=true, trim=20 0 0 0]{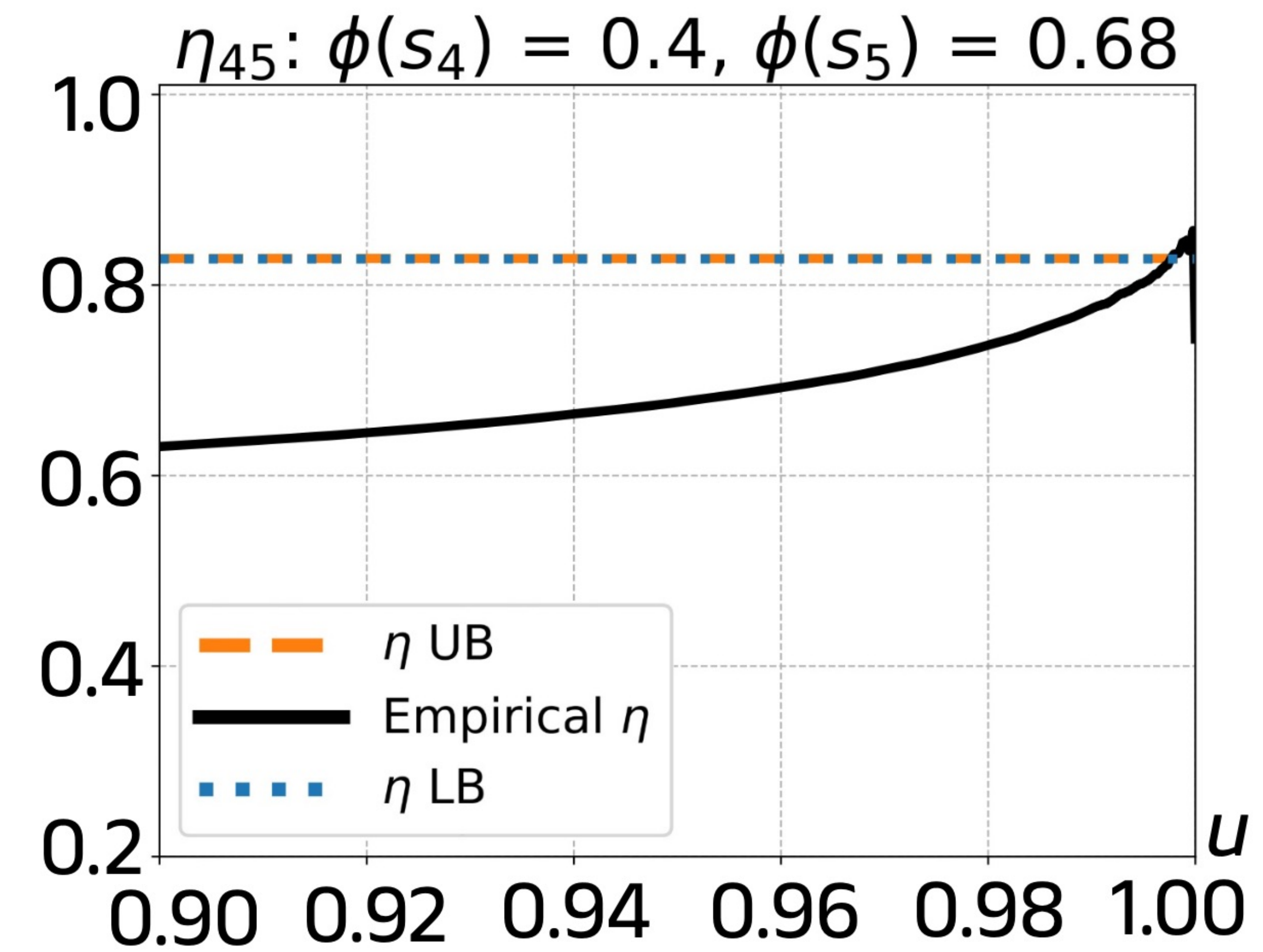}
            \caption{Thm \ref{thm:dependence_properties}\ref{thm:local}\ref{item:local_case3} $\chi_{45}$ and $\eta_{45}$}
            \label{fig:emp_chi_example_c}
        \end{subfigure}
    \end{minipage}
    \hfill
    \begin{minipage}{0.496\textwidth}
        \begin{subfigure}[t]{\textwidth}
            \centering
            \includegraphics[width=0.492\textwidth, clip=true, trim=20 30 0 0]{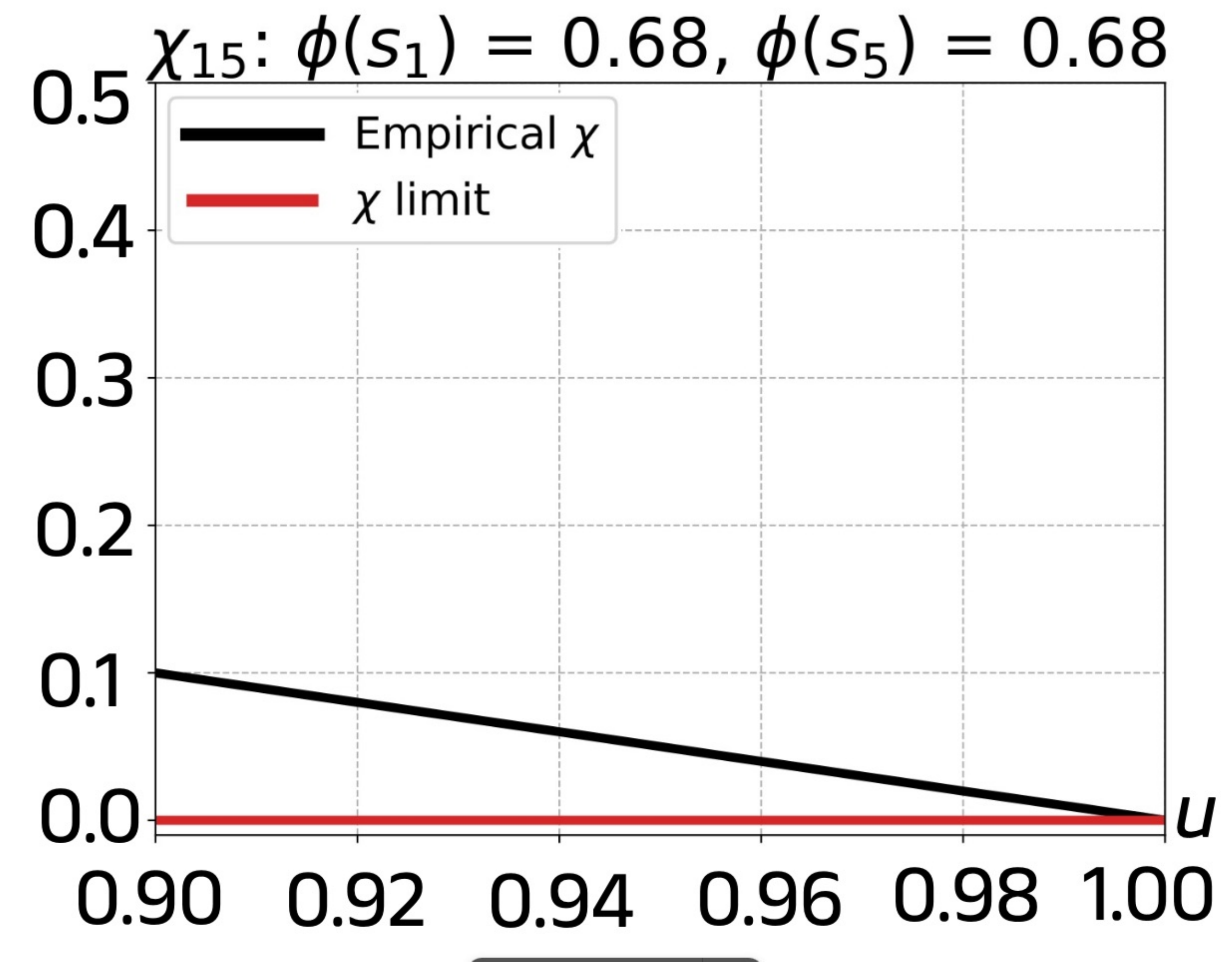}
            \includegraphics[width=0.492\textwidth, clip=true, trim=20 0 0 0]{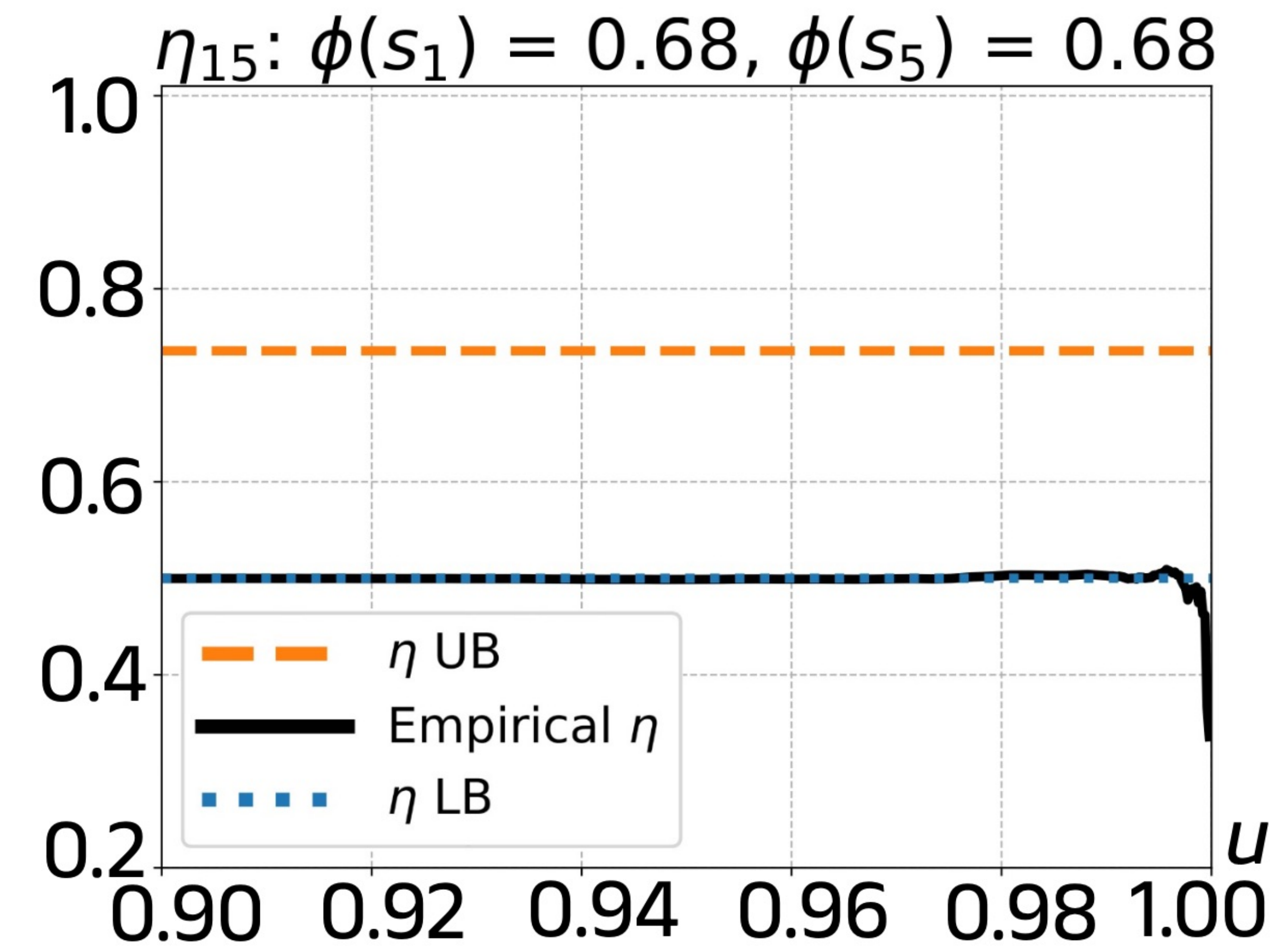}
            \caption{Thm \ref{thm:dependence_properties}\ref{thm:long_range}\ref{item:long_range_case1} $\chi_{15}$ and $\eta_{15}$}
            \label{fig:emp_chi_example_d}
        \end{subfigure}
    \end{minipage}

    \caption{Empirical estimates of the dependence coefficients $\chi_{ij}(u)$ and $\eta_{ij}(u)$ between sample points $i$ and $j$ in Figure \ref{fig:phi_surface_demon}. The red lines mark the theoretical limit and the orange dashed and blue dotted lines respectively represent theoretical upper and lower bounds. The $\chi_{ij}(u)$'s are direct empirical estimates; due to numerical instability, particularly as the probability $u$ approaches its upper limit of 1, we estimate the $\eta_{ij}(u)$'s using the Hill estimator with marginal transformation to standard Exponential \citep{R_mev}.}
    \label{fig:emp_chi_example}
\end{figure}

\section{Bayesian inference}\label{sec:inference}

We define a Bayesian hierarchical model based on the dependence model \eqref{eqn:model} in Section \ref{sec:model_construction} and use a MCMC algorithm to fit to the data. The dependence model \eqref{eqn:model} is displayed again here for convenience, now with each time replicate denoted with a subscript $t = 1, \dots, T$:
\begin{equation*}
    X_t(\bs)=R_t(\bs)^{\phi(\bs)}g(Z_t(\bs)).
\end{equation*}

\subsection{Hierarchical model} \label{sec:hierarchical_model}

Let $\{Y_t(\bs):\bs\in \mathcal{S}, t = 1, \dots, T\}$ denote the spatial process on the scale of the observations. We link this  process to our dependence model using a marginal probability integral transform within the hierarchical model as
\begin{equation}\label{eqn:prob_inte_trans}
   % F_{Y \given \btheta_Y(\bs),t}(Y_t(\bs)) = F_{X\given\phi(\bs), \bar{\gamma}(\bs), t}(\underbrace{R_t(\bs)^{\phi(\bs)}g(Z_t)(\bs)}_{X_t(\bs)}).
   F_{Y \given \btheta_Y(\bs),t}(Y_t(\bs)) = F_{X\given\phi(\bs), \bar{\gamma}(\bs), t}(X_t(\bs)).
\end{equation}

In principle, the dependence model can be used for any marginal distribution. 
Here we opt for the block-maxima approach---considering annual maxima at time $t$ as the observed process $Y_t(s)$, assuming a Generalized Extreme Value (GEV) marginal distribution. 
We let the GEV marginal parameters vary in space and time as
\begin{equation*}
    Y_t(\bs)\sim \text{GEV}(\mu_t(\bs), \sigma_t(\bs), \xi_t(\bs)).
\end{equation*}
Let $\btheta_{\text{GEV},t}(\bs) = (\mu_t(\bs), \sigma_t(\bs),\xi_t(\bs))\trans$ be the vector of marginal GEV parameters at location $\bs$ and time $t$. Given the time-varying parameters, it is reasonable to then assume conditional temporal independence among the annual maxima. % $\{F_{Y\given\btheta_{\text{GEV},t}(\bs)}(Y_t(\bs)):\bs\in\mathcal{S}\}$, $t=1,\ldots, T$.

% Dependence Model and Likelihood

Next, we define a hierarchical model based on the mixture \eqref{eqn:model}. Conditioning on the scaling variables at the knots $\boldsymbol{S}_t$, 
the full conditional likelihood for the observation vector at time $t$ is
\begin{equation}\label{eq:likelihood}
    \mathcal{L}(\boldsymbol{Y}_t \given \boldsymbol{R}_t, \boldsymbol{\gamma}, \boldsymbol{\phi},\boldsymbol{\theta}_{\text{GEV}}, \boldsymbol{\rho}) = \varphi_D(\boldsymbol{Z}_t)\left\lvert\frac{\partial\boldsymbol{Z}_t}{\partial \boldsymbol{Y}_t}\right\rvert
    % S_{kt} \given \gamma &\sim \text{Stable}(\alpha \equiv 0.5, 1, \gamma_k, \delta \equiv 0), k = 1, \dots, K \notag
\end{equation}

Since we have assumed conditional temporal independence by introducing the time-varying marginal parameters, likelihoods across the independent time replicates are multiplied together for the joint likelihood. In the likelihood \eqref{eq:likelihood}, $\varphi_D$ is the $D$-variate Gaussian density function with covariance matrix $\bm{\Sigma_{\rho}}$ and $\partial\bm{Z}_t/\partial \bm{Y}_t$ is the Jacobian. Additional details are included in the Appendix \ref{sec:Appendix MCMC}.

For $\bm{\Sigma_{\rho}}$, we use a locally isotropic, non-stationary Mat\'{e}rn covariance function \citep{paciorek2006spatial, risser2015regression}
\begin{equation}
    C(\bs,\bs') = \zeta(\bs)\zeta(\bs')\frac{\sqrt{\rho(\bs)\rho(\bs')}}{\{\rho(\bs)+\rho(\bs')\}/2}\mathcal{M}_{\nu}\left(\frac{\lVert \bs-\bs' \rVert}{\sqrt{\{\rho(\bs)+\rho(\bs')\}/2}}\right),
\end{equation}
where the standard deviation process $\zeta(\bs)\equiv 1$ for $\bs\in \mathcal{S}$, $\mathcal{M}_{\nu}(\cdot)$ is the Mat\'{e}rn correlation function with range $1$ and smoothness $\nu$, and $\rho(\bs)$ is the spatially varying range parameter.

As described in Section \ref{sec:model_construction}, we form a spatially varying surface $R(\bs)$ by combining $K$ compactly supported Wendland kernel functions, each centered at a knot, each scaled by a corresponding  Stable variable. In addition, we construct the $\{\phi(\bs)\}$ and  $\{ \rho(\bs) \}$ surfaces using Gaussian kernel functions centered at the knots. 

The Wendland kernel function is parameterized as $w_k^{(S)}(\bs) \propto \bigg(1-\dfrac{||\bs - \boldsymbol{b}_k^{(S)}||^2}{r}\bigg)^l_+$, in which $l = 2$, $\boldsymbol{b}_k^{(S)}$ are a grid of knots over the spatial domain $\mathcal{S}$, and $r$ is the radius of the kernel function.

%The kernel functions are compactly-supported so that some sites have disjoint sets of spatial indices with non-zero kernel function values. As for the Gaussian kernel function, it is parameterized as

% \begin{equation*} % 
%     w_k^{(\phi)}(\bs) \propto \exp\left(-\dfrac{\lVert \bs - \boldsymbol{b}_k^{(\phi)} \rVert^2}{2h^{(\phi)}}\right) \quad \text{and} \quad w_k^{(\rho)}(\bs) \propto \exp\left(-\dfrac{\lVert \bs - \boldsymbol{b}_k^{(\rho)} \rVert^2}{2h^{(\rho)}}\right) 
% \end{equation*}
% in which $\boldsymbol{b}_k^{(\phi)}$ and $\boldsymbol{b}_k^{(\rho)}$ are the grid of locations for the dependence model parameters $\phi$ and $\rho$; $h^{(\phi)}$ and $h^{(\rho)}$ are the bandwidth parameter for their respective Gaussian kernel functions.

% \BAS{I removed the formulas for the Gaussian kernel functions---I think we can assume that people know this already.}

The priors for the dependence model parameters are $\phi_k \iid \text{Beta}(5,5)$ and $\rho_k \iid \text{halfNormal}(0,2), k=1, \ldots, K$, where ``halfNormal" refers to the positively truncated normal distribution. The Beta prior for $\phi_k$ is centered at the transition boundary between AI and AD, and places less mass near the edges of the support which correspond to extremely strong or weak dependence scenarios.  In the following, we fix the scale parameter $\gamma$ of the Stable distribution (somewhat arbitrarily) at 0.5.  Varying $\gamma$ does not play any role in modulating the tail dependence characteristics of the model (see Theorem \ref{thm:dependence_properties}), so fixing it at a convenient value results in almost no loss of flexibility.

\subsection{Computation}\label{sec:computation}

To estimate the posterior distribution of the model parameters, we sequentially update each parameter using an adaptive random walk Metropolis (RWM) algorithm \citep{shaby2010exploring}.
Since we assume independence across time, we can update $R_t$'s in parallel  across $t = 1, \ldots, T$ at each MCMC iteration.

The probability integral transform in \eqref{eqn:prob_inte_trans} and the Jacobian term in the likelihood \eqref{eq:likelihood} require evaluation of the marginal distribution and density functions of the dependence model $X(\bs)$.  For general Stable variables $S_1, \ldots, S_K$, this is difficult.  However, under the special case of $\alpha = 1/2$, sometimes called a L\`evy distribution, we can obtain simpler analytical forms.  Furthermore, fixing $\alpha = 1/2$ sacrifices no flexibility with respect to the tail dependence characteristics described in Theorem \ref{thm:dependence_properties}. 
When $\alpha = 1/2$, the survival function for the mixture in \eqref{eqn:model} with the Type II (i.e. location-shifted) Pareto link function in \ref{eqn:type_II} is
\begin{equation}\label{eqn:CDF_X(s)}
    \begin{split}
        1-F_{X_j}(x) = P(R_j^{\phi_j}W_j>x)
        &=\sqrt{\frac{\bar{\gamma}_{j}}{2\pi}}\int_{0}^\infty \frac{r^{\phi_j-3/2}}{x+r^{\phi_j}}\exp\left\{-\frac{\bar{\gamma}_{j}}{2r}\right\}dr.
    \end{split}
\end{equation}
Using the Type II Pareto link function, rather than the standard Pareto, does not change the tail properties but can be advantageous for MCMC sampling because it aligns the support of the random scaling factor and the transformed Gaussian process. The trade-off is that the Type II Pareto, unlike the standard Pareto, does not give a closed form for the univariate distribution function for $X_j(s)$, and therefore requires numerical integration. 
Using Leibniz rule to take derivative with respect to $x$, we can get the univariate density function for $X_j(s)$,
\begin{equation} \label{eqn:pdf_X(s)}
    f_{X_j}(x)=\sqrt{\frac{\bar{\gamma}_{j}}{2\pi}}\int_{0}^\infty \frac{r^{\phi_j-3/2}}{(x+r^{\phi_j})^2}\exp\left\{-\frac{\bar{\gamma}_{j}}{2r}\right\}dr,
\end{equation}
which also requires numerical integration.

We evaluate the numerical integrals (e.g. the univariate distributions \eqref{eqn:CDF_X(s)} and density functions \eqref{eqn:pdf_X(s)} of the dependence model) using \texttt{GSL} libraries \citep{gough2009gnu} in \texttt{C++}. The MCMC sampler is implemented in \texttt{python}, and the parallel updates are implemented via the \texttt{mpi4py} \citep{mpi4py} module/interface to \texttt{OpenMPI}. After parallelization, on an AMD Milan EPYC CPU, running a chain, on datasets of 500 spatial locations and 64 time replicates, to approximately 15,000 iterations takes about 270 hours. \revise{Compared with the stationary model of \citet{huser2019modeling}, fitting our model % (i.e. with the addition of  spatially varying scales $R(s)$, spatially varying tail dependence parameters $\phi(s)$, and nonstationary covariance, )
approximately triples the computational cost in our  implementation.}

\subsection{Simulation and Coverage Analysis}\label{sec:simulation_studies}\label{sec:simulation_scenarios}\label{sec:coverage_analysis}
We present simulation results and conduct coverage analysis to investigate whether the MCMC procedure is able to draw accurate inference on model parameters, assuming marginally GEV responses.
We use $D = 500$ sites uniformly drawn from the square $\mathcal{S} = [0,10]^2$ and $\mathcal{T} = 64$ time replicates. The latent Gaussian process $Z_t(\bs)$ is generated with a locally isotropic, non-stationary Mat\'{e}rn covariance function as specified in Section \ref{sec:hierarchical_model}, with $\nu = 0.5$.
For each time replicate, we specified $K=9$ knot locations $\{\boldsymbol{b}_k^{(S)}, k = 1, \dots, K\}$ over a \revise{regularly spaced grid} in the spatial domain $\mathcal{S}$. As specified  in Section \ref{sec:hierarchical_model}, we generate independent L\'{e}vy random variables with $\gamma_k = 0.5$ at those nine pre-specified knot locations and interpolate them to site locations using the Wendland  kernel functions  with radius $r=4$ centered at those knots; see Figure %\ref{fig:point_space} 
\ref{fig:simulation_scenarios} for illustration. 

\begin{figure}[H]
    \centering
    \includegraphics[width=1\linewidth]{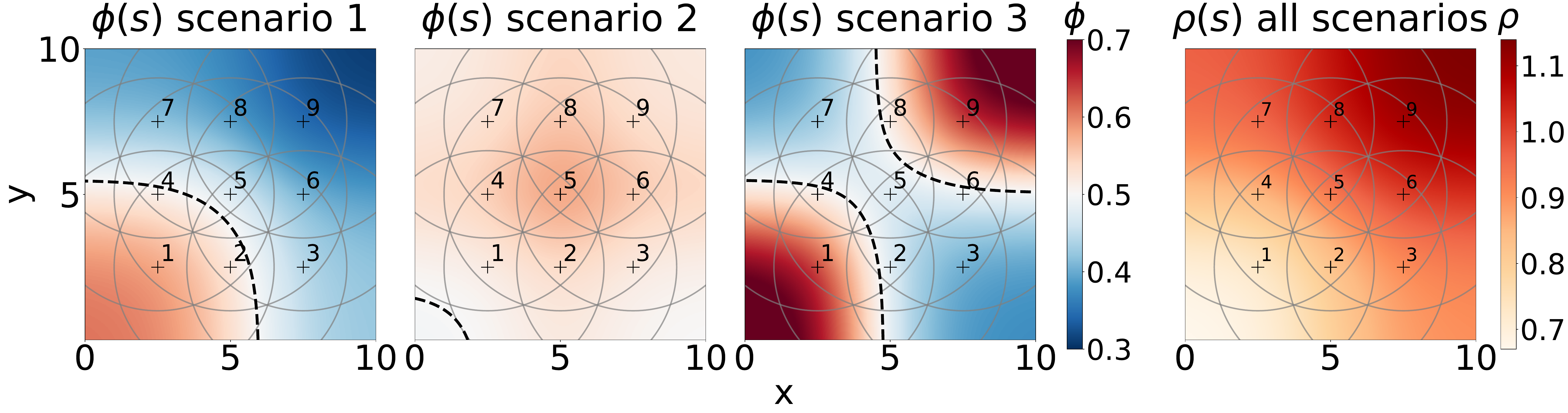}
    \caption{Dependence model parameter surfaces, $\phi(\bs)$ and $\rho(\bs)$, for the three simulation scenarios, which correspond to different levels of non-stationarity. \revise{The dashed line marks the transition between local AI and AD.} The `+' denotes the center of the kernel functions, and the gray circles denotes the radii.}
    \label{fig:simulation_scenarios}
\end{figure}

% \begin{figure}[H]
%     \centering
%     % \begin{minipage}[b]{0.15\textwidth}
%     %     \centering
%     %     \includegraphics[width=1.2\textwidth, clip=true, trim=0 0 40 30]{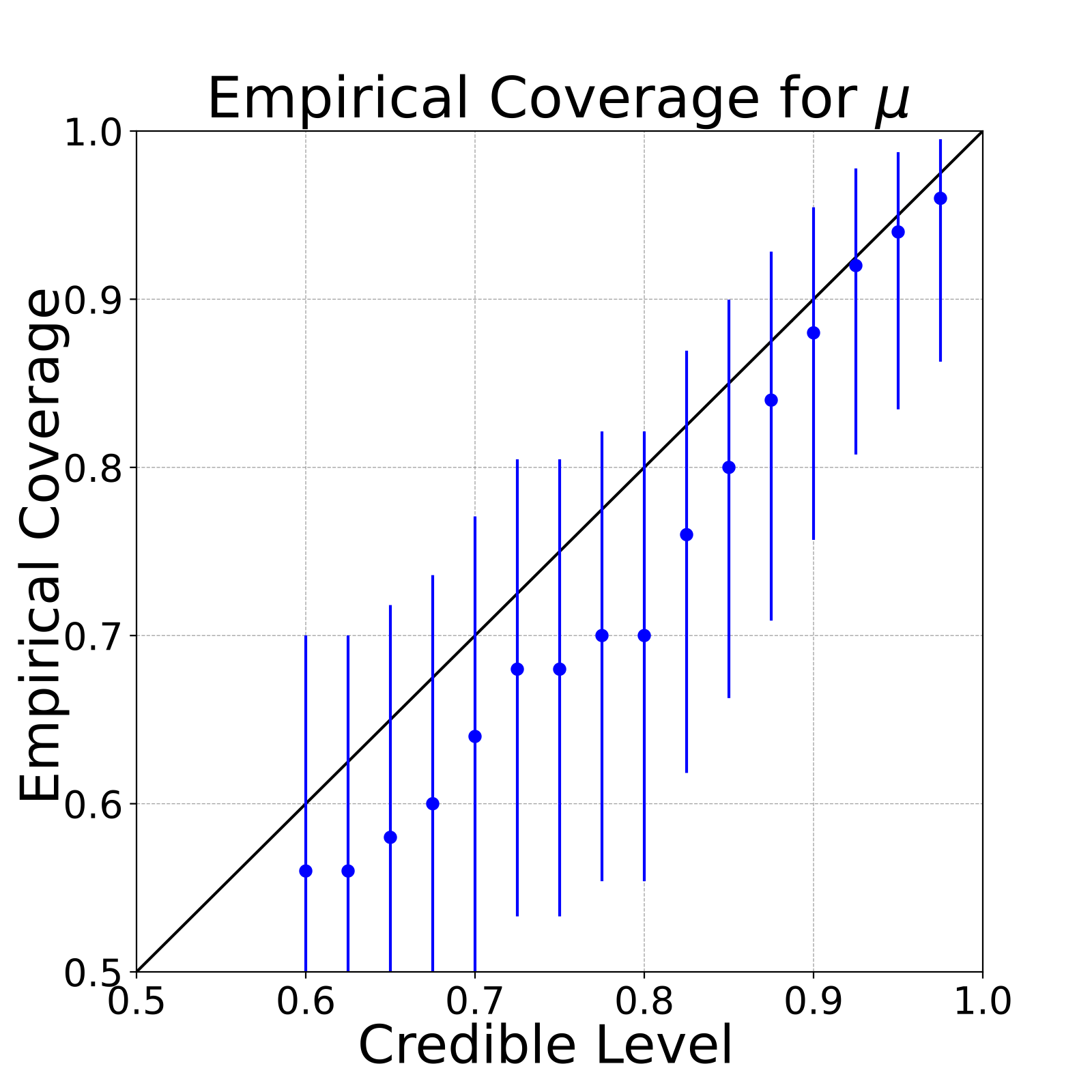}
        
%     %     \vspace{3pt}
        
%     %     \includegraphics[width=1.2\textwidth, clip=true, trim=0 0 40 30]{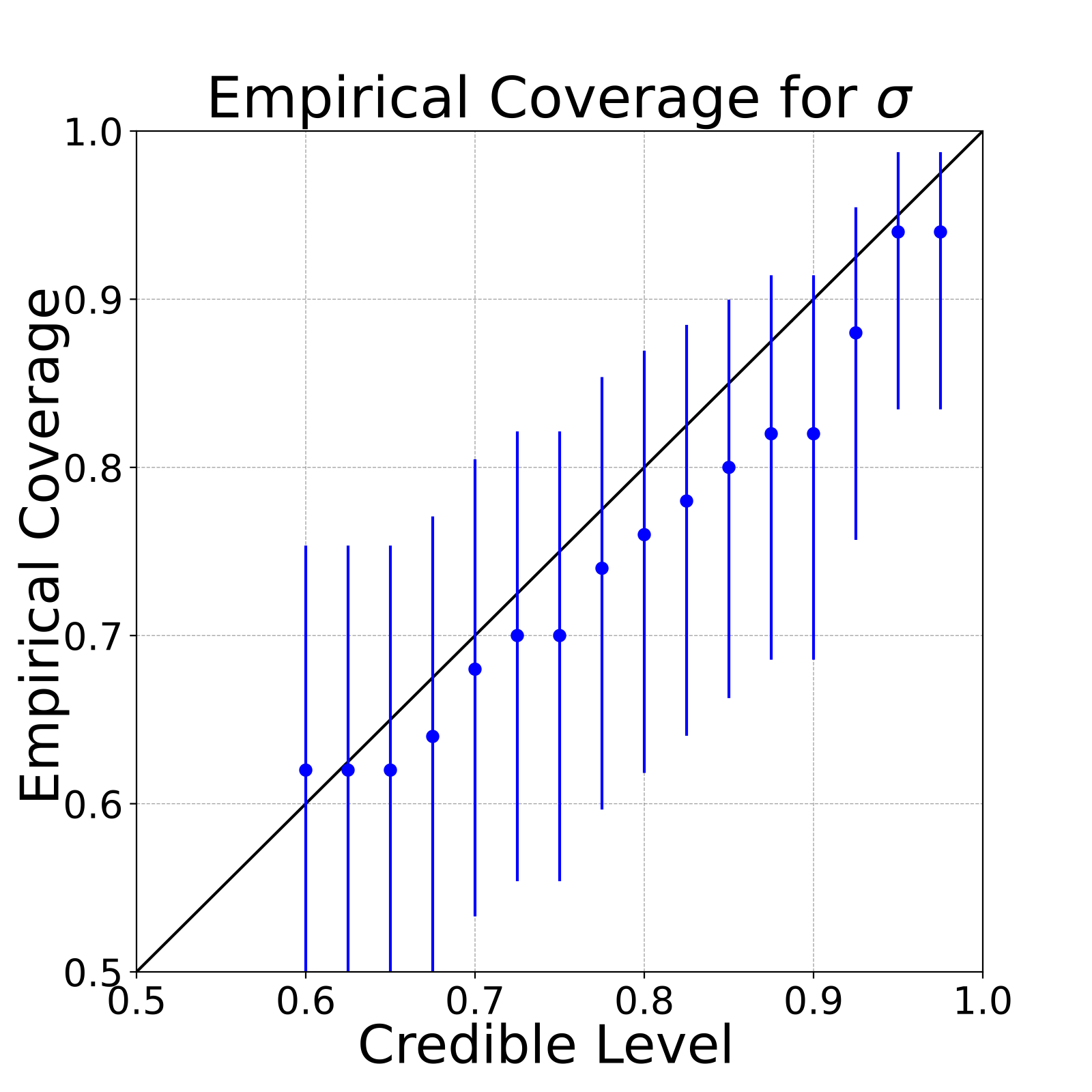}
%     %     \vspace{-4pt}
%     % \end{minipage}
%     \begin{minipage}[b]{0.16\textwidth}
%     \centering
%     \includegraphics[width=1.2\textwidth]{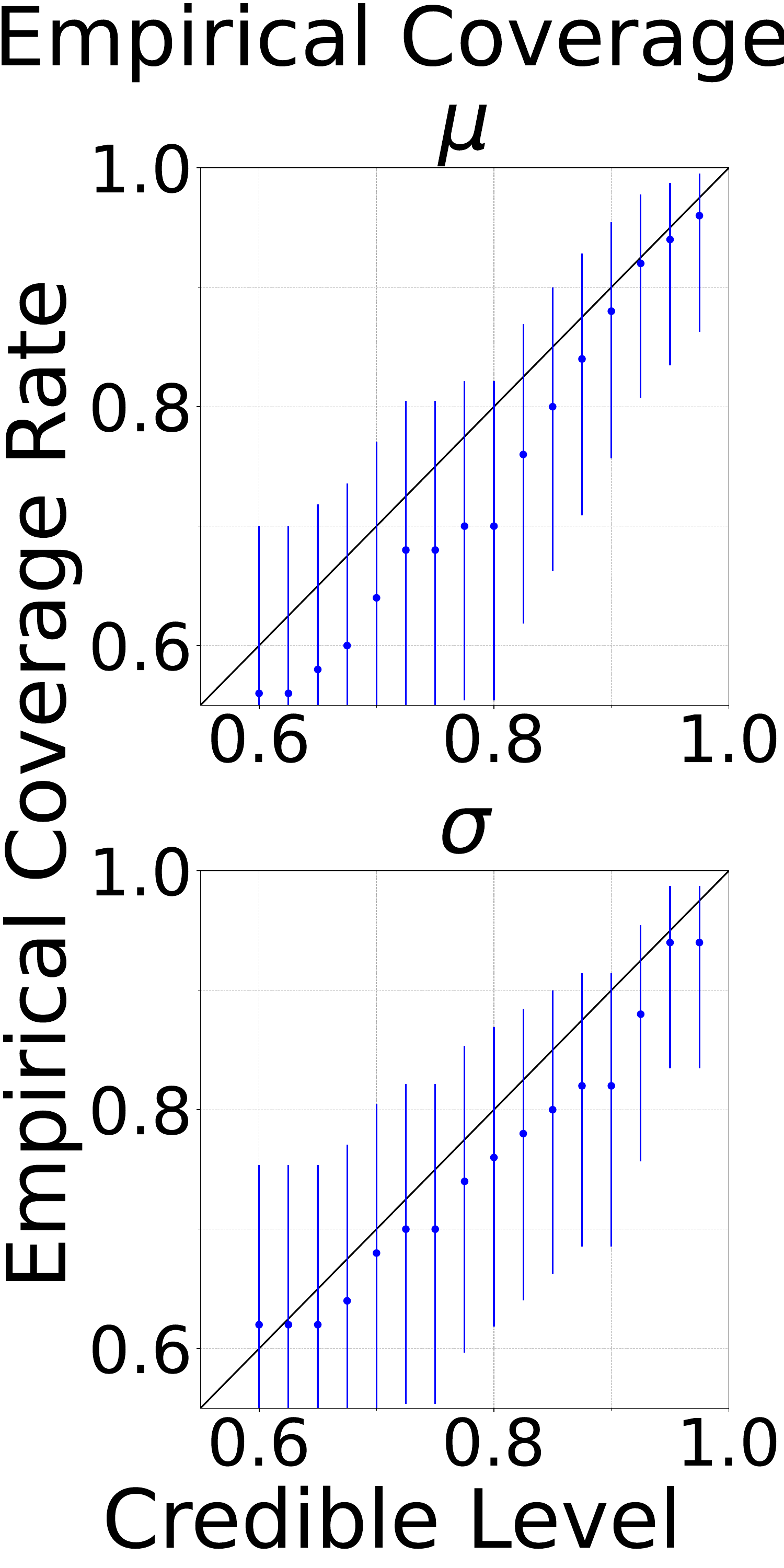}
%     \end{minipage}
%     \hfill
%     \begin{minipage}[b]{0.83\textwidth}
%         \centering
%         \includegraphics[width=0.48\textwidth, clip=true, trim=55 0 0 0]{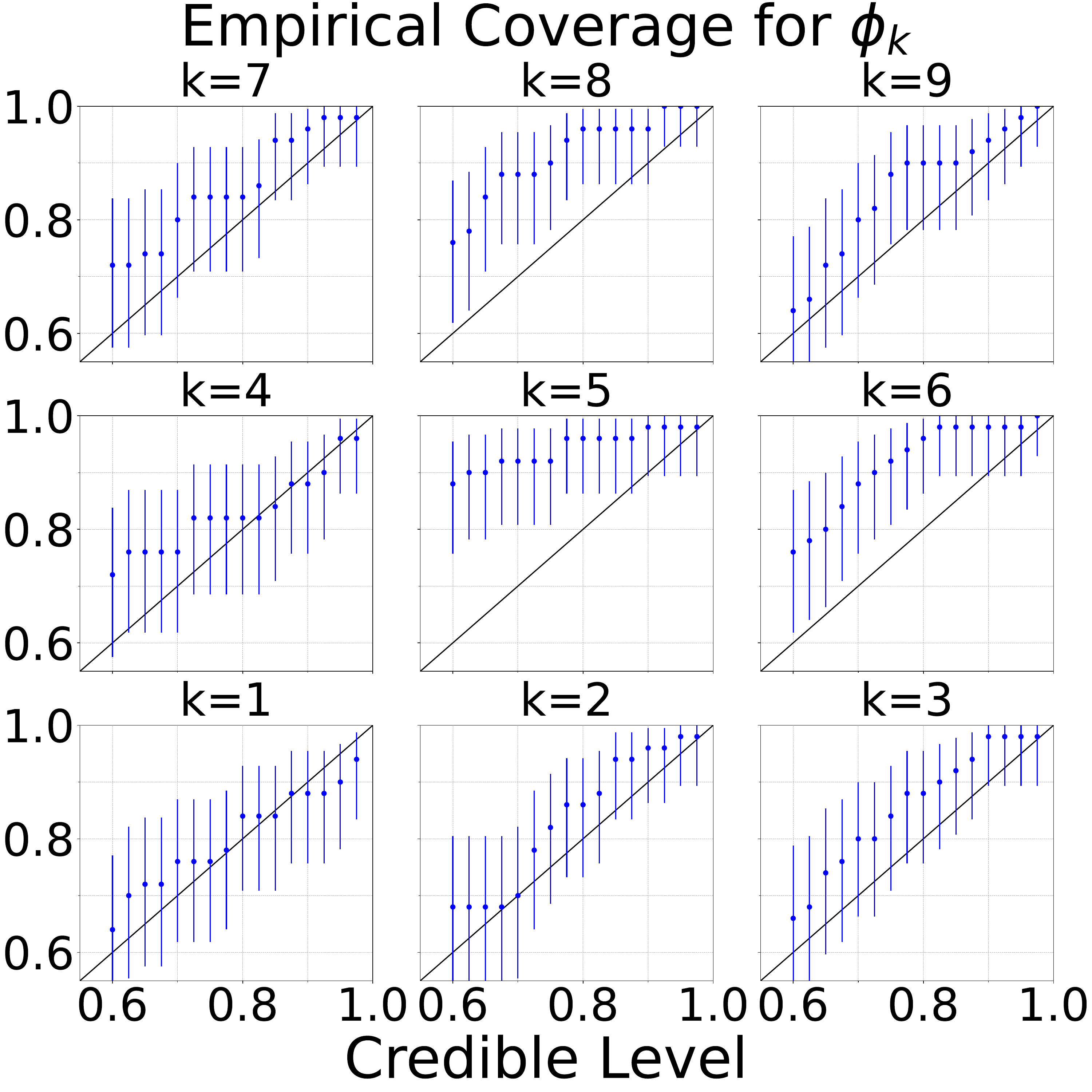}
%         \includegraphics[width=0.48\textwidth, clip=true, trim=55 0 0 0]{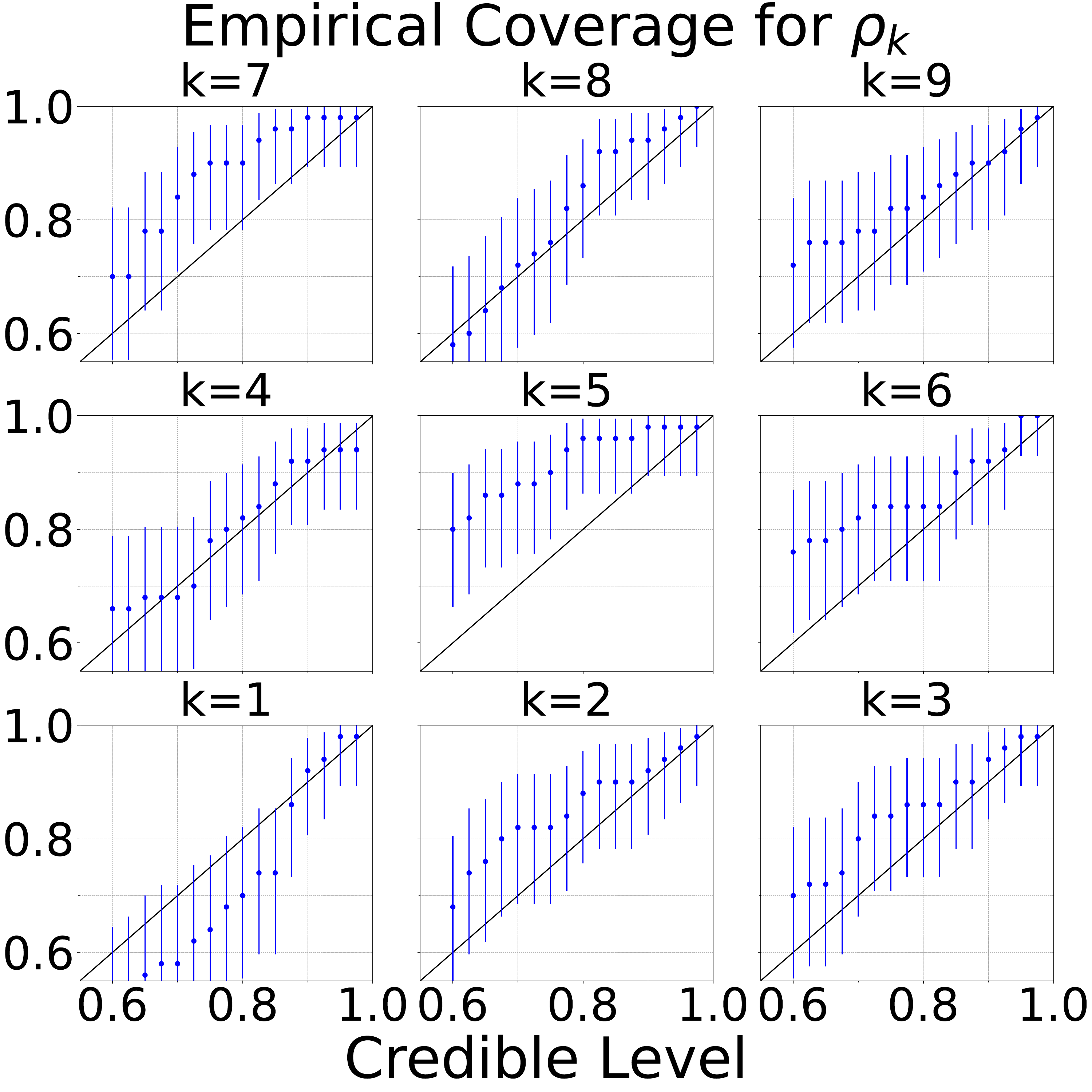}
%     \end{minipage}
%     \caption{Empirical coverage rates of credible intervals of the marginal parameters $\mu$ and $\sigma$ (left), the dependence parameters $\phi_k$ (middle), and $\rho_k$ (right), $k=1,\dots, 9$, in simulation scenario 1.}
%     \label{fig:scenario1_coverage}
% \end{figure}

\begin{figure}[!ht]
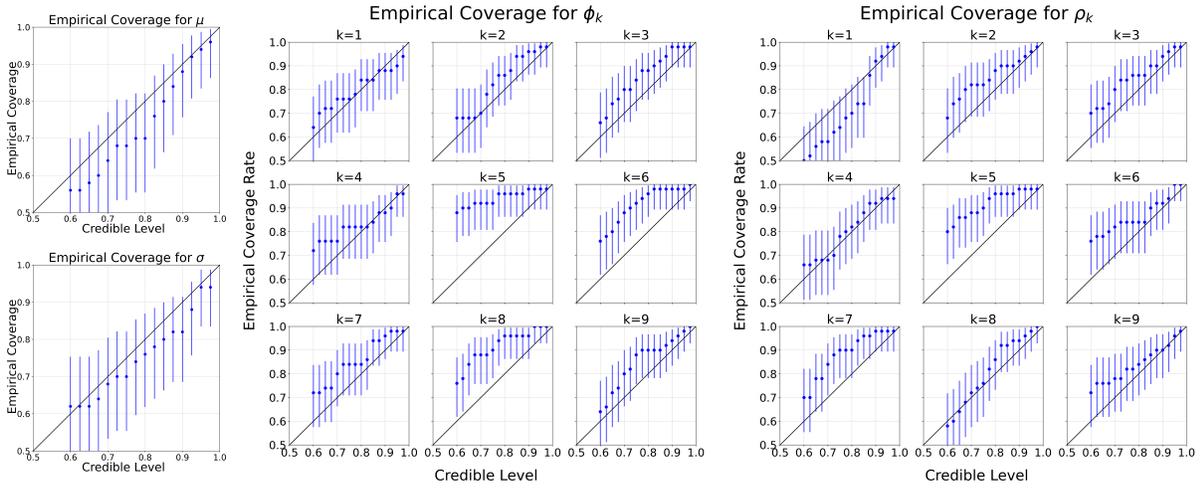

    \centering
    \begin{minipage}[b]{0.197\textwidth}
        \centering
        \includegraphics[width=\linewidth]{Empirical_Coverage_MuSigma_scenario1.pdf}
    \end{minipage}
    % \hfill
    \begin{minipage}[b]{0.393\textwidth}
        \centering
        \includegraphics[width=\linewidth, clip=true, trim=0 0 8 0]{Empirical_Coverage_all_Phi_scenario1.pdf}
    \end{minipage}
    % \hfill
    \begin{minipage}[b]{0.393\textwidth}
        \centering
        \includegraphics[width=\linewidth, clip=true, trim=8 0 0 0]{Empirical_Coverage_all_Range_scenario1.pdf}
    \end{minipage}
    \caption{Empirical coverage rates of credible intervals of the marginal parameters $\mu$ and $\sigma$ (left), the dependence parameters $\phi_k$ (middle), and $\rho_k$ (right), $k=1,\dots, 9$, in simulation scenario 1.}
    \label{fig:scenario1_coverage}
\end{figure}

We consider three scenarios for the scaling parameter $\phi(\bs)$ and the range parameter $\rho(\bs)$ so that all cases in Theorem \ref{thm:dependence_properties} are represented; we generate the nine $\phi_k$ and $\rho_k$ at the same nine knots locations ($\boldsymbol{b}_k^{(S)} = \boldsymbol{b}_k^{(\phi)} = \boldsymbol{b}_k^{(\rho)}$), and interpolate them to the site locations using the Gaussian  kernel functions  with bandwidth $h^{(\phi)} = h^{(\rho)} = 4$ centered at those knots. The interpolated $\{\phi(\bs)\}$ and $\{\rho(\bs)\}$ surfaces are shown in Figure \ref{fig:simulation_scenarios}.
Finally, we generated dependence model variables $X_t(\bs)$ under three scenarios and transformed to marginal GEV distribution with parameters $(\mu(\bs), \sigma(\bs), \xi(\bs))\trans = (0, 1, 0.2)\trans$. For the purpose of simulations, GEV parameters are set to be spatially and temporally constants with no covariate. %since both $\xi(\bs)$ and the scaling parameter $\phi(\bs)$ control the tail behavior, 
For computational expediency in the coverage analysis, $\xi(\bs)$ is not updated in the simulations. 

We study the coverage properties of the posterior inference based on the MCMC samples for the posterior credible intervals with 50 simulated datasets drawn under each of the scenarios.
Figure~\ref{fig:scenario1_coverage} shows the empirical coverage rates of the scaling parameter $\phi$ and the range parameter $\rho$ at the nine knot locations (i.e. $k = 1, ..., 9$), as well as the location $\mu$ and scale $\sigma$ of the marginal GEV parameters in simulation scenario 1. Standard binomial confidence intervals are included on the coverage plots. In all scenarios, we see that the sampler generates well-calibrated posterior inference for the GEV parameters with close to nominal frequentist coverage, and slightly over-covers for the the scaling and range parameters. The Appendix \ref{sec:Appendix Simulation} includes the additional empirical coverage rates of the parameters from simulation scenario 2 and 3 in Figures \ref{fig:scenario2_coverage} and \ref{fig:scenario3_coverage}, which show similar characteristics as the coverage plots from simulation scenario 1.

\section{Extreme \emph{in situ} of Daily Precipitation} \label{sec:application}
\subsection{Data Analysis}\label{sec:data analysis}

In this section, we analyze extreme daily measurements from a gauged network of \emph{in situ} weather stations from the Global Historical Climate Network \citep[GHCN;][]{Menne2012}. A subset of the extreme summertime measurements from GHCN stations (see Figure \ref{fig:application_US}) over the central United States was originally analyzed in \cite{zhang2022accounting} using the \citet{huser2019modeling} copula with a spatio-temporally varying marginal model similar to the one we use below. However, exploratory analysis shown in Figure~\ref{fig:application_chi_select} %the left-hand panel of Figure~%\ref{fig:emp_chi_moving_windows} 
% \ref{fig:application_chi} 
suggests that applying a single dependence class to such a large spatial domain is inappropriate, as there are areas in the central U.S. domain where the extreme summer precipitation appear to be locally AI and others that appear to be AD. Furthermore, when applying a single dependence class to the entire domain, \cite{zhang2022accounting} found that (overall) the extreme precipitation measurements are asymptotically \textit{independent}, meaning that the risk of concurrent extremes would be underestimated for sub-regions that exhibit asymptotic dependence. For comparison, we also approximately replicate the analysis of \citet{zhang2022accounting}, who used the \citet{huser2019modeling} copula on a similar dataset (results shown in Table \ref{table:application_knot} and Figure \ref{fig:application_ll}). Although we cannot make direct comparisons to the broadly similar \citet{hazra2021realistic} model because it does not permit spatial prediction at un-observed locations, our results from Section~\ref{sec:results} would suggest spatially heterogeneous transition between AD and AI (see Figure \ref{fig:application_surface}), and thus the \citet{hazra2021realistic} model would fit poorly because it only allows AD at short spatial lags.

% \begin{figure}[h]
%     \centering
%     \begin{subfigure}[t]{0.75\textwidth}%{0.542\textwidth}
%         \centering
%         \includegraphics[height=4.5cm]%[height=3.5cm]
%         {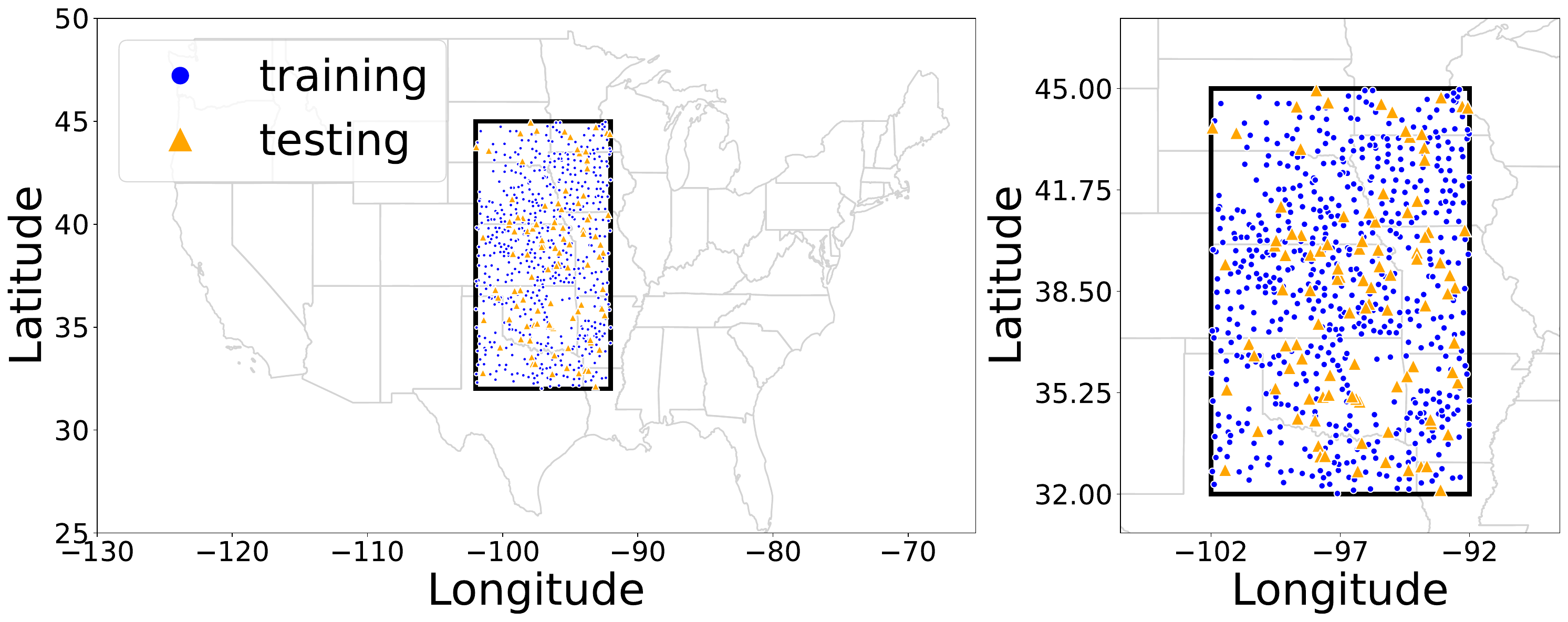}
%         \caption{}
%         \label{fig:application_US}
%     \end{subfigure}
%     \hfill
%     \begin{subfigure}[t]{0.45\textwidth}
%         \centering
%         \includegraphics[height=4.5cm]{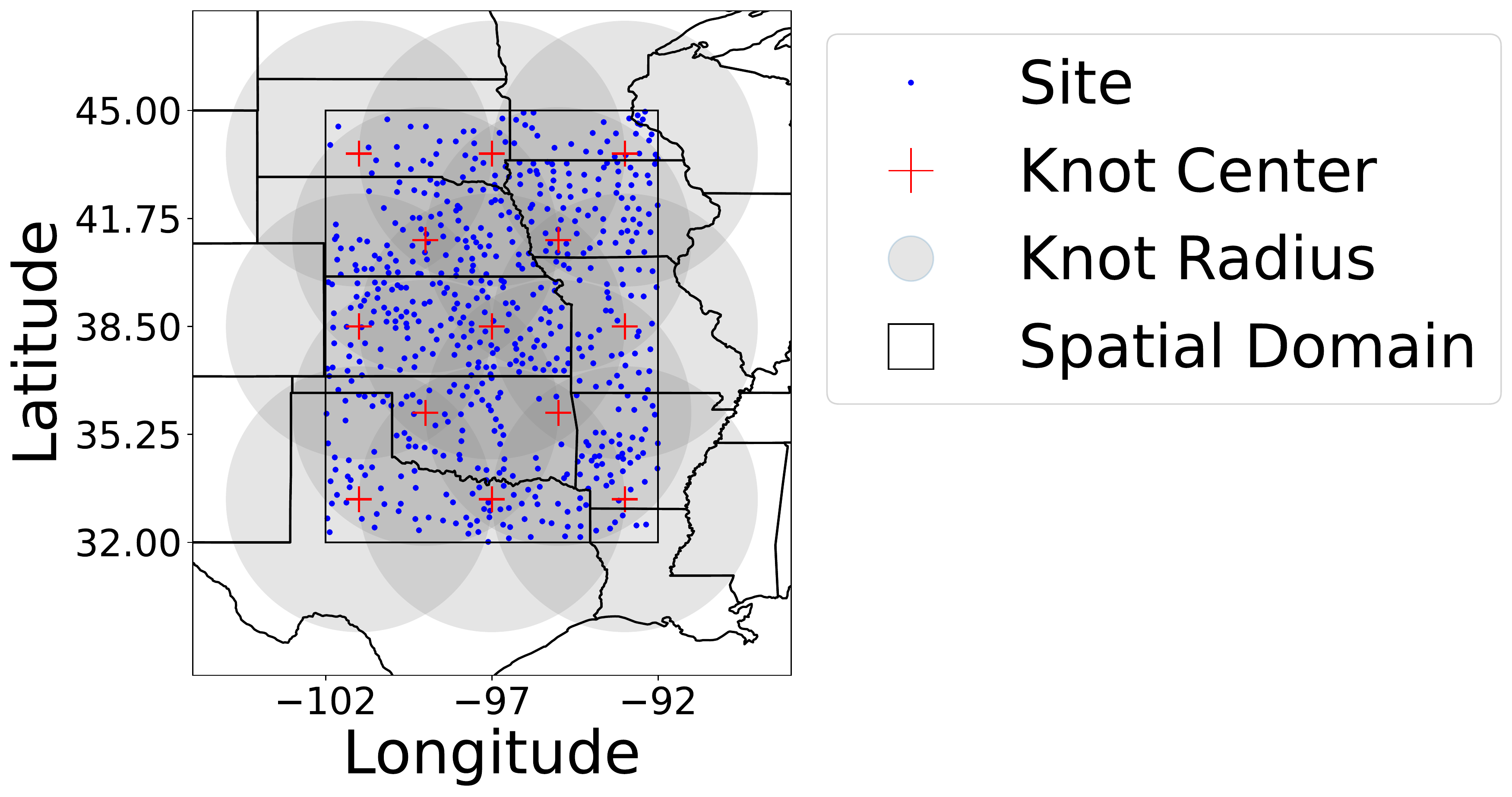}
%         \caption{}
%         \label{fig:application_knot}
%     \end{subfigure}

%     \caption{(\subref{fig:application_US}) Location of the 590 observations/data sites and the 99 out-of-sample testing sites. 
%     (\subref{fig:application_knot}) An illustration of the mixture component setup for model $k13r4b4$. The blue dots represent observation sites, red `+' symbols represent the centers of the kernel functions, and the shaded circles represent areas covered by each Wendland kernel with specified radius $r$.}
%     \label{fig:application_combined}
% \end{figure}

\begin{figure}[h]
    \centering
  \begin{minipage}[c]{0.69\textwidth}
  \centering
    \begin{subfigure}[b]{\textwidth}
        \includegraphics[height=4.5cm]{stations_train_and_test_combined.pdf}
        \caption{}
        \label{fig:application_US}
    \end{subfigure}
    \begin{subfigure}[b]{\textwidth}
        \centering
        \includegraphics[height=4.5cm]{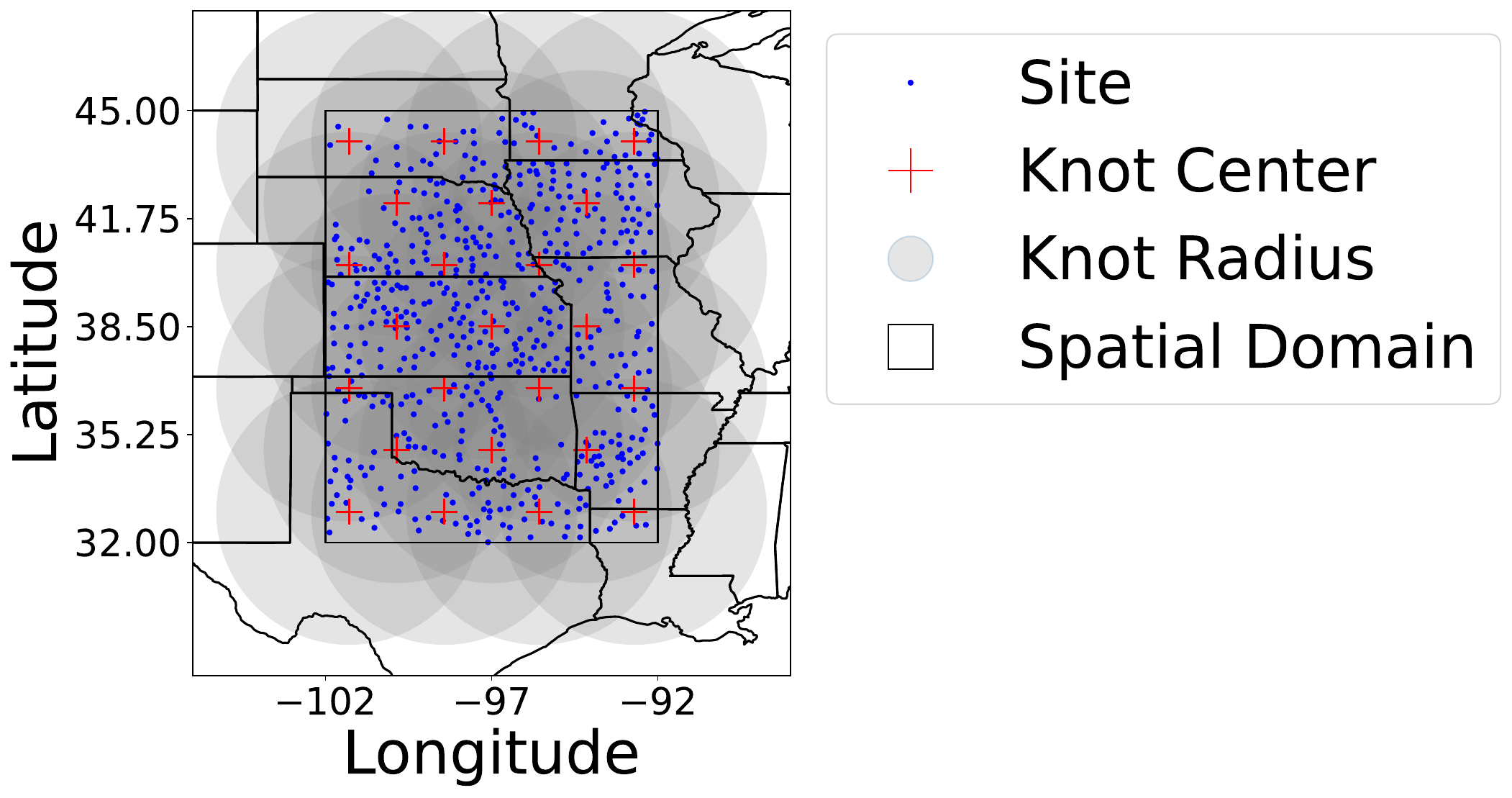}
        \caption{}
        \label{fig:application_knot}
    \end{subfigure}
  \end{minipage}
  \hspace{5pt}
  \begin{minipage}[c]{0.25\textwidth}
    \caption{(\subref{fig:application_US}) Location of the 590 observations/data sites and the 99 out-of-sample testing sites. 
    (\subref{fig:application_knot}) An illustration of the mixture component setup for model \texttt{k25r4b4}. The blue dots represent observation sites, red `+' symbols represent the centers of the kernel functions, and the shaded circles represent areas covered by each Wendland kernel with specified radius $r$.}
    \label{fig:application_combined}
  \end{minipage}
\end{figure}

While the GHCN database contains over twenty thousand stations over the contiguous United States, following \cite{risser2019probabilistic} we analyze a quality-controlled subset of the network from a recent 75-year period, namely those stations that have a minimum of 90\% non-missing daily precipitation measurements over 1949 through 2023. We then restrict our attention to summertime daily measurements (those from June, July, and August, or JJA) in the central United States region (defined by the $[102^\circ\text{W},92^\circ\text{W}]\times[32^\circ\text{N},45^\circ\text{N}]$ longitude-latitude bounding box), resulting in a set of $D=590$ stations. Our focus on the central United States in the summer season is intentional, since the majority of extreme precipitation in this region and season results from severe convective storms, which can be highly localized and therefore are particularly challenging to model statistically. Let $P_{tm}(\bs_j)$ represent the daily precipitation measurement in millimeters for day $m = 1, \dots, 92$ (the JJA season has 92 days) in year $t = 1949, \dots, 2023$ at station $\bs_j$. We then analyze the summertime maxima, denoted $Y_t(\bs_j) = \max_m\{P_{tm}(\bs_j)\}$. We only record the JJA maxima at station $\bs_j$ in year $t$ if that season has \revise{at least two-thirds non-missing measurements}, otherwise, $Y_t(\bs_j)$ is considered missing.

\revise{As we are analyzing block maxima, one might wonder whether each field of seasonal maxima contains observations from many individual storms.  Out of the 92 days in June-July-August, on average the station-specific maxima occur on approximately 75 different days (left panel of Figure~\ref{fig:event_timing}). The most extreme events occur on fewer days: 99th percentile events arise from approximately 5-8 days (orange curve in the right panel of Figure~\ref{fig:event_timing}). The fact that extreme precipitation arises from so many separate storms is not surprising; summertime precipitation in the central United States is primarily convective (i.e., from thunderstorms), which is characterized by highly-localized and short-lived storms. % This is why we originally chose to study extreme summertime rainfall in the central US in \cite{zhang2022accounting}: convective precipitation was challenging for our prior trend detection work \citep{Risser2019}. 
}

Corresponding to the block-maxima structured data, we assume that marginally 
\[
Y_t(\bs) \sim \text{GEV}(\mu_t(\bs), \sigma_t(\bs), \xi_t(\bs)).
\]
To ensure the independence over time and account for potential systematic increase or decrease in rainfall due to global warming, we have assumed a time-varying component for the location parameter. To account for the physical features of the terrain, we incorporate spatially-varying covariate with thin plate splines to smooth over the spatial domain. That is, 
\[
\mu_t(\bs) = \mu_0(\bs) + \mu_1(\bs) \cdot t, \quad \sigma_t(\bs) \equiv \sigma(\bs), \quad \xi_t(\bs) \equiv \xi(\bs),
\]
where the spatially varying intercept and slope parameters for the GEV location is specified as a spline as
\[
\mu_i(\bs) = \beta_0 + \beta_1 \cdot \text{elev}(\bs) + \sum_{i=1}^{11} w_i f_{tps}(\bs), \quad i \in \{0, 1\},
\]
where $f_{tps}$ denotes the thin-plate spline kernel (using 11 degrees of freedom to smooth over the spatial domain).  The GEV scale and shape parameters are specified as linear functions of elevation, as
\begin{align*}
    \log(\sigma(\bs)) &= \beta_0 + \beta_1 \cdot \text{elev}(\bs) \\
    \xi(\bs) &= \beta_0 + \beta_1 \cdot \text{elev}(\bs).
\end{align*}

\revise{For justification of these specific marginal modeling choices, please see Appendix \ref{sec:Appendix_GEV}.}

We place knots in an isomorphic grid across the spatial domain.  We considered thirteen different model configurations, with various combinations of different knot grids, Wendland kernel radii, Gaussian kernel bandwidths, and restricting the marginal parameters to be spatially constant. Table \ref{table:application_knot} describes the thirteen different setups, and Figure \ref{fig:application_knot} illustrates the mixture component setup of the model that we eventually chose. 
\begin{figure}[h]
\centering
    \includegraphics[width=\textwidth]{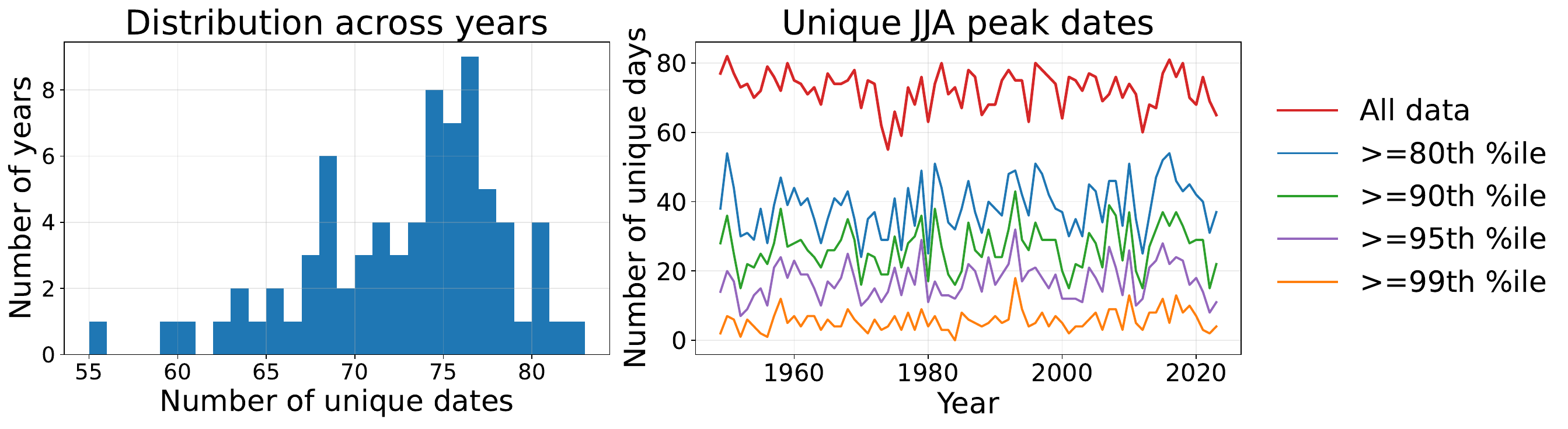}
\caption{\revise{Across 590 locations, histogram (left) of the unique dates of the yearly maxima across all sites; line charts (right) of the unique dates of the upper percentile maxima.}}
\label{fig:event_timing}
\end{figure}
To obtain MCMC starting values for the marginal parameters, we fit simple GEV regressions, assuming conditional independence across space and time. For the dependence model parameters, we fit empirical variograms to obtain initial values for the range parameter $\rho$, set $\phi(\bs)$ to start at 0.5 everywhere, and used the median observation to calculate a starting point for the random scale factor $R$. We then ran the MCMC chains for approximately 10,000 iterations (or until convergence) such that after discarded a burn-in period, we obtain at least 5,000 posterior samples.

\begin{table}[h]
\caption{13 model configurations spanning different knot grids, spatial extent of basis functions, and constraints on the marginal parameters. Model naming convention is as follows: $k$, $r$, $b$, and $m$ respectively denote the number of knots, radius of the compact Wendland  kernel, the bandwidth of the Gaussian  kernel, and restriction indicator on marginal GEV parameters. Here the effective range refers to the distance at which the Wendland kernel function becomes 0 (as it is compact) or when the Gaussian kernel function drops below 0.05. The \texttt{H-W Stationary} model is the stationary \cite{huser2019modeling} process used by \cite{zhang2022accounting}.}
\label{table:application_knot}

\rowcolors{2}{gray!15}{white} % alternate row color starting from second row

\centering
\renewcommand{\arraystretch}{1.2}
\begin{tabular}{
    >{\raggedright\arraybackslash}p{3.8cm}
    >{\centering\arraybackslash}p{1.2cm} 
    >{\centering\arraybackslash}p{1.2cm} 
    >{\centering\arraybackslash}p{1.2cm} 
    >{\centering\arraybackslash}p{1.2cm} 
    >{\centering\arraybackslash}p{1.2cm} 
    >{\raggedright\arraybackslash}p{2.7cm}}
\toprule
\textbf{Model Name} & \multicolumn{2}{c}{\textbf{\# of Knots}} & \multicolumn{3}{c}{\textbf{Basis Effective Range}} & \textbf{Constraint} \\
\cmidrule(lr){2-3} \cmidrule(lr){4-6}
& $S, \phi$ & $\rho$ & $S$ & $\phi$ & $\rho$ &  \\
\midrule
1. H-W Stationary     & 1   & 1   & $\infty$ & $\infty$ & $\infty$ & None \\
2. k13r4b4            & 13  & 13  & 4        & 4.89     & 4.89     & None \\
3. k13r4b4m           & 13  & 13  & 4        & 4.89     & 4.89     & Fix $\mu, \sigma, \xi$ \\
4. k25r2b0.67         & 25  & 25  & 2        & 2        & 2        & None \\
5. k25r2b0.67m        & 25  & 25  & 2        & 2        & 2        & Fix $\mu, \sigma, \xi$ \\
6. k25r2b2            & 25  & 25  & 2        & 3.46     & 3.46     & None \\
7. k25r2b2m           & 25  & 25  & 2        & 3.46     & 3.46     & Fix $\mu, \sigma, \xi$ \\
8. k25r4b4            & 25  & 25  & 4        & 4.89     & 4.89     & None \\
9. k25r4b4m           & 25  & 25  & 4        & 4.89     & 4.89     & Fix $\mu, \sigma, \xi$ \\
10. k41r1.6b0.43      & 41  & 41  & 1.6      & 1.6      & 1.6      & None \\
11. k41r1.6b0.43m     & 41  & 41  & 1.6      & 1.6      & 1.6      & Fix $\mu, \sigma, \xi$ \\
12. k41r2b0.67        & 41  & 41  & 2        & 2        & 2        & None \\
13. k41r2b0.67m       & 41  & 41  & 2        & 2        & 2        & Fix $\mu, \sigma, \xi$ \\
\bottomrule
\end{tabular}
\end{table}

\subsection{Model Evaluation}\label{sec:model evaluation}

To evaluate model fit, we incorporate additional observations as out-of-sample test data. These stations have $>85\%$ and $<= 90\%$ non-missing daily precipitation data measurements over the same time period and spatial domain. The JJA maxima are chosen according to the same criteria as the training dataset, enforcing at least 2/3 non-missing values in any given year.  This results in 99 stations in the test set, drawn as the yellow triangles in Figure \ref{fig:application_US}.
We evaluate  model fit and compare the thirteen models based on their predictive log-likelihoods at these 99 stations.

To obtain the log-likelihoods at the out-of-sample sites, \revise{we} draw from predictive distributions of all model parameters at each MCMC iteration. For models that have fixed marginal parameters, we use the initial estimates from the GEV fit to interpolate to the testing sites. Figure \ref{fig:application_ll} displays the boxplots of log-likelihoods at the testing sites for the thirteen models we considered. Based on its superior log-likelihood performance, we decided to use the \texttt{k25r4b4} model (Figure \ref{fig:application_ll}).

It appears that models which jointly estimate the marginal parameters within the MCMC tend to perform systematically better than models with fixed marginal parameters.  This suggests that the common practice of performing the analysis in two steps---first estimating the marginal parameters, then performing the dependence analysis ``downstream'' by plugging in the empirical marginal estimates to transform the data to convenient margins---is prone to underperform. We believe that it is worth the extra effort to estimate the marginal model parameters together with the dependence model parameters in one unified hierarchical model. 

\begin{figure}[h]
    \centering
  \begin{minipage}[c]{0.85\textwidth}
    \includegraphics[width=\textwidth]{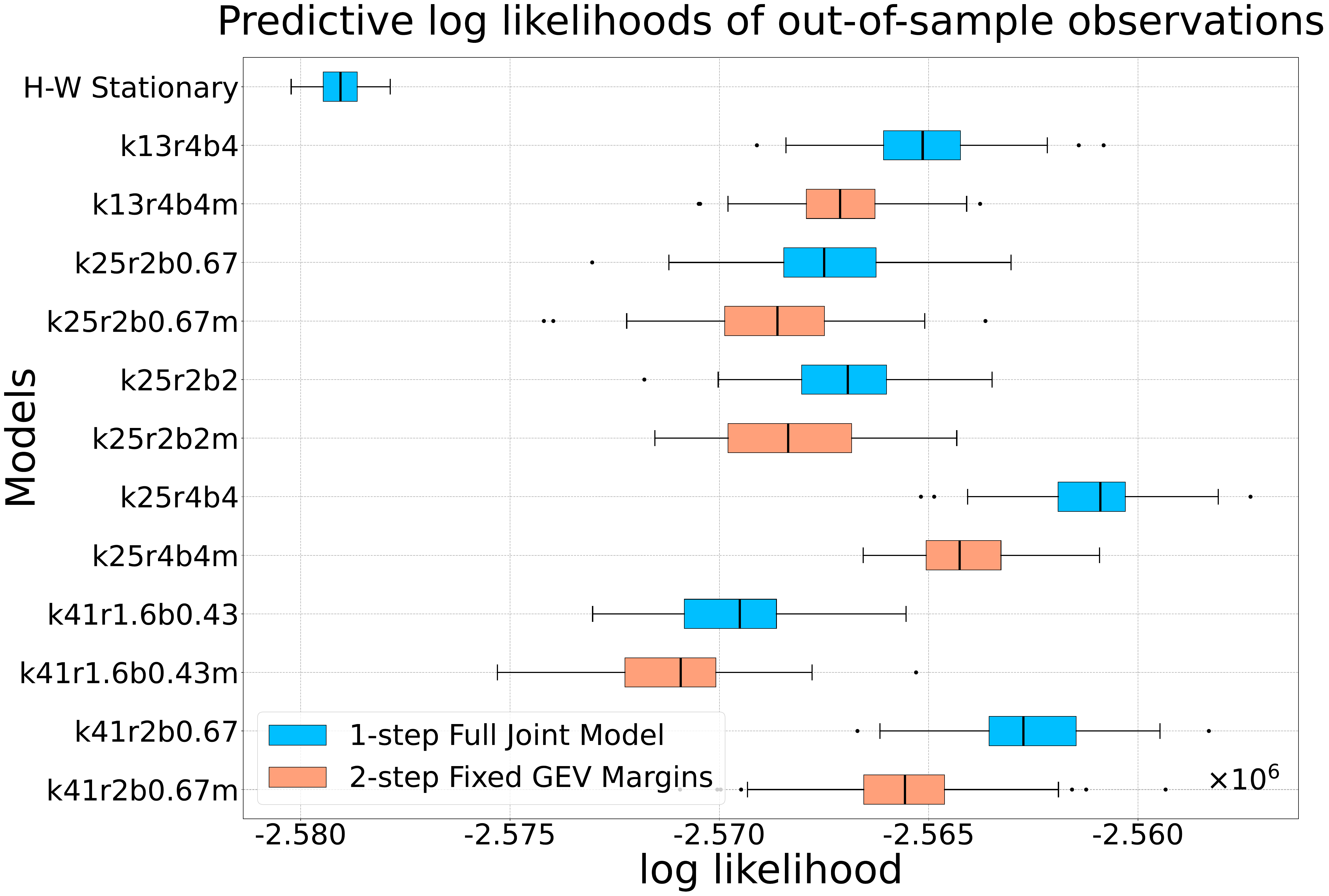}
  \end{minipage}\vfill
  \begin{minipage}[c]{\textwidth}
    \caption{Boxplots of predictive log-likelihood for the thirteen models. Higher log-likelihood is better. ``Blue'' models estimate the GEV parameters in the MCMC process, while the ``orange models'' have some restrictions on their marginal model parameters; ``blue models'' perform better than their ``orange counterparts''.} \label{fig:application_ll}
  \end{minipage}
\end{figure}

% We decided to use the $k13r4b4$ model based on its superior log-likelihood performance (Figure \ref{fig:application_ll}) and its parsimony. The $k25r2b0.67$ model shows comparable performance to our chosen model, but results in overly complex parameter surfaces.  We therefore favor the simpler $k13r4b4$ model.  In addition, the \textit{stationary} \citet{huser2019modeling}-equivalent model performs roughly comparably on average, but with much greater variation. This makes sense based on the posterior mean $\phi(\bs)$ surface (Figure \ref{fig:application_surface}), which varies in space but remains in the AI regime.  The \citet{huser2019modeling}-equivalent model can capture the overall behavior of the tail dependence strength but not its heterogeneity.  Thus, $k13r4b4$ is the most parsimonious candidate model that can still faithfully capture local variation in tail dependence.

\begin{figure}[h]
    \centering
    \begin{subfigure}[t]{0.24\textwidth}
        \centering
        \includegraphics[width=\textwidth, clip=true, trim=5 5 55 10]{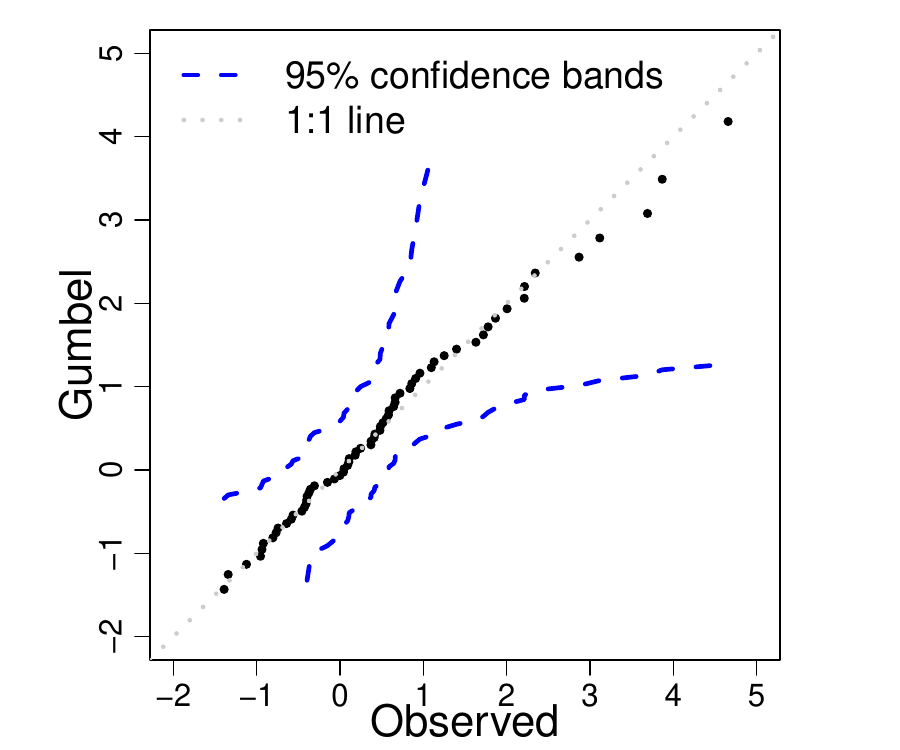}
        \caption{site 23}
        \label{fig:QQ_site_23}
    \end{subfigure}
    \hfill
    \begin{subfigure}[t]{0.24\textwidth}
        \centering
        \includegraphics[width=\textwidth, clip=true, trim=5 5 55 10]{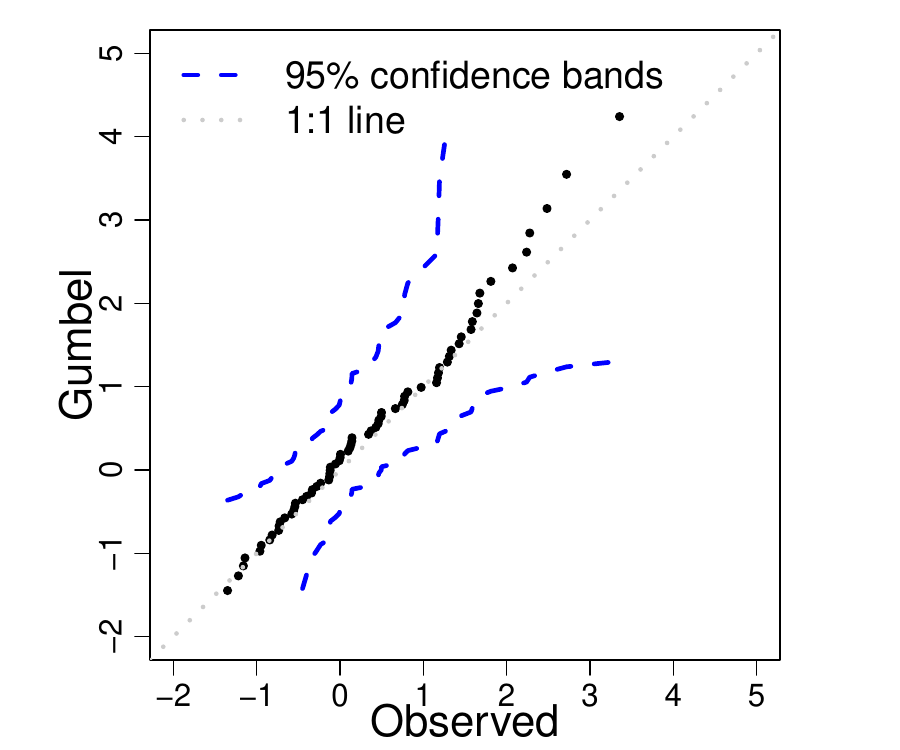}
        \caption{site 41}
        \label{fig:QQ_site_41}
    \end{subfigure}
    \hfill
    \begin{subfigure}[t]{0.24\textwidth}
        \centering
        \includegraphics[width=\textwidth, clip=true, trim=5 5 55 10]{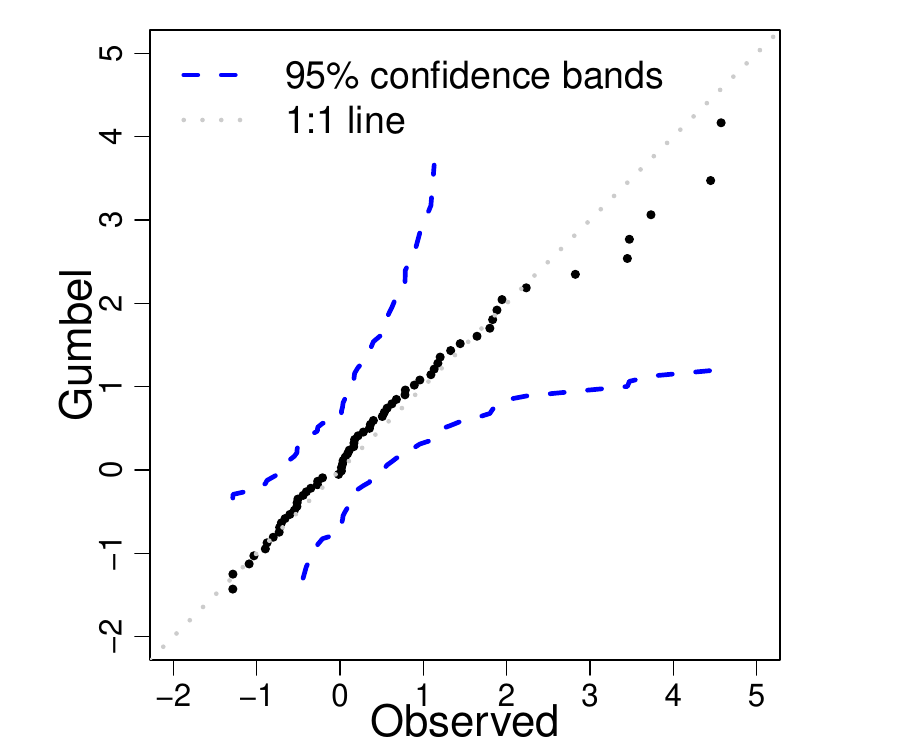}
        \caption{site 82}
        \label{fig:QQ_site_82}
    \end{subfigure}
    \hfill
    \begin{subfigure}[t]{0.24\textwidth}
        \centering
        \includegraphics[width=\textwidth, clip=true, trim=5 5 55 10]{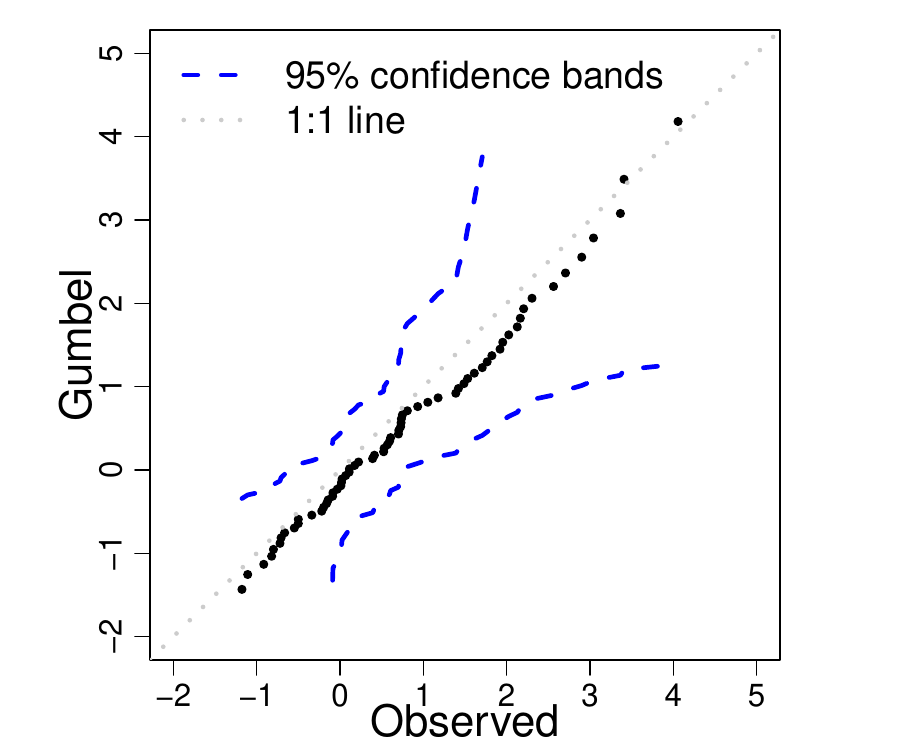}
        \caption{site 94}
        \label{fig:QQ_site_94}
    \end{subfigure}

    \caption{QQ-plots of four randomly selected holdout locations, comparing observed and predicted marginal quantiles for the \texttt{k25r4b4} model. 95\% confidence envelopes are also shown. Marginal values are transformed to a Gumbel distribution.}
    \label{fig:application_QQ}
\end{figure}

In addition, we looked at the empirical quantile plots of the out-of-sample observations to evaluate the model's marginal fit. We compared GEV quantiles based on posterior predictive draws of the marginal parameters at the holdout sites to the empirical quantiles at the same holdout sites. Figure \ref{fig:application_QQ} shows QQ-plots from four sites randomly selected from the 99 testing sites, for the best-performing \texttt{k25r4b4} model. The model provides a decent marginal fit, as the 95\% confidence band contains the 1:1 line in each case. 
% Some locations show better marginal fit than others, which is not a surprise since the joint likelihood must balance the joint and marginal fits, rather than exclusively optimizing the marginal fit at each location.

\subsection{Results}\label{sec:results}

We now present results from the chosen \texttt{k25r4b4} model. First, we examine the estimates for $\phi(\bs)$ at the knot locations, specifically whether they fall within $(0, 1/2]$ or $(1/2, 1)$, corresponding to whether the data-generating process is asymptotically independent or dependent. 
% It appears that this dataset demonstrates asymptotically independent behavior at all spatial ranges with high probability.  Recall that the \citet{hazra2021realistic} cannot capture this behavior.  Nonetheless, this is a somewhat disappointing result, as it does not properly highlight one of the key features of our model, which is that it simultaneously allows asymptotic dependence at short distances and asymptotic independence at long distances.  However, as the credible intervals for $\phi(\bs)$ show, there is non-negligible posterior probability that this is the case.  The tail dependence also appears to be non-stationary, as the estimated value of of $\phi(s)$ changes appreciably over the spatial domain; see Figure \ref{fig:application_surface}. 
It appears that this dataset exhibits asymptotically dependent behavior at short spatial ranges on the western portion of the spatial domain, and asymptotically independent behavior at short spatial ranges on the central and eastern portions of the spatial domain. Previous spatial models, including the \citet{hazra2021realistic} model, cannot capture this spatially heterogeneous behavior. This highlights one of the key features of our model, which is that it simultaneously allows  AI at long distances and either AI or AD at short distances. In this dataset, the tail dependence appears to be non-stationary, as the estimated value of the $\phi(\bs)$ changes appreciably over the spatial domain; see Figure \ref{fig:application_surface}.
% Next, the posterior mean $\rho(\bs)$ surface shows that the correlation range of the latent Gaussian process is estimated to be much longer in some parts of the spatial domain than others.  Specifically, southeast Minnesota/northeast Iowa and the Oklahoma panhandle show very short ranges.  
Next, the posterior mean $\rho(\bs)$ surface shows that the correlation range of the latent Gaussian process is also estimated to be variable across the spatial domain.
Table \ref{table:application_estimate} reports the posterior means and the 95\% equi-tail credible intervals for the dependence parameter $\phi(\bs)$ and range parameter $\rho(\bs)$ at the kernel knot locations.
Finally we include the posterior mean marginal parameter surfaces in Figure \ref{fig:application_surface}. Interestingly, the slope parameter $\mu_1(\bs)$ is positive in about half of the spatial domain and negative in the other half.  The largest values correspond to a change of 7 millimeters per century in the marginal GEV location.  
% The estimated GEV shape parameter surface shows very little spatial variation and is heavier-tailed that we would have expected.

% Posterior surface of phi and rho
% \begin{figure}[H]
%     \centering
%     \includegraphics[width=0.477\linewidth]{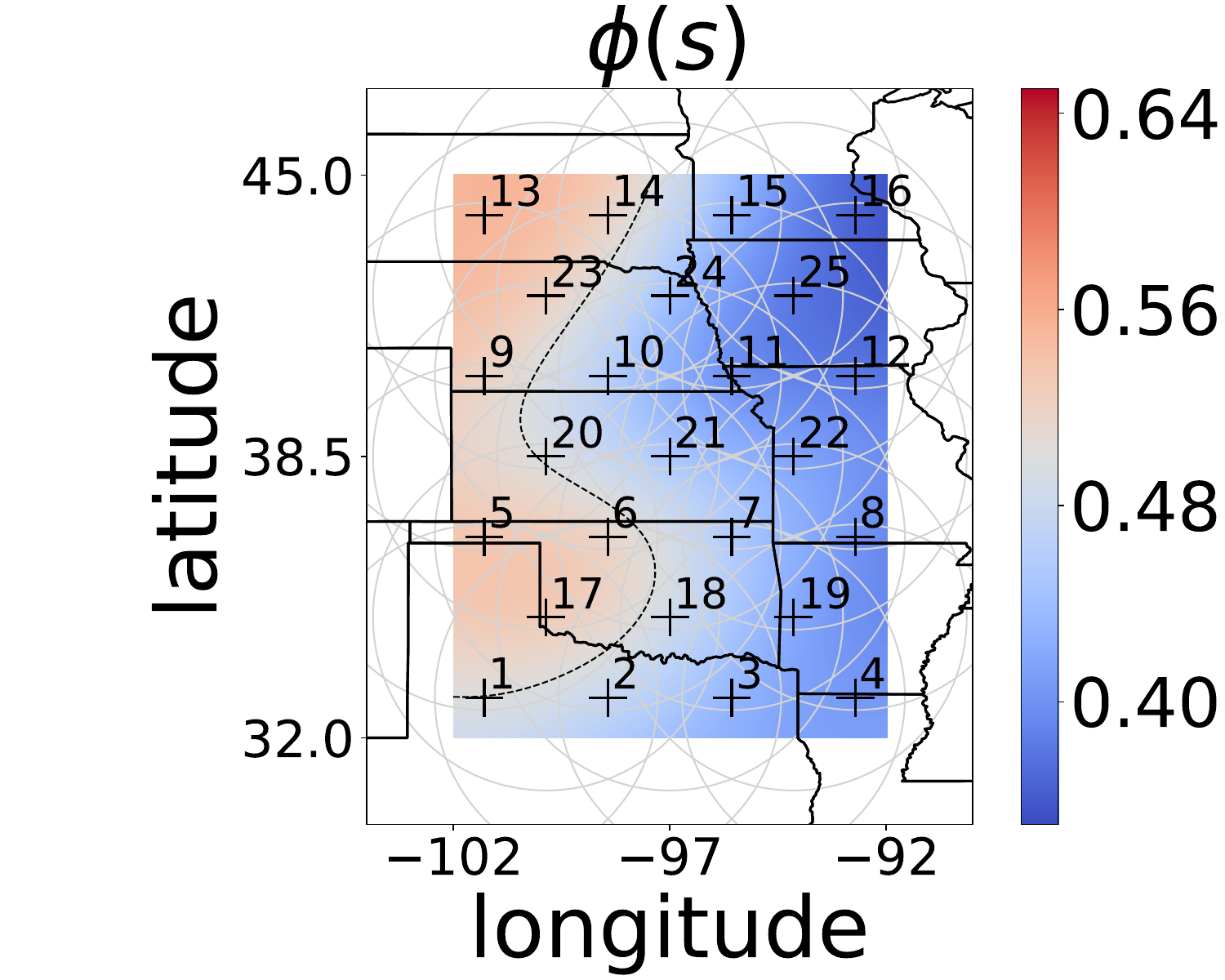}
%     \includegraphics[width=0.34\linewidth, clip=true, trim=210 0 0 0]{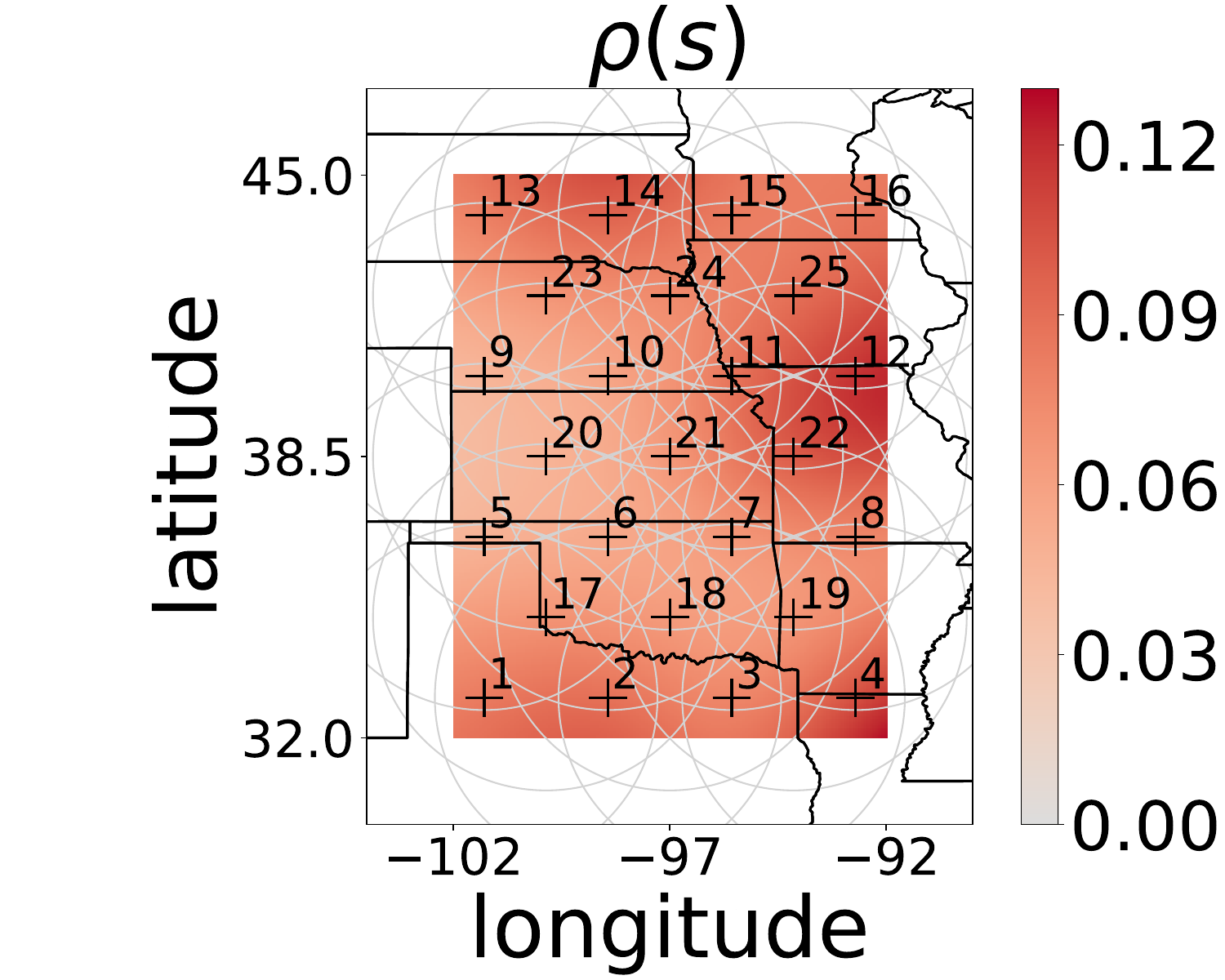}
    
%     \caption{Interpolated surfaces of posterior mean for dependence model of $k25r4b4$: tail dependence parameter $\phi(\bs)$ and range parameter $\rho(\bs)$.}
%     \label{fig:application_surface}
% \end{figure}

% \begin{figure}[h]
% \begin{adjustwidth}{-1.5cm}{}
% \centering
% \begin{minipage}{0.75\linewidth}
%     \centering
%     \includegraphics[width=0.571\linewidth]{Surface_phi.pdf}
%     \includegraphics[width=0.41\linewidth, clip=true, trim=210 0 0 0]{Surface_rho.pdf}
% \end{minipage}%
% \hfill
% \begin{minipage}{0.24\linewidth}
%     \captionof{figure}{Interpolated surfaces of posterior mean for dependence model of \texttt{k25r4b4}: tail dependence parameter $\phi(\bs)$ and range parameter $\rho(\bs)$. Dashed black line marks $\phi(\bs) = 0.5$, the transition between AD and AI.}
%     \label{fig:application_surface}
% \end{minipage}
% \end{adjustwidth}
% \end{figure}

\begin{figure}[h]
\centering
    \centering
    \hspace{-0.1\linewidth}
    \includegraphics[width=0.351\linewidth]{Surface_phi.pdf}
    \includegraphics[width=0.289\linewidth, clip=true, trim=130 0 0 0]{Surface_rho.pdf}\linebreak
    \includegraphics[width=0.279\linewidth]{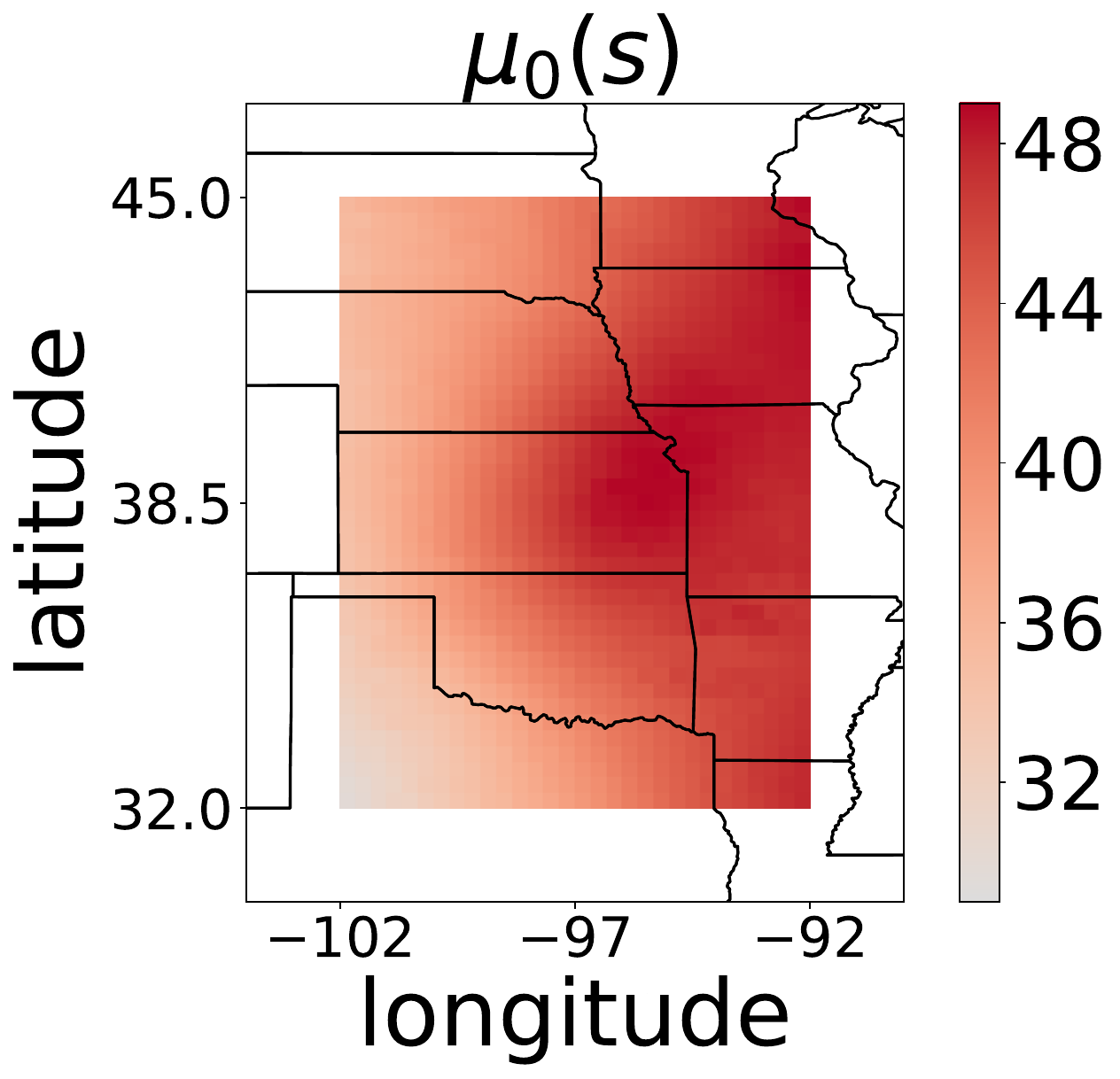}
    \includegraphics[width=0.225\linewidth, clip=true, trim=130 0 0 0]{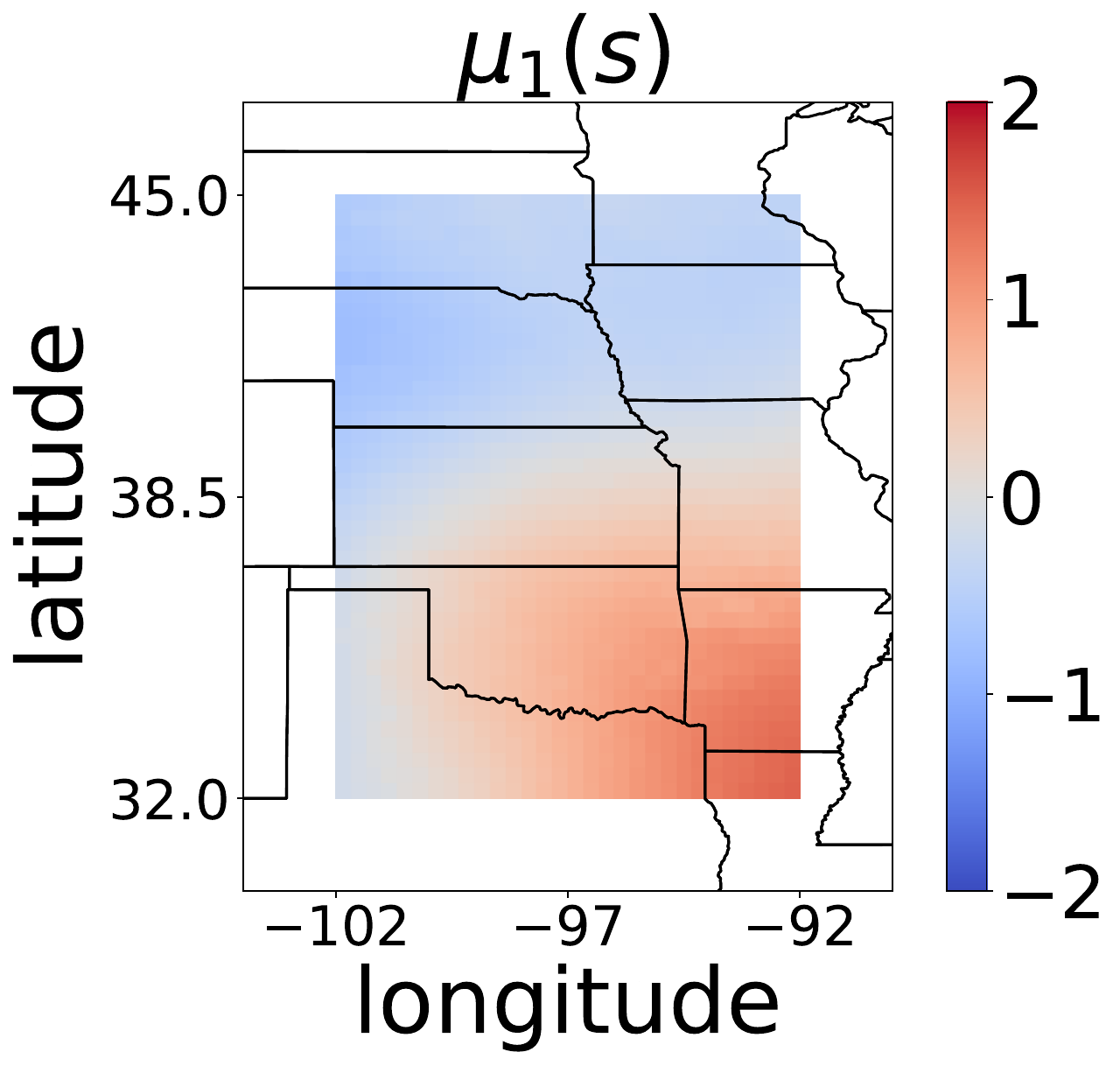}
    \includegraphics[width=0.225\linewidth, clip=true, trim=130 0 0 0]{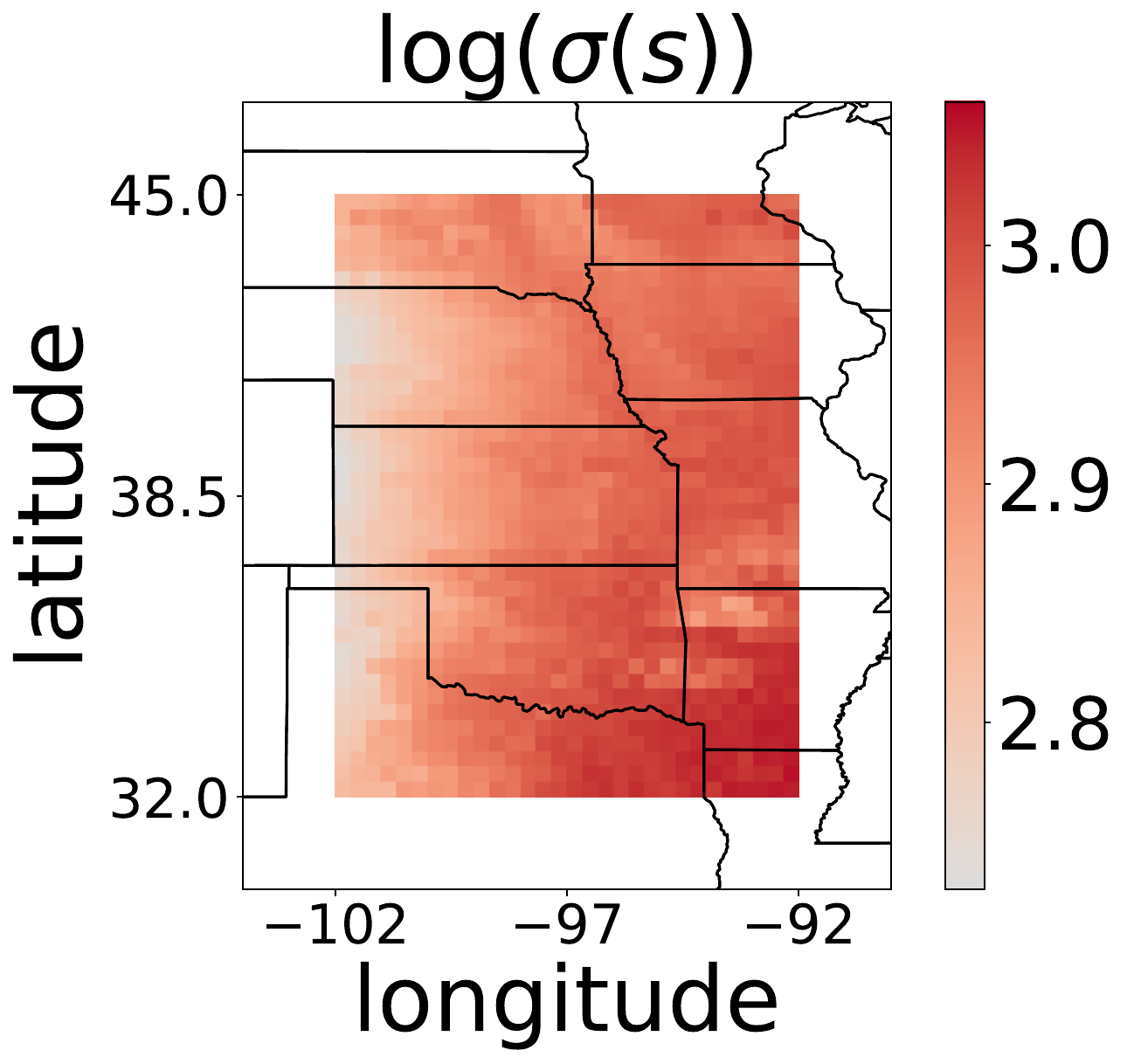}
    \includegraphics[width=0.247\linewidth, clip=true, trim=130 0 0 0]{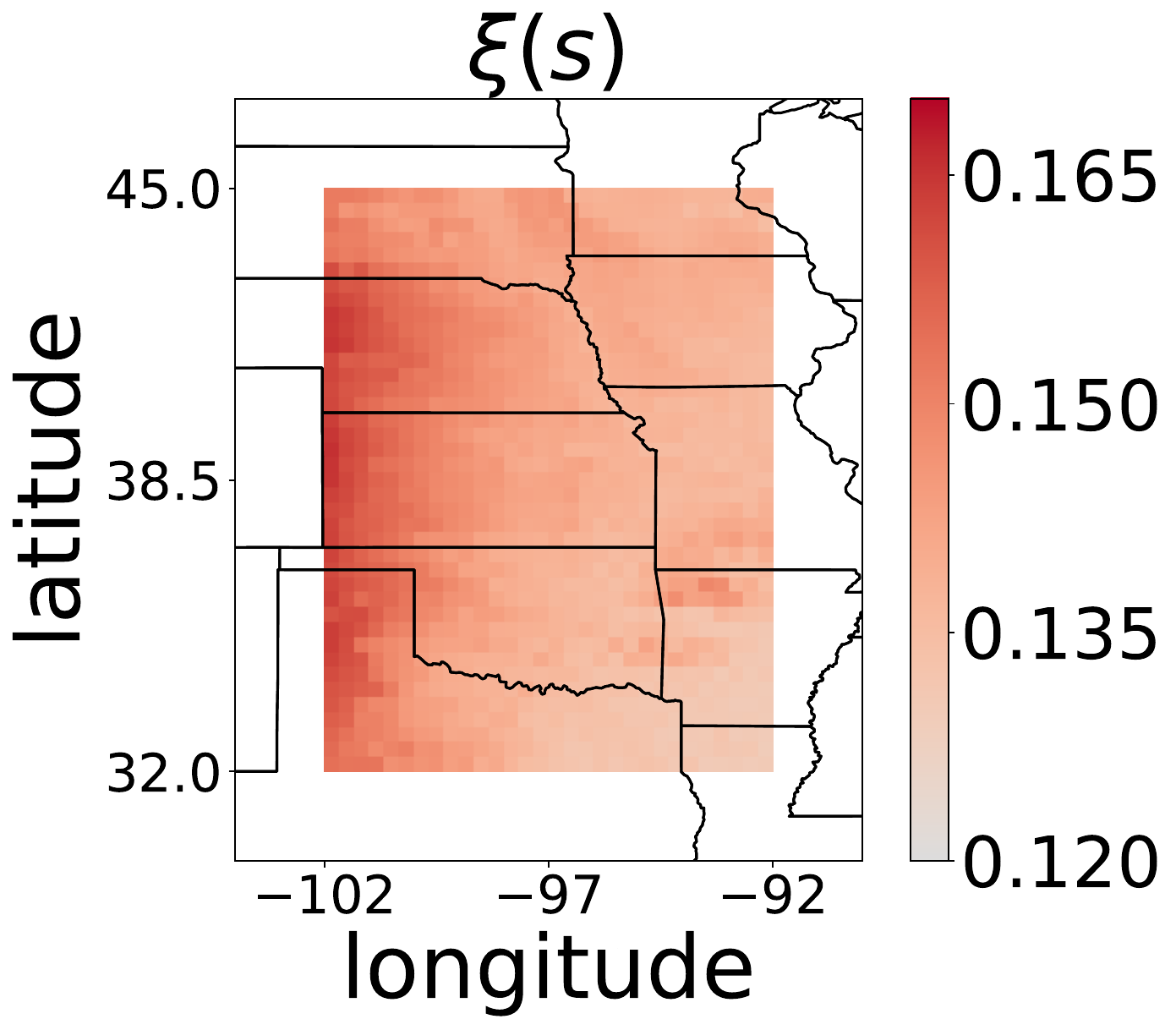}
    
    \caption{Interpolated surfaces of posterior mean for dependence and marginal models of \texttt{k25r4b4}. The tail dependence parameter $\phi(\bs)$ and range parameter $\rho(\bs)$ on top and the marginal parameters on bottom. For the $\phi(\bs)$ surface, the dashed black line marks $\phi(\bs) = 0.5$, the transition between AD and AI.}
    % \label{fig:application_GEVsurface}
    \label{fig:application_surface}
\end{figure}

\begin{table}[h]
\begin{adjustwidth}{-1.5cm}{-1.5cm}
\caption{Posterior mean and 95\% equi-tail credible interval of the \texttt{k25r4b4} model. $k$ indicates knot index.}
\label{table:application_estimate}
\centering
\small
\begin{tabular}{c|c|clccclccc}
\hline
$k$ & $\phi_k$                                                        & $\rho_k$                                                        &  & \multicolumn{1}{c|}{$k$} & \multicolumn{1}{c|}{$\phi_k$}                                                        & $\rho_k$                                                        &  & \multicolumn{1}{c|}{$k$} & \multicolumn{1}{c|}{$\phi_k$}                                                        & $\rho_k$                                                        \\ \cline{1-3} \cline{5-7} \cline{9-11} 
1   & \begin{tabular}[c]{@{}c@{}}0.469 \\ (0.271, 0.694)\end{tabular} & \begin{tabular}[c]{@{}c@{}}0.082 \\ (0.008, 0.168)\end{tabular} &  & \multicolumn{1}{c|}{2}   & \multicolumn{1}{c|}{\begin{tabular}[c]{@{}c@{}}0.394 \\ (0.184, 0.613)\end{tabular}} & \begin{tabular}[c]{@{}c@{}}0.147 \\ (0.044, 0.228)\end{tabular} &  & \multicolumn{1}{c|}{3}   & \multicolumn{1}{c|}{\begin{tabular}[c]{@{}c@{}}0.408 \\ (0.199, 0.616)\end{tabular}} & \begin{tabular}[c]{@{}c@{}}0.068 \\ (0.003, 0.165)\end{tabular} \\ \cline{1-3} \cline{5-7} \cline{9-11} 
4   & \begin{tabular}[c]{@{}c@{}}0.426 \\ (0.218, 0.672)\end{tabular} & \begin{tabular}[c]{@{}c@{}}0.178 \\ (0.091, 0.269)\end{tabular} &  & \multicolumn{1}{c|}{5}   & \multicolumn{1}{c|}{\begin{tabular}[c]{@{}c@{}}0.549 \\ (0.316, 0.764)\end{tabular}} & \begin{tabular}[c]{@{}c@{}}0.024 \\ (0.001, 0.066)\end{tabular} &  & \multicolumn{1}{c|}{6}   & \multicolumn{1}{c|}{\begin{tabular}[c]{@{}c@{}}0.619 \\ (0.355, 0.822)\end{tabular}} & \begin{tabular}[c]{@{}c@{}}0.035 \\ (0.002, 0.095)\end{tabular} \\ \cline{1-3} \cline{5-7} \cline{9-11} 
7   & \begin{tabular}[c]{@{}c@{}}0.500 \\ (0.276, 0.723)\end{tabular} & \begin{tabular}[c]{@{}c@{}}0.051 \\ (0.002, 0.145)\end{tabular} &  & \multicolumn{1}{c|}{8}   & \multicolumn{1}{c|}{\begin{tabular}[c]{@{}c@{}}0.346 \\ (0.168, 0.583)\end{tabular}} & \begin{tabular}[c]{@{}c@{}}0.034 \\ (0.001, 0.100)\end{tabular} &  & \multicolumn{1}{c|}{9}   & \multicolumn{1}{c|}{\begin{tabular}[c]{@{}c@{}}0.561 \\ (0.321, 0.775)\end{tabular}} & \begin{tabular}[c]{@{}c@{}}0.024 \\ (0.001, 0.072)\end{tabular} \\ \cline{1-3} \cline{5-7} \cline{9-11} 
10  & \begin{tabular}[c]{@{}c@{}}0.416 \\ (0.212, 0.639)\end{tabular} & \begin{tabular}[c]{@{}c@{}}0.036 \\ (0.001, 0.114)\end{tabular} &  & \multicolumn{1}{c|}{11}  & \multicolumn{1}{c|}{\begin{tabular}[c]{@{}c@{}}0.305 \\ (0.121, 0.509)\end{tabular}} & \begin{tabular}[c]{@{}c@{}}0.097 \\ (0.004, 0.254)\end{tabular} &  & \multicolumn{1}{c|}{12}  & \multicolumn{1}{c|}{\begin{tabular}[c]{@{}c@{}}0.366 \\ (0.172, 0.590)\end{tabular}} & \begin{tabular}[c]{@{}c@{}}0.161 \\ (0.034, 0.337)\end{tabular} \\ \cline{1-3} \cline{5-7} \cline{9-11} 
13  & \begin{tabular}[c]{@{}c@{}}0.538 \\ (0.301, 0.785)\end{tabular} & \begin{tabular}[c]{@{}c@{}}0.073 \\ (0.004, 0.187)\end{tabular} &  & \multicolumn{1}{c|}{14}  & \multicolumn{1}{c|}{\begin{tabular}[c]{@{}c@{}}0.588 \\ (0.319, 0.821)\end{tabular}} & \begin{tabular}[c]{@{}c@{}}0.177 \\ (0.049, 0.297)\end{tabular} &  & \multicolumn{1}{c|}{15}  & \multicolumn{1}{c|}{\begin{tabular}[c]{@{}c@{}}0.461 \\ (0.224, 0.693)\end{tabular}} & \begin{tabular}[c]{@{}c@{}}0.079 \\ (0.007, 0.185)\end{tabular} \\ \cline{1-3} \cline{5-7} \cline{9-11} 
16  & \begin{tabular}[c]{@{}c@{}}0.333 \\ (0.179, 0.499)\end{tabular} & \begin{tabular}[c]{@{}c@{}}0.083 \\ (0.021, 0.146)\end{tabular} &  & \multicolumn{1}{c|}{17}  & \multicolumn{1}{c|}{\begin{tabular}[c]{@{}c@{}}0.605 \\ (0.376, 0.827)\end{tabular}} & \begin{tabular}[c]{@{}c@{}}0.072 \\ (0.006, 0.162)\end{tabular} &  & \multicolumn{1}{c|}{18}  & \multicolumn{1}{c|}{\begin{tabular}[c]{@{}c@{}}0.528 \\ (0.289, 0.766)\end{tabular}} & \begin{tabular}[c]{@{}c@{}}0.042 \\ (0.001, 0.100)\end{tabular} \\ \cline{1-3} \cline{5-7} \cline{9-11} 
19  & \begin{tabular}[c]{@{}c@{}}0.359 \\ (0.161, 0.574)\end{tabular} & \begin{tabular}[c]{@{}c@{}}0.021 \\ (0.000, 0.062)\end{tabular} &  & \multicolumn{1}{c|}{20}  & \multicolumn{1}{c|}{\begin{tabular}[c]{@{}c@{}}0.429 \\ (0.213, 0.660)\end{tabular}} & \begin{tabular}[c]{@{}c@{}}0.059 \\ (0.008, 0.108)\end{tabular} &  & \multicolumn{1}{c|}{21}  & \multicolumn{1}{c|}{\begin{tabular}[c]{@{}c@{}}0.415 \\ (0.195, 0.627)\end{tabular}} & \begin{tabular}[c]{@{}c@{}}0.060 \\ (0.003, 0.137)\end{tabular} \\ \cline{1-3} \cline{5-7} \cline{9-11} 
22  & \begin{tabular}[c]{@{}c@{}}0.486 \\ (0.251, 0.741)\end{tabular} & \begin{tabular}[c]{@{}c@{}}0.176 \\ (0.028, 0.301)\end{tabular} &  & \multicolumn{1}{c|}{23}  & \multicolumn{1}{c|}{\begin{tabular}[c]{@{}c@{}}0.590 \\ (0.323, 0.829)\end{tabular}} & \begin{tabular}[c]{@{}c@{}}0.060 \\ (0.004, 0.146)\end{tabular} &  & \multicolumn{1}{c|}{24}  & \multicolumn{1}{c|}{\begin{tabular}[c]{@{}c@{}}0.426 \\ (0.208, 0.653)\end{tabular}} & \begin{tabular}[c]{@{}c@{}}0.031 \\ (0.001, 0.094)\end{tabular} \\ \cline{1-3}
25  & \begin{tabular}[c]{@{}c@{}}0.356 \\ (0.147, 0.589)\end{tabular} & \begin{tabular}[c]{@{}c@{}}0.065 \\ (0.002, 0.173)\end{tabular} &  &                          &                                                                                      &                                                                 &  &                          &                                                                                      &                                                                
\end{tabular}
\end{adjustwidth}
\end{table}

To assess the tail dependence of the fitted process, we use a moving-window approach, in which we divide  the spatial domain into $17 \times 7$ sub-regions and empirically estimate and visualize $\chi$ in each local window.  The left-hand panel of Figure \ref{fig:application_chi} shows the empirical $\hat{\chi}_u(h)$, while the right hand panel of Figure \ref{fig:application_chi} shows the analogous model-based $\hat{\chi}_u(h)$.  To obtain the empirical version, we first use the \texttt{k25r4b4} model-fitted marginal GEV parameters to transform the observations $Y_t(s)$ to the $X_t(s)$ scale, then empirically estimate $\hat{\chi}_u(h)$.  We do this across three quantile levels, $u = 0.9$, $0.95$ and $0.99$, and three  distances, approximately  $75$km, $150$km, and $225$km.  We obtained the analogous model-based version of $\hat{\chi}_u(h)$ by making many unconditional draws from the fitted model, then using the same empirical estimation strategy as in the left-hand panel.

% The empirical and model-based estimates of $\chi_u(h)$ show similar spatial patterns, but at different levels.  The most prominent feature is the patch of high values along the border between Texas and Oklahoma, as well as the more subtle patch of high values near the order between Nebraska, Iowa, and South Dakota. These features appear in both the left- and right-hand panels of Figure \ref{fig:application_chi}. 
% We also notice that as the quantile increases, the estimates $\chi_u(h)$  decrease for all spatial distances, which is the expected behavior for asymptotic independence.  This is consistent with the results shown in Figure \ref{fig:application_surface}, which shows that  $\phi(\bs) < 0.5$ across the entire domain. That the overall level of the model-based $\hat{\chi}_u(h)$ surfaces is higher than that of their empirical analogues, even as the spatial patterns and trends in $u$ are similar, is not terribly concerning; empirical estimates of $\hat{\chi}_u(h)$ are necessarily very noisy, so we would not expect perfect alignment, even if the model perfectly captured the data-generating process. 

The empirical and model-based estimates of $\chi_u(h)$ show similar spatial patterns at all levels. The most prominent feature is the patch of high values along the border between Texas and Oklahoma, as well as the patch of high values near the border between Nebraska and Missouri. These features appear in both the left- and right-hand panels of Figure \ref{fig:application_chi}.
We also notice that as the quantile increases, the estimates $\chi_u(h)$ decrease for all spatial distances, in the central and eastern region of the spatial domain, which is the expected behavior for asymptotic independence. This is consistent with the results shown in Figure \ref{fig:application_surface} for the regions where $\phi(\bs) < 0.5$.
That the overall level of some model-based $\hat{\chi}_u(h)$ surfaces is higher than that of their empirical analogues, even as the spatial patterns and trends in $u$ are similar, is not terribly concerning; empirical estimates of $\hat{\chi}_u(h)$ are necessarily very noisy, so we would not expect perfect alignment, even if the model perfectly captured the data-generating process. \revise{Confidence bands are included in the appendix \ref{sec:Appendix_chi}.}

% Empirical moving window chi
\begin{figure}[h]
    \centering
    
    \begin{minipage}[b]{0.474\textwidth}
        \centering
        Dataset $\chi$
        \vspace{5pt}
        
        \includegraphics[width=\textwidth, clip=true, trim=0 30 100 0]{Surface_data_chi_fittedGEV_h=75.pdf}
    \end{minipage}
    \hfill
    \begin{minipage}[b]{0.495\textwidth}
        \centering
        Model-based $\chi$
        \vspace{5pt}
        
        \includegraphics[width=\textwidth, clip=true, trim=65 30 0 0]{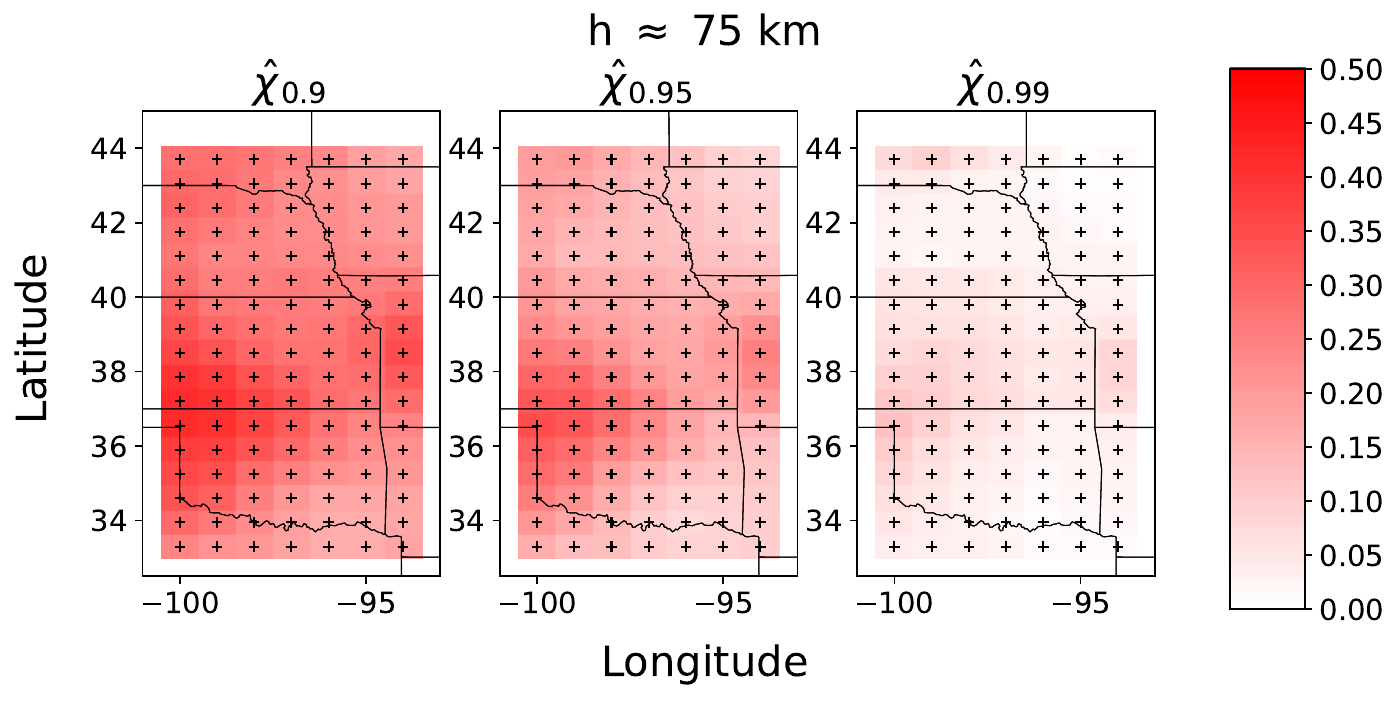}
    \end{minipage}
    \vfill
    \begin{minipage}[b]{0.474\textwidth}
        \centering
        \includegraphics[width=\textwidth, clip=true, trim=0 30 100 0]{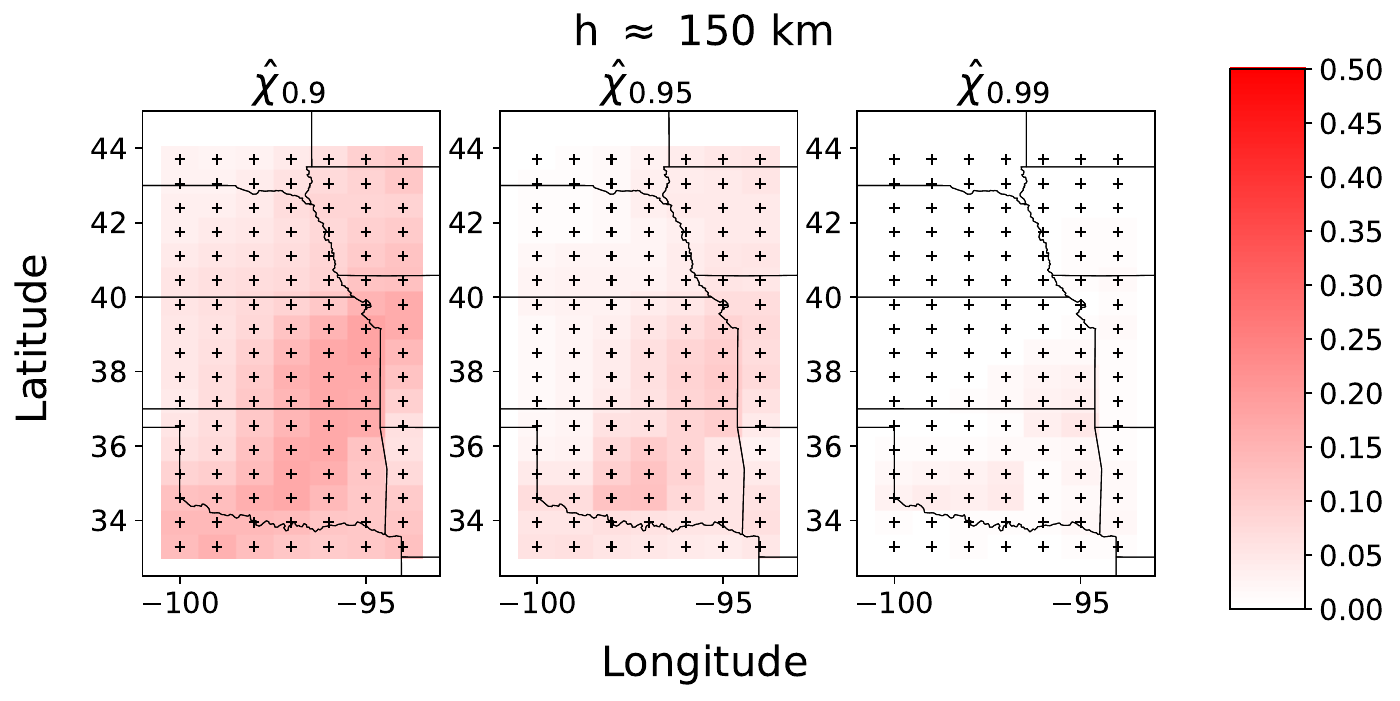}
    \end{minipage}
    \hfill
    \begin{minipage}[b]{0.495\textwidth}
        \centering
        \includegraphics[width=\textwidth, clip=true, trim=65 30 0 0]{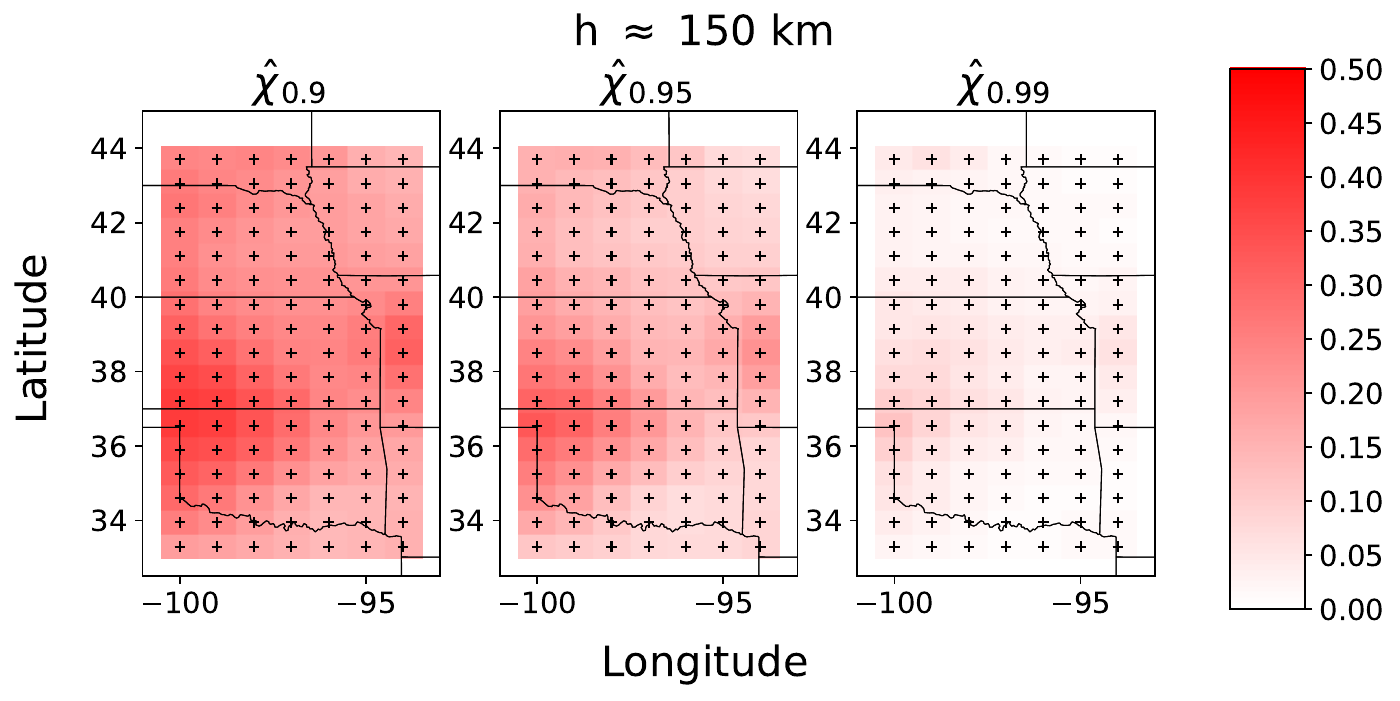}
    \end{minipage}
    \vfill
    \begin{minipage}[b]{0.474\textwidth}
        \centering
        \includegraphics[width=\textwidth, clip=true, trim=0 30 100 0]{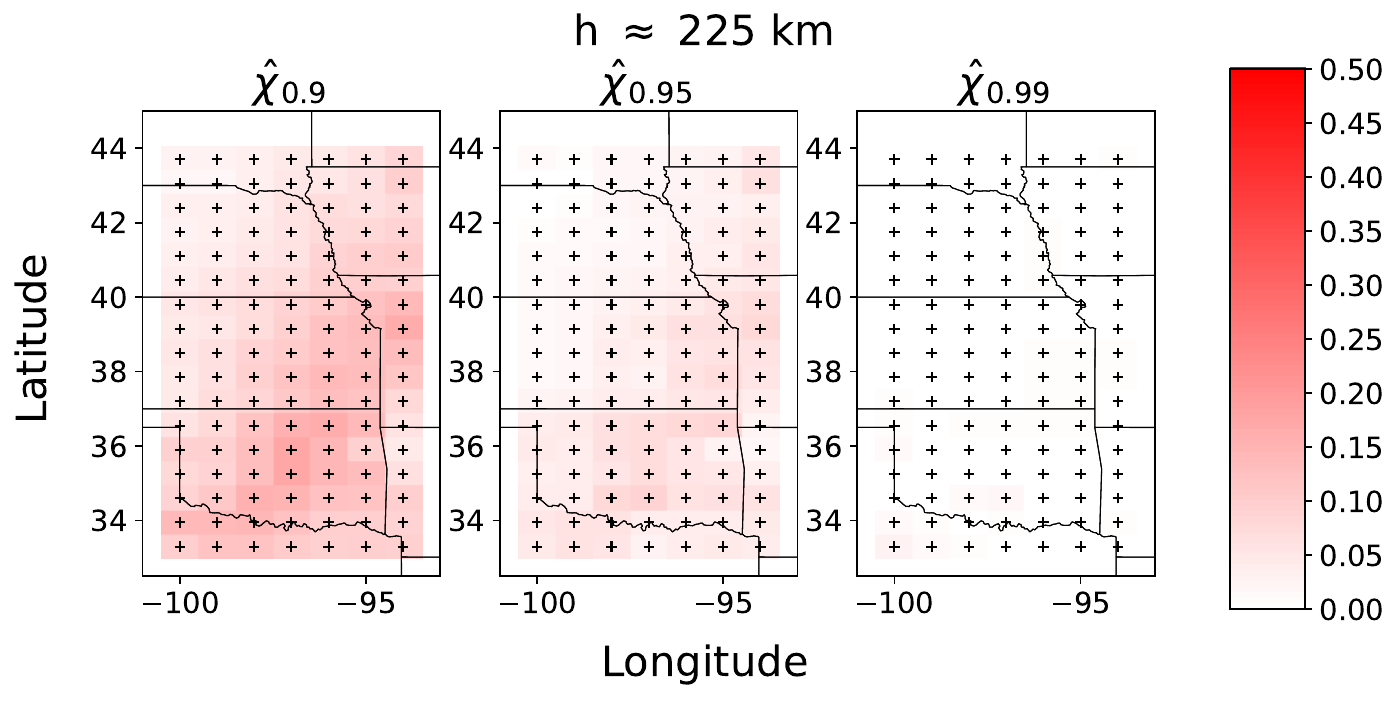}
    \end{minipage}
    \hfill
    \begin{minipage}[b]{0.495\textwidth}
        \centering
        \includegraphics[width=\textwidth, clip=true, trim=65 30 0 0]{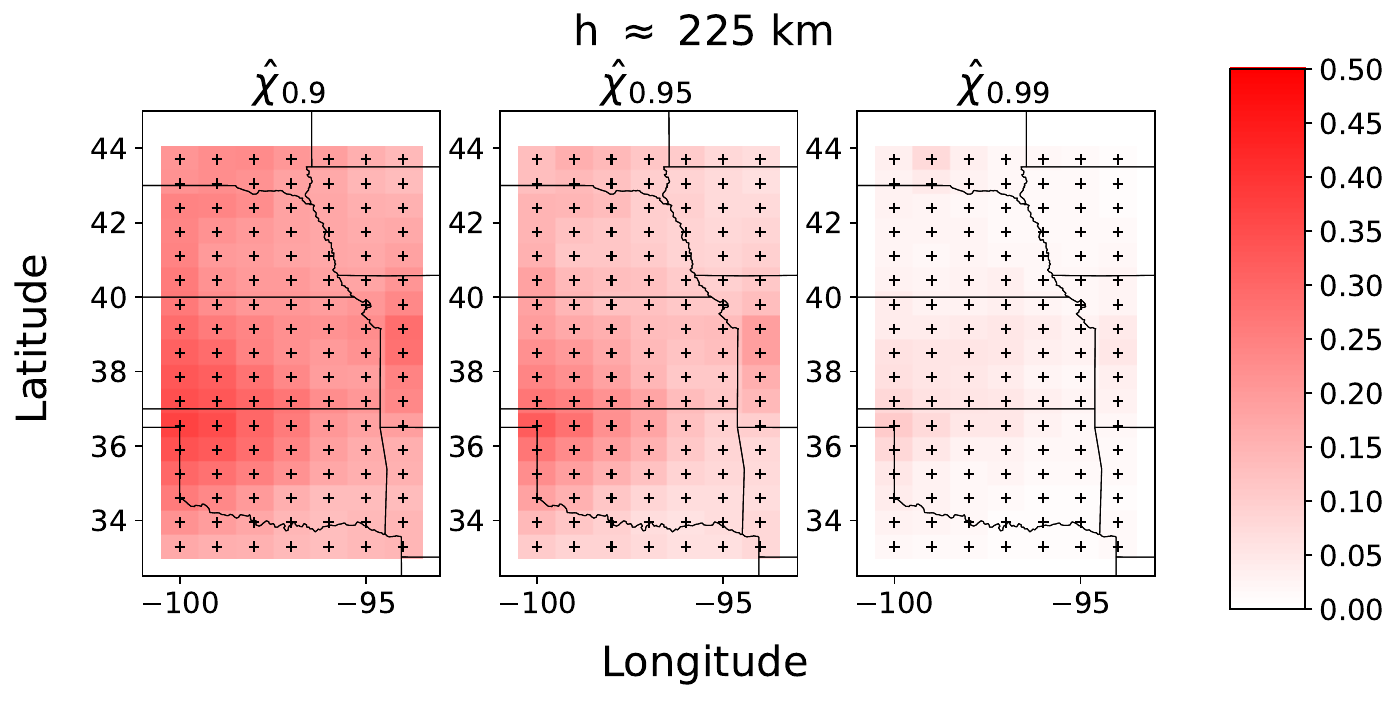}
    \end{minipage}
    \caption{Moving-window  estimates of $\chi_u(h)$  across three quantiles $u$ and three spatial lags $h$. The left-hand panel shows the dataset empirical estimates of $\chi_u(h)$, and the right-hand panel shows the model-based estimates of $\chi_u(h)$, based on the \texttt{k25r4b4} model.}
    \label{fig:application_chi}
\end{figure}

\section{Discussion}\label{sec:discussion}

In this article, we have proposed a modeling approach that extends the random scale construction to obtain more flexible local and long-range tail dependence behaviors. Our proposed mixture model is capable of simultaneously exhibiting asymptotic independence at long ranges and either asymptotic dependence or independence at short ranges.  The model is also able to capture non-stationary tail dependence structure with a spatially varying tail parameter that allows short-range asymptotic independence in some parts of the domain and short-range asymptotic dependence in other parts of the domain.
Our model permits trivial unconditional simulation, which easily allows for direct Monte Carlo risk estimates based on complicated functionals like areal sums or extrema.  It is also straightforward to do spatial prediction (i.e. interpolation) by drawing from relevant posterior predictive distributions.  And because our approach is fully Bayesian, all variation in any downstream analysis is accounted for in a coherent way.

Our analysis of extreme summertime precipitation in the central US highlighted some of the model's key features. We expected the data to exhibit \revise{short-range AI in some regions of the study area and short-range AD in other regions} . As expected, the posterior mean surface of the dependence parameter showed \revise{both AI and AD at short ranges} in parts of the spatial domain, and AI at long ranges. In addition, a desirable feature of a coherent flexible Bayesian model like ours is that any risk estimates based on posterior samples from our model will reflect the ambiguity in the local tail dependence regime.
% There was appreciable posterior probability that this was the case, but, somewhat disappointingly, the posterior mean surface of the dependence parameter showed asymptotic independence at all ranges, across the entire spatial domain.  Nevertheless, a desirable feature is that any risk estimates based on posterior samples from our model will reflect the ambiguity in the local tail dependence regime.

An interesting side result that merits further study was that models which jointly estimated marginal and dependence parameters always performed better (in holdout set predictive log-likelihoods) than analogous ``two-step'' models in which marginal parameters were estimated and plugged in before fitting the dependence model. 
% We believe this phenomenon merits further investigation. 
Jointly estimating the marginal and dependence models as we have done incurs large costs in terms of software implementation and computational complexity.  However, it seems to confer significant advantages in terms of model fit, in addition to naturally propagating variation between marginal and dependence models, and on to predictions.

Fitting our model with MCMC allows inference on spatial extreme-value datasets with relatively large numbers of locations. We have defined the model conditionally as a Bayesian hierarchical model, for which standard MCMC techniques can be used. Computation is facilitated by paralleling over time and migrating any required numerical integration to \texttt{C++}. Even so, the lack of closed form marginal transformations creates a significant computational challenge that scales with the total number of observations. This computational bottleneck would be even more evident if using a Generalized Pareto response for analyzing peaks-over-threshold data. While including a nugget term as in \citet{zhang2021hierarchical} would alleviate much of the computation, it still adds an additional layer of numerical integration to the marginal transformation, which is already not in closed form.

\revise{It would be useful to find ways to alleviate the heavy computational burden required to fit our full model.  Neural Bayes estimators in the style of \citet{saintsbury-dale-2024a} are likely not effective in our context because of the high dimensionality of our parameter space (i.e. thousands of parameters).  Most of the computational effort in our sampler arises from the numerical integration required for joint estimation of the marginal and dependence components, which we identify as a target for future improvements, possibly including the use of neural emulators.}

Finally, while many of the key features of our proposed model are also found in recent ``single-site conditioning'' models \citep{wadsworth2022higher}, to our knowledge, ours is the first fully-specified joint model to possess them.  In some ways, our model can be seen as an alternative to \citet{wadsworth2022higher}-type models.  These single-site conditioning models are more parsimonious and faster to fit than what we have proposed here.  However, that our model is a well-defined joint probability model gives it significant advantages in both interpretability and applicability.

\section{Disclosure statement}\label{disclosure-statement}

No competing interest is declared.

\section{Data Availability Statement}\label{data-availability-statement}

Deidentified data have been made available at the following URL: \url{https://github.com/muyangshi/GEV_ScaleMixture/}.

{
% \spacingset{1.2}
% \setlength{\bibsep}{1.5pt}
\addcontentsline{toc}{section}{References}
\bibliography{main}

@article{Menne2012,
  title={An overview of the global historical climatology network-daily database},
  author={Menne, Matthew J and Durre, Imke and Vose, Russell S and Gleason, Byron E and Houston, Tamara G},
  journal={Journal of atmospheric and oceanic technology},
  volume={29},
  number={7},
  pages={897--910},
  year={2012},
  publisher={Springer}
}

@article{risser2019probabilistic,
  title={A probabilistic gridded product for daily precipitation extremes over the United States},
  author={Risser, Mark D and Paciorek, Christopher J and Wehner, Michael F and O’Brien, Travis A and Collins, William D},
  journal={Climate Dynamics},
  volume={53},
  number={5},
  pages={2517--2538},
  year={2019},
  publisher={Springer}
}

@unpublished{ghcnd_data,
author={Matthew J. Menne and Imke Durre and Bryant Korzeniewski and Shelley McNeal and Kristy Thomas and Xungang Yin and Steven Anthony and Ron Ray and Russell S. Vose and Byron E. Gleason and Tamara G. Houston},
title={{Global Historical Climatology Network - Daily (GHCN-Daily), Version 3}},
year={2012},
note={NOAA National Climatic Data Center (accessed 19 March 2018)},
doi={10.7289/V5D21VHZ}
}

@article{risser2017attributable,
  title={Attributable human-induced changes in the likelihood and magnitude of the observed extreme precipitation during Hurricane Harvey},
  author={Risser, Mark D and Wehner, Michael F},
  journal={Geophysical Research Letters},
  volume={44},
  number={24},
  pages={12--457},
  year={2017},
  publisher={Wiley Online Library}
}

@article{milly2008stationarity,
  title =	 {Stationarity is dead: Whither water management?},
  author =	 {Milly, Paul CD and Betancourt, Julio and Falkenmark,
                  Malin and Hirsch, Robert M and Kundzewicz, Zbigniew
                  W and Lettenmaier, Dennis P and Stouffer, Ronald J},
  journal =	 {Science},
  volume =	 319,
  number =	 5863,
  pages =	 {573--574},
  year =	 2008,
  publisher =	 {American Association for the Advancement of Science},
  doi =		 {10.1126/science.1151915}
}

@incollection{Field2014summary,
  title =	 {Summary for policymakers},
  author =	 {Field, Christopher B and Barros, Vicente R and
                  Mastrandrea, Michael D and Mach, Katharine J and
                  Abdrabo, MA-K and Adger, N and Anokhin, Yury A and
                  Anisimov, Oleg A and Arent, Douglas J and Barnett,
                  Jonathon and others},
  booktitle =	 {Climate change 2014: impacts, adaptation, and
                  vulnerability. Part A: global and sectoral
                  aspects. Contribution of Working Group II to the
                  Fifth Assessment Report of the Intergovernmental
                  Panel on Climate Change},
  pages =	 {1--32},
  year =	 2014,
  publisher =	 {Cambridge University Press},
  address =	 {Cambridge, United Kingdom and New York, NY, USA},
  doi =		 {10.1017/CBO9781107415379}
}

@article{Wendland1995,
  year = {1995},
  month = dec,
  publisher = {Springer Science and Business Media {LLC}},
  volume = {4},
  number = {1},
  pages = {389--396},
  author = {Holger Wendland},
  title = {Piecewise polynomial,  positive definite and compactly supported radial functions of minimal degree},
  journal = {Advances in Computational Mathematics}
}

@article{ledford1996statistics,
  title={Statistics for near independence in multivariate extreme values},
  author={Ledford, Anthony W and Tawn, Jonathan A},
  journal={Biometrika},
  volume={83},
  number={1},
  pages={169--187},
  year={1996},
  publisher={Oxford University Press}
}

@article{huser2019modeling,
  title={Modeling spatial processes with unknown extremal dependence class},
  author={Huser, Rapha{\"e}l and Wadsworth, Jennifer L},
  journal={Journal of the American Statistical Association},
  volume={114},
  number={525},
  pages={434--444},
  year={2019},
  publisher={Taylor \& Francis}
}

@article{huser2017bridging,
  title={Bridging asymptotic independence and dependence in spatial extremes using Gaussian scale mixtures},
  author={Huser, Rapha{\"e}l and Opitz, Thomas and Thibaud, Emeric},
  journal={Spatial Statistics},
  volume={21},
  pages={166--186},
  year={2017},
  publisher={Elsevier}
}

@article{heffernan2005hidden,
  title={Hidden regular variation and the rank transform},
  author={Heffernan, Janet and Resnick, Sidney},
  journal={Advances in Applied Probability},
  volume={37},
  number={2},
  pages={393--414},
  year={2005},
  publisher={Cambridge University Press}
}

@article{breiman1965some,
  title={On some limit theorems similar to the arc-sin law},
  author={Breiman, Leonard},
  journal={Theory of Probability \& Its Applications},
  volume={10},
  number={2},
  pages={323--331},
  year={1965},
  publisher={SIAM}
}

@article{cline1986convolution,
  title={Convolution tails, product tails and domains of attraction},
  author={Cline, Daren BH},
  journal={Probability Theory and Related Fields},
  volume={72},
  number={4},
  pages={529--557},
  year={1986},
  publisher={Springer}
}

@article{pakes2004convolution,
  title={Convolution equivalence and infinite divisibility},
  author={Pakes, Anthony G},
  journal={Journal of Applied Probability},
  pages={407--424},
  year={2004},
  publisher={JSTOR}
}

@article{engelke2019extremal,
  title={Extremal dependence of random scale constructions},
  author={Engelke, Sebastian and Opitz, Thomas and Wadsworth, Jennifer},
  journal={Extremes},
  volume={22},
  number={4},
  pages={623--666},
  year={2019},
  publisher={Springer}
}

@book{nolan2020univariate,
  title={Univariate Stable Distributions: Models for Heavy Tailed Data},
  author={Nolan, John P},
  year={2020},
  publisher={Springer Nature}
}

@book{resnick2013extreme,
  title={Extreme values, regular variation and point processes},
  author={Resnick, Sidney I},
  year={2008},
  publisher={Springer}
}

@book{zolotarev1986one,
  title={One-dimensional stable distributions},
  author={Zolotarev, Vladimir M},
  volume={65},
  year={1986},
  publisher={American Mathematical Soc.}
}

@article{castro2020local,
  title={Local likelihood estimation of complex tail dependence structures, applied to US precipitation extremes},
  author={Castro-Camilo, Daniela and Huser, Rapha{\"e}l},
  journal={Journal of the American Statistical Association},
  volume={115},
  number={531},
  pages={1037--1054},
  year={2020},
  publisher={Taylor \& Francis}
}

@article{paciorek2006spatial,
  title={Spatial modelling using a new class of nonstationary covariance functions},
  author={Paciorek, Christopher J and Schervish, Mark J},
  journal={Environmetrics: The official journal of the International Environmetrics Society},
  volume={17},
  number={5},
  pages={483--506},
  year={2006},
  publisher={Wiley Online Library}
}

@article{risser2015regression,
  title={Regression-based covariance functions for nonstationary spatial modeling},
  author={Risser, Mark D and Calder, Catherine A},
  journal={Environmetrics},
  volume={26},
  number={4},
  pages={284--297},
  year={2015},
  publisher={Wiley Online Library}
}

@article{huser2016non,
  title={Non-stationary dependence structures for spatial extremes},
  author={Huser, Rapha{\"e}l and Genton, Marc G},
  journal={Journal of agricultural, biological, and environmental statistics},
  volume={21},
  number={3},
  pages={470--491},
  year={2016},
  publisher={Springer}
}

@article{zhang2021hierarchical,
  title={Hierarchical transformed scale mixtures for flexible modeling of spatial extremes on datasets with many locations},
  author={Zhang, Likun and Shaby, Benjamin A and Wadsworth, Jennifer L},
  journal={Journal of the American Statistical Association},
  pages={1--13},
  year={2021},
  publisher={Taylor \& Francis}
}

@article{zhang2024,
  title = {Leveraging Extremal Dependence to Better Characterize the 2021 Pacific Northwest Heatwave},
  ISSN = {1537-2693},
  journal = {Journal of Agricultural,  Biological and Environmental Statistics},
  publisher = {Springer Science and Business Media LLC},
  author = {Zhang,  Likun and Risser,  Mark D. and Wehner,  Michael F. and O’Brien,  Travis A.},
  year = {2024},
  month = jun 
}

@article{hazra2021realistic,
    title={Efficient modeling of spatial extremes over large geographical domains},
  author={Hazra, Arnab and Huser, Rapha{\"e}l and Bolin, David},
  journal={Journal of Computational and Graphical Statistics},
  year={2024}
}

@article{wadsworth2022higher,
  title={Higher-dimensional spatial extremes via single-site conditioning},
  author={Wadsworth, JL and Tawn, JA},
  journal={Spatial Statistics},
  volume={51},
  pages={100677},
  year={2022},
  publisher={Elsevier}
}

@article{zhang2022accounting,
  title={Accounting for the spatial structure of weather systems in detected changes in precipitation extremes},
  author={Zhang, Likun and Risser, Mark D and Molter, Edward M and Wehner, Michael F and O'Brien, Travis A},
  journal={Weather and Climate Extremes},
  volume={38},
  pages={100499},
  year={2022},
  publisher={Elsevier}
}

@article{shaby2010exploring,
  title={Exploring an adaptive Metropolis algorithm},
  author={Shaby, Benjamin and Wells, Martin T},
  journal={Currently under review},
  volume={1},
  number={1},
  pages={17},
  year={2010},
  publisher={Citeseer}
}

@ARTICLE{mpi4py,
  author={Dalcin, Lisandro and Fang, Yao-Lung L.},
  journal={Computing in Science \& Engineering}, 
  title={mpi4py: Status Update After 12 Years of Development}, 
  year={2021},
  volume={23},
  number={4},
  pages={47-54},
  keywords={Python;Graphics processing units;High performance computing;Semantics;Programming;Task analysis},
  doi={10.1109/MCSE.2021.3083216}}

@article{huser_advances_2022,
	title = {Advances in statistical modeling of spatial extremes},
	volume = {14},
	issn = {1939-0068},
	doi = {10.1002/wics.1537},
	language = {en},
	number = {1},
	journal = {WIREs Computational Statistics},
	author = {Huser, Raphaël and Wadsworth, Jennifer L.},
	year = {2022},
	keywords = {asymptotic dependence and independence, extreme-value theory, max-stable process, Pareto process, random scale mixture},
	pages = {e1537}}

@article{davison_geostatistics_2013,
	title = {Geostatistics of {Dependent} and {Asymptotically} {Independent} {Extremes}},
    author = {Davison, A. C. and Huser, R. and Thibaud, E.},
    journal = {Mathematical Geosciences},
	volume = {45},
    number = {5},
    pages = {511--529},
    year = {2013},
	issn = {1874-8953}
}

@Manual{R_mev,
  author = {{Leo R Belzile}},
  year = {2023},
  title = {{mev}: Modelling of
  Extreme Values},
  url = {https://lbelzile.github.io/mev/},
}

@article{cooley2019,
  title={Decompositions of dependence for high-dimensional extremes},
  author={Cooley, Daniel and Thibaud, Emeric},
  journal={Biometrika},
  volume={106},
  number={3},
  pages={587--604},
  year={2019},
  publisher={Oxford University Press}
}

@article{majumder2024modeling,
  title={Modeling extremal streamflow using deep learning approximations and a flexible spatial process},
  author={Majumder, Reetam and Reich, Brian J and Shaby, Benjamin A},
  journal={The Annals of Applied Statistics},
  volume={18},
  number={2},
  pages={1519--1542},
  year={2024},
  publisher={Institute of Mathematical Statistics}
}

@article{vecchia1988estimation,
  title={Estimation and model identification for continuous spatial processes},
  author={Vecchia, Aldo V},
  journal={Journal of the Royal Statistical Society Series B: Statistical Methodology},
  volume={50},
  number={2},
  pages={297--312},
  year={1988},
  publisher={Oxford University Press}
}

@book{gough2009gnu,
  title={GNU scientific library reference manual},
  author={Gough, Brian},
  year={2009},
  publisher={Network Theory Ltd.}
}

@article{etminan2016radiative,
author = {Etminan, M. and Myhre, G. and Highwood, E. J. and Shine, K. P.},
title = {Radiative forcing of carbon dioxide, methane, and nitrous oxide: A significant revision of the methane radiative forcing},
journal = {Geophysical Research Letters},
volume = {43},
number = {24},
pages = {12,614-12,623},
keywords = {radiative forcing, methane},
doi = {https://doi.org/10.1002/2016GL071930},
url = {https://agupubs.onlinelibrary.wiley.com/doi/abs/10.1002/2016GL071930},
eprint = {https://agupubs.onlinelibrary.wiley.com/doi/pdf/10.1002/2016GL071930},
year = {2016}
}

@article{risser2024anthropogenic,
author = {Risser, Mark D. and Collins, William D. and Wehner, Michael F. and O'Brien, Travis A. and Huang, Huanping and Ullrich, Paul A.}, 
title = {Anthropogenic aerosols mask increases in US rainfall by greenhouse gases}, 
journal = {Nature Communications}, 
volume = {15}, 
number = {1}, 
pages = {1318}, 
year = {2024}
}

@article{risser2021quantifying,
author = {Risser, Mark D. and Wehner, Michael F. and O'Brien, John P. and Patricola, Christina M. and O'Brien, Travis A. and Collins, William D. and Paciorek, Christopher J. and Huang, Huanping}, 
title = {Quantifying the influence of natural climate variability on in situ measurements of seasonal total and extreme daily precipitation}, 
journal = {Climate Dynamics}, 
volume = {56}, 
number = {9}, 
pages = {3205--3230}, 
year = {2021}, 
}

@article{Westra2013Global,
author = {Westra, Seth and Alexander, Lisa V. and Zwiers, Francis W.}, 
title = {Global Increasing Trends in Annual Maximum Daily Precipitation}, 
journal = {Journal of Climate}, 
volume = {26}, 
number = {11}, 
pages = {3904--3918}, 
year = {2013}, 
}

@article {saintsbury-dale-2024a,
    AUTHOR = {Sainsbury-Dale, Matthew and Zammit-Mangion, Andrew and Huser,
              Rapha\"el},
     TITLE = {Likelihood-free parameter estimation with neural {B}ayes
              estimators},
   JOURNAL = {Amer. Statist.},
  FJOURNAL = {The American Statistician},
    VOLUME = {78},
      YEAR = {2024},
    NUMBER = {1},
     PAGES = {1--14},
      ISSN = {0003-1305,1537-2731},
   MRCLASS = {99-01},
  MRNUMBER = {4700885},
       DOI = {10.1080/00031305.2023.2249522},
       URL = {https://doi.org/10.1080/00031305.2023.2249522},
}

@article{Scipy,
  author  = {Virtanen, Pauli and Gommers, Ralf and Oliphant, Travis E. and
            Haberland, Matt and Reddy, Tyler and Cournapeau, David and
            Burovski, Evgeni and Peterson, Pearu and Weckesser, Warren and
            Bright, Jonathan and {van der Walt}, St{\'e}fan J. and
            Brett, Matthew and Wilson, Joshua and Millman, K. Jarrod and
            Mayorov, Nikolay and Nelson, Andrew R. J. and Jones, Eric and
            Kern, Robert and Larson, Eric and Carey, C J and
            Polat, {\.I}lhan and Feng, Yu and Moore, Eric W. and
            {VanderPlas}, Jake and Laxalde, Denis and Perktold, Josef and
            Cimrman, Robert and Henriksen, Ian and Quintero, E. A. and
            Harris, Charles R. and Archibald, Anne M. and
            Ribeiro, Ant{\^o}nio H. and Pedregosa, Fabian and
            {van Mulbregt}, Paul and {SciPy 1.0 Contributors}},
  title   = {{{SciPy} 1.0: Fundamental Algorithms for Scientific
            Computing in Python}},
  journal = {Nature Methods},
  year    = {2020},
  volume  = {17},
  pages   = {261--272},
  adsurl  = {https://rdcu.be/b08Wh},
  doi     = {10.1038/s41592-019-0686-2},
}
}

% \addcontentsline{toc}{section}{References}
%   \bibliography{main.bib}

\appendix
\section*{Appendix}

\section{Technical proofs} \label{sec:Appendix Proof}
\subsection{Properties of stable distribution}\label{sec:properties_stable}

% \LZadd{
The Stable distribution is important in both theory and application because it is the generalized central limit of random variables without second (or even first) order moments. %It is widely used in signal processing in the presence of impulsive interference \citep{nikias1995signal}. 
Assume $S\sim \text{Stable}(\alpha, \beta, \gamma, \delta)$ under the 1-parameterization \citep{nolan2020univariate} where $\alpha\in(0,2]$ is the concentration parameter, $\beta\in[-1,1]$ is the skewness parameter, $\gamma>0$ is the scale parameter and $\delta\in\mathbb{R}$ is the location parameter. Then $S$ has the characteristic function
\begin{equation*}
    E\exp(iuS) = \exp[-\gamma^\alpha|u|^\alpha\{1-i\beta\omega(u)\text{sign}(u)\}+i\delta u],\; u\in \mathbb{R},
\end{equation*}
where $\text{sign}(u)$ is the sign of $u$ and $\omega(u) = \tan(\pi\alpha/2)\mathbbm{1}(\alpha\neq 1)-2\pi^{-1}\log|u|\mathbbm{1}(\alpha= 1)$.  If $\alpha<1$ and $\beta=1$, $S$ has support  {$[\delta,\infty]$}. %; see Figure~\ref{fig:stabledist}.
Moreover, if $\alpha<2$ and $0<\beta\leq 1$, the tail of $S$ is Pareto-like and satisfies $\pr(S>x)\sim \gamma^{\alpha}(1+\beta)C_{\alpha} x^{-\alpha}$ with $C_{\alpha}=\Gamma(\alpha)\sin(\alpha\pi/2)/\pi$ as $x\rightarrow\infty$.
% }

% \begin{figure}
%     \centering
%     \includegraphics[width=0.4\linewidth]{figures/stabledist.png}
%     \caption{Stable distribution density, fixing $\alpha=0.5$, $\beta=1$ and $\delta=0$ while varying the scale parameter $\gamma$.}
%     \label{fig:stabledist}
% \end{figure}

% \LZadd{
More importantly, the Stable distributions are closed under convolution; sums of $\alpha$-Stable variables (Stable variables with concentration parameter $\alpha$) are still $\alpha$-Stable. If $S_k$ $\stackrel{\text{indep}}{\sim}$ $\text{Stable}(\alpha,$ $\beta_k,$ $\gamma_k,$ $\delta_k)$ and constant $w_k\geq 0$ for $k=1,\ldots, K$, then
\begin{equation}
    \sum_{k=1}^K w_{k} S_k 
     \sim \text{Stable}(\alpha,\bar{\beta},\bar{\gamma},\bar{\delta})
\end{equation}
with $\bar{\gamma} = \{\sum_{k=1}^K(w_{k}\gamma_k)^\alpha\}^{1/\alpha}$, $\bar{\beta} =\sum_{k=1}^K\beta_k(w_{k}\gamma_k)^\alpha/\bar{\gamma}^{\alpha}$ and $\bar{\delta}=\sum_{k=1}^K w_{k}\delta_k$. 
% }

% \LZadd{
To be able to examine the joint distribution of $(X_i, X_j)$ for model \eqref{eqn:model}, it is desirable for the mixture to have the same distributional support and rate of tail decay as each $S_k$. Thus in \eqref{eqn:scaling_process}, we fixed $\beta_k\equiv 1$, $\delta_k\equiv \delta$ while imposing the constraint $\sum_{k=1}^K w_k=1$. As a result, 
$\bar{\beta} =1$ and $\bar{\delta}=\delta$, which means the univariate support of the $\{R(\bs)\}$ is $[\delta,\infty)$ everywhere.  In consequence, we have $\pr(R(\bs)>x)\sim 2\bar{\gamma}^{\alpha}(\bs)C_{\alpha} x^{-\alpha}$ for all $\bs\in \mathcal{S}$ as $x\rightarrow\infty$ with $\bar{\gamma}(\bs)=[\sum_{k=1}^K\{w_{k}(\bs,r_k)\gamma_k\}^\alpha]^{1/\alpha}$, which means the process $\{R(\bs)\}$ has tail-stationarity.
% }

\subsection{Proof of Proposition~\ref{prop:marginal_distr}}
We begin by recalling a couple of useful theoretical results.
\begin{lemma}[\citet{breiman1965some}]\label{lem:breiman} \upshape
Assume $X_1$ and $X_2$ are two independent random variables that are both supported by $\mathbb{R}^+$, and that $\pr(X_1>x)\in RV_{-\alpha}, \alpha\geq 0$.
\begin{enumerate}[(a)]
    \item\label{breiman} If there exists $\epsilon>0$ such that $E(X_2^{\alpha+\epsilon})<\infty$, then
    \begin{equation}\label{eqn:breiman}
        \pr(X_1X_2>x)\sim E(X_2^\alpha)\pr(X_1>x).
    \end{equation}
    \item Under the assumptions of part \eqref{breiman}, we have
    \begin{equation*}
        \sup_{x\geq y}\left|\frac{\pr(X_1X_2>x)}{\pr(X_1>x)}-E(X_2^\alpha)\right|\rightarrow 0, \; y\rightarrow \infty.
    \end{equation*}
    \item\label{easy_breiman} If $\pr(X_1>x)\sim cx^{-\alpha}$, \eqref{eqn:breiman} holds under $E(X_2^\alpha)<\infty$.
    % \item If $\pr(X_2>x)=o(\pr(X_1X_2>x))$, then $\pr(X_1X_2>x)\in RV_{-\alpha}$.
    \item If $\pr(X_2>x)=o(\pr(X_1>x))$, then $\pr(X_1X_2>x)\in RV_{-\alpha}$.
\end{enumerate}
\end{lemma}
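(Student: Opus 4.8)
The plan is to condition on $X_2$, reducing the whole statement to the regular variation of $\pr(X_1>x)$. Writing $\bar{F}_1(x)=\pr(X_1>x)$ and letting $G$ denote the law of $X_2$, independence gives
\begin{equation*}
\frac{\pr(X_1X_2>x)}{\bar{F}_1(x)}=\int_0^\infty \frac{\bar{F}_1(x/t)}{\bar{F}_1(x)}\,dG(t).
\end{equation*}
Since $\bar{F}_1\in RV_{-\alpha}$, for each fixed $t>0$ the integrand converges to $t^\alpha$ as $x\to\infty$, so the natural limit is $\int_0^\infty t^\alpha\,dG(t)=E(X_2^\alpha)$, and part (a) reduces to justifying the interchange of limit and integral. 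For the lower bound I would apply Fatou's lemma to the (nonnegative) integrand, which immediately yields $\liminf_{x\to\infty}\pr(X_1X_2>x)/\bar{F}_1(x)\ge E(X_2^\alpha)$ with no assumption beyond $E(X_2^\alpha)<\infty$.

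The upper bound is where the surplus moment enters. First I would invoke Potter's bounds (a standard consequence of the uniform convergence theorem for regularly varying functions): for any $\delta\in(0,\epsilon)$ there exist $x_0$ and $A$ with $\bar{F}_1(x/t)/\bar{F}_1(x)\le A\max(t^{\alpha-\delta},t^{\alpha+\delta})$ whenever $x\ge x_0$ and $x/t\ge x_0$. I would then split the integral at a fixed cutoff $M$ and at $x/x_0$. On $(0,M]$ the Potter bound is $G$-integrable, so dominated convergence gives $\int_0^M\to\int_0^M t^\alpha\,dG(t)$. On $(M,x/x_0]$ the bound is $At^{\alpha+\delta}$, and $A\int_M^\infty t^{\alpha+\delta}\,dG(t)$ is made small by taking $M$ large, using $E(X_2^{\alpha+\delta})<\infty$ (finite by Lyapunov since $\delta<\epsilon$). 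On $(x/x_0,\infty)$ I would bound $\bar{F}_1(x/t)\le 1$ and use Markov together with $E(X_2^{\alpha+\epsilon})<\infty$ to get $\pr(X_2>x/x_0)/\bar{F}_1(x)=O(x^{-\epsilon}/L(x))\to 0$, where $L$ is the slowly varying part of $\bar{F}_1$. Sending $x\to\infty$ and then $M\to\infty$ yields $\limsup\le E(X_2^\alpha)$, matching the Fatou bound.

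I expect the main obstacle to be precisely the region where $X_2$ is large, equivalently $x/X_2$ is not large, so that regular variation cannot be applied to $\bar{F}_1(x/X_2)$ termwise; this is exactly where $E(X_2^{\alpha+\epsilon})<\infty$ does its work, both to dominate the Potter exponent $t^{\alpha+\delta}$ and to beat the slowly varying factor in $\bar{F}_1$. Part (b) is then a reformulation: for a real function, $\lim_{x\to\infty}f(x)=L$ is equivalent to $\sup_{x\ge y}|f(x)-L|\to 0$ as $y\to\infty$, so it merely restates part (a). For part (c), the exact power law $\bar{F}_1(x)\sim cx^{-\alpha}$ replaces the two-sided Potter bound on $(0,x/x_0]$ by a clean domination of order $t^\alpha$ with constants arbitrarily close to one, so only $E(X_2^\alpha)<\infty$ is needed, and the tail region is handled using $E(X_2^\alpha)<\infty\Rightarrow s^\alpha\pr(X_2>s)\to 0$. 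Finally, when $\pr(X_2>x)=o(\bar{F}_1(x))$, the contribution of large $X_2$ is asymptotically negligible relative to $\bar{F}_1$, so the product tail inherits the index and the slowly varying part of $\bar{F}_1$; I would establish $\pr(X_1X_2>x)\in RV_{-\alpha}$ through the same product-convolution estimates, as in \citet{cline1986convolution}.
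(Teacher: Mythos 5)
The paper never proves Lemma~\ref{lem:breiman}: it is recalled verbatim from \citet{breiman1965some} (with parts close to \citet{cline1986convolution}) and used as a black box in the proofs of Proposition~\ref{prop:marginal_distr} and Theorem~\ref{thm:dependence_properties}, so your attempt can only be compared against the standard literature argument. For parts (a)--(c) your proof \emph{is} that standard argument, and it is essentially correct: conditioning on $X_2$ to write $\pr(X_1X_2>x)/\bar F_1(x)=\int_0^\infty \bar F_1(x/t)/\bar F_1(x)\,dG(t)$, Fatou for the lower bound (needing only $E(X_2^\alpha)<\infty$), and Potter bounds with the split at $M$ and $x/x_0$ for the upper bound, where the surplus moment $E(X_2^{\alpha+\epsilon})<\infty$ controls the middle region and, via Markov together with $x^{\epsilon}L(x)\to\infty$, kills the region $\{X_2>x/x_0\}$. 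Your reading of (b) as a tautological restatement of (a) is also right for the statement as printed, and your treatment of (c) (exact power-law tail replacing Potter, plus $s^\alpha\pr(X_2>s)\to 0$ for the tail region) is correct. One small repair: your claim that the Potter bound is $G$-integrable on $(0,M]$ can fail near $t=0$, since $t^{\alpha-\delta}$ has a negative exponent whenever $\delta\geq\alpha$ (unavoidable when $\alpha=0$, which the lemma allows); the standard fix is to bound $\bar F_1(x/t)/\bar F_1(x)\leq 1$ for $t\leq 1$ by monotonicity of $\bar F_1$ and apply Potter plus dominated convergence only on $[1,M]$.

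Part (d) is where your proposal has a genuine gap: you do not prove it---you defer to \citet{cline1986convolution}---and the heuristic you offer is false. Part (d) carries \emph{no} moment hypothesis, so $E(X_2^\alpha)$ may be infinite, and then the product tail inherits only the index of $\bar F_1$, not its slowly varying part. Concretely, take $\pr(X_1>x)=x^{-1}$ and $\pr(X_2>x)=x^{-1}(\log x)^{-1}$ for large $x$: then $\pr(X_2>x)=o(\pr(X_1>x))$, yet $\pr(X_1X_2>x)$ is of exact order $x^{-1}\log\log x$, which is $RV_{-1}$ but not asymptotic to any constant multiple of $\bar F_1(x)$; no Breiman-type limit exists, so the strategy of (a)--(c) cannot be adapted. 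A correct route is to show $\pr(X_1X_2>\lambda x)/\pr(X_1X_2>x)\to\lambda^{-\alpha}$ directly: split $\int_0^\infty\bar F_1(x/t)\,dG(t)$ at $t=x/M$; the piece over $t>x/M$ is at most $\pr(X_2>x/M)=o(\bar F_1(x/M))=o(\bar F_1(x))$, which is negligible because $\pr(X_1X_2>x)\geq\pr(X_1>x/t_0)\pr(X_2>t_0)\geq c_0\,\bar F_1(x)$ for suitable $t_0$; on the piece $t\leq x/M$, compare $\bar F_1(\lambda x/t)$ with $\bar F_1(x/t)$ uniformly over $x/t\geq M$, which regular variation permits once $M$ is large. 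Since the paper only invokes part (c) downstream, this gap does not affect the paper's results, but as a proof of the lemma as stated your proposal is incomplete at (d), and the claim about inheriting the slowly varying part should be deleted.
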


\begin{lemma}[Theorem 4(v) of \citet{cline1986convolution}]\label{lem:ET_same_convolution} \upshape
Let $X_1\sim F_1$ and $X_2\sim F_2$ be two random variables that are both exponential tailed with the same rate, i.e. $F_i\in \text{ET}_{\alpha,\beta_i}$, $\alpha>0$, $\beta_i>-1$, $i=1,2$. Then
\begin{equation*}
    \pr(X_1+X_2>x)\sim \alpha\frac{\Gamma(\beta_1+1)\Gamma(\beta_2+1)}{\Gamma(\beta_1+\beta_2+1)}l(x)x^{\beta_1+\beta_2+1}\exp(-\alpha x),
\end{equation*}
where $l(\cdot)$ is slowly varying.
\end{lemma}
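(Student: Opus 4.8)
This is Theorem 4(v) of \citet{cline1986convolution}; the plan is to reconstruct its proof by reducing the exponential-tail convolution to a regularly varying (polynomial) convolution via exponential tilting, and then extracting the leading asymptotics by a Laplace-type analysis whose constant is a Beta integral. First I would unwind the definition of the class $\text{ET}_{\alpha,\beta_i}$: a distribution $F_i$ lies in it when its tail satisfies $\bar F_i(x)\sim x^{\beta_i}l(x)e^{-\alpha x}$ for a common rate $\alpha$ and slowly varying $l$, with the standard relation $f_i\sim \alpha\bar F_i$ for the corresponding densities. The crucial structural observation, and the point that distinguishes this from the regularly varying / subexponential setting governed by Lemma~\ref{lem:breiman} (where a single large summand dominates), is that here the two summands being \emph{simultaneously} large is the dominant mechanism: on $\{X_1+X_2>x\}$ the exponential factors combine as $e^{-\alpha(x-y)}e^{-\alpha y}=e^{-\alpha x}$, independently of the split $y$, so mass is spread across the whole bulk $y\approx tx$, $t\in(0,1)$, and the polynomial weights decide the leading term.

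Making this precise, I would write $\pr(X_1+X_2>x)=\bar F_2(x)+\int_0^x \bar F_1(x-y)\,dF_2(y)$, insert the tail/density asymptotics, and rescale $y=tx$, so that the integral becomes $\alpha\,e^{-\alpha x}x^{\beta_1+\beta_2+1}\int_0^1(1-t)^{\beta_1}t^{\beta_2}\,l(x(1-t))\,l(xt)\,dt$, while the boundary term $\bar F_2(x)$ is of strictly smaller order since $\beta_1+1>0$. Equivalently, and more transparently, I would tilt: the $\sigma$-finite measures $\mu_i(dy)=e^{\alpha y}F_i(dy)$ have regularly varying densities $\sim y^{\beta_i}l(y)$, so $F_1*F_2$ inherits the convolution of two regularly varying densities on $[0,\infty)$, producing exactly the Beta integral $\int_0^1 t^{\beta_2}(1-t)^{\beta_1}\,dt=\Gamma(\beta_1+1)\Gamma(\beta_2+1)/\Gamma(\beta_1+\beta_2+2)$, before tilting back by the $e^{-\alpha x}$ factor.

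The main obstacle is justifying the passage of the limit inside the rescaled integral: slow variation gives $l(xt)/l(x)\to 1$ and $l(x(1-t))/l(x)\to 1$ only uniformly on compact subsets of $(0,1)$, and this breaks down at the endpoints $t\to 0,1$, where the arguments of $l$ no longer grow proportionally to $x$. I would control this with Potter's bounds, dominating the integrand by an integrable envelope of the form $C\,t^{\beta_2-\epsilon}(1-t)^{\beta_1-\epsilon}$, which is integrable precisely because $\beta_i>-1$; this is exactly where the hypothesis $\beta_i>-1$ enters. Dominated convergence then delivers the Beta constant, and the $e^{-\alpha x}x^{\beta_1+\beta_2+1}$ prefactor together with the factor $\alpha$ emerges from the density–tail relation above.

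The remaining bookkeeping is to match the precise leading constant and the exact power of the slowly varying factor to \citet{cline1986convolution}'s normalization of $\text{ET}_{\alpha,\beta}$, which determines whether the displayed constant carries $\Gamma(\beta_1+\beta_2+1)$ or $\Gamma(\beta_1+\beta_2+2)$ in the denominator and how the two summand-level slowly varying functions are consolidated into the single $l$ of the conclusion; since we are quoting the result directly, I would cite that normalization rather than re-derive it. I expect the endpoint domination argument, not the algebra, to be the genuinely delicate step.
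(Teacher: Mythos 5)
The paper offers no proof of this lemma to compare against: it is quoted as a known result (Theorem 4(v) of Cline, 1986) and is invoked exactly once, in the boundary case $\phi_j=\alpha$ of Proposition~\ref{prop:marginal_distr}, with $\beta_1=\beta_2=0$. What you have reconstructed is the standard proof of Cline's theorem, and in outline it is sound: the decomposition $\pr(X_1+X_2>x)=\bar F_2(x)+\int_0^x \bar F_1(x-y)\,dF_2(y)$, the substitution $y=tx$ yielding a Beta integral, the correct structural contrast with the regularly varying regime of Lemma~\ref{lem:breiman} (here both summands are simultaneously large, and the split of the exponential budget $e^{-\alpha(x-y)}e^{-\alpha y}=e^{-\alpha x}$ is decided by the polynomial weights), and the Potter-bound envelope $Ct^{\beta_2-\epsilon}(1-t)^{\beta_1-\epsilon}$, integrable precisely because $\beta_i>-1$. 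One technical caveat: your ``standard relation $f_i\sim\alpha\bar F_i$'' is not part of the $\text{ET}_{\alpha,\beta}$ hypothesis, which constrains only the tail; densities need not exist or be asymptotically smooth. A rigorous version must work with the Stieltjes convolution (or the tilted measures $e^{\alpha y}F_i(dy)$, as in your second formulation, whose cumulative masses are regularly varying of index $\beta_i+1$ even without densities) and must treat the strips $y\in[0,M]$ and $y\in[x-M,x]$ separately, since Potter bounds control $l$ only at large arguments; those strips contribute $O(x^{\beta_1}l(x)e^{-\alpha x})$ and $O(x^{\beta_2}l(x)e^{-\alpha x})$, negligible exactly because $\beta_i>-1$, as you noted.

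Your closing remark about the normalization is not mere bookkeeping --- it exposes a genuine typo in the paper's display. The Laplace computation gives the tail of the convolution as $\alpha\,B(\beta_1+1,\beta_2+1)\,l_1(x)l_2(x)\,x^{\beta_1+\beta_2+1}e^{-\alpha x}$ with $B(\beta_1+1,\beta_2+1)=\Gamma(\beta_1+1)\Gamma(\beta_2+1)/\Gamma(\beta_1+\beta_2+2)$, i.e.\ denominator $\Gamma(\beta_1+\beta_2+2)$, not the paper's $\Gamma(\beta_1+\beta_2+1)$. A sanity check: for two independent $\mathrm{Gamma}(2,\alpha)$ variables one has $\beta_1=\beta_2=1$ with $l_i\equiv\alpha$, the sum is $\mathrm{Gamma}(4,\alpha)$ with tail $\sim(\alpha^3/6)x^3e^{-\alpha x}$, matching $\Gamma(4)=6$ in the denominator but not $\Gamma(3)=2$. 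The discrepancy is harmless where the lemma is actually used, since $\beta_1=\beta_2=0$ gives $\Gamma(1)=\Gamma(2)=1$; likewise the single $l$ in the statement should be read as the consolidation $l_1l_2$ of the two summand-level slowly varying factors, which is again immaterial in the application because $l$ is constant there.
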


\begin{proof}[Proof of Proposition~\ref{prop:marginal_distr}]
When $0<\phi_j<\alpha$, the $\phi_j$th moment of $R_j$ exists: $E(R_j^{\phi_j})<\infty$. This is sometimes called the fractional lower order moment. Recall that marginally $\pr(W_j>x)=x^{-1}$. Thus applying Lemma~\ref{lem:breiman} part~\ref{easy_breiman} to $R_jW_j$ yields $\pr(R_j^{\phi_j}W_j>x)=E(R_j^{\phi_j})x^{-1}$.

When $\phi_j>\alpha$, $E(R_j^{\phi_j})=\infty$. However, the tail of $R_j$ is still regularly varying:
\begin{equation*}
    \pr(R_j^{\phi_j}>x)\sim 2C_{\alpha}\bar{\gamma}^\alpha_{j} x^{-\frac{\alpha}{\phi_j}}
\end{equation*}
and
\begin{equation*}
    E\left(W_j^\frac{\alpha}{\phi_j}\right)=\int_1^\infty w^{\frac{\alpha}{\phi_j}-2}dw=\frac{1}{1-\alpha/\phi_j}.
\end{equation*}
From Lemma~\ref{lem:breiman}\ref{easy_breiman},
\begin{equation*}
    \pr(R_j^{\phi_j}W_j>x)\sim E\left(W_j^\frac{\alpha}{\phi_j}\right)\pr(R_j^{\phi_j}>x),
\end{equation*}
from which the result follows.

When $\phi_j=\alpha$, $R_j^{\phi_j}$ and $W_j$ are regularly varying with the same index $-1$. Therefore, $\log R_j^{\phi_j}$ and $\log W_j$ are both in $ET_{1,0}$. By Lemma \ref{lem:ET_same_convolution},
\begin{equation*}
    \pr(R_j^{\phi_j}W_j>x)\sim 2C_{\alpha}\bar{\gamma}^\alpha_{j} x^{-1}\log x.
\end{equation*}
\end{proof}

\begin{remark} \upshape
By Theorem 3.8 and Lemma 3.12 of \citet{nolan2020univariate}, we can write out the exact form of the fractional lower order moment of $R_j\sim \text{Stable}(\alpha,1,\bar{\gamma}_{j},\delta)$ if $\delta=0$:
\begin{equation*}
    E(R_j^{\phi_j})=\bar{\gamma}_{j}^{\phi_j} \cos^{-\frac{\phi_j}{\alpha}}\left(\frac{\pi\alpha}{2}\right)\frac{\Gamma(1-\phi_j/\alpha)}{\Gamma(1-\phi_j)}.
\end{equation*}
\end{remark}

\subsection{Limiting angular measure for linear combinations of independent stable variables}\label{sec:cooley_res_tweak}

% \LZadd{
Consider
% \singlespacing
\begin{align*}
R_1&=w_{11}S_{1}+\ldots +w_{K1}S_{K},\\
R_2&=w_{12}S_{1}+\ldots +w_{K2}S_{K},\\
&\quad\vdots\\
R_m&=w_{1m}S_{1}+\ldots +w_{Km}S_{K},
\end{align*}
% \doublespacing
with deterministic coefficient vectors $\bw_i=(w_{1i},\ldots, w_{Ki})^{\rm T}$ , $i=1,\ldots, m$. The random vector $\bm S = (S_{1},\ldots,S_{K})^{\rm T}$ is composed of regularly varying variables.
In matrix notation, we write $\bm \Psi = (\bw_1,\ldots, \bw_m)^{\rm T}\in \mathbb{R}^{m\times K}$ and
\[
(R_1, \ldots, R_m)^{\rm T} = \bm \Psi \bm S.
\]
To show Proposition~\ref{prop:R_joint_nonzero}, we first slightly reformulate Corollary 1 of \citet{Cooley2019} where we consider only non-negative weights in $\bm \Psi$.
% }

% \LZadd{
\begin{lemma}[\citeauthor{Cooley2019}, \citeyear{Cooley2019}]\label{lem:cooley-thibaud} \upshape
Assume the independently and identically distributed random variables $\tilde{S}_k$, $k=1,\ldots,K$ are regularly varying at infinity, i.e.,
\begin{equation}\label{eq:rv-marg}
n \Pr(\tilde{S}_k/b_n > x) \rightarrow x^{-\alpha}, \quad n\rightarrow\infty,
\end{equation}
where $b_n>0$ and $\alpha>0$. Also, the normalizing sequence $b_n\rightarrow\infty$, and $\bm\Psi$ is a matrix with $m$ rows, $K$ columns and has only non-negative entries and at least one positive entry. Then the random vector $\bm\Psi \tilde{\bm S}$ is regularly varying at infinity with tail index $\alpha$ and it has angular measure (in dimension $K$ and with respect to any chosen norm $\|\cdot\|$) given by
\begin{equation}\label{eq:angular-measure-general}
H_{\bm\Psi \tilde{\bm S}}(\cdot) = \sum_{k=1}^K \left\|{\bm\omega}_k\right\|^\alpha \times \delta_{{\bm\omega}_k/\|{\bm\omega}_k\|}(\cdot),
\end{equation}
where $\delta$ is the Dirac mass function, and we set the sum term for $k$  to $0$ if $\left\|{\bm\omega}_k\right\|=0$.
\end{lemma}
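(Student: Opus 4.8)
The plan is to recognize $\bm\Psi\tilde{\bm S}=\sum_{k=1}^K \tilde S_k\,\bm\omega_k$, where $\bm\omega_k=(\Psi_{1k},\ldots,\Psi_{mk})^{\rm T}\ge\bm 0$ is the $k$th column of $\bm\Psi$, as a linear combination of the independent regularly varying variables $\tilde S_k$ with fixed non-negative coefficient vectors, and to read off its multivariate regular variation and angular measure from that of the summands. This is precisely the content of Corollary 1 of \citet{Cooley2019}; the only work specific to our setting is to confirm that the non-negativity of $\bm\Psi$ collapses their general signed angular measure to the positive-orthant form \eqref{eq:angular-measure-general}.

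First I would recall multivariate regular variation in polar form: with the normalization $b_n$ of \eqref{eq:rv-marg}, a vector $\bm V$ is regularly varying with index $\alpha$ and angular measure $H$ when $n\,\Pr(\bm V/b_n\in\cdot)$ converges vaguely on $[0,\infty)^m\setminus\{\bm 0\}$ to a limit measure $\mu$ that factors as $\mu(\mathrm{d}r\times\mathrm{d}\bm\theta)=\alpha\,r^{-\alpha-1}\,\mathrm{d}r\,H(\mathrm{d}\bm\theta)$. The building blocks are the single-ray contributions: each summand $\tilde S_k\bm\omega_k$ is supported on the ray $\{t\bm\omega_k:t\ge 0\}$, and by homogeneity of the chosen norm together with \eqref{eq:rv-marg} we have $\Pr(\|\tilde S_k\bm\omega_k\|>x)\sim\|\bm\omega_k\|^\alpha\,\Pr(\tilde S_k>x)$, so $\tilde S_k\bm\omega_k$ is regularly varying with a limit measure placing angular mass $\|\bm\omega_k\|^\alpha$ at the single atom $\bm\omega_k/\|\bm\omega_k\|$ (contributing nothing when $\|\bm\omega_k\|=0$).

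The crux is the additivity of these limit measures across the independent summands. Because the $\tilde S_k$ are independent and heavy-tailed, the probability that two or more of them are simultaneously of order $b_n$ is $o(1/n)$, so extremes of the sum are produced by a single large $\tilde S_k$ at a time (the multivariate single-big-jump heuristic). Hence the limit measure of $\sum_k \tilde S_k\bm\omega_k$ is $\sum_k\mu_k$, which in polar coordinates yields exactly $H=\sum_{k=1}^K \|\bm\omega_k\|^\alpha\,\delta_{\bm\omega_k/\|\bm\omega_k\|}$. I would invoke \citet{Cooley2019} (equivalently, standard convolution-closure results for regularly varying vectors) for this additivity rather than re-derive it.

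I expect the main obstacle to be the rigorous single-big-jump bound: controlling $n\,\Pr(\text{at least two }\tilde S_k\gtrsim b_n)\to 0$ and checking that the ``small'' coordinates (the other $\tilde S_{k'}$ sitting at typical values) do not shift the atom locations as one passes to the vague limit. Since we are explicitly reformulating \citet{Cooley2019}, the cleanest route is to borrow their corollary for the regular variation and the additivity of limit measures, thereby reducing our task to the bookkeeping that, under $\bm\Psi\ge\bm 0$, every atom $\bm\omega_k/\|\bm\omega_k\|$ lies in the positive orthant and their general formula reduces to \eqref{eq:angular-measure-general}.
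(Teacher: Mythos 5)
Your proposal matches the paper's treatment: the paper states this lemma as a direct reformulation of Corollary 1 of \citet{Cooley2019} and offers no independent proof, which is exactly your route of deferring the multivariate regular variation and the additivity of the limit measures to that corollary. Your supporting sketch (each summand $\tilde S_k\bm\omega_k$ contributing angular mass $\|\bm\omega_k\|^\alpha$ at the atom $\bm\omega_k/\|\bm\omega_k\|$, combined via the single-big-jump principle) is correct, and the remaining bookkeeping you identify---that non-negative columns place every atom in the positive orthant---is precisely the ``slight reformulation'' the paper performs.
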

% }

\vskip 0.2cm
Since $\bm\Psi \tilde{\bm S}$ exhibits regularly variation, the scaling property holds for a given norm $||\cdot||$. Define the unit ball $\mathbb{S}^+_{m-1}=\{\bm x\in \mathbb{R}^m_+:||\bm x||=1\}$. For a set $C(r, B)=\{\bm x\in \mathbb{R}^m_+: ||\bm x||>r, ||\bm x||^{-1}\bm x\in B\}$ with $r>0$ and $B\subset \mathbb{S}^+_{m-1}$ being a Borel set. Then we have
\begin{equation*}
n\pr(b_n^{-1}\bm\Psi \tilde{\bm S}\in C(r, B))\sim r^{-\alpha}H_{\bm\Psi \tilde{\bm S}}(B), \text{ as }n\rightarrow\infty.
\end{equation*}
If we let $B=\mathbb{S}^+_{m-1}$, the event $b_n^{-1}\bm\Psi \tilde{\bm S}\in C(r, B)$ is equivalent to $||\bm\Psi\tilde{\bm S}||/b_n >r$. Also note that the angular measure $H_{\bm\Psi \tilde{\bm S}}$ is usually not a probability measure since its total measure is $m_{\bm\Psi \tilde{\bm S}} =H_{\bm\Psi \tilde{\bm S}}(\mathbb{S}_{m-1}^+) = \sum_{k=1}^K \left\|{\bm\omega}_k\right\|^\alpha$. Therefore,
\begin{equation}\label{eqn:important_cooley_res}
    n \,\Pr\left(\|\bm\Psi \tilde{\bm S}\|/b_n > r\right) \sim m_{\bm\Psi \tilde{\bm S}}r^{-\alpha}, \text{ as }n\rightarrow\infty. 
\end{equation}
%The regular variation condition~\eqref{eq:rv-marg}
 
Using this result, marginal and joint upper tail behaviour of $(R_1, \ldots, R_m)^{\rm T}$ can be derived when the coefficients in $\bm S_t$ are regularly varying. Proposition~\ref{prop:R_joint_nonzero} constitutes a special case, focusing on two linear combinations of stable distributions, which are a specific example of regularly varying random variables.

\begin{proof}[Proof of Proposition~\ref{prop:R_joint_nonzero}  ] \upshape
    \begin{enumerate}[(a)]
        \item The case of $\mathcal{C}_i\cap\mathcal{C}_j=\emptyset$ is clear because independence holds and individually $\pr(R_i>x)\sim 2\bar{\gamma}^\alpha_{i}C_\alpha x^{-\alpha}$ and $\pr(R_j>x)\sim 2\bar{\gamma}^\alpha_{j}C_\alpha x^{-\alpha}$.

        When $\mathcal{C}_i\cap\mathcal{C}_j=\emptyset$, we show Expression~\eqref{eqn:R_joint} by applying Lemma~\ref{lem:cooley-thibaud}. First we notice that the upper tails of the stable distributions with the same concentration $\alpha$ and skewness $\beta=1$ are equivalent up to a positive scaling constant. More specifically, we denote $\tilde{S}_k=S_k/\gamma_k$, $k=1,\ldots, K$, and then all $\tilde{S}_k$'s are iid $\text{Stable}(\alpha,$ $1,$ $1,$ $\delta)$ variables because the $S_k$'s share the location $\delta$ in Equation~\eqref{eqn:scaling_process}. 

        Therefore, Lemma~\ref{lem:cooley-thibaud} is applicable for the pair $(R_i, R_j)^{\rm T}$ under the new weights $(w_{1i}\gamma_1,\ldots, w_{Ki}\gamma_K)^{\rm T}$ and $(w_{1j}\gamma_1,\ldots, w_{Kj}\gamma_K)^{\rm T}$. Since $\pr(\tilde{S}_k>x)\sim 2C_\alpha x^{-\alpha}$ as $x\rightarrow \infty$, we can therefore choose
        \begin{equation*}
            b_n=(2nC_\alpha)^{1/\alpha}
        \end{equation*}
        so $n\pr(\tilde{S}_k/b_n>x)\rightarrow x^{-\alpha}$ as $n\rightarrow\infty$. Choosing the componentwise min-operator as the norm, Expression~\eqref{eqn:important_cooley_res} can be re-written as:
        \begin{equation*}
            n\pr(\min(R_i,R_j)/b_n>r)\sim r^{-\alpha}\sum_{k=1}^K \min(w_{ki}\gamma_k, w_{kj}\gamma_k)^\alpha=r^{-\alpha}\sum_{k=1}^Kw_{k,\wedge}^\alpha\gamma_k^\alpha,
        \end{equation*}
        in which $w_{k,\wedge}=\min(w_{ki},w_{kj})$. The previous display immediately induces Equation~\eqref{eqn:R_joint}.

        \item We only prove the inequality concerning $\min(c_iR_i^{\phi_i},c_jR_j^{\phi_j})$. The proof for $\max(c_iR_i^{\phi_i},c_jR_j^{\phi_j})$ is analogous. 

        If $\mathcal{C}_i=\mathcal{C}_j$, $R_i$ and $R_j$ share the same non-zero indices. Since $1/\phi_i-1/\phi_j>0$, $c_i^{1/\phi_i}w_{ki}/x^{1/\phi_i-1/\phi_j}<c_j^{1/\phi_j}w_{kj}$, $k=1,\ldots, K$ for sufficiently large $x$. Therefore 
        \begin{equation*}
            \begin{split}
               \pr\{\min(c_iR_i^{\phi_i},c_jR_j^{\phi_j})>x\}&=P\left(\frac{c_i^{1/\phi_i}}{x^{1/\phi_i-1/\phi_j}}R_i>x^{1/\phi_j}, c_j^{1/\phi_j}R_j>x^{1/\phi_j}\right)\\
                &=P\left(\frac{c_i^{1/\phi_i}}{x^{1/\phi_i-1/\phi_j}}R_i>x^{1/\phi_j}\right)\sim 2C_\alpha \sum_{k\in \mathcal{C}_i}(w_{ki}\gamma_k)^{\alpha} c_i^{\alpha/\phi_i} x^{-\alpha/\phi_i}.
            \end{split}
        \end{equation*}

        If either $\mathcal{C}_i\setminus\mathcal{C}_j$ or $\mathcal{C}_j\setminus\mathcal{C}_i$ is non-empty, we define
        \begin{equation}\label{eqn:cap_minus}
            R_{i,\setminus}=\sum_{k\in \mathcal{C}_i\setminus \mathcal{C}_j}w_{ki}S_k,\quad R_{i,\cap}=\sum_{k\in \mathcal{C}_i\cap \mathcal{C}_j}w_{ki}S_k,
        \end{equation}
        and $R_i=R_{i,\setminus}+R_{i,\cap}$. Similarly we define $R_{j,\setminus}$ and $R_{j,\cap}$. Then the classic $c_r$-inequality gives us $R_{i,\cap}^{\phi_i}\leq R_i^{\phi_i}\leq 2^{\phi_i}(R_{i,\cap}^{\phi_i}+R_{i,\setminus}^{\phi_i})$. Similarly, it is also true that $R_{j,\cap}^{\phi_j}\leq R_j^{\phi_j}\leq 2^{\phi_j}(R_{j,\cap}^{\phi_j}+R_{j,\setminus}^{\phi_j})$. Therefore,
        \begin{equation}\label{eqn:bounds_min}
            \begin{split}
               \pr\{\min(c_iR_{i,\cap}^{\phi_i},c_j&R_{j,\cap}^{\phi_j})>x\}\leq P\{\min(c_iR_i^{\phi_i},c_jR_j^{\phi_j})>x\}\leq\\
                 &P\{2^{\phi_i}c_i(R_{i,\cap}^{\phi_i}+R_{i,\setminus}^{\phi_i})>x,2^{\phi_j}c_j(R_{j,\cap}^{\phi_j}+R_{j,\setminus}^{\phi_j})>x\}.
            \end{split}
        \end{equation}
        The lower bound in \eqref{eqn:bounds_min} can be approximated using the first case $\mathcal{C}_i=\mathcal{C}_j$. The upper bound in \eqref{eqn:bounds_min} can be further bounded by $\pr(\min(2^{\phi_i}c_iR_{i,\cap}^{\phi_i},2^{\phi_j}c_j R_{j,\cap}^{\phi_j})+\max(2^{\phi_i}c_i R_{i,\setminus}^{\phi_i}, 2^{\phi_j}c_j R_{j,\setminus}^{\phi_j})>x)$, which can be approximated using Lemma \ref{lem:CE_convolution} because
        \begin{equation*}
         \begin{split}
            \pr\{\min(2^{\phi_i}c_iR_{i,\cap}^{\phi_i},2^{\phi_j}c_j R_{j,\cap}^{\phi_j})>x\}&\sim 2C_\alpha\sum_{k\in \mathcal{C}_i\cap\mathcal{C}_j}(w_{ki}\gamma_k)^\alpha c_i^{\alpha/\phi_i}x^{-\alpha/\phi_i},\\
            \pr\{\max(2^{\phi_i}c_i R_{i,\setminus}^{\phi_i}, 2^{\phi_j}c_j R_{j,\setminus}^{\phi_j})>x\}&\sim 2^{\alpha+1}C_\alpha\sum_{k\in\mathcal{C}_i\setminus\mathcal{C}_j}(w_{ki}\gamma_k)^\alpha c_i^{\alpha/\phi_i} x^{-\alpha/\phi_i}.
            \end{split}
        \end{equation*}
        The second inequality holds due to the independence between $R_{i,\setminus}$ and $R_{j,\setminus}$. Combining the approximations of the two bounds yields the stated range of the constant $d_\wedge$.
    \end{enumerate}
\end{proof}

% \begin{lemma}\label{lemma:opposite_tail}
%     When $\mathcal{C}_i\cap \mathcal{C}_j\neq \emptyset$, then
%     \begin{equation*}
%         \pr(\theta_i R_i^{\alpha}>q^{-1}, \theta_j R_j^{\alpha}<q^{-1})\sim o(q)
%     \end{equation*}
%     as $q\rightarrow 0$, in which the positive constants $\theta_i=\frac{1}{2C_\alpha\bar{\gamma}^\alpha_{i}}$ and $\theta_j=\frac{1}{2C_\alpha\bar{\gamma}^\alpha_{j}}$.
% \end{lemma}
% \begin{proof}
%     Intuitively, the components $R_{i,\cap}$, $R_{i,\setminus}$, $R_{j,\cap}$, $R_{j,\setminus}$ are all regularly varying variables if not degenerate. Therefore, when $\mathcal{C}_i\cap \mathcal{C}_j\neq \emptyset$, the probability of $R_i$ being large while $R_j$ being small should be very low.
% \end{proof}

\subsection{Proof of Theorem~\ref{thm:dependence_properties}}
To examine the joint tail of $(R_i, R_j)$ and $(X_i, X_j)$, we begin by recalling the useful results from the literature. The first result is an easy but useful inequality
\begin{equation}\label{eqn:sandwich}
    \begin{split}
        \pr(\min(R_1,R_2)+\min(W_1,W_2)>x)\leq &\pr(R_1+W_1>x,R_2+W_2>x)\leq\\ &\pr(\min(R_1,R_2)+\max(W_1,W_2)>x).
    \end{split}
\end{equation}
The second relates to the tail behaviour on the convolution with a $\text{CE}_\alpha$ distribution.
\begin{lemma}[Theorem 1 of \citet{cline1986convolution} and Lemma 5.1 of \citet{pakes2004convolution}]\label{lem:CE_convolution} \upshape
Let $Y_1\sim F_1$ and $Y_2\sim F_2$ be random variables. If distribution function $F_1\in CE_\alpha$ with $\alpha\geq 0$ and $E(e^{\alpha Y_2})<\infty$ while $\pr(Y_2>x)/\pr(Y_1>x)\rightarrow c\geq 0$ as $x\rightarrow \infty$, then
\begin{equation*}
    \pr(Y_1+Y_2>x)/\pr(Y_1>x)\rightarrow E\left(e^{\alpha Y_2}\right)+cE\left(e^{\alpha Y_1}\right),\quad x\rightarrow\infty.
\end{equation*}
\end{lemma}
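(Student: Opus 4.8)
Throughout write $\bar F_i(x)=\pr(Y_i>x)$, and take $Y_1$ and $Y_2$ to be independent (as in the cited sources), so that $\pr(Y_1+Y_2>x)=\int_{-\infty}^{\infty}\bar F_1(x-y)\,dF_2(y)=\int_{-\infty}^{\infty}\bar F_2(x-y)\,dF_1(y)$. Recall that membership $F_1\in CE_\alpha$ encodes two facts: the exponential-tail property $\bar F_1(x-t)/\bar F_1(x)\to e^{\alpha t}$ for every fixed $t$, and the self-convolution closure $\overline{F_1*F_1}(x)/\bar F_1(x)\to 2E(e^{\alpha Y_1})$. The plan is to isolate the two regions in which a single variable carries the exceedance --- these produce the two terms in the claimed limit --- and then to show that the region in which both variables are large is negligible, the latter being the only place where the full strength of convolution equivalence (as opposed to the mere exponential-tail property) is needed.

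Fix a truncation level $h=h(x)\to\infty$ with $h(x)=o(x)$. For $x$ so large that $2h(x)<x$, the events $\{Y_2\le h\}$, $\{Y_1\le h\}$, $\{Y_1>h,\,Y_2>h\}$ partition $\{Y_1+Y_2>x\}$ (on $\{Y_1\le h,\,Y_2\le h\}$ the sum cannot exceed $x$), which gives
\begin{equation*}
\pr(Y_1+Y_2>x)=\underbrace{\int_{-\infty}^{h}\bar F_1(x-y)\,dF_2(y)}_{I_1}+\underbrace{\int_{-\infty}^{h}\bar F_2(x-y)\,dF_1(y)}_{I_2}+\underbrace{\pr(Y_1>h,\,Y_2>h,\,Y_1+Y_2>x)}_{I_3}.
\end{equation*}
For $I_1$ I would divide by $\bar F_1(x)$; the integrand $\bar F_1(x-y)/\bar F_1(x)$ converges pointwise to $e^{\alpha y}$. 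Since only $E(e^{\alpha Y_2})<\infty$ is assumed (not $E(e^{(\alpha+\epsilon)Y_2})$), I cannot dominate over all of $(-\infty,h]$ at once, so I would split at a fixed constant $M$, apply bounded convergence on $(-\infty,M]$ to obtain $\int_{-\infty}^{M}e^{\alpha y}\,dF_2(y)$, and then let $M\to\infty$ to recover $E(e^{\alpha Y_2})$, deferring the intermediate piece $\int_{M}^{h}\bar F_1(x-y)\,dF_2(y)$ to the crux below. The term $I_2$ is identical, using $\bar F_2(x-y)/\bar F_1(x)=\{\bar F_2(x-y)/\bar F_1(x-y)\}\{\bar F_1(x-y)/\bar F_1(x)\}\to c\,e^{\alpha y}$, giving the contribution $c\,E(e^{\alpha Y_1})$.

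The main obstacle is to show that $I_3$, together with the intermediate remainders just deferred, is $o(\bar F_1(x))$; these are all of ``both variables large'' type and stand or fall together. For $\alpha>0$ a naive union bound is hopeless: on $I_3$ at least one variable exceeds $x/2$, but $\bar F_1(x/2)/\bar F_1(x)\to\infty$ under an exponential tail, so the crude estimate $I_3\lesssim \bar F_1(x/2)$ carries no information, and the delicate cancellation built into $CE_\alpha$ is unavoidable. My plan is to obtain the needed control first for the self-convolution and then transfer it. Applying the same decomposition to $F_1*F_1$, the exact total $\overline{F_1*F_1}(x)=2E(e^{\alpha Y_1})\bar F_1(x)(1+o(1))$ from the $CE_\alpha$ hypothesis, compared against the two lower bounds $I_1',I_2'\ge \int_{-\infty}^{M}\bar F_1(x-y)\,dF_1(y)\to E(e^{\alpha Y_1};Y_1\le M)\,\bar F_1(x)$, forces both the self-convolution diagonal $I_3'$ and the self-convolution intermediate pieces $\int_{M}^{h}\bar F_1(x-y)\,dF_1(y)$ to vanish to order $o(\bar F_1(x))$ in the iterated limit $\lim_{M\to\infty}\limsup_{x\to\infty}$. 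I would then transfer this to $F_2$ using $\bar F_2(t)\le (c+\epsilon)\bar F_1(t)$ for $t\ge t_0$: writing the offending quantities as tail integrals in $y$ of the monotone integrand $\bar F_1(\max\{h,x-y\})$, an integration by parts converts the survival-function domination $\bar F_2\lesssim c\,\bar F_1$ into an integral bound of the form $(c+\epsilon)$ times the self-convolution analogue plus boundary corrections that are themselves $o(\bar F_1(x))$ by the $\mathcal L(\alpha)$ estimate. Assembling the three pieces then yields $\pr(Y_1+Y_2>x)/\bar F_1(x)\to E(e^{\alpha Y_2})+c\,E(e^{\alpha Y_1})$.

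The hard part will be precisely this last step: it is the only place where convolution equivalence, rather than the exponential-tail property alone, is essential, and the conversion of survival-function domination into integral domination via monotonicity and integration by parts --- with clean control of the boundary terms and a careful ordering of the $M\to\infty$ and $x\to\infty$ limits --- is the delicate bookkeeping I expect to occupy most of the work.
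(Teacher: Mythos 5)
The paper gives no proof of this lemma at all: it is imported as background, quoted directly from Theorem~1 of \citet{cline1986convolution} and Lemma~5.1 of \citet{pakes2004convolution}, so the only meaningful comparison is with those sources—and your outline reconstructs their standard argument essentially exactly (three-way decomposition at a truncation level, the two one-sided integrals yielding $E(e^{\alpha Y_2})$ and $c\,E(e^{\alpha Y_1})$ via the exponential-tail property, and the ``both large'' region shown negligible in the iterated limit $\lim_{v\to\infty}\limsup_{x\to\infty}$ by squeezing the self-convolution middle term against the $CE_\alpha$ hypothesis and transferring to $F_2$ by tail domination, with boundary terms controlled by $e^{\alpha v}\bar F_1(v)\to 0$). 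The only cosmetic remark: the moving threshold $h(x)\to\infty$ buys you nothing—since you immediately re-split at a fixed level $M$ and defer the intermediate strip to the diagonal analysis, you may as well fix the truncation level from the start and work with iterated limits, which is how the cited proofs are organized.
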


\begin{lemma}[Proposition 5 in \citet{engelke2019extremal}]\label{lem:RV_convolution} \upshape
Suppose $\overline{F}_W\in RV_{-\alpha_W}$ with $\alpha_W\geq 0$, and $\overline{F}_R\in RV_{-\alpha_R}$ with $\alpha_R>\alpha_W$. Let $R\sim F_R$ and $(W_1,W_2)\stackrel{d}{=}F_W$ marginally while $R\indep (W_1,W_2)$. Denote $(X_1,X_2)=R(W_1,W_2)$. If the extremal dependence of $(W_1,W_2)$ is summarized by $(\chi_W,\eta_W)$, then $\chi_X=\chi_W$ and 
\begin{equation*}
\eta_X=
\begin{cases}
    \alpha_W/\alpha_R,&\text{ if }\alpha_R<\alpha_W/\eta_W,\\
    \eta_W,&\text{ if }\alpha_R>\alpha_W/\eta_W.
    \end{cases}
\end{equation*}
\end{lemma}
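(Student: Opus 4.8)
The plan is to reduce every claim to Breiman's theorem (Lemma~\ref{lem:breiman}) by exploiting the product structure $(X_1,X_2)=R(W_1,W_2)$. First I would pin down the marginal tail of each $X_i=RW_i$. Because $\overline{F}_W\in RV_{-\alpha_W}$ and $\alpha_R>\alpha_W$, the moment $\mathbb{E}(R^{\alpha_W+\epsilon})$ is finite for $\epsilon$ small enough that $\alpha_W+\epsilon<\alpha_R$, so treating $W_i$ as the regularly varying factor and $R$ as the moment factor, Lemma~\ref{lem:breiman}\ref{breiman} gives $\overline{F}_{X_i}(x)=\pr(RW_i>x)\sim\mathbb{E}(R^{\alpha_W})\,\overline{F}_W(x)$. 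In particular $\overline{F}_{X_i}\in RV_{-\alpha_W}$ and the two margins are tail-equivalent, so I may compute both $\chi_X$ and $\eta_X$ with a common threshold $x$ on the original scale rather than transforming to uniform margins.

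The key reduction is the identity, valid for $x>0$ since $R$ is a positive scale factor,
\[
\{X_1>x,\,X_2>x\}=\{R\,\min(W_1,W_2)>x\},
\]
so the joint survival equals $\pr(RM>x)$ with $M:=\min(W_1,W_2)$. The Ledford--Tawn form defining $\eta_W$ together with the marginal regular variation of $W$ gives $\overline{F}_M(x)=\pr(W_1>x,W_2>x)\in RV_{-\alpha_W/\eta_W}$, so $M$ is regularly varying with tail index $\alpha_W/\eta_W$. For the coefficient $\chi_X$, in the asymptotically dependent case $\chi_W>0$ we have $\eta_W=1$ and $\overline{F}_M(x)\sim\chi_W\,\overline{F}_W(x)\in RV_{-\alpha_W}$, so Breiman applied to $RM$ yields $\pr(RM>x)\sim\mathbb{E}(R^{\alpha_W})\chi_W\,\overline{F}_W(x)$; dividing by $\overline{F}_{X_i}(x)\sim\mathbb{E}(R^{\alpha_W})\overline{F}_W(x)$ gives $\chi_X=\chi_W$. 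In the asymptotically independent case $\chi_W=0$ we have $\eta_W<1$ and $\alpha_W/\eta_W>\alpha_W$, and the index computation below shows the joint survival is regularly varying with index strictly larger than the marginal index $\alpha_W$, so the ratio vanishes and $\chi_X=0=\chi_W$.

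To extract $\eta_X$ I would apply Breiman to the product $RM$ and match tail indices, splitting on which factor is heavier. If $\alpha_R<\alpha_W/\eta_W$ then $R$ dominates; taking $R$ as the regularly varying factor and $M$ as the moment factor, whose $(\alpha_R+\epsilon)$-moment is finite because $\alpha_R+\epsilon<\alpha_W/\eta_W$, Lemma~\ref{lem:breiman}\ref{breiman} gives $\pr(RM>x)\sim\mathbb{E}(M^{\alpha_R})\,\overline{F}_R(x)\in RV_{-\alpha_R}$. If instead $\alpha_R>\alpha_W/\eta_W$ then $M$ dominates and the same lemma gives $\pr(RM>x)\sim\mathbb{E}(R^{\alpha_W/\eta_W})\,\overline{F}_M(x)\in RV_{-\alpha_W/\eta_W}$. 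Writing the joint survival as $\overline{F}_{X_i}(x)^{1/\eta_X}$ up to a slowly varying factor and equating its regular-variation index against the marginal index $\alpha_W$ forces $\alpha_W/\eta_X=\min(\alpha_R,\alpha_W/\eta_W)$, giving $\eta_X=\alpha_W/\alpha_R$ in the first case and $\eta_X=\eta_W$ in the second, matching the stated formula.

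The main obstacle is nothing more than careful bookkeeping of the moment conditions and of the slowly varying factors. At each invocation of Breiman one must check that the moment of the non-dominant factor is finite, which is exactly what the strict inequalities $\alpha_R>\alpha_W$ and the strict separation between $\alpha_R$ and $\alpha_W/\eta_W$ provide; and one must confirm that $\pr(X_1>x,X_2>x)$ is genuinely regularly varying so that $\eta_X$ is well defined, which is immediate since each Breiman asymptotic equates it to a regularly varying function. Translating the resulting original-scale indices into the uniform-margin definition \eqref{eqn:eta_measure} of $\eta$ is then a direct consequence of the tail-equivalence of the two margins established at the outset.
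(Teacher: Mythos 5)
The paper never proves this lemma internally---it is imported wholesale as Proposition 5 of \citet{engelke2019extremal}---and your blind reconstruction is correct and follows essentially the same route as that cited source: rewriting the joint survival as $\Pr\{R\min(W_1,W_2)>x\}$, noting that $\min(W_1,W_2)$ has a regularly varying tail of index $-\alpha_W/\eta_W$ via the Ledford--Tawn representation composed with the marginal tail, and applying Breiman's lemma (the paper's Lemma~\ref{lem:breiman}) to whichever factor dominates, with the strict inequalities $\alpha_R>\alpha_W$ and $\alpha_R\neq\alpha_W/\eta_W$ supplying exactly the finite moments each invocation needs. Two points you pass over quickly are harmless but worth noting: the translation between original-scale and uniform-scale indices presumes $\alpha_W>0$ (the boundary $\alpha_W=0$ allowed in the statement degenerates, trivially landing in the $\eta_X=\eta_W$ branch), and your common-threshold computation is licensed not merely by tail equivalence but by the fact that $X_1$ and $X_2$ are identically distributed, since $W_1$ and $W_2$ share the margin $F_W$ and are scaled by the same $R$.
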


\begin{proof}[Proof of Theorem~\ref{thm:dependence_properties}  ]
\begin{enumerate}[(a)]
    \item Since $\phi_i>\alpha$ and $\phi_j>\alpha$, \eqref{eqn:marginal} ensures
\begin{equation*}
    F_{X_i}^{-1}(1-q)\sim \{(1-\alpha/\phi_i)\theta_i\}^{-\phi_i/\alpha}q^{-\phi_i/\alpha} \text{ with }\theta_i=\frac{1}{2C_\alpha\bar{\gamma}^\alpha_{i}}>0.
\end{equation*}
Similar result holds for $F_{X_j}^{-1}(1-q)$. Therefore
\begin{equation}\label{joint_transform1}
    \begin{split}
        P&(X_i\geq F_{X_i}^{-1}(1-q), X_j\geq F_{X_j}^{-1}(1-q))\\
        &=P\left(R_i^{\phi_i}W_i>\{(1-\alpha/\phi_i)\theta_i\}^{-\phi_i/\alpha}q^{-\phi_i/\alpha}, R_j^{\phi_j}W_j>\{(1-\alpha/\phi_j)\theta_j\}^{-\phi_j/\alpha}q^{-\phi_j/\alpha}\right)\\
        &=P\left(\theta_i R_i^{\alpha}\frac{W_i^{\alpha/\phi_i}}{E(W_i^{\alpha/\phi_i})}>q^{-1}, \theta_j R_j^{\alpha}\frac{W_j^{\alpha/\phi_j}}{E(W_j^{\alpha/\phi_j})}>q^{-1}\right).
    \end{split}
\end{equation}

% \LZadd{
Using Expression~\eqref{eqn:sandwich} and Lemma~\ref{lem:breiman} part~\ref{easy_breiman} again, we deduce that that the right-hand side of \eqref{joint_transform1} is bounded with the range
\begin{footnotesize}
\begin{equation*}
    E\left\{\min\left(\frac{W_i^{\alpha/\phi_i}}{E(W_i^{\alpha/\phi_i})},\frac{W_j^{\alpha/\phi_j}}{E(W_j^{\alpha/\phi_j})}\right)\right\}\bigg[\pr(\min(\theta_i R_i^{\alpha},\theta_j R_j^{\alpha})>q^{-1}), \pr(\max(\theta_i R_i^{\alpha},\theta_j R_j^{\alpha})>q^{-1})\bigg].
\end{equation*}
\end{footnotesize}
% }
Meanwhile, we know from Proposition~\ref{prop:R_joint_nonzero}\ref{lem:R_joint} that 
\begin{equation}\label{eqn:min_approx}
    \begin{split}
        \pr(\min(\theta_i R_i^{\alpha},\theta_j R_j^{\alpha})>q^{-1})=\pr(\theta_i^{1/\alpha} &R_i>q^{-1/\alpha},\theta_j^{1/\alpha} R_j>q^{-1/\alpha})\\
        &= 2C_\alpha C_K(\theta_i^{1/\alpha}\bw_i,\theta_j^{1/\alpha}\bw_j,\boldsymbol{\gamma}) q,
    \end{split}
\end{equation}
in which 
% \LZadd{
\begin{equation*}
    \begin{split}
         C_K(\theta_i^{1/\alpha}\bw_i,\theta_j^{1/\alpha}\bw_j,&\boldsymbol{\gamma})= \sum_{k=1}^K \min\{\theta_i\gamma_k^\alpha w^\alpha_{ki}, \theta_j\gamma_k^\alpha w^\alpha_{kj}\}=\frac{1}{2C_\alpha}\sum_{k=1}^K \min\left\{\frac{\gamma_k^\alpha w^\alpha_{ki}}{\bar{\gamma}^\alpha_{i}}, \frac{\gamma_k^\alpha w^\alpha_{kj}}{\bar{\gamma}^\alpha_{j}}\right\}\\
         &=\frac{1}{2C_\alpha}\sum_{k=1}^K \min\left\{\frac{(w_{ki}\gamma_k)^\alpha}{\sum_{k\in \mathcal{C}_i}(w_{ki}\gamma_k)^\alpha}, \frac{(w_{kj}\gamma_k)^\alpha}{\sum_{k\in \mathcal{C}_j}(w_{kj}\gamma_k)^\alpha}\right\}=\frac{1}{2C_\alpha}\sum_{\mathcal{C}_i\cap\mathcal{C}_j}v_{k,\wedge}.
    \end{split}
\end{equation*}
% }

% \LZadd{
On the other hand,
\begin{equation*}
    \begin{split}
        \pr(\max(\theta_i R_i^{\alpha},\theta_j R_j^{\alpha})>q^{-1})&=1-\pr(\max(\theta_i R_i^{\alpha},\theta_j R_j^{\alpha})\leq q^{-1})\\
        &= \pr(\theta_i R_i^{\alpha}>q^{-1}, \theta_j R_j^{\alpha}<q^{-1})+\pr(\theta_i R_i^{\alpha}<q^{-1}, \theta_j R_j^{\alpha}>q^{-1})+\\
        &\quad \pr(\theta_i R_i^{\alpha}>q^{-1}, \theta_j R_j^{\alpha}>q^{-1}),
    \end{split}
\end{equation*}
By Proposition~\ref{prop:R_joint_nonzero}\ref{lem:R_joint} and Expression~\eqref{eqn:min_approx}, the previous display becomes
\begin{equation*}
    \pr(\max(\theta_i R_i^{\alpha},\theta_j R_j^{\alpha})>q^{-1})=2C_\alpha C_K(\theta_i^{1/\alpha}\bw_i,\theta_j^{1/\alpha}\bw_j,\boldsymbol{\gamma}) q+o(q).
\end{equation*}
Therefore, both $\pr(\min(\theta_i R_i^{\alpha},\theta_j R_j^{\alpha})>q^{-1})$ and $\pr(\max(\theta_i R_i^{\alpha},\theta_j R_j^{\alpha})>q^{-1})$ are dominated by $2C_\alpha C_K(\theta_i^{1/\alpha}\bw_i,\theta_j^{1/\alpha}\bw_j,\boldsymbol{\gamma}) q$. Consequently, we get $\eta_X=1$ and 
$$\chi_{ij}=E\left\{\min\left(\frac{W_i^{\alpha/\phi_i}}{E(W_i^{\alpha/\phi_i})},\frac{W_j^{\alpha/\phi_j}}{E(W_j^{\alpha/\phi_j})}\right)\right\}\sum_{k=1}^K v_{k,\wedge}.$$ 
% }

\item When $0<\phi_i<\phi_j<\alpha$, write $\lambda_i=E(R_i^{\phi_i})$ and $\lambda_j=E(R_j^{\phi_j})$. By \eqref{eqn:marginal},
\begin{equation*}
    F_{X_i}^{-1}(1-q)\sim \lambda_i q^{-1}, \quad F_{X_j}^{-1}(1-q)\sim \lambda_j q^{-1},
\end{equation*}
and
\begin{equation*}
    \begin{split}
        \pr(&X_i\geq F_{X_i}^{-1}(1-q), X_j\geq F_{X_j}^{-1}(1-q))=P\left(\frac{R_i^{\phi_i}}{\lambda_i}W_i>q^{-1}, \frac{R_j^{\phi_j}}{\lambda_j}W_j> q^{-1}\right)%\\        &=P\left(\log (R_i^{\phi_i}/\lambda_i)+\log W_i>\log q^{-1}, \log (R_j^{\phi_j}/\lambda_j)+\log W_j>\log q^{-1}\right)
        .
    \end{split}
\end{equation*}

If $\phi_i/\alpha<\phi_j/\alpha<\eta_W$, we first regard $R^*=\max(R_i^{\phi_i}/\lambda_i,R_j^{\phi_j}/\lambda_j)$ as a radial variable. From Proposition~\ref{prop:R_joint_nonzero}\ref{lem:R_max_min}, we have $\pr(R^*>x)\in RV_{-\alpha/\phi_j}$ and $\alpha_{R^*}=\alpha/\phi_j>\alpha_W=1$. Since $\alpha/\phi_j>1/\eta_W$, we know from Lemma \ref{lem:RV_convolution} that 
\begin{equation*}
    \lim_{q\rightarrow 0}\frac{\log \pr(R^*W_i>q^{-1},R^*W_j>q^{-1})}{\log \pr(R^*W_i>q^{-1})}=\eta_W.
\end{equation*}

Then we regard $R_*=\min(R_i^{\phi_i}/\lambda_i,R_j^{\phi_j}/\lambda_j)$ as a radial variable. Since $\alpha_{R_*}=\alpha/\phi_i>1/\eta_W$, Lemma \ref{lem:RV_convolution} again gives
\begin{equation}\label{eqn:max_joint_eta1}
    \lim_{q\rightarrow 0}\frac{\log \pr(R_*W_i>q^{-1},R_*W_j>q^{-1})}{\log \pr(R_*W_i>q^{-1})}=\eta_W.
\end{equation}

Moreover, $E(R^*)<\infty$ and $E(R_*)<\infty$ due to $\phi_i<\phi_j<\alpha$. We can apply Lemma \ref{lem:breiman} part \ref{easy_breiman} again to show $\log \pr(R^*W_i>q^{-1})\sim \log \pr(R_*W_i>q^{-1})\sim \log \pr(R_i^{\phi_i}W_i>\lambda_i q^{-1})$. By sandwich limit theorem and \eqref{eqn:sandwich},
\begin{equation*}
    \eta_X=\lim_{q\rightarrow 0}\frac{\log \pr(R_i^{\phi_i}W_i>\lambda_i q^{-1},R_i^{\phi_i}W_j>\lambda_i q^{-1})}{\log \pr(R_i^{\phi_i}W_i>\lambda_i q^{-1})}=\eta_W,
\end{equation*}
and $\chi_X=\chi_W=0$.

If $\eta_W<\phi_i/\alpha<\phi_j/\alpha$, we have $\alpha/\phi_j<\alpha/\phi_i<1/\eta_W$. Lemma \ref{lem:RV_convolution} ensures 
\begin{equation}\label{eqn:max_joint_eta}
    \lim_{q\rightarrow 0}\frac{\log \pr(R^*W_i>q^{-1},R^*W_j>q^{-1})}{\log \pr(R^*W_i>q^{-1})}=\alpha_W/\alpha_{R^*}=\phi_j/\alpha,
\end{equation}
and 
\begin{equation*}
    \lim_{q\rightarrow 0}\frac{\log \pr(R_*W_i>q^{-1},R_*W_j>q^{-1})}{\log \pr(R_*W_i>q^{-1})}=\alpha_W/\alpha_{R_*}=\phi_i/\alpha.
\end{equation*}

Therefore, $\eta_X\in [\phi_i/\alpha,\phi_j/\alpha]$ and $\chi_X=\chi_W=0$.

If $\phi_i/\alpha<\eta_W<\phi_j/\alpha$, we have $\alpha/\phi_j<1/\eta_W<\alpha/\phi_i$. Therefore, \eqref{eqn:max_joint_eta} holds for $R^*$ and \eqref{eqn:max_joint_eta1} holds for $R_*$, which proves $\eta_X\in [\eta_W,\phi_j/\alpha]$ and $\chi_X=0$.

\item When $\phi_i<\alpha<\phi_j$, write $\lambda_i=E(R_i^{\phi_i})$ and $\psi_j=(1-\alpha/\phi_j)/\{2C_\alpha\bar{\gamma}^\alpha_{i}\}$. By \eqref{eqn:marginal},
\begin{equation*}
    F_{X_i}^{-1}(1-q)\sim \lambda_i q^{-1}, \quad F_{X_j}^{-1}(1-q)\sim (\psi_j q)^{-\phi_j/\alpha}.
\end{equation*}

First, we assume $\boldsymbol{w}_i=\boldsymbol{w}_j$, i.e.,  $R_i=R_j$. Denote the distribution and density function of $R_i$ and $R_j$ as $F_R$ and $f_R$, and then
\begin{equation*}
    F_R= \text{Stable}\left(\alpha,1,\bar{\gamma},0\right),\quad \bar{\gamma}=\left\{\sum_{k\in \mathcal{C}_i}(w_{ki}\gamma_k)^\alpha\right\}^{\frac{1}{\alpha}}.
\end{equation*}
Thus,
\begin{small}
\begin{equation}\label{eqn:integral_case3}
\begin{split}
    \pr(X_i\geq F_{X_i}^{-1}(1-q), X_j\geq &F_{X_j}^{-1}(1-q))=\pr(R_i^{\phi_i}W_i>\lambda_i q^{-1}, R_j^{\phi_j}W_j>(\psi_j q)^{-\phi_j/\alpha})\\
    &=\int_0^{\infty} P\left(r^{\phi_i}W_i>\lambda_i q^{-1}, r^{\phi_j}W_j>(\psi_j q)^{-\phi_j/\alpha}\right)f_R(r)dr.
    % &=\pr(A_q)+P\left(\left. \frac{R_i^{\phi_i}}{\lambda_i}W_i>q^{-1}\right\rvert B_q\right)\pr(B_q)+P\left(\left.\frac{R_i^{\phi_i}}{\lambda_i}W_i>q^{-1}, \theta_j R_j^{\alpha}\frac{W_j^{\alpha/\phi_j}}{E(W_j^{\alpha/\phi_j})}>q^{-1}\right\rvert C_q\right)\pr(C_q).
\end{split}
\end{equation}
\end{small}

Since $q^{-1/\phi_j}>q^{-1/\alpha}$ for sufficiently small $q$, we can split the limits of the integral in \eqref{eqn:integral_case3} into $(0,(\theta_j^* q)^{-1/\alpha})$, $((\theta_j^* q)^{-1/\alpha},(\lambda_i q^{-1})^{1/\phi_i})$, and $((\lambda_i q^{-1})^{1/\phi_i},\infty)$.
\begin{enumerate}[label=(\roman*)]
    \item When $r\in ((\lambda_i q^{-1})^{1/\phi_i},\infty)$, we have $\lambda_i q^{-1}/r^{\phi_i}<1$ and $(\psi_j q)^{-\phi_j/\alpha}/r^{\phi_j}<1$. Thus,
    \begin{equation}\label{joint_case3_part1}
    \begin{split}
        \int_{(\lambda_i q^{-1})^{1/\phi_i}}^\infty& P\left(r^{\phi_i}W_i>\lambda_i q^{-1}, r^{\phi_j}W_j>(\psi_j q)^{-\phi_j/\alpha}\right)f_R(r)dr\\
     =&\pr(R_i>(\lambda_i q^{-1})^{1/\phi_i})\sim 2C_\alpha\bar{\gamma}^{\alpha}\lambda_i^{-\alpha/\phi_i}q^{\alpha/\phi_i}.
    \end{split}
    \end{equation}

    \item When $r\in ((\theta_j^* q)^{-1/\alpha},(\lambda_i q^{-1})^{1/\phi_i})$, $\lambda_i q^{-1}/r^{\phi_i}>1$ and $(\psi_j q)^{-\phi_j/\alpha}/r^{\phi_j}<1$. Thus,
   \begin{align*}
       &\int_{(\theta_j^* q)^{-1/\alpha}}^{(\lambda_i q^{-1})^{1/\phi_i}} P\left(r^{\phi_i}W_i>\lambda_i q^{-1}, r^{\phi_j}W_j>(\psi_j q)^{-\phi_j/\alpha}\right)f_R(r)dr\\
     &=\int_{(\theta_j^* q)^{-1/\alpha}}^{(\lambda_i q^{-1})^{1/\phi_i}} \pr(r^{\phi_i}W_i>\lambda_i q^{-1})f_R(r)dr=\lambda_i^{-1}q \int_{(\theta_j^* q)^{-1/\alpha}}^{(\lambda_i q^{-1})^{1/\phi_i}} r^{\phi_i}f_R(r)dr\numberthis\label{joint_case3_part2}\\
     &=\lambda_i^{-1}q \int_{(\theta_j^* q)^{-1/\alpha}}^{(\lambda_i q^{-1})^{1/\phi_i}} \frac{1}{\pi\bar{\gamma}}\sum_{m=1}^\infty \frac{\Gamma(m\alpha+1)\sin(m\pi\alpha)(-1)^{m+1}}{m!}\left(\cos\frac{\pi\alpha}{2}\right)^{-m}\frac{r^{\phi_i-m\alpha-1}}{\bar{\gamma}^{-m\alpha-1}}dr\\
    %  &=\frac{\lambda_i^{-1}q}{\pi}\sum_{m=1}^\infty \frac{\Gamma(m\alpha+1)\sin(m\pi\alpha)(-1)^{m+1}}{m!}\left(\cos\frac{\pi\alpha}{2}\right)^{-m}\left\{\frac{(\theta_j^* q)^{m-\phi_i/\alpha}}{m\alpha-\phi_i}-\frac{(\lambda_i^{-1}q)^{m\alpha/\phi_i-1}}{m\alpha-\phi_i}\right\}\\
     &\sim 2\Gamma(\alpha)\sin(\pi\alpha/2)\bar{\gamma}^{\alpha}\frac{\lambda_i^{-1}\psi_j^{1-\phi_i/\alpha}}{\pi(1-\phi_i/\alpha)}q^{2-\phi_i/\alpha}=\frac{\lambda_i^{-1}\psi_j^{-\phi_i/\alpha}(1-\alpha/\phi_j)}{1-\phi_i/\alpha}q^{2-\phi_i/\alpha}.
   \end{align*}
   The penultimate line uses the series expansion for the stable density when $\alpha\neq 1$; see \citet[Chapter 2]{zolotarev1986one}.
    
    \item When $r\in (0,(\theta_j^* q)^{-1/\alpha})$, we have
    \begin{equation}\label{joint_case3_part3}
       \begin{split}
     &\int_0^{(\theta_j^* q)^{-1/\alpha}} P\left(r^{\phi_i}W_i>\lambda_i q^{-1}, r^{\phi_j}W_j>(\psi_j q)^{-\phi_j/\alpha}\right)f_R(r)dr\\
     &\leq \int_0^{(\theta_j^* q)^{-1/\alpha}}\exp\left[-\frac{\log(\lambda_i q^{-1}/r^{\phi_i})+\log\{(\psi_j q)^{-\phi_j/\alpha}/r^{\phi_j}\}}{1+\rho}\right] f_R(r)dr\\
     &=(\psi_j^{-\phi_j/\alpha}\lambda_i)^{-1/(1+\rho)}q^{(1+\phi_j/\alpha)/(1+\rho)}\int_0^{(\theta_j^* q)^{-1/\alpha}}r^{(\phi_i+\phi_j)/(1+\rho)}f_R(r)dr.
     \end{split}
   \end{equation}
    Note that $r^{(\phi_i+\phi_j)/(1+\rho)}f_R(r)\rightarrow 0$ as $r\rightarrow 0$, and $r^{(\phi_i+\phi_j)/(1+\rho)}f_R(r)\sim r^{(\phi_i+\phi_j)/(1+\rho)-\alpha-1}$ as $r\rightarrow \infty$. By Karamata's Theorem \citep[p.17]{resnick2013extreme}, $\int_0^x r^{(\phi_i+\phi_j)/(1+\rho)}f_R(r)dr\in RV_{(\phi_i+\phi_j)/(1+\rho)-\alpha}$ when $(\phi_i+\phi_j)/(1+\rho)\geq \alpha$. When $(\phi_i+\phi_j)/(1+\rho)<\alpha$, $\int_0^x r^{(\phi_i+\phi_j)/(1+\rho)}f_R(r)dr< E\{R_i^{(\phi_i+\phi_j)/(1+\rho)}\}<\infty$. Apply this result to the right-hand side of \eqref{joint_case3_part3} to get
    \begin{equation*}
       \begin{split}
     &\int_0^{(\theta_j^* q)^{-1/\alpha}} P\left(r^{\phi_i}W_i>\lambda_i q^{-1}, r^{\phi_j}W_j>(\psi_j q)^{-\phi_j/\alpha}\right)f_R(r)dr\\
     &\leq \begin{cases}
     L_R(q^{-1})q^{(1-\phi_i/\alpha)/(1+\rho)+1},&\text{if }(\phi_i+\phi_j)/(2\eta_W)\geq \alpha,\\
     C_R q^{(1+\phi_j/\alpha)/(1+\rho)},&\text{if }(\phi_i+\phi_j)/(2\eta_W)< \alpha,
     \end{cases}
     \end{split}
   \end{equation*}
   where $L_R\in RV_0$ and $C_R=(\psi_j^{-\phi_j/\alpha}\lambda_i)^{-1/(1+\rho)}E\{R_i^{(\phi_i+\phi_j)/(1+\rho)}\}$.
    
    On the other hand, 
 \begin{equation}\label{joint_case3_part4}
       \begin{split}
     \int_0^{(\theta_j^* q)^{-1/\alpha}} &P\left(r^{\phi_i}W_i>\lambda_i q^{-1}, r^{\phi_j}W_j>(\psi_j q)^{-\phi_j/\alpha}\right)f_R(r)dr\\
     \geq &\int_0^{(\theta_j^* q)^{-1/\alpha}} P\left(r^{\phi_i}W_i>\lambda_i q^{-1}\right)P\left(r^{\phi_j}W_j>(\psi_j q)^{-\phi_j/\alpha}\right)f_R(r)dr\\
     =&\psi_j^{\phi_j/\alpha}\lambda_i^{-1}q^{1+\phi_j/\alpha}\int_0^{(\theta_j^* q)^{-1/\alpha}}r^{\phi_i+\phi_j}f_R(r)dr=\Tilde{L}_R(q^{-1})q^{2-\phi_i/\alpha},
     \end{split}
   \end{equation}
    in which $\Tilde{L}_R\in RV_0$.
\end{enumerate}
Combine the results from \eqref{joint_case3_part1} - \eqref{joint_case3_part4} under the assumption that $\boldsymbol{w}_i=\boldsymbol{w}_j$, and we can obtain lower and upper bounds of the regularly varying index for $\pr(X_i\geq F_{X_i}^{-1}(1-q), X_j\geq F_{X_j}^{-1}(1-q))$, which is also known as $\eta_X$.

When $\boldsymbol{w}_i\neq \boldsymbol{w}_j$ and $\mathcal{C}_i\cap \mathcal{C}_j\neq \emptyset$,
\begin{footnotesize}
\begin{equation*}
    \begin{split}
        P\left[\left(\sum_k w_{k,\wedge}S_k\right)^{\phi_i}W_i>\right.&\left.\lambda_i q^{-1},\left(\sum_k w_{k,\wedge}S_k\right)^{\phi_j}W_j>(\psi_j q)^{-\phi_j/\alpha}\right]\\
        &\leq \pr(R_i^{\phi_i}W_i>\lambda_i q^{-1}, R_j^{\phi_j}W_j>(\psi_j q)^{-\phi_j/\alpha})\\
        &\leq P\left[\left(\sum_k w_{k,\vee}S_k\right)^{\phi_i}W_i>\lambda_i q^{-1},\left(\sum_k w_{k,\vee}S_k\right)^{\phi_j}W_j>(\psi_j q)^{-\phi_j/\alpha}\right],
    \end{split}
\end{equation*}
\end{footnotesize}
whose bounds can be dealt with the results we just obtained while assuming $\boldsymbol{w}_i=\boldsymbol{w}_j$. Since the bounds for $\eta_X$ did not depend on the weights, they stay the same for when $\boldsymbol{w}_i\neq \boldsymbol{w}_j$.
\end{enumerate}
\end{proof}

\section{MCMC details}\label{sec:Appendix MCMC}

Conditioning on the scaling variable $\boldsymbol{S}_t$ at the knots, we define the hierarchical model as
\begin{align*}\label{eq:likelihood}
    \mathcal{L}(\boldsymbol{Y}_t|\boldsymbol{\theta}_{\boldsymbol{Y}_t}, \boldsymbol{S}_t, \boldsymbol{\gamma}, \boldsymbol{\phi}, \boldsymbol{\rho}) &= \varphi_D(\boldsymbol{Z}_t)\left\lvert\frac{\partial\boldsymbol{Z}_t}{\partial \boldsymbol{Y}_t}\right\rvert \\
    \boldsymbol{S}_{t} \given \boldsymbol{\gamma} &\sim \text{Stable}(\alpha = 0.5, 1, \boldsymbol{\gamma}, \delta = 0) \\
    \phi_k & \iid \text{Beta}(5,5), \\
    \rho_k & \iid \text{halfNormal}(0,2), \\
    k &= 1, ..., K
\end{align*}
in which $\varphi_D$ is the $D$-variate Gaussian density function with covariance matrix $\boldsymbol{\Sigma_{\rho}}$ and $\partial\boldsymbol{Z}_t/\partial \boldsymbol{Y}_t$ is the Jacobian. 
Using the inverse function theorem that the derivative of $F^{-1}(t)$ is equal to $1/F'(F^{-1}(t))$ as long as $F'(F^{-1}(t))\neq 0$, the Jacobian is a diagonal matrix with elements
\begin{align*}
    \frac{\partial Z_{tj}}{\partial Y_{tj}} 
    &= \frac{1}{\varphi\left\{\Phi^{-1}\left(1-\frac{1}{X_{tj}/R_{tj}^{\phi_j}+1}\right)\right\}}\cdot \frac{1}{(X_{tj}/R_{tj}^{\phi_j}+1)^2R_{tj}^{\phi_j}}\cdot \frac{\partial X_{tj}}{\partial Y_{tj}}\\
    &=\frac{1}{\varphi(Z_{tj})}\cdot \frac{1}{(X_{tj}/R_{tj}^{\phi_j}+1)^2R_{tj}^{\phi_j}}\cdot \frac{f_Y(Y_{tj})}{f_X(X_{tj})},
\end{align*}

\noindent in which $\varphi$ and $\Phi$ are, respectively, the density and the distribution function of a univariate standard Gaussian, $X_{tj} = F_X^{-1}\circ F_Y(Y_{tj})$, $Z_{tj} = g^{-1}(X_{tj}/R_{tj}^{\phi_j})=\Phi^{-1}(1- 1/(X_{tj}/R_{tj}^{\phi_j} + 1))$, $f_Y$ is the marginal density of the observed data distribution, and $f_X$ is the univariate density of the dependence model derived in \eqref{eqn:pdf_X(s)}.
Then, as we have assumed temporal independence by introducing marginal temporal parameter, likelihoods across the independent time replicates are multiplied together for the joint likelihood.

% \BAS{Let's try to move this back to the main text if we can.}

\section{Additional Results on Simulation Study}\label{sec:Appendix Simulation}

\begin{figure}[H]
    \centering
    \begin{minipage}[b]{0.197\textwidth}
        \centering
        \includegraphics[width=\linewidth]{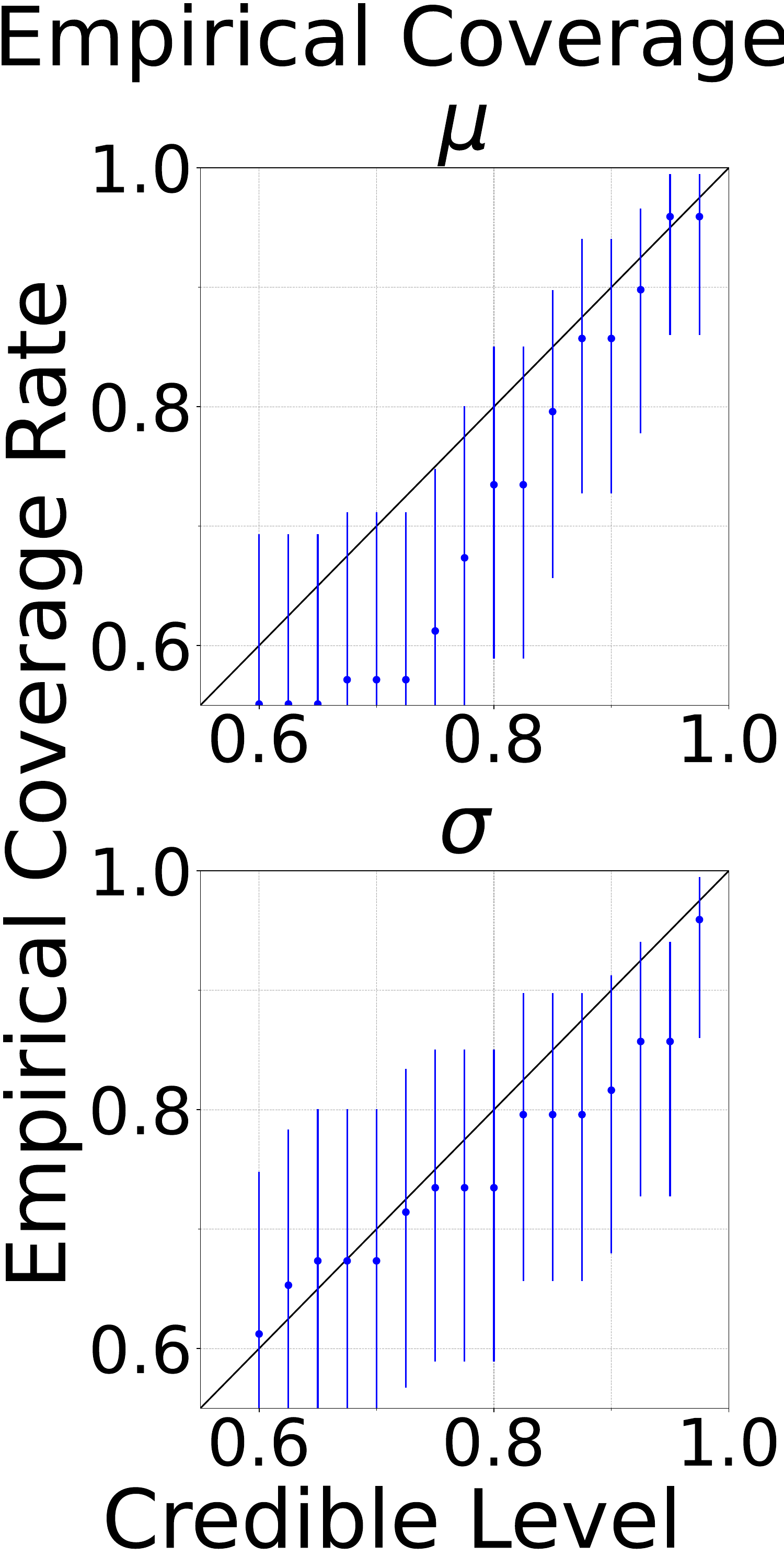}
    \end{minipage}
    % \hfill
    \begin{minipage}[b]{0.393\textwidth}
        \centering
        \includegraphics[width=\linewidth, clip=true, trim=0 0 8 0]{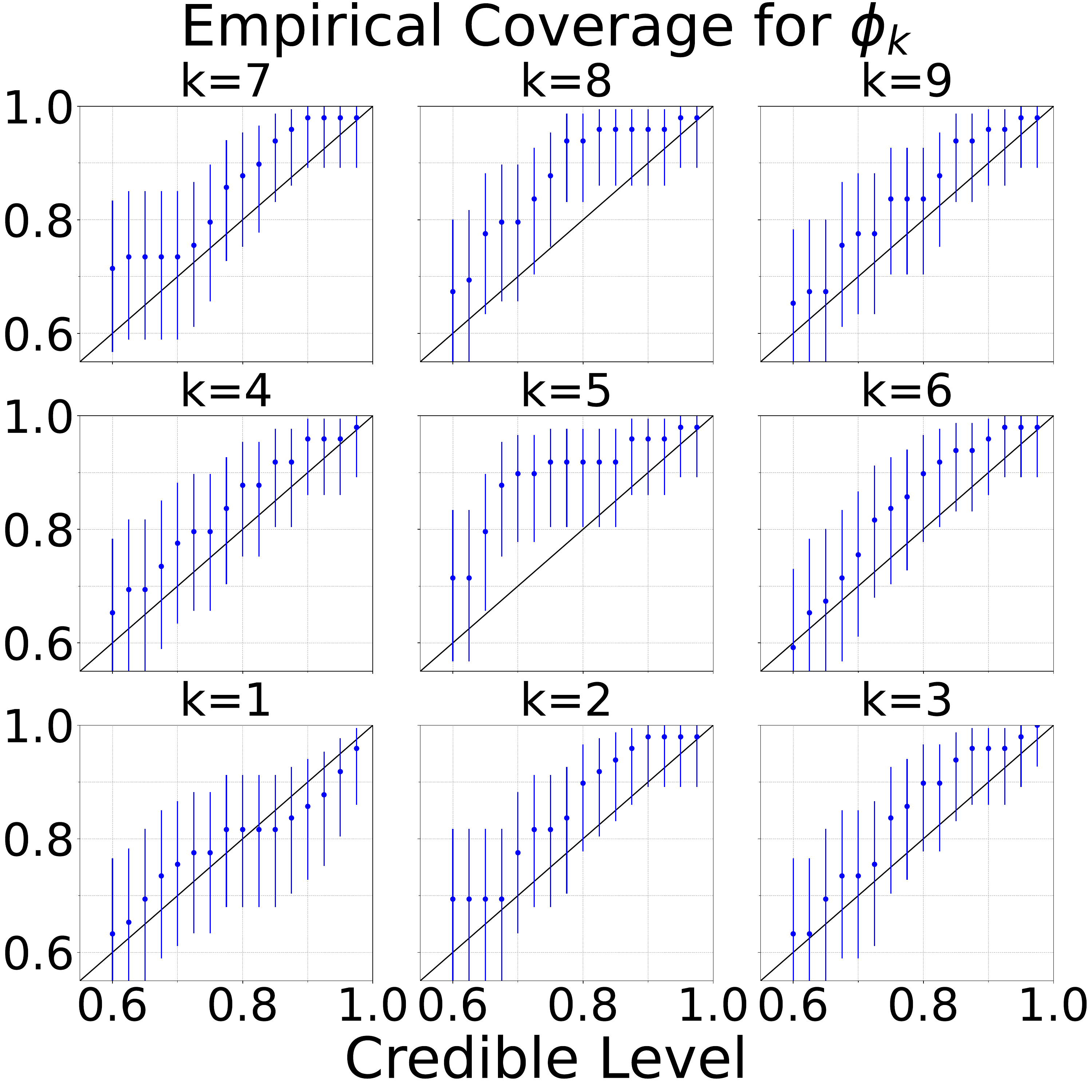}
    \end{minipage}
    % \hfill
    \begin{minipage}[b]{0.393\textwidth}
        \centering
        \includegraphics[width=\linewidth, clip=true, trim=8 0 0 0]{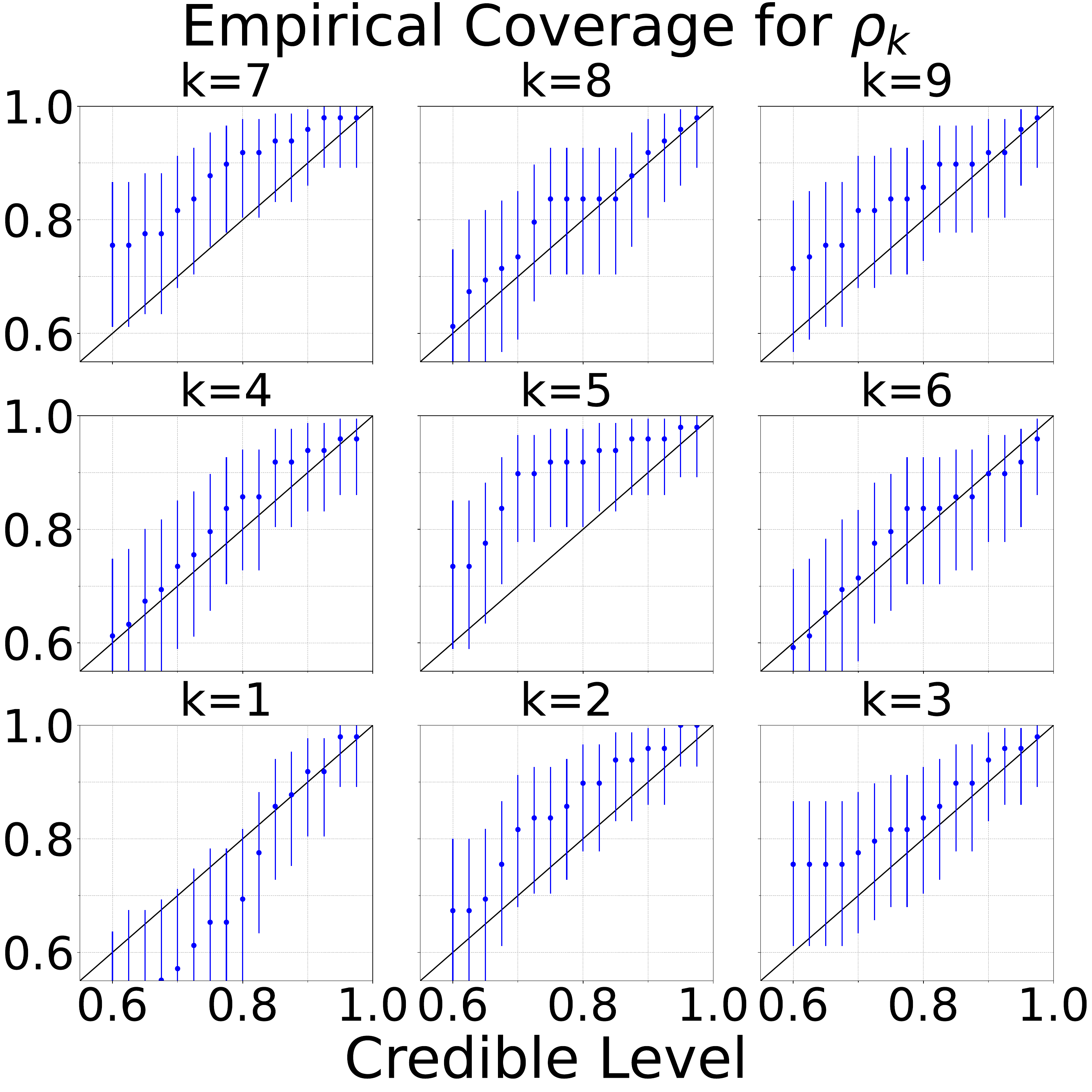}
    \end{minipage}
    \caption{Empirical coverage rates of credible intervals of the marginal parameters $\mu$ and $\sigma$ (left), the dependence parameters $\phi_k$, $k=1,\ldots, 9$ (middle), and $\rho_k$, $k = 1, \dots, 9$ (right), in simulation scenario 2.}
    \label{fig:scenario2_coverage}
\end{figure}

\begin{figure}[H]
    \centering
    \begin{minipage}[b]{0.197\textwidth}
        \centering
        \includegraphics[width=\linewidth]{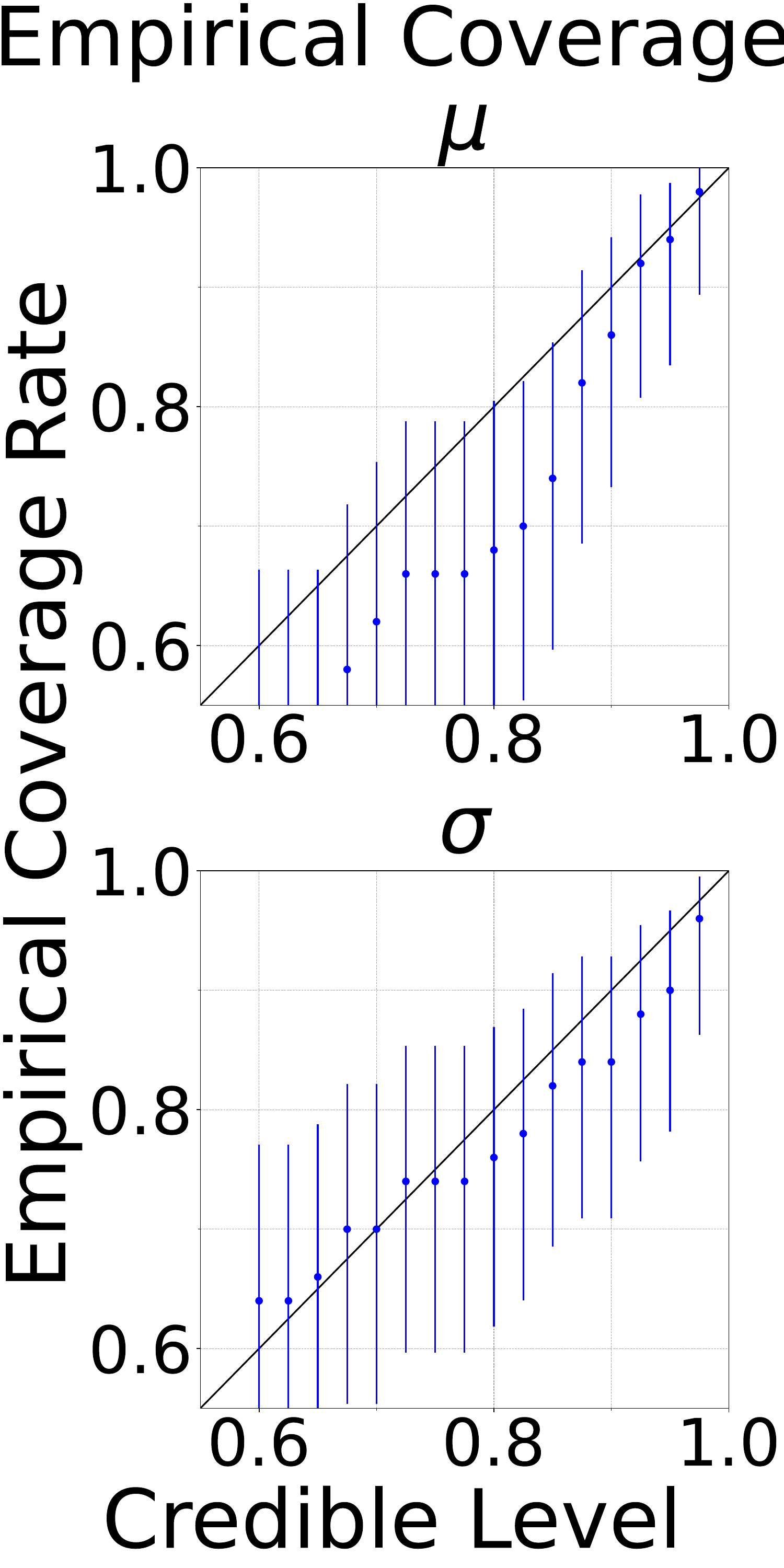}
    \end{minipage}
    % \hfill
    \begin{minipage}[b]{0.393\textwidth}
        \centering
        \includegraphics[width=\linewidth, clip=true, trim=0 0 8 0]{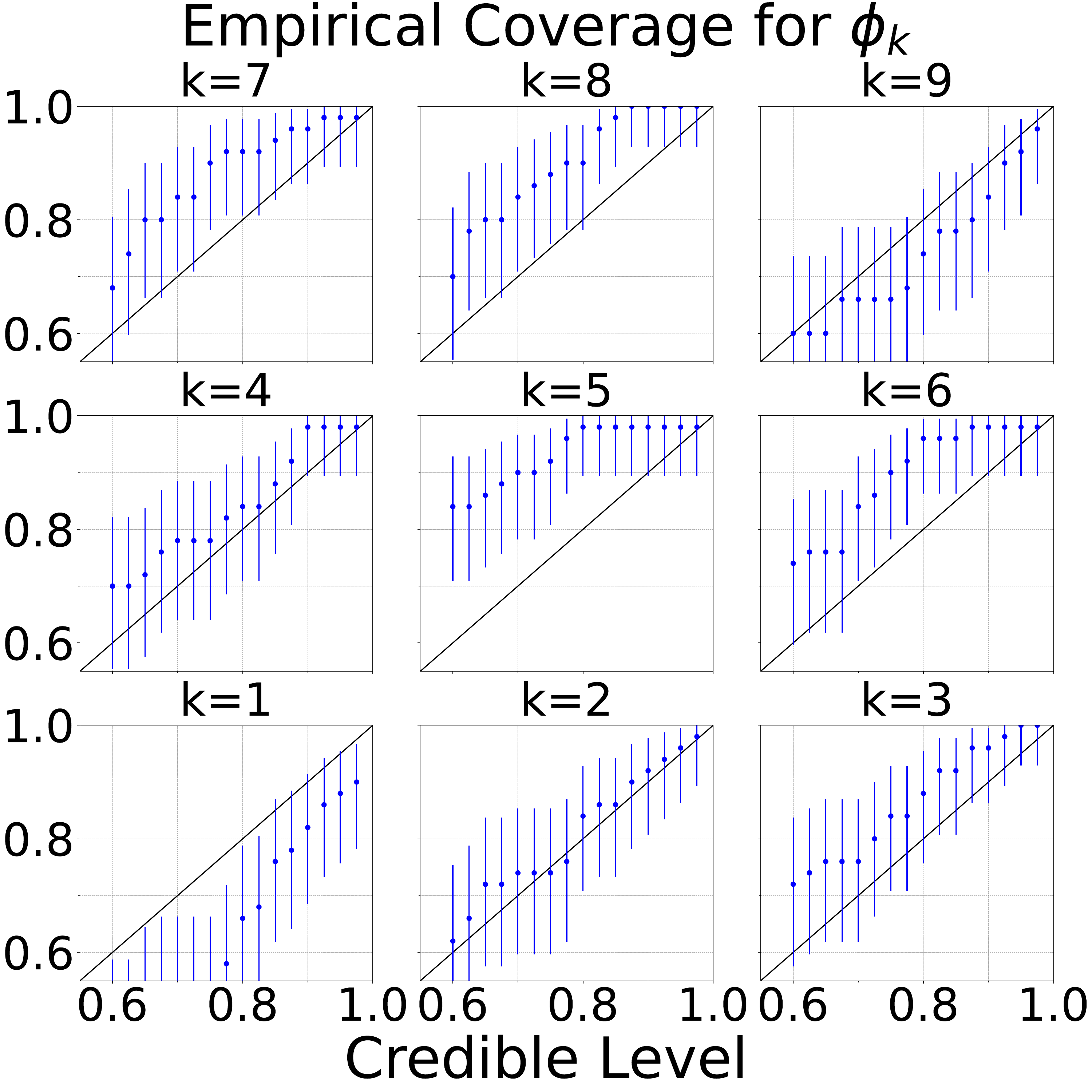}
    \end{minipage}
    % \hfill
    \begin{minipage}[b]{0.393\textwidth}
        \centering
        \includegraphics[width=\linewidth, clip=true, trim=8 0 0 0]{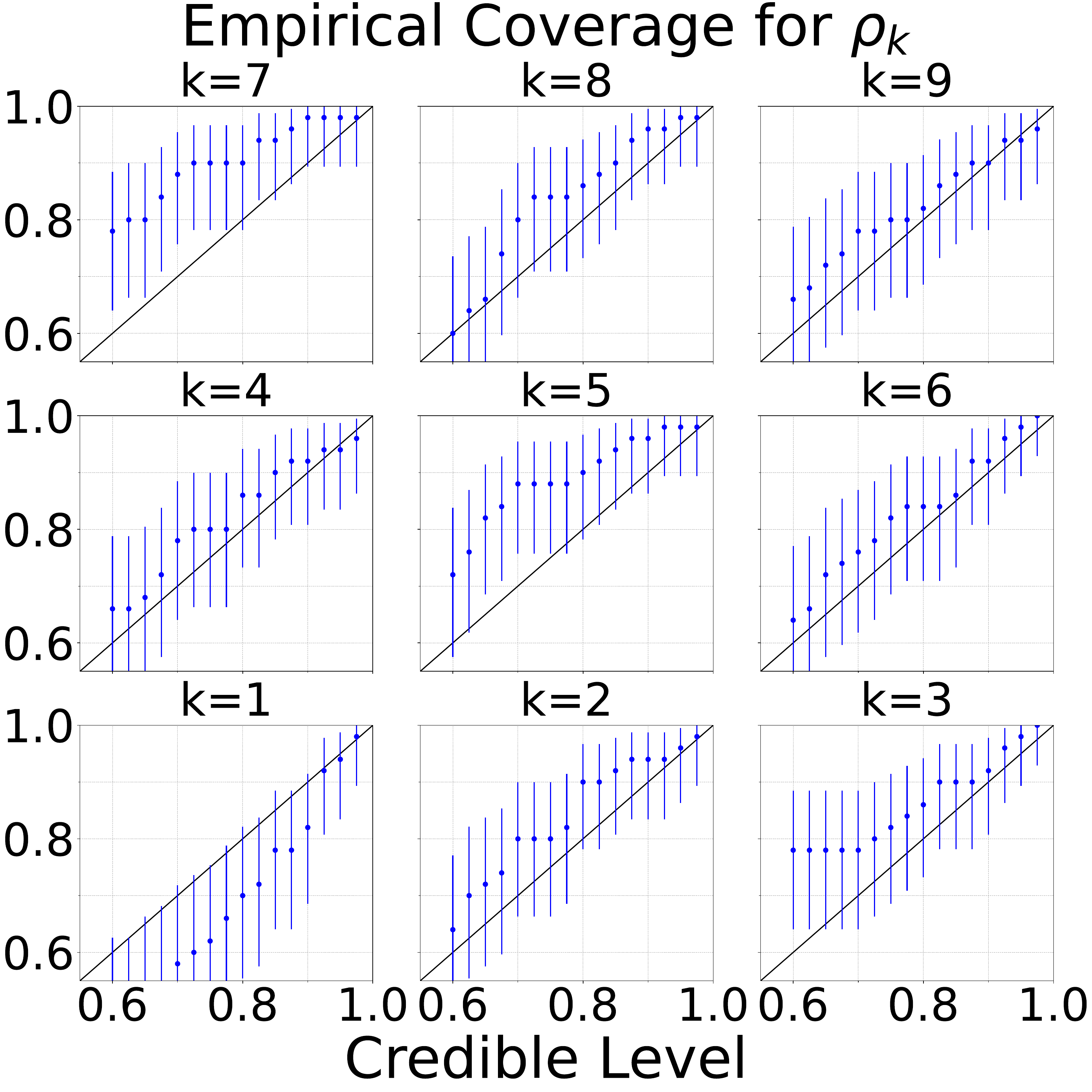}
    \end{minipage}
    \caption{Empirical coverage rates of credible intervals of the marginal parameters $\mu$ and $\sigma$ (left), the dependence parameters $\phi_k$, $k=1,\ldots, 9$ (middle), and $\rho_k$, $k = 1, \dots, 9$ (right), in simulation scenario 3.}
    \label{fig:scenario3_coverage}
\end{figure}

\section{Justification for Marginal Model Choices}\label{sec:Appendix_GEV}

\revise{To build our model for the marginal GEV parameters $\mu$, $\sigma$, and $\xi$, we perform a suite of exploratory analyses.  First, we fit GEV distributions by maximum likelihood to the annual maxima at each station, independently.  Because the resulting fits are very noisy, to reveal any underlying trends we next smooth the estimated parameters using a bivariate cubic spline as implemented in \texttt{Scipy}'s \texttt{SmoothBivariateSpline} \citep{Scipy}.}

\revise{In Figure  \ref{fig:GEV-elevation}, we compare the smoothed surfaces of the MLEs of $\log(\sigma)$ and $\xi$ with the negative of elevation at each observation location.  We observe similar spatial structure between smoothed estimates of $\log(\sigma)$ and elevation (there is not much spatial structure in the smoothed estimates of $\xi$ at all), so we conclude that elevation is sufficient to use as the sole covariate for $\log(\sigma)$ and $\xi$. Furthermore, consistent with previous work on precipitation in this region showing no gain from allowing $\sigma$ and $\xi$ to vary temporally \citep[e.g.][]{Westra2013Global,risser2019probabilistic}, we specify a time trend in only the location parameter $\mu$. }

\revise{To model the temporal trend in $\mu$, previous studies have used either the natural logarithm of CO$_2$ concentrations \citep{risser2017attributable}, the radiative forcing associated with CO$_2$ concentrations \citep[which is logarithmic in concentrations; see][]{etminan2016radiative}, or the sum-total radiative forcing from the five primary greenhouse gases \citep{risser2024anthropogenic}, making any one of these a natural choice. However, each of these variables is nearly linear over our study period of 1949--2023 (correlations in excess of 0.987; see the figure below). Therefore, our results would be the same regardless of the specific covariate used. }

\revise{Additionally, we could have used other covariates in the time trend model that describe the effect of large-scale climate drivers on extreme precipitation in the central US, e.g., the El Nino/Southern Oscillation. However, previous work has shown that these drivers have a minimal effect on seasonal extreme precipitation in the central US \citep[see Figure 1 of][]{risser2021quantifying}}

\begin{figure}[h]
\centering
\includegraphics[width=0.9\textwidth]{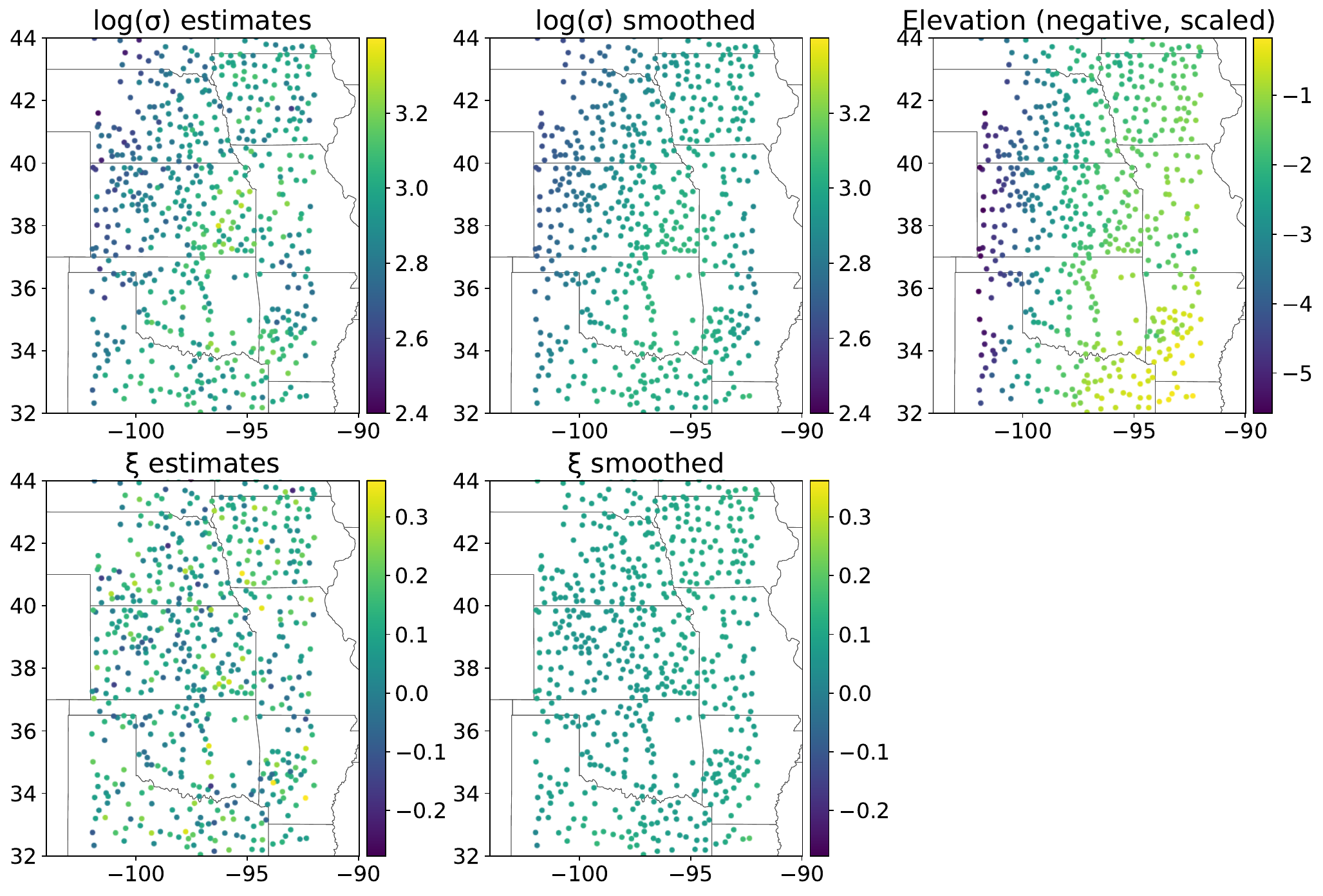}
\caption{\revise{Estimates and smoothed surfaces of maximum likelihood estimates of $\log(\sigma)$ and $\xi$, fit to annual maxima at each station independently.  The smoothed estimates of $\log(\sigma)$ show similar spatial patterns as the negative of scaled elevation at each station (rightmost panel).}}
\label{fig:GEV-elevation}
\end{figure}

\begin{figure}[H]
\centering
\includegraphics[width=0.9\textwidth]{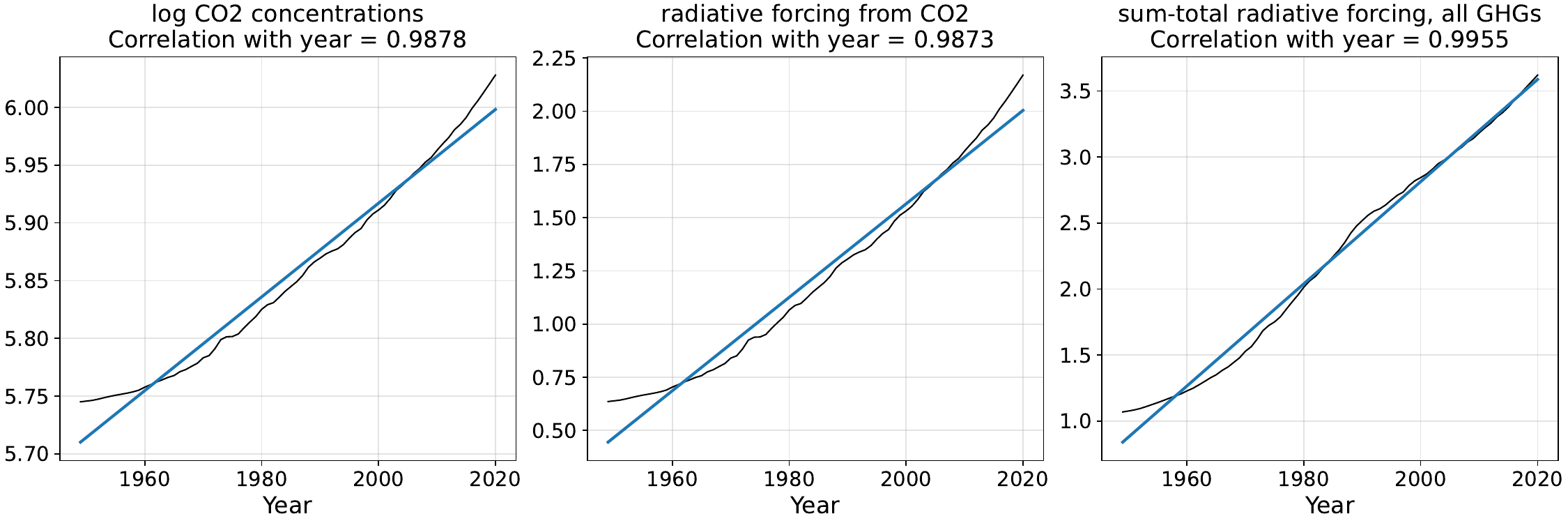}
\caption{\revise{CO$_2$ concentrations (and its functions) against time in black curve. Least square fit in Blue.}}
\label{fig:co2}
\end{figure}

\section{\revise{Confidence Intervals for the Empirical and Model-based  $\hat{\chi}$}}\label{sec:Appendix_chi}

\revise{The figures below show approximate confidence intervals for the empirical and model realized $\chi$ surfaces. Across spatial lags, the model’s $\chi$ surfaces generally fall within the empirical confidence bands at higher quantiles (e.g., $u$=0.99), suggesting good agreement in the upper tail. At lower quantiles (e.g., $u$=0.9), however, the model does seem to estimate slightly higher dependence. The model's 95\% confidence interval for $\chi$ does seems to overlap with the intervals from the empirical estimates.  Empirical $\hat{\chi}$ plots are in the \textit{left} column, and those from model realizations are in the \textit{right} column.}
\begin{adjustwidth}{-1.5cm}{-1.5cm}
\begin{center}
    \includegraphics[width=0.56\textwidth]{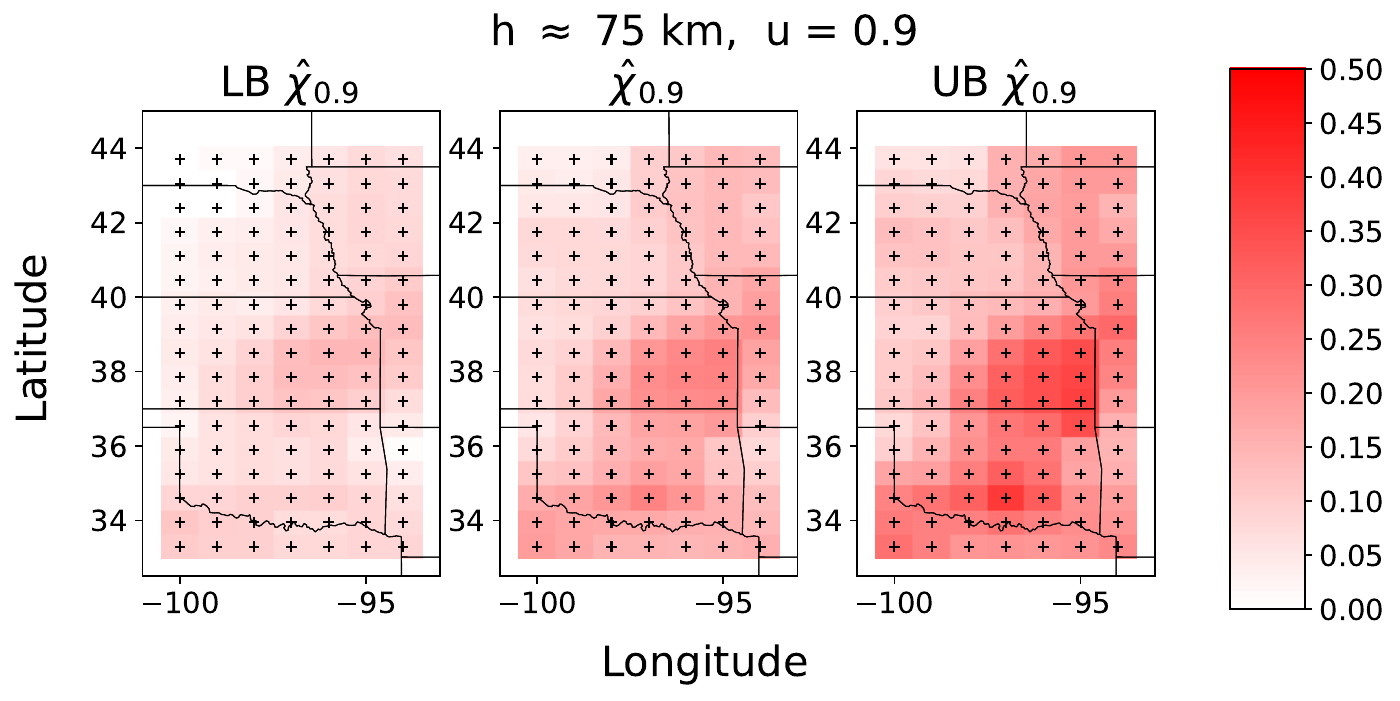}
    \includegraphics[width=0.56\textwidth]{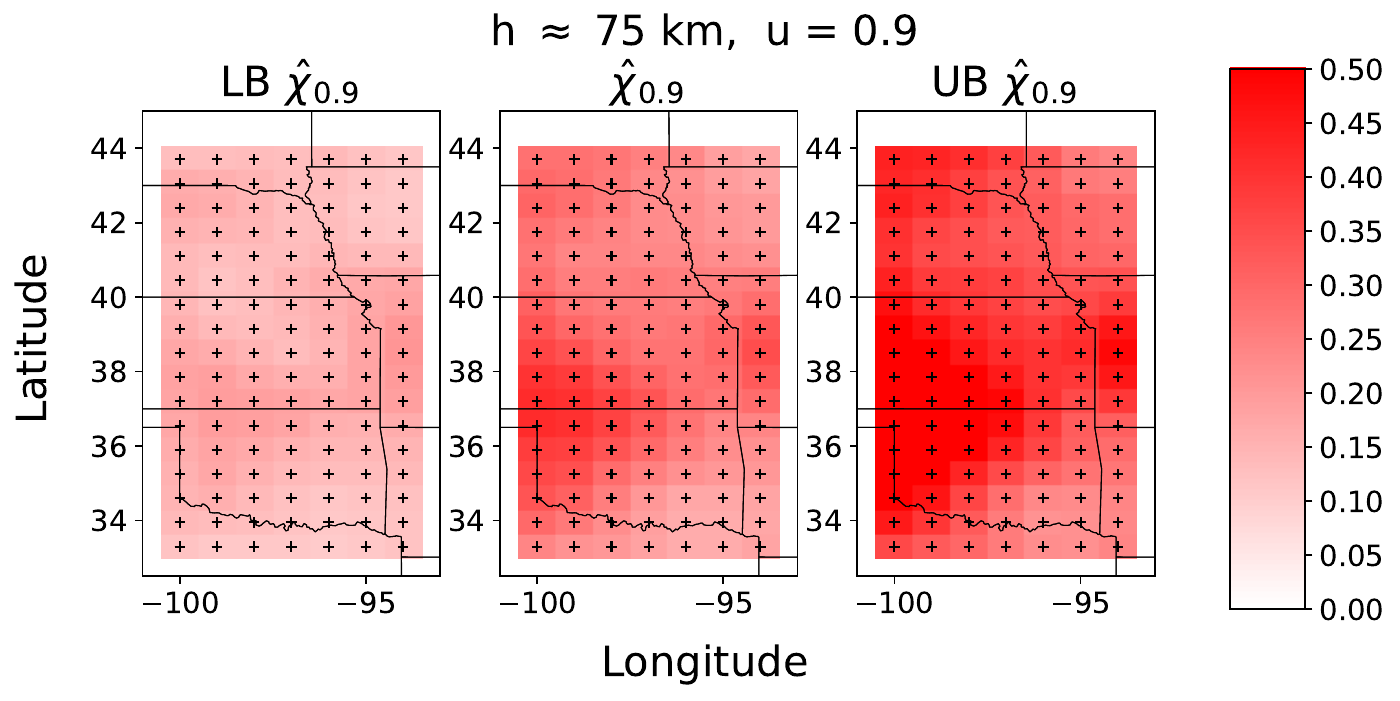}
\end{center}
\begin{center}
    \includegraphics[width=0.56\textwidth]{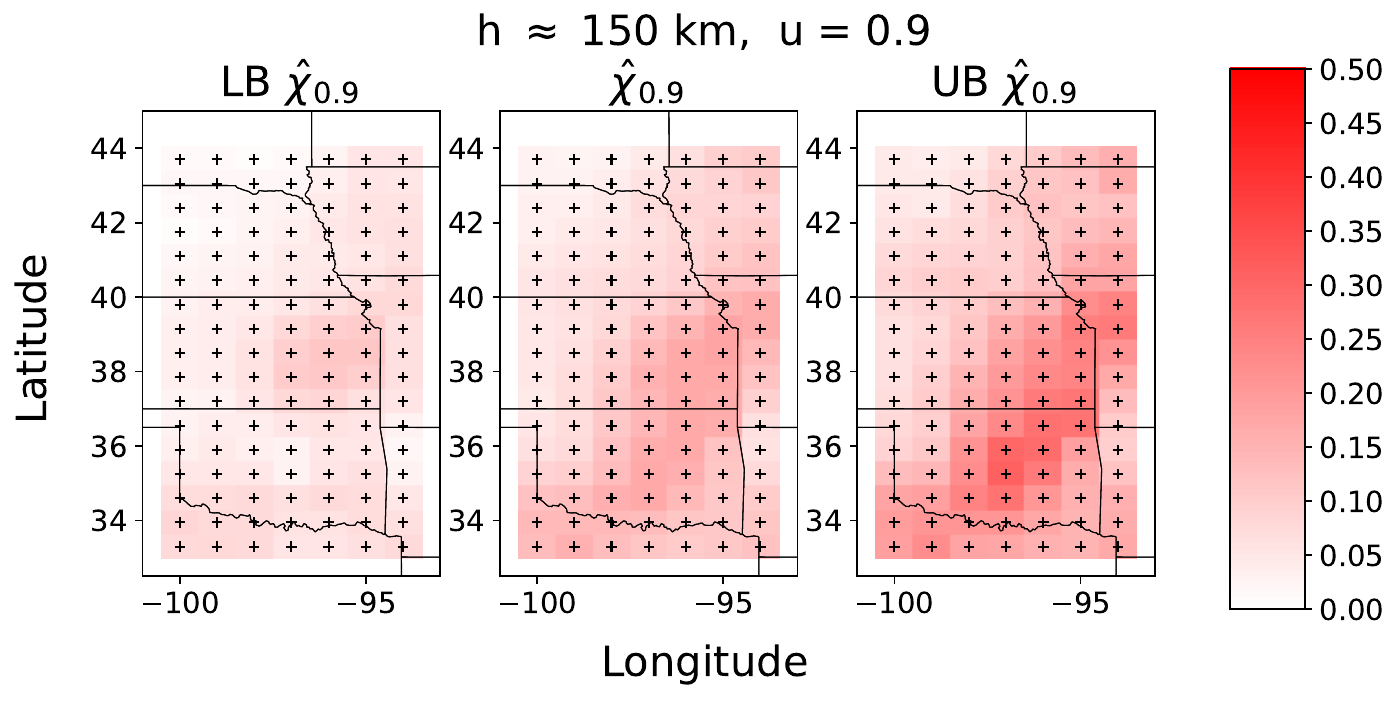}
    \includegraphics[width=0.56\textwidth]{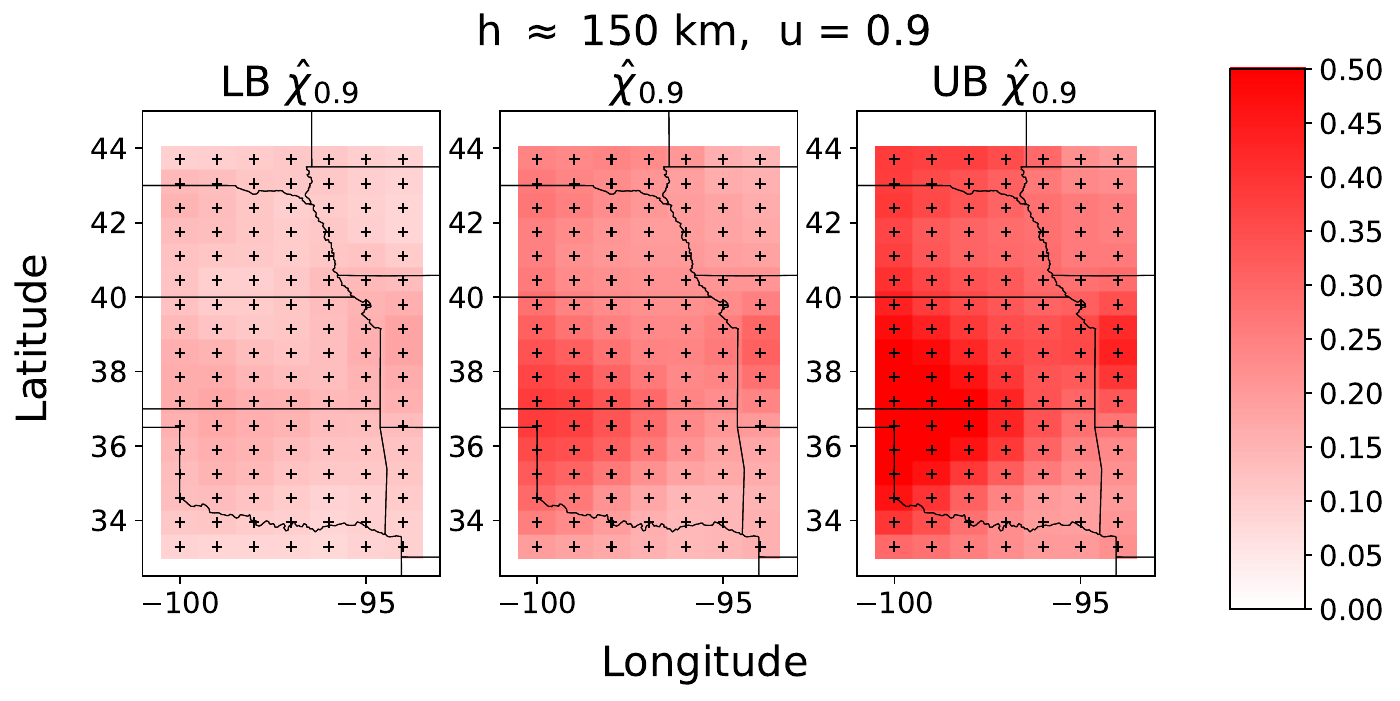}
\end{center}
\begin{center}
    \includegraphics[width=0.56\textwidth]{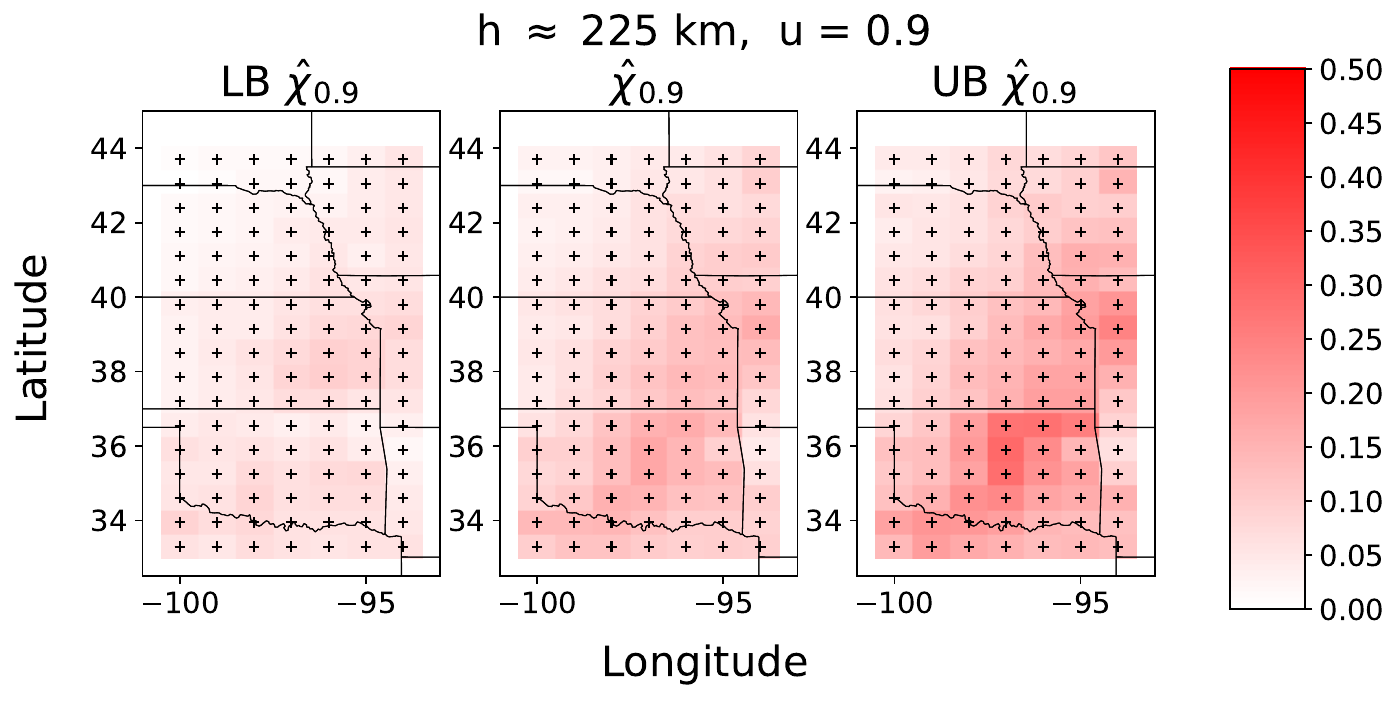}
    \includegraphics[width=0.56\textwidth]{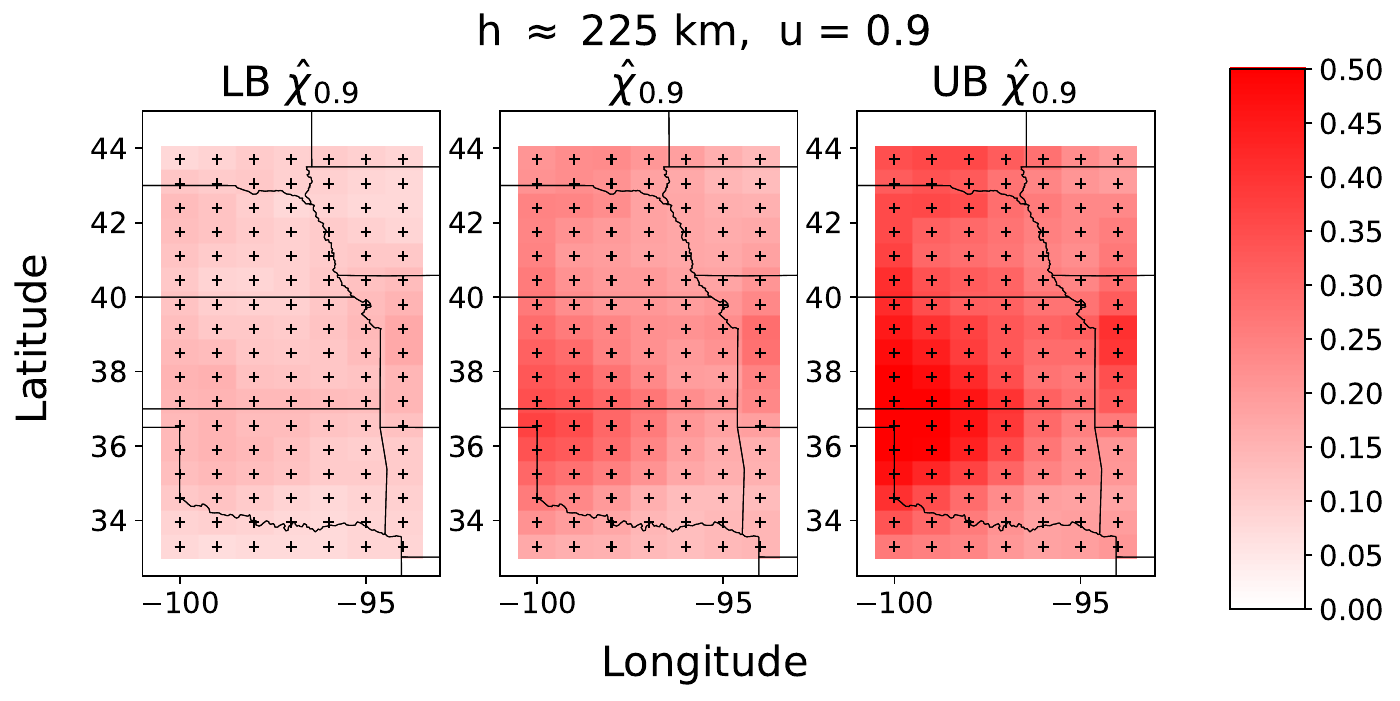}
\end{center}
\begin{center}
    \includegraphics[width=0.56\textwidth]{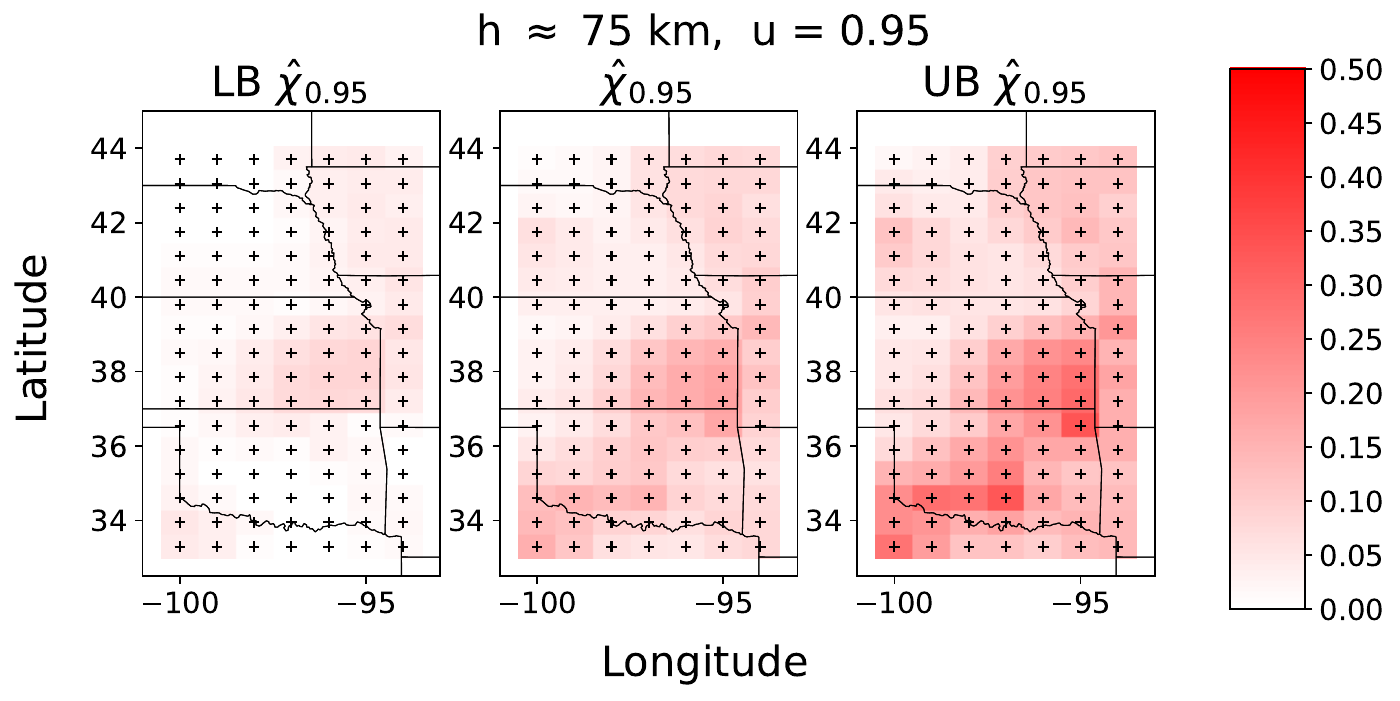}
    \includegraphics[width=0.56\textwidth]{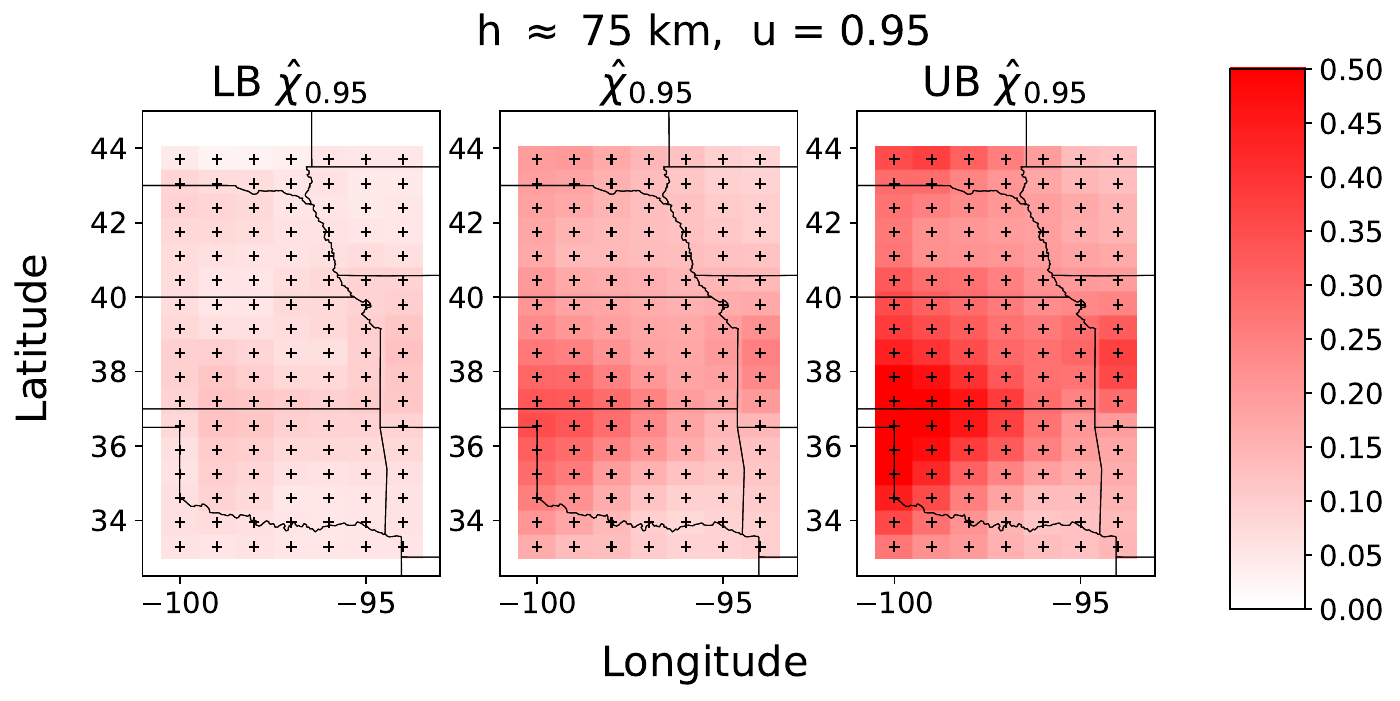}
\end{center}
\begin{center}
    \includegraphics[width=0.56\textwidth]{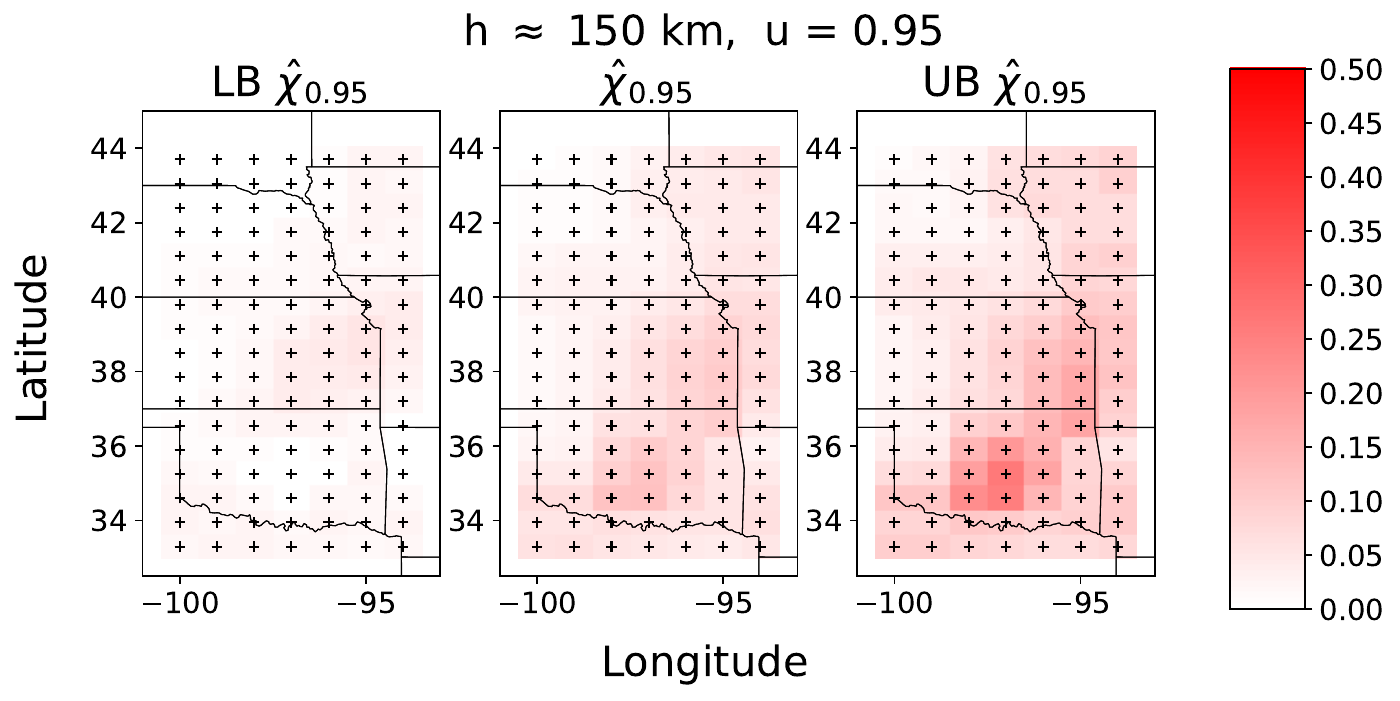}
    \includegraphics[width=0.56\textwidth]{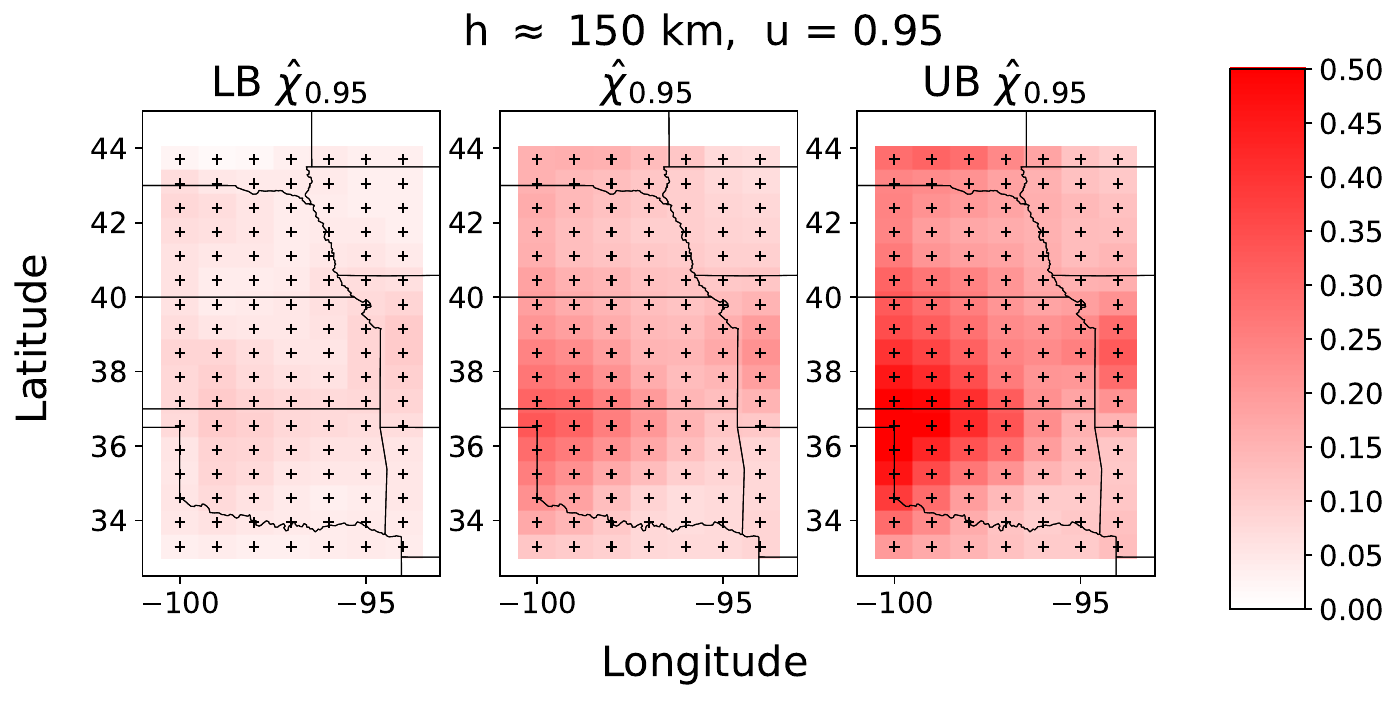}
\end{center}
\begin{center}
    \includegraphics[width=0.56\textwidth]{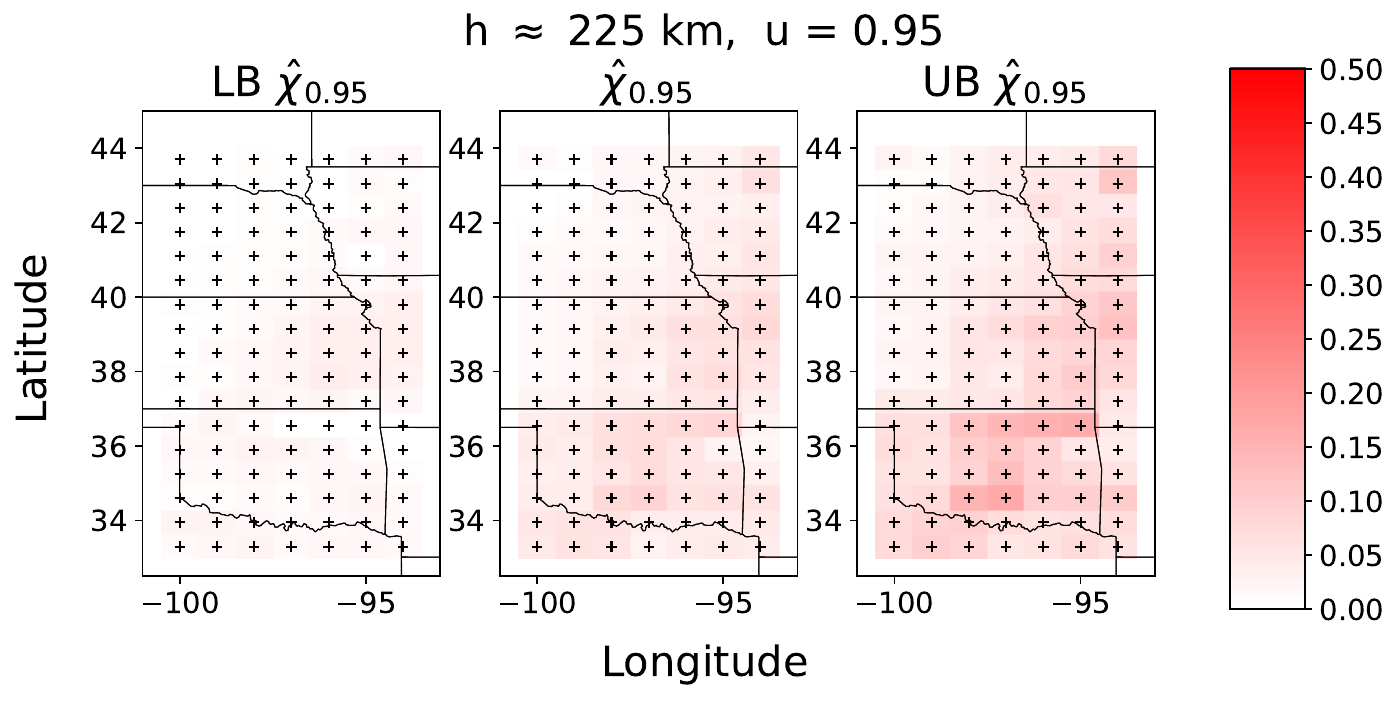}
    \includegraphics[width=0.56\textwidth]{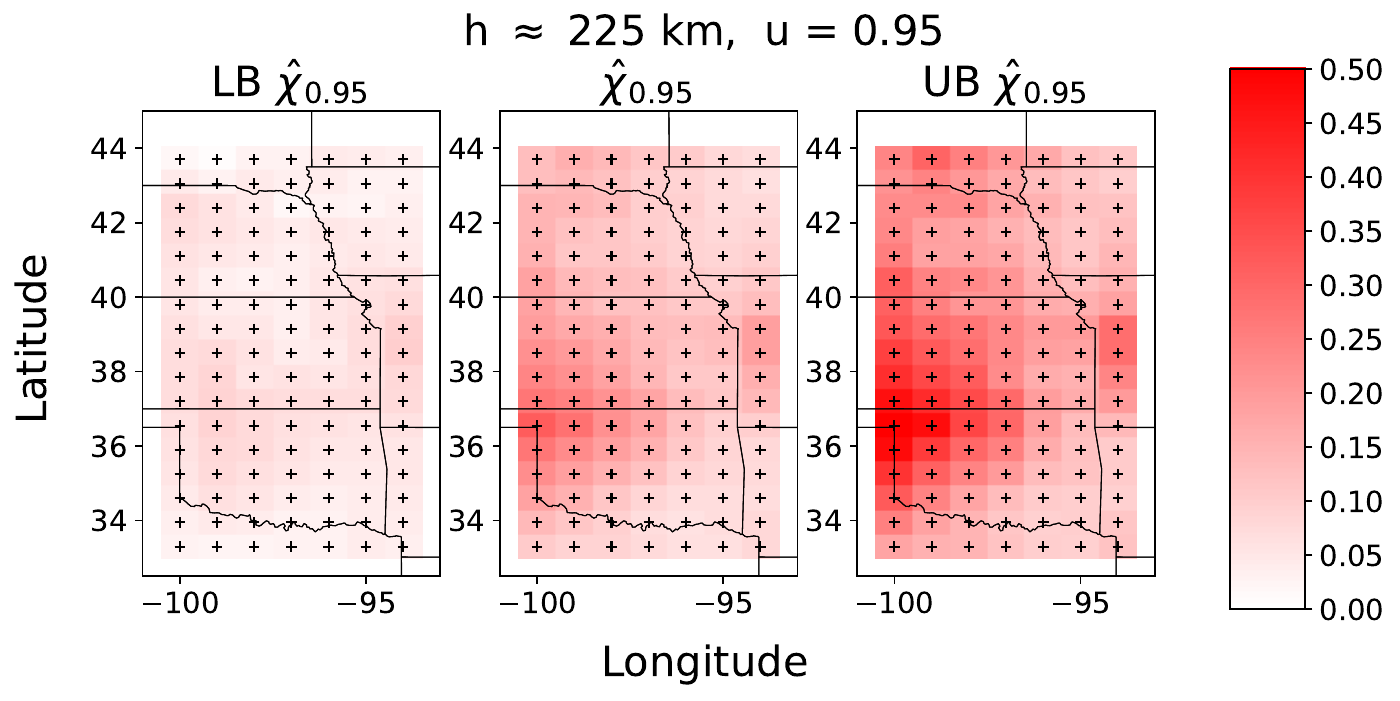}
\end{center}
\begin{center}
    \includegraphics[width=0.56\textwidth]{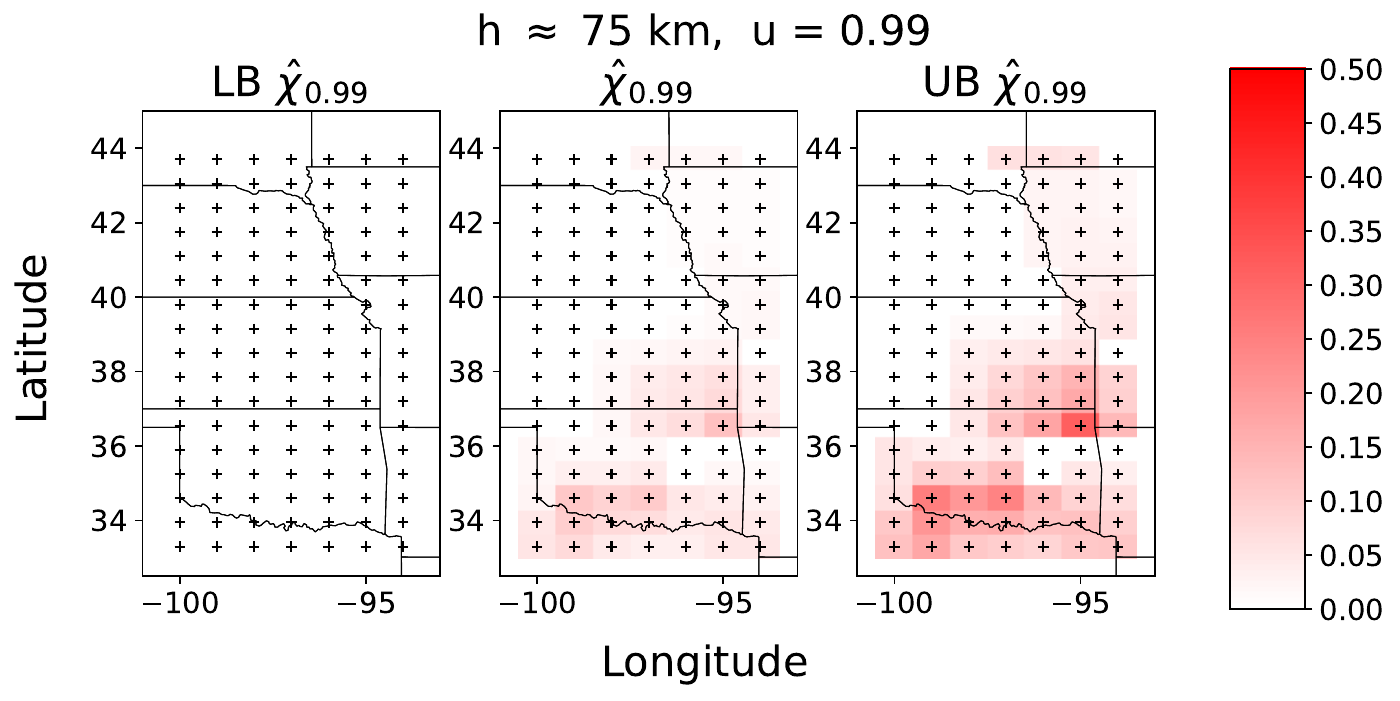}
    \includegraphics[width=0.56\textwidth]{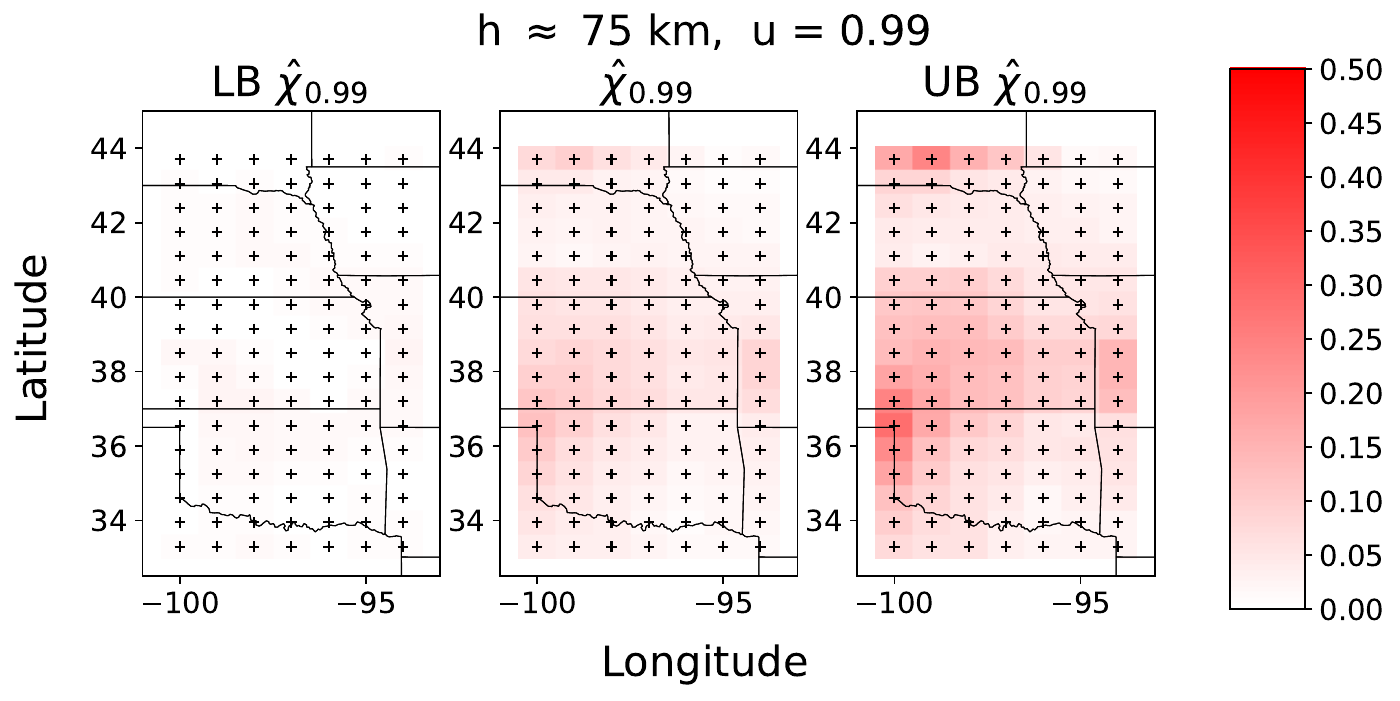}
\end{center}
\begin{center}
    \includegraphics[width=0.56\textwidth]{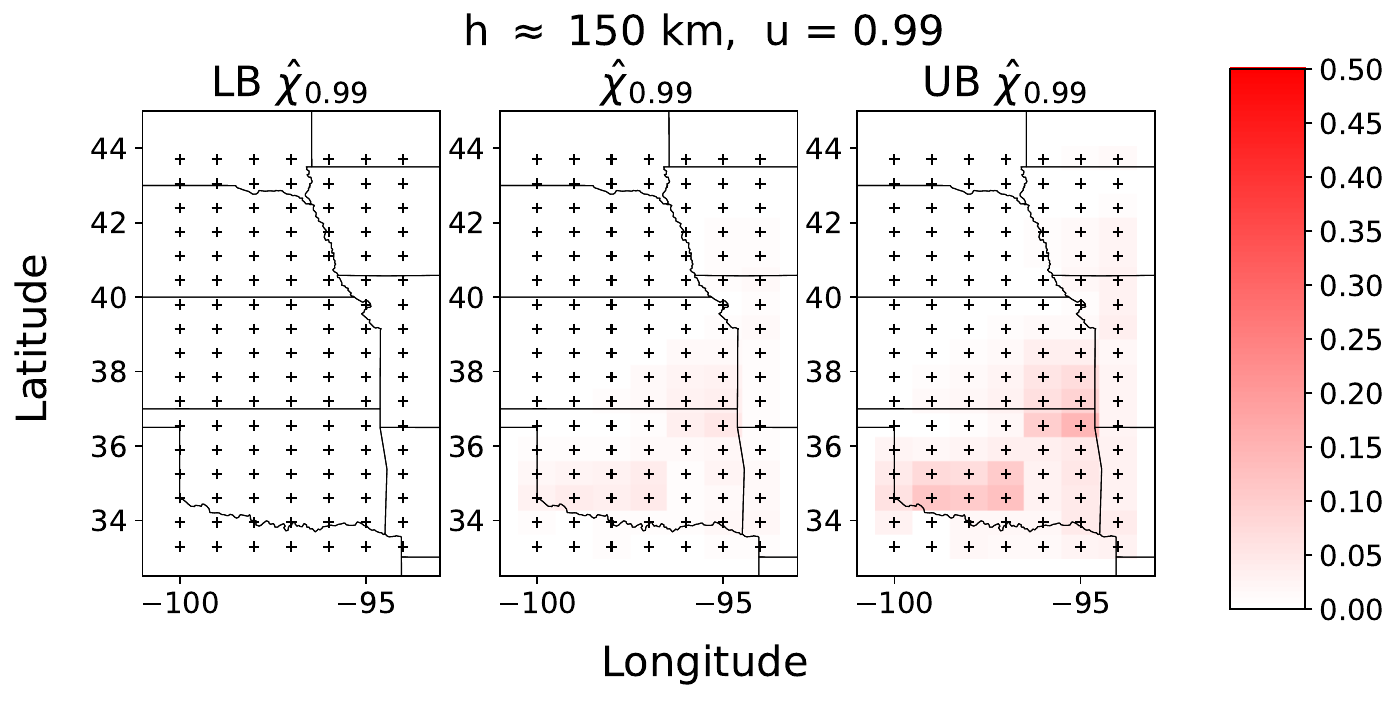}
    \includegraphics[width=0.56\textwidth]{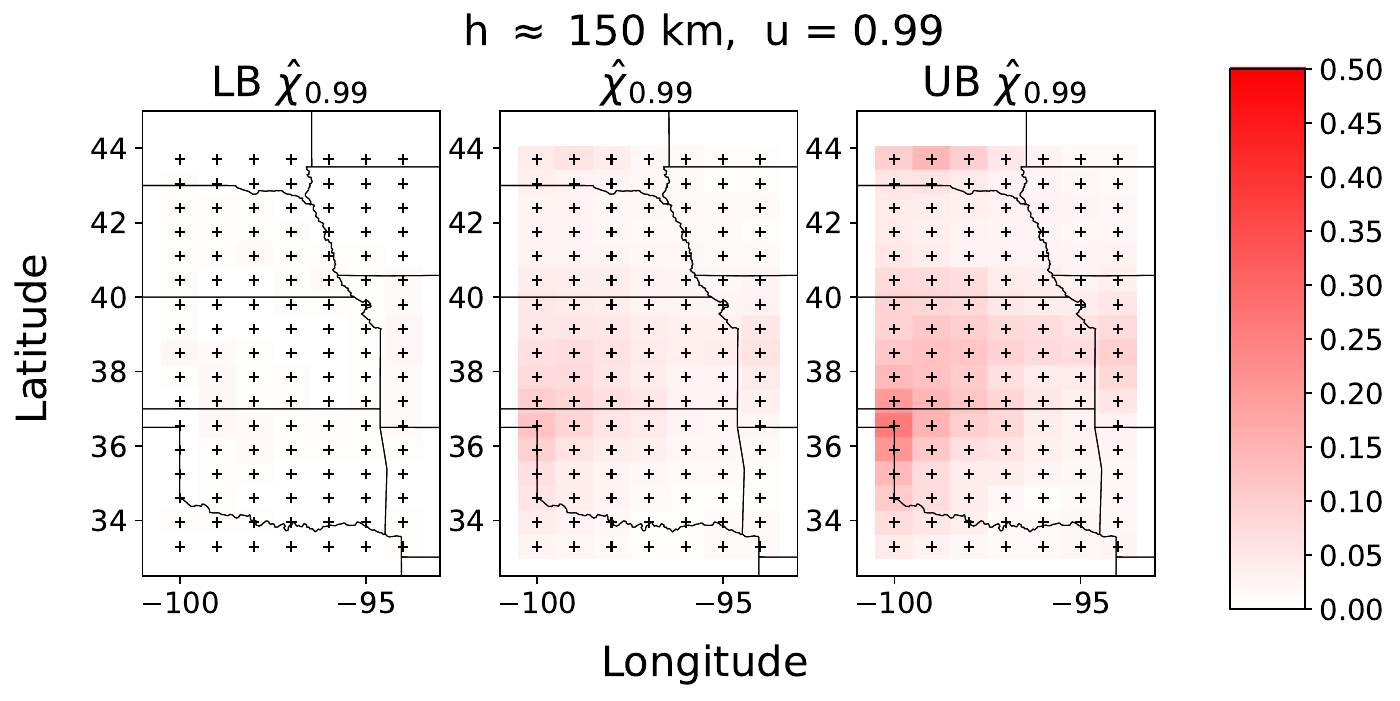}
\end{center}
\begin{center}
    \includegraphics[width=0.56\textwidth]{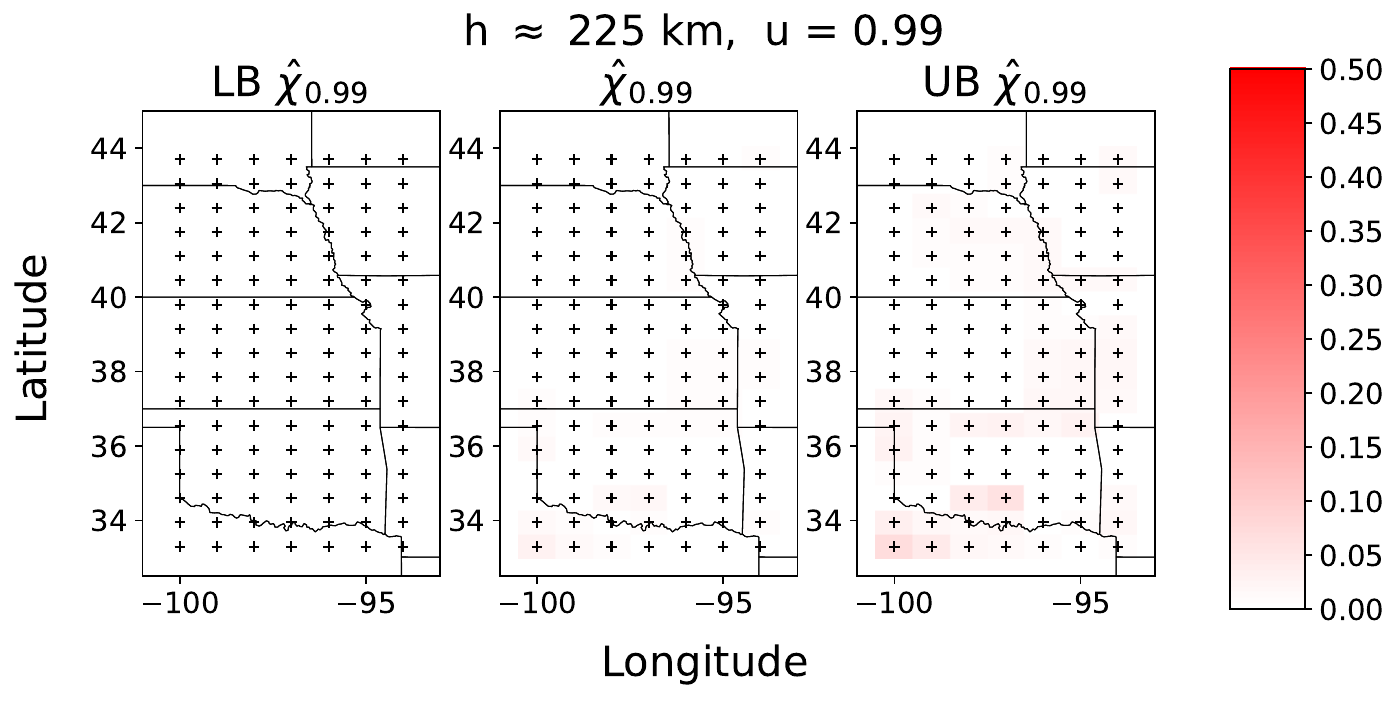}
    \includegraphics[width=0.56\textwidth]{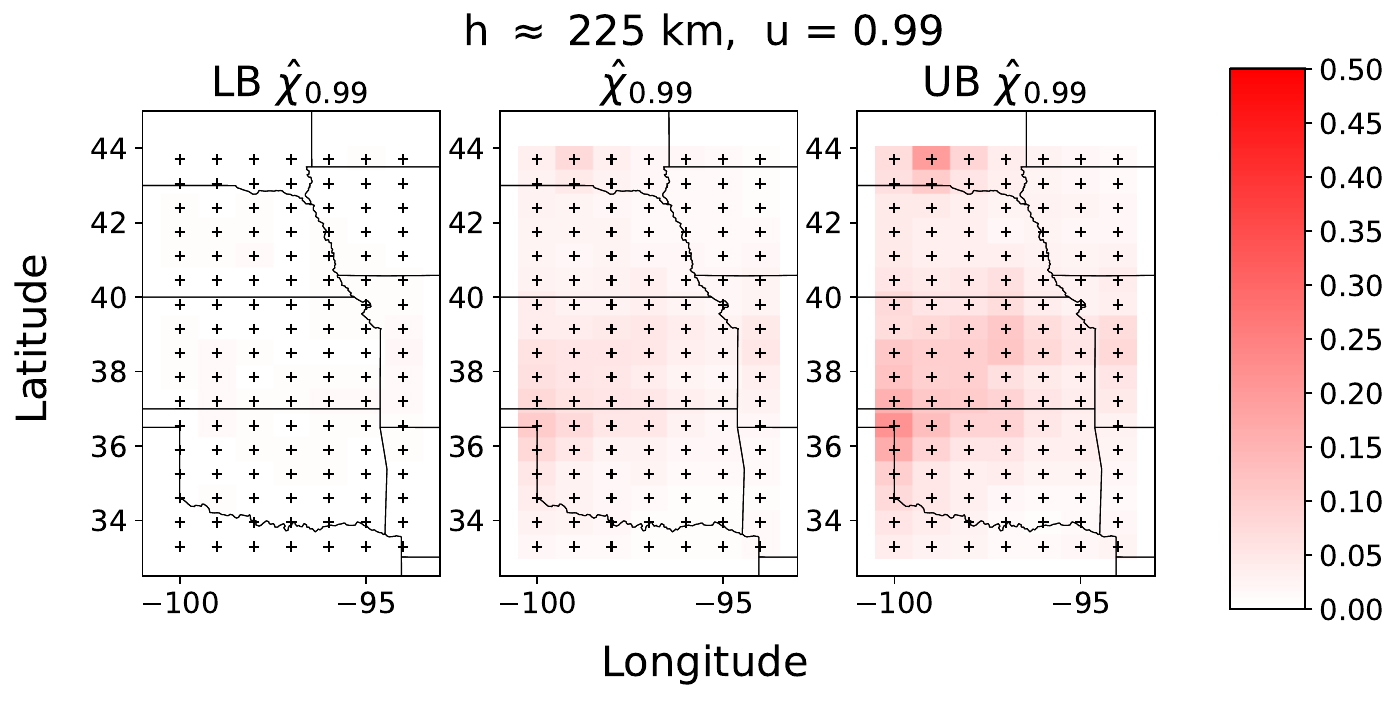}
\end{center}
\end{adjustwidth}

\end{document}